\newtheorem{theorem}{Theorem}
\newtheorem{example}{Example}
\newtheorem{remark}{Remark}
\newtheorem{corollary}{Corollary}
\newtheorem{lemma}{Lemma}
\newtheorem{proposition}{Proposition}
\begin{document}

\title{Self-Orthogonal Codes from Vectorial Dual-Bent Functions$^{\dag}$} 
\author{Jiaxin Wang, Yadi Wei, Fang-Wei Fu, Juan Li
\IEEEcompsocitemizethanks{\IEEEcompsocthanksitem Jiaxin Wang, Yadi Wei and Fang-Wei Fu are with Chern Institute of Mathematics and LPMC, Nankai University, Tianjin 300071, China, Emails: wjiaxin@mail.nankai.edu.cn, wydecho@mail.nankai.edu.cn, fwfu@nankai.edu.cn, Juan Li is with School of Mathematics and Statistics, Linyi University, Linyi 276000, China, Email: lijuansl@163.com.
}
\thanks{$^\dag$This research is supported by the National Key Research and Development Program of China (Grant Nos. 2022YFA1005000 and 2018YFA0704703), the National Natural Science Foundation of China (Grant Nos. 12141108, 62371259, 12226336), the Postdoctoral Fellowship Program of CPSF (Grant No. GZC20231177), the Fundamental Research Funds for the Central Universities of China (Nankai University), the Nankai Zhide Foundation, and the Youth Fundamental of Shandong Natural Science Foundation (Grant No. ZR2022QA067).}
\thanks{manuscript submitted  March 19, 2024}
}

\maketitle

\begin{abstract}
  Self-orthogonal codes are a significant class of linear codes in coding theory and have attracted a lot of attention. In \cite{HLL2023Te,LH2023Se}, $p$-ary self-orthogonal codes were constructed by using $p$-ary weakly regular bent functions, where $p$ is an odd prime. In \cite{WH2023Se}, two classes of non-degenerate quadratic forms were used to construct $q$-ary self-orthogonal codes, where $q$ is a power of a prime. In this paper, we construct new families of $q$-ary self-orthogonal codes using vectorial dual-bent functions. Some classes of at least almost optimal linear codes are obtained from the dual codes of the constructed self-orthogonal codes. In some cases, we completely determine the weight distributions of the constructed self-orthogonal codes. From the view of vectorial dual-bent functions, we illustrate that the works on constructing self-orthogonal codes from $p$-ary weakly regular bent functions \cite{HLL2023Te,LH2023Se} and non-degenerate quadratic forms with $q$ being odd  \cite{WH2023Se} can be obtained by our results.  We partially answer an open problem on determining the weight distribution of a class of self-orthogonal codes given in \cite{LH2023Se}. As applications, we construct new infinite families of at least almost optimal $q$-ary linear complementary dual codes (for short, LCD codes) and quantum codes.
\end{abstract}

\begin{IEEEkeywords}
Vectorial dual-bent functions; self-orthogonal codes; LCD codes; quantum codes; weight distribution
\end{IEEEkeywords}

\section{Introduction}
\label{sec:1}
Let $\mathbb{F}_{q}^{n}$ be the vector space of the $n$-tuples over the finite field $\mathbb{F}_{q}$, where $q$ is a power of a prime $p$. A \emph{$q$-ary $[n, k]$ linear code} is a subspace of $\mathbb{F}_{q}^{n}$ with dimension $k$. Linear codes play an important role in coding theory. Among the known methods to construct linear codes, one of which is based on cryptographic functions, such as $p$-ary bent functions \cite{Ding2015Li,DD2015A,HYL2016Th,Mesnager2017Li,OP2022Tw,TLQZH2016Li,WLZ2020Li2,XCX2017Tw,XQL2022Mi,ZLFH2016Li}, vectorial bent functions \cite{CDY2005Li,FL2007Va,LLQ2009On,WLZ2020Li1,WSWF2023Co,XTD2022Sh,YCD2006Th}, almost perfect nonlinear functions \cite{CCZ1998Co,Ding2016A,XTD2022Sh}, and $p$-ary plateaued functions \cite{MOS2019Li,MS2020Se}.

The \emph{dual code} $C^{\perp}$ of a $q$-ary $[n, k]$ linear code $C$ is defined as $C^{\perp}=\{u\in \mathbb{F}_{q}^{n}:u \cdot c=0\ \text{for all} \ c\in C\}$, where $\cdot$ is the standard inner product on $\mathbb{F}_{q}^{n}$. If $C\subseteq C^{\bot}$, then $C$ is called \emph{self-orthogonal}. Self-orthogonal codes have significant applications in quantum codes \cite{CRSS1997Qu}, linear complementary dual codes (for short, LCD codes), row-self-orthogonal matrices \cite{Massey1998Or}, and even lattices \cite{Wan1998A}.

Very recently, Heng, Li, and Liu in \cite{HLL2023Te}, Li and Heng in \cite{LH2023Se} considered using weakly regular $p$-ary bent functions of $l$-form with $gcd(l-1, p-1)=1$ to construct $p$-ary self-orthogonal codes. In \cite{WH2023Se}, Wang and Heng utilized two classes of non-degenerate quadratic forms to construct $q$-ary self-orthogonal codes. By the results in \cite{CM2018Be, WF2023Ne,WSWF2023Co}, weakly regular $p$-ary bent functions of $l$-form with $gcd(l-1, p-1)=1$ and non-degenerate quadratic forms with $q$ being odd are actually vectorial dual-bent functions introduced in \cite{CMP2018Ve}. Hence, it is interesting to investigate whether there are general results on constructing self-orthogonal codes from vectorial dual-bent functions. In this paper, we construct new families of $q$-ary self-orthogonal codes from vectorial dual-bent functions whose parameters are more abundant and flexible. In some cases, the weight distributions of the constructed self-orthogonal codes are completely determined. Some classes of at least almost optimal linear codes are obtained from the dual codes of the constructed self-orthogonal codes. By using a class of vectorial dual-bent functions, some optimal linear codes or having best parameters up to now are listed. In particular, we explain that the self-orthogonal codes from $p$-ary weakly regular bent functions \cite{HLL2023Te,LH2023Se} and non-degenerate quadratic forms with $q$ being odd  \cite{WH2023Se} can be obtained by our results. We partially answer an open problem on determining the weight distribution of a class of self-orthogonal codes constructed in \cite{LH2023Se}. Moreover, based on the constructed $q$-ary self-orthogonal codes, new infinite families of $q$-ary LCD codes and quantum codes are obtained which are at least almost optimal by the Hamming bound and quantum Hamming bound, respectively.

The rest of the paper is organized as follows. In Section II, the needed preliminaries are introduced. In Sections III-V, we construct new families of $q$-ary self-orthogonal codes using vectorial dual-bent functions with certain conditions. In Section VI, we compare our constructed self-orthogonal codes with the known ones constructed from (vectorial) bent functions. In Section VII, LCD codes and quantum codes are given based on the constructed self-orthogonal codes. In Section VIII, we make a conclusion.

\section{Preliminaries}
\label{sec:2}
In this section, we introduce some notations and results on vectorial dual-bent functions, linear codes and character sums.
\subsection{Notations}\label{Sec: 2.1}
We fix some notations used in the sequel unless otherwise stated.
\begin{itemize}
  \item $q=p^t$, $p$ is a prime.
  \item $\epsilon=1$ if $p\equiv 1 \pmod 4$, $\epsilon=\sqrt{-1}$ if $p\equiv 3 \pmod 4$.
  \item $\zeta_{p}=e^{\frac{2 \pi \sqrt{-1}}{p}}$ is a complex primitive $p$-th root of unity.
  \item $\mathbb{F}_{q}$ is the finite field with $q$ elements.
  \item $\mathbb{F}_{q}^{n}$ is the vector space of the $n$-tuples over $\mathbb{F}_{q}$.
  \item $V_{n}^{(p)}$ is an $n$-dimensional vector space over $\mathbb{F}_{p}$.
  \item $\langle, \rangle_{n}$ denotes a (non-degenerate) inner product of $V_{n}^{(p)}$. In this paper, when $V_{n}^{(p)}=\mathbb{F}_{p}^{n}$, let $\langle a, b\rangle_{n}=a \cdot b=\sum_{i=1}^{n}a_{i}b_{i}$, where $a=(a_{1}, \dots, a_{n}), b=(b_{1}, \dots, b_{n})\in \mathbb{F}_{p}^{n}$; when $V_{n}^{(p)}=\mathbb{F}_{p^n}$, let $\langle a, b\rangle_{n}=Tr_{1}^{n}(ab)$, where $a, b \in \mathbb{F}_{p^n}$, $Tr_{t}^{n}$ denotes the trace function from $\mathbb{F}_{p^n}$ to $\mathbb{F}_{p^t}$, $t \mid n$; when $V_{n}^{(p)}=V_{n_{1}}^{(p)}\times \dots \times V_{n_{s}}^{(p)}$, let $\langle a, b\rangle_{n}=\sum_{i=1}^{s}\langle a_{i}, b_{i}\rangle_{n_{i}}$, where $a=(a_{1}, \dots, a_{s}), b=(b_{1}, \dots, b_{s})\in V_{n}^{(p)}$.
  \item If $V_{n}^{(p)}=\mathbb{F}_{p^{n_{1}}} \times \mathbb{F}_{p^{n_{2}}} \times \dots \times \mathbb{F}_{p^{n_{s}}}$ and $x \in V_{n}^{(p)}$, denote $x=(x_{1}, \dots, x_{s})$, where $x_{j} \in \mathbb{F}_{p^{n_{j}}}, 1\leq j \leq s$.
  \item For $x=0 \in \mathbb{F}_{p^n}$, for convention we denote $x^{-1}=x^{p^n-2}=0$.
  \item For a function $F: V_{n}^{(p)} \rightarrow V_{m}^{(p)}$ and any $A\subseteq V_{m}^{(p)}$, let $D_{F, A}=\{x \in V_{n}^{(p)}: F(x) \in A\}$. When $A=\{a\}$, simply denote $D_{F, \{a\}}$ by $D_{F, a}$.
  \item For any set $A$, $\delta_{A}$ is the indicator function. When $A=\{a\}$, simply denote $\delta_{\{a\}}$ by $\delta_{a}$.
  \item For any set $A \subseteq V_{n}^{(p)}$ and $a \in V_{n}^{(p)}$, let $\chi_{a}(A)=\sum_{x \in A}\chi_{a}(x)$, where $\chi_{a}$ is the character defined by $\chi_{a}(x)=\zeta_{p}^{\langle a, x\rangle_{n}}$.
  \item For $a \in \mathbb{F}_{p^n}$, if $a=0$, then $\eta_{n}(a)=0$; if $a$ is a square in $\mathbb{F}_{p^n}^{*}$, then $\eta_{n}(a)=1$; if $a$ is a non-square in $\mathbb{F}_{p^n}^{*}$, then $\eta_{n}(a)=-1$.
  \item $\textbf{1}$ denotes all one vector, that is, $\textbf{1}=(1, 1, \dots, 1)$.
\end{itemize}

\subsection{Some results on vectorial dual-bent functions} \label{Sec: 2.2}
A function $F: V_{n}^{(p)}\rightarrow V_{m}^{(p)}$ is called a \emph{vectorial $p$-ary function}, or simply \emph{$p$-ary function} when $m=1$. For a vectorial $p$-ary function $F: V_{n}^{(p)}\rightarrow V_{m}^{(p)}$ with $m\leq n$, if $|D_{F, i}|=p^{n-m}$ for any $i \in V_{m}^{(p)}$, then $F$ is called \emph{balanced}.

For a $p$-ary function $f: V_{n}^{(p)}\rightarrow \mathbb{F}_{p}$, the \emph{Walsh transform} of $f$ is defined as
\begin{equation*}
W_{f}(a)=\sum_{x \in V_{n}^{(p)}}\zeta_{p}^{f(x)-\langle a, x\rangle_{n}}, a \in V_{n}^{(p)},
\end{equation*}
and its inverse Walsh transform is given by
\begin{equation*}
\zeta_{p}^{f(x)}=\frac{1}{p^n}\sum_{a \in V_{n}^{(p)}}W_{f}(a)\zeta_{p}^{\langle a, x\rangle_{n}}, x \in V_{n}^{(p)}.
\end{equation*}
If for all $a \in V_{n}^{(p)}$, $|W_{f}(a)|=p^{\frac{n}{2}}$, then $f$ is called a \emph{$p$-ary bent function}. For any $p$-ary bent function $f: V_{n}^{(p)}\rightarrow \mathbb{F}_{p}$, its Walsh transform satisfies that when $p=2$, $W_{f}(a)=2^{\frac{n}{2}}(-1)^{f^{*}(a)}$, and when $p$ is an odd prime,
\begin{equation*}
  W_{f}(a)=\left\{\begin{split}
                     \pm p^{\frac{n}{2}}\zeta_{p}^{f^{*}(a)}, & \ \text{  if } p \equiv 1 \pmod 4 \text{ or } n \text{ is even},\\
                     \pm \sqrt{-1} p^{\frac{n}{2}} \zeta_{p}^{f^{*}(a)}, & \ \text{  if } p \equiv 3 \pmod 4 \text{ and } n \text{ is odd},
                  \end{split}\right.
\end{equation*}
where $f^{*}$ is a $p$-ary function from $V_{n}^{(p)}$ to $\mathbb{F}_{p}$, called the \emph{dual} of $f$. A $p$-ary bent function $f: V_{n}^{(p)}\rightarrow \mathbb{F}_{p}$ is called \emph{weakly regular} if $W_{f}(a)=\varepsilon_{f}p^{\frac{n}{2}}\zeta_{p}^{f^{*}(a)}$, where $\varepsilon_{f} \in \{\pm 1, \pm \sqrt{-1}\}$ is a constant, otherwise $f$ is called \emph{non-weakly regular}. In particular, if $W_{f}(a)=p^{\frac{n}{2}}\zeta_{p}^{f^{*}(a)}$, that is, $\varepsilon_{f}=1$, then $f$ is called \emph{regular}. Any $2$-ary bent function, that is, Boolean bent function, is regular. For a $p$-ary weakly regular bent function $f$, its dual $f^{*}$ is also a weakly regular bent function with
\begin{equation*}
(f^{*})^{*}(x)=f(-x), \varepsilon_{f^{*}}=\varepsilon_{f}^{-1}.
\end{equation*}

For a vectorial $p$-ary function $F: V_{n}^{(p)}\rightarrow V_{m}^{(p)}$, if for any $c \in V_{m}^{(p)} \backslash \{0\}$, the component function $F_{c}$ defined as $F_{c}(x)=\langle c, F(x)\rangle_{m}$ is a $p$-ary bent function, then $F$ is called \emph{vectorial bent}. Every $p$-ary bent function is vectorial bent. A vectorial $p$-ary bent function $F: V_{n}^{(p)}\rightarrow V_{m}^{(p)}$ is called \emph{vectorial dual-bent} if there exists a vectorial bent function $G: V_{n}^{(p)}\rightarrow V_{m}^{(p)}$ such that $(F_{c})^{*}=G_{\sigma(c)}$ for any $c \in V_{m}^{(p)} \backslash \{0\}$, where $(F_{c})^{*}$ is the dual of $F_{c}$, and $\sigma$ is some permutation over $V_{m}^{(p)} \backslash \{0\}$. The vectorial bent function $G$ is called a \emph{vectorial dual} of $F$ and denoted by $F^{*}$.

A $p$-ary function $f: V_{n}^{(p)}\rightarrow \mathbb{F}_{p}$ is called of \emph{$l$-form} if $f(ax)=a^{l}f(x)$ for any $a \in \mathbb{F}_{p}^{*}$ and $x \in V_{n}^{(p)}$, where $l$ is an integer. In a number of papers, a weakly regular bent function $f: V_{n}^{(p)} \rightarrow \mathbb{F}_{p}$ of $l$-form with $f(0)=0$ and $gcd(l-1, p-1)=1$ is called a \emph{bent function belonging to $\mathcal{RF}$}.

\begin{proposition} [\cite{CM2018Be}]\label{Proposition 1}
Let $f: V_{n}^{(p)}\rightarrow \mathbb{F}_{p}$ be a bent function belonging to $\mathcal{RF}$, that is, $f$ is a weakly regular bent function of $l$-form with $f(0)=0$ and $gcd(l-1, p-1)=1$ for some $l$. Then $f$ (seen as a vectorial bent function from $V_{n}^{(p)}$ to $V_{1}^{(p)}$) is a vectorial dual-bent function with $(cf)^{*}=c^{1-d}f^{*}, c \in \mathbb{F}_{p}^{*}$, and $\varepsilon_{cf}=\varepsilon_{f}$ if $n$ is even,  $\varepsilon_{cf}=\varepsilon_{f}\eta_{1}(c)$ if $n$ is odd, where $(l-1)(d-1)\equiv 1 \mod (p-1)$.
\end{proposition}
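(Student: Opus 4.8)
The plan is to reduce everything to the single Walsh-transform identity
\[W_{cf}(b)=\varepsilon_{cf}\,p^{n/2}\,\zeta_{p}^{c^{1-d}f^{*}(b)}\qquad\text{for all }c\in\mathbb{F}_{p}^{*},\ b\in V_{n}^{(p)},\]
where $\varepsilon_{cf}=\varepsilon_{f}$ if $n$ is even and $\varepsilon_{cf}=\varepsilon_{f}\eta_{1}(c)$ if $n$ is odd. Granting this, the proposition is bookkeeping: the identity gives $|W_{cf}(b)|=p^{n/2}$, so every component function $F_{c}=cf$ of $F:=f$ (viewed as a map $V_{n}^{(p)}\to V_{1}^{(p)}$) is bent, i.e.\ $F$ is vectorial bent, and likewise each $c'f^{*}$ is bent (scaling a bent function by a nonzero constant preserves bentness), so $G:=f^{*}$ is vectorial bent. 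The identity also says $cf$ is weakly regular with $(cf)^{*}=c^{1-d}f^{*}$ and with the stated $\varepsilon_{cf}$. Since $d-1$ is the inverse of $l-1$ modulo $p-1$, we have $\gcd(1-d,p-1)=\gcd(d-1,p-1)=1$, so $\sigma(c):=c^{1-d}$ is a permutation of $\mathbb{F}_{p}^{*}=V_{1}^{(p)}\setminus\{0\}$; then $(F_{c})^{*}=(cf)^{*}=c^{1-d}f^{*}=G_{\sigma(c)}$, so $F$ is vectorial dual-bent with $F^{*}=f^{*}$, and the $\varepsilon$-statements are exactly the claimed ones.

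First I would record the elementary identity $cf(x)=c^{1-d}f(c^{d-1}x)$ for all $x\in V_{n}^{(p)}$: since $f$ is of $l$-form, $f(c^{d-1}x)=c^{l(d-1)}f(x)$, and $c^{1-d}\cdot c^{l(d-1)}=c^{(l-1)(d-1)}=c$ because $(l-1)(d-1)\equiv1\pmod{p-1}$. Substituting $x\mapsto c^{1-d}x$ in $W_{cf}(b)=\sum_{x}\zeta_{p}^{cf(x)-\langle b,x\rangle_{n}}$ and using this identity gives
\[W_{cf}(b)=\sum_{x\in V_{n}^{(p)}}\zeta_{p}^{c^{1-d}(f(x)-\langle b,x\rangle_{n})}=W_{c^{1-d}f}(c^{1-d}b).\qquad(\ast)\]

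Next I would compute $W_{cf}(b)$ a second, independent way, via value distributions. Put $g:=f-\langle c^{-1}b,\cdot\rangle_{n}$, so that $cf(x)-\langle b,x\rangle_{n}=c\,g(x)$ and hence $W_{cf}(b)=\sum_{x}\zeta_{p}^{c\,g(x)}=\sum_{j\in\mathbb{F}_{p}}N_{j}\,\zeta_{p}^{cj}$, where $N_{j}=|\{x:g(x)=j\}|$. Because $W_{g}(a)=W_{f}(a+c^{-1}b)$, the function $g$ is weakly regular bent with $\varepsilon_{g}=\varepsilon_{f}$ and $g^{*}(0)=f^{*}(c^{-1}b)$. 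Now the key remark is elementary and general: for \emph{any} $g\colon V_{n}^{(p)}\to\mathbb{F}_{p}$ the tuple $(N_{j})_{j\in\mathbb{F}_{p}}$ is completely determined by the single value $\sum_{x}\zeta_{p}^{g(x)}=W_{g}(0)$ together with $\sum_{j}N_{j}=p^{n}$, since $\{1,\zeta_{p},\dots,\zeta_{p}^{p-2}\}$ is a $\mathbb{Q}$-basis of $\mathbb{Q}(\zeta_{p})$ and $\zeta_{p}^{p-1}=-(1+\zeta_{p}+\dots+\zeta_{p}^{p-2})$. Here $W_{g}(0)=\varepsilon_{f}\,p^{n/2}\,\zeta_{p}^{f^{*}(c^{-1}b)}$, so a short case analysis (on $p\bmod 4$ and the parity of $n$, expanding $p^{n/2}$ via the quadratic Gauss sum $\sum_{t\in\mathbb{F}_{p}}\eta_{1}(t)\zeta_{p}^{t}$ when $n$ is odd) recovers the classical value distribution of $g$ and then yields $W_{cf}(b)=\varepsilon_{cf}\,p^{n/2}\,\zeta_{p}^{c\,f^{*}(c^{-1}b)}$ with $\varepsilon_{cf}=\varepsilon_{f}$ for $n$ even and $\varepsilon_{cf}=\varepsilon_{f}\eta_{1}(c)$ for $n$ odd.

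Finally, comparing the two evaluations closes the argument: by $(\ast)$, $W_{cf}(b)=W_{c^{1-d}f}(c^{1-d}b)=\varepsilon_{c^{1-d}f}\,p^{n/2}\,\zeta_{p}^{c^{1-d}f^{*}(b)}$, and equating this with $W_{cf}(b)=\varepsilon_{cf}\,p^{n/2}\,\zeta_{p}^{c\,f^{*}(c^{-1}b)}$ forces the two unit constants to agree and gives $c\,f^{*}(c^{-1}b)=c^{1-d}f^{*}(b)$ for all $c\in\mathbb{F}_{p}^{*}$ and $b\in V_{n}^{(p)}$; this is exactly $(cf)^{*}=c^{1-d}f^{*}$ (and, with $\lambda=c^{-1}$, it incidentally shows $f^{*}$ is of $d$-form), which together with the formula for $\varepsilon_{cf}$ establishes the displayed identity, hence the proposition. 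I expect the one genuinely delicate step to be the determination of the unit $\varepsilon_{cf}$ in the value-distribution computation: the modulus $p^{n/2}$ and the shape $\zeta_{p}^{c\,f^{*}(c^{-1}b)}$ fall out at once, but isolating the factor $\eta_{1}(c)$, which appears precisely when $n$ is odd, needs careful Gauss-sum bookkeeping over the cases $p\equiv1,3\pmod 4$. (The reason one is forced through value distributions at all is that $cf$ cannot in general be written as $f$ precomposed with a scalar map, since only $\gcd(l-1,p-1)=1$, not $\gcd(l,p-1)=1$, is available.)
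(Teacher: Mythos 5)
Your proposal is sound, but note first that the paper does not prove this statement at all: Proposition 1 is quoted from \cite{CM2018Be}, so there is no internal proof to compare against; the relevant comparison is with the standard argument in the literature. Your two pillars are exactly right: the substitution identity $(\ast)$, $W_{cf}(b)=W_{c^{1-d}f}(c^{1-d}b)$, which uses only the $l$-form property and $(l-1)(d-1)\equiv 1 \pmod{p-1}$, and the evaluation $W_{cf}(b)=\varepsilon_{cf}p^{n/2}\zeta_{p}^{cf^{*}(c^{-1}b)}$; comparing the two then yields both that $f^{*}$ is of $d$-form and the claimed $(cf)^{*}=c^{1-d}f^{*}$, and $\sigma(c)=c^{1-d}$ is a permutation because $\gcd(d-1,p-1)=1$. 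The one place you route through heavier machinery than necessary is the ``delicate step'' you flag: since the counts $N_{j}$ are rational integers, $\sum_{j}N_{j}\zeta_{p}^{cj}$ is just the image of $W_{g}(0)=\sum_{j}N_{j}\zeta_{p}^{j}$ under the Galois automorphism $\sigma_{c}:\zeta_{p}\mapsto\zeta_{p}^{c}$ of $\mathbb{Q}(\zeta_{p})$, so there is no need to reconstruct the value distribution at all; writing $\varepsilon_{f}p^{n/2}=(\varepsilon_{f}\epsilon^{-1})p^{(n-1)/2}G$ with $G=\sum_{t}\eta_{1}(t)\zeta_{p}^{t}$ when $n$ is odd and using $\sigma_{c}(G)=\eta_{1}(c)G$ gives $\varepsilon_{cf}=\varepsilon_{f}$ ($n$ even) and $\varepsilon_{cf}=\varepsilon_{f}\eta_{1}(c)$ ($n$ odd) in one line, which is essentially how \cite{CM2018Be} proceeds; your value-distribution detour is a correct but more laborious rendering of the same Galois action. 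Two small points deserve an explicit sentence if you write this up: the final comparison ``forces the two unit constants to agree'' because a quotient of elements of $\{\pm 1,\pm\sqrt{-1}\}$ cannot be a nontrivial $p$-th root of unity, so both the units and the $\zeta_{p}$-exponents match; and the resulting formulas are internally consistent because $(l-1)(d-1)$ odd forces $d$ even, whence $\eta_{1}(c^{1-d})=\eta_{1}(c)$. Also, the bentness of the components $c'f^{*}$ is cleanest obtained not from a generic ``scaling preserves bentness'' claim but from the fact you already have in hand: each $cf$ is weakly regular bent with dual $c^{1-d}f^{*}$, and duals of weakly regular bent functions are bent, with $c^{1-d}$ running over all of $\mathbb{F}_{p}^{*}$.
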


\subsection{Some results on linear codes}
\label{2.3}
For a vector $a=(a_{1}, \dots, a_{n})\in \mathbb{F}_{q}^{n}$, the \emph{Hamming weight} of $a$, denoted by $wt(a)$, is the size of its support $supp(a)=\{1\leq i\leq n: a_{i}\neq 0\}$. For two vectors $a, b \in \mathbb{F}_{q}^{n}$, the \emph{Hamming distance} $d(a, b)$ between $a$ and $b$ is defined as $d(a, b)=wt(a-b)$. For a $q$-ary $[n, k]$ linear code $C$, the \emph{minimum Hamming distance} $d$ of $C$ is defined as $d=\min\{d(a, b): a, b \in C, a \neq b\}=\min\{wt(c): c\in C, c \neq 0\}$, and $C$ is denoted as an $[n, k, d]_{q}$ linear code. For any $1\leq i \leq n$, let $A_{i}$ denote the number of codewords in the linear code $C$ whose Hamming weight is $i$. The sequence $(1, A_{1}, \dots, A_{n})$ is called the \emph{weight distribution} of $C$. The code $C$ is called \emph{$t$-weight} if $|\{1\leq i\leq n: A_{i}\neq 0\}|=t$. For an $[n, k, d]_{q}$ linear code $C$, it is called \emph{(distance) optimal} if there is no $[n, k, d+1]_{q}$ linear code, and is called \emph{almost optimal} if there is an $[n, k, d+1]_{q}$ optimal code.

We recall the well-known Hamming bound on linear codes (see e.g. \cite{HP2003Fu}).

\begin{proposition}[Hamming Bound]\label{Proposition 2}
Let $C$ be an $[n, k, d]_{q}$ linear code. Then
\begin{equation*}
q^{n-k}\geq \sum_{i=0}^{\lfloor\frac{d-1}{2}\rfloor}\binom{n}{i}(q-1)^{i}.
\end{equation*}
\end{proposition}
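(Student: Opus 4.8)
The plan is to run the classical sphere-packing (volume) argument. First I would set $t=\lfloor\frac{d-1}{2}\rfloor$ and, for each vector $x\in\mathbb{F}_{q}^{n}$, consider the Hamming ball $B(x,t)=\{y\in\mathbb{F}_{q}^{n}:d(x,y)\leq t\}$. A direct count shows that $|B(x,t)|$ is independent of $x$: to build a vector at distance exactly $i$ from $x$ one chooses the $i$ coordinates in which it differs, in $\binom{n}{i}$ ways, and in each such coordinate picks one of the $q-1$ values different from that of $x$. Summing over $0\leq i\leq t$ gives
\begin{equation*}
|B(x,t)|=\sum_{i=0}^{t}\binom{n}{i}(q-1)^{i}.
\end{equation*}

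Next I would use the minimum distance hypothesis to show the balls centred at distinct codewords are pairwise disjoint. If $c_{1},c_{2}\in C$ with $c_{1}\neq c_{2}$ and some $z\in B(c_{1},t)\cap B(c_{2},t)$, then by the triangle inequality for the Hamming distance $d(c_{1},c_{2})\leq d(c_{1},z)+d(z,c_{2})\leq 2t\leq d-1$, contradicting $d(c_{1},c_{2})\geq d$ (which holds since $C$ is linear with minimum distance $d$, so $c_{1}-c_{2}$ is a nonzero codeword of weight at least $d$). Hence the $q^{k}=|C|$ balls $\{B(c,t):c\in C\}$ are disjoint subsets of $\mathbb{F}_{q}^{n}$.

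Finally I would compare cardinalities: since these disjoint balls all lie inside $\mathbb{F}_{q}^{n}$, which has $q^{n}$ elements, we get $q^{k}\cdot\sum_{i=0}^{t}\binom{n}{i}(q-1)^{i}\leq q^{n}$, and dividing by $q^{k}$ yields the claimed inequality. There is no real obstacle here — the only points requiring a moment of care are verifying that the ball size does not depend on the centre (a routine combinatorial count) and invoking linearity so that the minimum distance equals the minimum nonzero weight; everything else is the triangle inequality and a pigeonhole comparison.
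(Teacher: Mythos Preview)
Your argument is correct and is precisely the classical sphere-packing proof; the paper does not supply its own proof but simply cites \cite{HP2003Fu}, where exactly this volume-counting argument appears. Nothing to add.
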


When $q$ is odd, the following proposition gives a relatively simple way to show the self-orthogonality of linear codes.

\begin{proposition}[\cite{Wan1998A}]\label{Proposition 3}
Let $q$ be a power of an odd prime and $C$ be a $q$-ary linear code. Then $C$ is self-orthogonal if and only if $c\cdot c=0$ for all $c \in C$.
\end{proposition}

Let $C$ be an $[n, k, d]_{q}$ linear code and $C^{\perp}$ be its dual code. If $C\cap C^{\perp}=\{\textbf{0}\}$, then $C$ is called a \emph{linear complementary dual code} (for short, \emph{LCD code}). The dual code $C^{\perp}$ of an LCD code $C$ is also an LCD code. There is a method to construct LCD codes by using self-orthogonal codes.

\begin{proposition}[\cite{Massey1998Or}]\label{Proposition 4}
If $C$ is a self-orthogonal $[n, k]_{q}$ linear code with generator matrix $G$, then the linear code $C'$ with generator matrix $G'=[I_{k}, G]$ is an $[n+k, k]_{q}$ LCD code, where $I_{k}$ is the identity matrix of size $k \times k$.
\end{proposition}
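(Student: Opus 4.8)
\textit{Proof proposal.}
The plan is to reduce the LCD property of $C'$ to the non-singularity of the matrix $G'(G')^{T}$, and then to exploit the self-orthogonality of $C$ to evaluate this matrix explicitly. First I would record the parameters of $C'$: the matrix $G'=[I_{k}, G]$ has $k$ rows and $n+k$ columns, and since its first $k$ columns form $I_{k}$, it has full row rank $k$. Hence $C'$ is a $q$-ary linear code of length $n+k$ and dimension $k$, i.e.\ an $[n+k, k]_{q}$ code, as claimed.

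Next comes the key lemma (essentially Massey's criterion): a $q$-ary linear code with generator matrix $M$ of full row rank $k$ is LCD if and only if $MM^{T}$ is non-singular. I would prove the direction actually needed here. Suppose $v$ lies in both $C'$ and $(C')^{\perp}$ and write $v=xG'$ for some $x\in \mathbb{F}_{q}^{k}$. Since $v\in (C')^{\perp}$, it is orthogonal to every row of $G'$, so $G'v^{T}=\mathbf{0}$, equivalently $x\,G'(G')^{T}=\mathbf{0}$. If $G'(G')^{T}$ is invertible this forces $x=\mathbf{0}$, hence $v=\mathbf{0}$, so $C'\cap (C')^{\perp}=\{\mathbf{0}\}$ and $C'$ is an LCD code. (The converse, which we do not need, follows by reversing the argument and using $\mathrm{rank}(G')=k$ to guarantee that a nonzero $x$ with $x\,G'(G')^{T}=\mathbf{0}$ produces a nonzero $v=xG'\in C'\cap (C')^{\perp}$.)

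Finally I would compute, using the block structure, $G'(G')^{T}=[I_{k}, G][I_{k}, G]^{T}=I_{k}I_{k}^{T}+GG^{T}=I_{k}+GG^{T}$. The hypothesis that $C$ is self-orthogonal, $C\subseteq C^{\perp}$, says precisely that any two rows $g_{i}, g_{j}$ of $G$ (both codewords of $C$) satisfy $g_{i}\cdot g_{j}=0$; that is, $GG^{T}=0$. Therefore $G'(G')^{T}=I_{k}$, which is trivially non-singular, and the lemma yields that $C'$ is an $[n+k, k]_{q}$ LCD code. The only nontrivial ingredient is the criterion relating $MM^{T}$ to the LCD property; once that is established or cited, the remainder is the one-line matrix identity above together with the definition of self-orthogonality, so I expect no substantial obstacle.
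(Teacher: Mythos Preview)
Your proof is correct. The paper does not give its own proof of this proposition but merely cites it from Massey~\cite{Massey1998Or}; your argument via the criterion that $C'$ is LCD if and only if $G'(G')^{T}$ is non-singular, together with the computation $G'(G')^{T}=I_{k}+GG^{T}=I_{k}$ from self-orthogonality, is exactly the standard route and matches what one finds in the cited source.
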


Quantum codes are used to detect and correct errors caused by quantum noise in quantum communication. An \emph{$[[n, k, d]]_{q}$ quantum error-correcting code} (for short, \emph{quantum code}) of length $n$ and minimum distance $d$ is a $K$-dimensional subspace of the Hilbert space $\mathbb{C}^{q^n}$, where $K=q^k$. For the details, please refer to \cite{Feng2002Qu}. An $[[n, k, d]]_{q}$ quantum code $C$ is called \emph{(distance) optimal} if there is no $[[n, k, d+1]]_{q}$ quantum code, and is called \emph{almost optimal} if there is an $[[n, k, d+1]]_{q}$ optimal quantum code.

We recall the well-known quantum Hamming bound on pure quantum codes.

\begin{proposition}[Quantum Hamming Bound \cite{KKK2006No}]\label{Proposition 5}
If $C$ is an $[[n, k, d]]_{q}$ pure quantum code, then
\begin{equation*}
q^{n-k}\geq \sum_{i=0}^{\lfloor\frac{d-1}{2}\rfloor}\binom{n}{i}(q^{2}-1)^{i}.
\end{equation*}
\end{proposition}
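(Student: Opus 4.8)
The plan is to prove this exactly as the quantum analogue of the classical Hamming bound (Proposition~\ref{Proposition 2}), via a sphere-packing / dimension-counting argument in the Hilbert space $\mathbb{C}^{q^n}=(\mathbb{C}^q)^{\otimes n}$. First I would fix the standard nice error basis of $\mathbb{C}^q$, namely the $q^2$ generalized Pauli operators $X^{a}Z^{b}$ with $a,b\in\mathbb{F}_q$, which form an orthogonal basis of the $q\times q$ matrices under the Hilbert--Schmidt inner product; tensoring coordinatewise yields the error basis $\mathcal{E}=\{E_{\mathbf{u}}\}$ of $q^{2n}$ operators on $\mathbb{C}^{q^n}$. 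The \emph{weight} $\mathrm{wt}(E_{\mathbf{u}})$ of such an operator is the number of tensor factors different from the identity, so for each $0\le i\le n$ there are precisely $\binom{n}{i}(q^2-1)^i$ basis errors of weight $i$: choose the $i$ nontrivial coordinates, and on each put one of the $q^2-1$ nonidentity single-qudit Paulis. Put $t=\lfloor(d-1)/2\rfloor$ and $N=\sum_{i=0}^{t}\binom{n}{i}(q^2-1)^i$, the number of basis errors of weight at most $t$.

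Next I would record the two structural facts I need. (i) \emph{Closure:} for basis errors $E_{\mathbf{u}},E_{\mathbf{v}}$ one has $E_{\mathbf{u}}^{\dagger}E_{\mathbf{v}}=\omega\,E_{\mathbf{w}}$ for some root of unity $\omega$ and some basis error $E_{\mathbf{w}}$ with $\mathrm{wt}(E_{\mathbf{w}})\le \mathrm{wt}(E_{\mathbf{u}})+\mathrm{wt}(E_{\mathbf{v}})$, and $E_{\mathbf{w}}$ is a scalar multiple of the identity if and only if $\mathbf{u}=\mathbf{v}$. (ii) \emph{Error correction:} letting $P$ be the orthogonal projector onto $C$, the distance-$d$ property says $P\,E\,P=\alpha_{E}\,P$ for every basis error $E$ with $\mathrm{wt}(E)\le d-1$, and \emph{purity} says $\alpha_{E}=0$ whenever $1\le\mathrm{wt}(E)\le d-1$. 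Combining (i) and (ii): if $E_{\mathbf{u}}\neq E_{\mathbf{v}}$ both have weight at most $t$, then $E_{\mathbf{u}}^{\dagger}E_{\mathbf{v}}$ is (up to a scalar) a nonidentity basis error of weight at most $2t\le d-1$, hence $P\,E_{\mathbf{u}}^{\dagger}E_{\mathbf{v}}\,P=0$.

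From this the bound drops out. For $|\psi\rangle,|\varphi\rangle\in C$ and distinct weight-$\le t$ basis errors $E_{\mathbf{u}},E_{\mathbf{v}}$ we get $\langle E_{\mathbf{u}}\psi\,|\,E_{\mathbf{v}}\varphi\rangle=\langle\psi|\,P E_{\mathbf{u}}^{\dagger}E_{\mathbf{v}}P\,|\varphi\rangle=0$, so the subspaces $E_{\mathbf{u}}C\subseteq\mathbb{C}^{q^n}$, indexed by the $N$ basis errors of weight at most $t$, are pairwise orthogonal. Each $E_{\mathbf{u}}$ is unitary, so $\dim E_{\mathbf{u}}C=\dim C=q^k$. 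Packing $N$ mutually orthogonal $q^k$-dimensional subspaces into $\mathbb{C}^{q^n}$ forces $N\cdot q^k\le q^n$, i.e. $q^{n-k}\ge N=\sum_{i=0}^{\lfloor(d-1)/2\rfloor}\binom{n}{i}(q^2-1)^i$, as claimed.

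The main obstacle is making the two structural facts precise in the general qudit setting (not just $q$ prime or $q=2$): that the generalized Pauli operators genuinely form an orthogonal error basis with the stated multiplicative closure and weight-subadditivity, and that ``purity plus distance $d$'' really forces $P\,E_{\mathbf{u}}^{\dagger}E_{\mathbf{v}}\,P=0$ for all distinct weight-$\le t$ basis errors (which is essentially the Knill--Laflamme characterization of correctable error sets). Once those are in place the remainder is the elementary counting above together with the trivial inequality $2\lfloor(d-1)/2\rfloor\le d-1$ guaranteeing that the relevant products stay within the correctable range; this is the content of \cite{KKK2006No}.
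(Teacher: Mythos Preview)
Your argument is the standard sphere-packing proof of the quantum Hamming bound and is correct. Note, however, that the paper does not actually prove Proposition~\ref{Proposition 5}: it is stated as a known result with a citation to \cite{KKK2006No} and is used only as a black box to certify that certain constructed codes are at least almost optimal. So there is no ``paper's own proof'' to compare against; your write-up supplies precisely the argument one finds in the cited reference.
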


We recall the well-known Steane construction on quantum codes.

\begin{proposition}[Steane construction \cite{Hamada2008Co}]\label{Propsition 6}
Let $C_{1}$ and $C_{2}$ be $[n, k_{1}, d_{1}]_{q}$ and $[n, k_{2}, d_{2}]_{q}$ linear codes, respectively. If $C_{1}^{\perp} \subseteq C_{1} \subseteq C_{2}$ and $k_{1}+2\leq k_{2}$, then there exists an $[[n, k_{1}+k_{2}-n, \min\{d_{1}, \lceil \frac{q+1}{q}d_{2}\rceil\}]]_{q}$ pure quantum code.
\end{proposition}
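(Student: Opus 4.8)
The plan is to realize the claimed object as a stabilizer (additive) quantum code built from the nested family of classical codes and then read off its parameters. I would work in the standard framework in which a pure $[[n,\kappa,d]]_{q}$ quantum code is produced by an $\mathbb{F}_{q}$-additive code $\mathcal{D}\subseteq\mathbb{F}_{q^{2}}^{n}$ that is self-orthogonal with respect to the trace-Hermitian form $\langle u,v\rangle=\mathrm{Tr}_{q^{2}/q}\big(\sum_{i}u_{i}v_{i}^{q}\big)$ and has $|\mathcal{D}|=q^{2n-\kappa}$; then $d$ equals the minimum Hamming weight of $\mathcal{D}^{\perp}\setminus\{\mathbf{0}\}$, where $\mathcal{D}^{\perp}$ is the trace-Hermitian dual. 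From $C_{1}^{\perp}\subseteq C_{1}\subseteq C_{2}$ one first records the chain $C_{2}^{\perp}\subseteq C_{1}^{\perp}\subseteq C_{1}\subseteq C_{2}$; in particular $C_{1}^{\perp}$ and $C_{2}^{\perp}$ are both Euclidean self-orthogonal, which is exactly what will make the relevant additive codes isotropic, and $k_{1}\geq n/2$, so that $\kappa:=k_{1}+k_{2}-n\geq k_{2}-k_{1}\geq 2>0$.

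Next I would construct the \emph{Steane-enlarged} additive code explicitly. Starting from the additive code attached to the CSS pair $(C_{1},C_{1}^{\perp})$ — which by itself yields an $[[n,2k_{1}-n,d_{1}]]_{q}$ code — I enlarge it along the $k_{2}-k_{1}$ extra directions of $C_{2}/C_{1}$, but \emph{diagonally}: the new generators are taken of the form $(1+\gamma\omega)b_{j}$ for a fixed $\omega\in\mathbb{F}_{q^{2}}\setminus\mathbb{F}_{q}$, a fixed scalar $\gamma\in\mathbb{F}_{q}^{\ast}$, and representatives $b_{j}$ of a basis of $C_{1}^{\perp}/C_{2}^{\perp}$, so that each new generator couples the ``$X$-part'' and the ``$Z$-part''. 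The isotropy of the resulting $\mathcal{D}$ is verified directly, using only that $C_{1}^{\perp}$ is self-orthogonal and that $C_{1}\perp C_{1}^{\perp}$ and $C_{2}\perp C_{2}^{\perp}$; a dimension count gives $|\mathcal{D}|=q^{2n-k_{1}-k_{2}}$, hence $\kappa=k_{1}+k_{2}-n$, and describing $\mathcal{D}^{\perp}$ shows that the logical operators split into those inherited from $(C_{1},C_{1}^{\perp})$ together with $k_{2}-k_{1}$ new ``enlargement'' directions. (When $q$ is even the quadratic extension cannot be normalized as conveniently and the twist must be set up with slightly more care, but the scheme is the same.)

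The core of the argument is the distance estimate, and the number $\frac{q+1}{q}$ enters through an elementary counting identity over the projective line. Represent a nonzero logical operator, via the $\mathbb{F}_{q}/\mathbb{F}_{q^{2}}$ dictionary, as a pair $(a,b)$ with $a,b\in C_{2}$ and weight $w=|\mathrm{supp}(a)\cup\mathrm{supp}(b)|$. If $a,b\in C_{1}$, one of the two slots is a nonzero codeword of $C_{1}$, so $w\geq d_{1}$. Otherwise — and this is where $k_{2}\geq k_{1}+2$ is used, to guarantee that every new logical operator admits a representative with $a,b$ \emph{linearly independent} over $\mathbb{F}_{q}$ — for each of the $q+1$ points $\delta\in\mathbb{P}^{1}(\mathbb{F}_{q})$ the corresponding nonzero $\mathbb{F}_{q}$-combination $\ell_{\delta}(a,b)$ is a nonzero codeword of $C_{2}$, so $wt(\ell_{\delta}(a,b))\geq d_{2}$. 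Since every coordinate of $\mathrm{supp}(a)\cup\mathrm{supp}(b)$ is annihilated by exactly one of these $q+1$ combinations,
\begin{equation*}
\sum_{\delta\in\mathbb{P}^{1}(\mathbb{F}_{q})}wt(\ell_{\delta}(a,b))=q\,w,
\end{equation*}
whence $q\,w\geq(q+1)d_{2}$, i.e.\ $w\geq\lceil\frac{q+1}{q}d_{2}\rceil$. Taking the minimum of the two cases gives $d\geq\min\{d_{1},\lceil\frac{q+1}{q}d_{2}\rceil\}$, and purity follows because the same dichotomy applies to every nonzero element of $\mathcal{D}$ itself. The step I expect to be the real obstacle is the bookkeeping that makes the projective-line argument applicable to \emph{all} logical operators at once — in particular ruling out a ``degenerate'' representative of the shape $(a,\lambda a)$ with $a\in C_{2}\setminus C_{1}$ of small weight surviving in $\mathcal{D}^{\perp}\setminus\mathcal{D}$ — which is precisely what forces the hypothesis $k_{2}\geq k_{1}+2$ and dictates the diagonal form of the enlargement.
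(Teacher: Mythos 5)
The paper does not prove this proposition; it is quoted from the cited reference, so there is no internal proof to compare against, and your attempt has to be judged against the standard Steane-enlargement argument. Your overall skeleton is the right one — build a symplectic/trace-Hermitian self-orthogonal stabilizer from the nested pair, split the normalizer elements into the case $a,b\in C_{1}$ (weight $\geq d_{1}$) and the case where $(a,b)$ involves the enlargement, and then use the projective-line identity $\sum_{[\lambda:\mu]\in\mathbb{P}^{1}(\mathbb{F}_{q})}wt(\lambda a+\mu b)=q\,\lvert supp(a)\cup supp(b)\rvert$ to get $w\geq\lceil\frac{q+1}{q}d_{2}\rceil$ when $a,b\in C_{2}$ are $\mathbb{F}_{q}$-independent. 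That counting step and the dichotomy are correct.

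The genuine gap is in the enlargement itself, and it is exactly at the point you flag as "the real obstacle." Your seam generators $(1+\gamma\omega)b_{j}$, i.e. pairs $(b_{j}\mid\gamma b_{j})$ with one fixed scalar $\gamma\in\mathbb{F}_{q}^{\ast}$, amount to mixing the $X$- and $Z$-parts by the scalar matrix $\gamma I$ on the $(k_{2}-k_{1})$-dimensional space $C_{1}^{\perp}/C_{2}^{\perp}$. With this choice the normalizer condition for a pair $(a\mid b)$ with $a,b\in C_{2}$ reads $b-\gamma a\in C_{1}$, so any $a\in C_{2}\setminus C_{1}$ of weight $d_{2}$ gives the degenerate normalizer element $(a\mid\gamma a)$, which is not in the stabilizer (its $X$-part lies outside $C_{1}^{\perp}$) and has symplectic weight only $wt(a)=d_{2}$, in general strictly below $\min\{d_{1},\lceil\frac{q+1}{q}d_{2}\rceil\}$; purity fails as well. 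So your claim that every "new" logical operator has a representative with $a,b$ linearly independent is false for the construction you describe. The correct construction replaces $\gamma I$ by an invertible $(k_{2}-k_{1})\times(k_{2}-k_{1})$ matrix $M$ with \emph{no eigenvalue in} $\mathbb{F}_{q}$ (e.g. built from the companion matrix of an irreducible quadratic); then $b=\lambda a$ with $a\notin C_{1}$ forces $a\perp(\lambda I-M)$-image of the seam space together with $a\perp C_{2}^{\perp}$, hence $a\in(C_{1}^{\perp})^{\perp}=C_{1}$, a contradiction, which is what legitimizes the projective-line count. Such an $M$ exists precisely because $k_{2}-k_{1}\geq 2$ — this is the actual, concrete use of the hypothesis $k_{1}+2\leq k_{2}$, which in your write-up is invoked but never realized (a $1\times 1$ matrix, or your scalar twist in any dimension, always has its eigenvalue in $\mathbb{F}_{q}$).
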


Let $t, n, n_{j}, 1\leq j \leq s$, be positive integers with $n=\sum_{j=1}^{s}n_{j}, t \mid n_{j}, 1\leq j \leq s$, and let $V_{n}^{(p)}=\mathbb{F}_{p^{n_{1}}} \times \mathbb{F}_{p^{n_{2}}} \times \dots \times \mathbb{F}_{p^{n_{s}}}$. For a function $F: V_{n}^{(p)}\rightarrow V_{m}^{(p)}$ and a nonempty set $I\subset V_{m}^{(p)}$, define
\begin{equation}\label{1}
C_{D_{F, I}}=\{c_{\alpha, \beta}=(\sum_{j=1}^{s}Tr_{t}^{n_{j}}(\alpha_{j}x_{j}))_{x\in D_{F, I}}+\beta \textbf{1}: \alpha \in V_{n}^{(p)}, \beta \in \mathbb{F}_{p^t}\}.
\end{equation}
In this paper, we will construct $p^{t}$-ary self-orthogonal codes from vectorial dual-bent functions with certain conditions.

\subsection{Some results on character sums}
\label{sec:2.4}

In this subsection, we recall some well-known results on character sums.

\begin{proposition}[\cite{LN1997Fi}]\label{Proposition 7}
Let $q=p^m$, where $p$ is an odd prime. For any $a \in \mathbb{F}_{q}$,
\begin{equation*}
  \sum_{x \in \mathbb{F}_{q}^{*}}\eta_{m}(x)\zeta_{p}^{Tr_{1}^{m}(ax)}=(-1)^{m-1}\epsilon^{m}\eta_{m}(a)\sqrt{q}.
\end{equation*}
\end{proposition}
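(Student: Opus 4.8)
The plan is to reduce this identity to the classical evaluation of the quadratic Gauss sum over a prime field. First I would dispose of the case $a=0$: there the left-hand side equals $\sum_{x\in\mathbb{F}_q^{*}}\eta_m(x)$, which vanishes because $\mathbb{F}_q^{*}$ is cyclic of even order and hence contains as many squares as non-squares, while the right-hand side is $0$ since $\eta_m(0)=0$. Assuming now $a\neq 0$, I would perform the substitution $x\mapsto a^{-1}x$, which is a bijection of $\mathbb{F}_q^{*}$, and use that $\eta_m$ is multiplicative with $\eta_m(a^{-1})=\eta_m(a)$ (because $\eta_m(a)^{2}=1$). This turns the sum into
\[
\sum_{x\in\mathbb{F}_{q}^{*}}\eta_{m}(x)\zeta_{p}^{Tr_{1}^{m}(ax)}=\eta_{m}(a)\sum_{y\in\mathbb{F}_{q}^{*}}\eta_{m}(y)\zeta_{p}^{Tr_{1}^{m}(y)}=\eta_{m}(a)\,g,
\]
where $g:=\sum_{y\in\mathbb{F}_{q}^{*}}\eta_{m}(y)\zeta_{p}^{Tr_{1}^{m}(y)}$ is the quadratic Gauss sum of $\mathbb{F}_{q}$ with respect to the canonical additive character. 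So it remains to prove $g=(-1)^{m-1}\epsilon^{m}\sqrt{q}$.

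Next I would descend from $\mathbb{F}_{p^{m}}$ to $\mathbb{F}_{p}$. The key observation is that the quadratic character of $\mathbb{F}_{p^{m}}$ is the lift of the quadratic character of $\mathbb{F}_{p}$ along the norm map $N:=N_{\mathbb{F}_{p^{m}}/\mathbb{F}_{p}}$, i.e. $\eta_{m}=\eta_{1}\circ N$; indeed $\eta_{1}(N(x))=N(x)^{(p-1)/2}=x^{\frac{p^{m}-1}{p-1}\cdot\frac{p-1}{2}}=x^{\frac{p^{m}-1}{2}}=\eta_{m}(x)$ for all $x\in\mathbb{F}_{p^{m}}^{*}$. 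Consequently $g$ is the Gauss sum of a lifted character, and the Davenport--Hasse lifting relation yields
\[
g=(-1)^{m-1}g_{1}^{m},\qquad g_{1}:=\sum_{x\in\mathbb{F}_{p}^{*}}\eta_{1}(x)\zeta_{p}^{x}.
\]

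Finally one evaluates $g_{1}$. The elementary part is the standard orthogonality computation giving $g_{1}\overline{g_{1}}=p$ and $g_{1}^{2}=\eta_{1}(-1)p=(-1)^{(p-1)/2}p$; the substantive part is fixing the sign, namely Gauss's theorem on the quadratic Gauss sum, which states $g_{1}=\sqrt{p}$ when $p\equiv 1\pmod 4$ and $g_{1}=\sqrt{-1}\,\sqrt{p}$ when $p\equiv 3\pmod 4$, that is, $g_{1}=\epsilon\sqrt{p}$ with $\epsilon$ as in Section~\ref{Sec: 2.1}. Substituting back, $g=(-1)^{m-1}(\epsilon\sqrt{p})^{m}=(-1)^{m-1}\epsilon^{m}\sqrt{q}$, and multiplying by the factor $\eta_{m}(a)$ extracted in the first step gives the asserted formula. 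I expect the only real obstacles to be precisely the two classical inputs invoked here, the Davenport--Hasse relation and the determination of the sign of $g_{1}$; both are non-trivial but entirely standard and are proved in detail in \cite{LN1997Fi}, which is why the proposition is simply quoted from there.
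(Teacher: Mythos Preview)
Your proof is correct and follows the standard route (reduce to the canonical quadratic Gauss sum, then apply Davenport--Hasse and Gauss's sign determination). The paper itself does not prove this proposition at all: it is simply quoted from \cite{LN1997Fi} as a known result, so there is no in-paper argument to compare against---your write-up essentially supplies the proof that the cited reference contains.
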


\begin{proposition}[\cite{LN1997Fi}]\label{Proposition 8}
Let $f(x)=a_{2}x^{2}+a_{1}x+a_{0} \in \mathbb{F}_{p^m}[x]$ with $p$ being odd and $a_{2} \neq 0$. Then
\begin{equation*}
\sum_{x \in \mathbb{F}_{p^m}}\eta_{m}(f(x))=\left\{
\begin{split}
-\eta_{m}(a_{2}), & \ \text{ if } a_{1}^{2}-4a_{0}a_{2}\neq 0, \\
(p^m-1)\eta_{m}(a_{2}), & \ \text{ if } a_{1}^{2}-4a_{0}a_{2}=0.
\end{split}
\right.
\end{equation*}
\end{proposition}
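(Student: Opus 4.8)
The plan is to reduce the sum to a pure quadratic character sum by completing the square, which is legitimate precisely because $p$ is odd. First I would write, using that $2$ and $4$ are invertible in $\mathbb{F}_{p^m}$,
\[
f(x) = a_{2}\left[\left(x + \frac{a_{1}}{2a_{2}}\right)^{2} + \frac{4a_{0}a_{2}-a_{1}^{2}}{4a_{2}^{2}}\right],
\]
and set $c = \frac{4a_{0}a_{2}-a_{1}^{2}}{4a_{2}^{2}}$, so that the discriminant condition $a_{1}^{2}-4a_{0}a_{2}=0$ is equivalent to $c=0$. Since $x \mapsto x + \frac{a_{1}}{2a_{2}}$ is a bijection of $\mathbb{F}_{p^m}$ and since $\eta_{m}(a_{2}z)=\eta_{m}(a_{2})\eta_{m}(z)$ for every $z\in\mathbb{F}_{p^m}$ (both sides being $0$ when $z=0$), it follows that
\[
\sum_{x \in \mathbb{F}_{p^m}}\eta_{m}(f(x)) = \eta_{m}(a_{2})\sum_{y \in \mathbb{F}_{p^m}}\eta_{m}(y^{2}+c).
\]

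If $c=0$, then $\sum_{y}\eta_{m}(y^{2})$ picks up $0$ from $y=0$ and $\eta_{m}(y^{2})=1$ from each of the $p^m-1$ nonzero $y$, giving $(p^m-1)\eta_{m}(a_{2})$, as claimed. If $c\neq 0$, I would group the sum according to the value $u=y^{2}$, using the elementary fact that the number of $y\in\mathbb{F}_{p^m}$ with $y^{2}=u$ equals $1+\eta_{m}(u)$ (valid also for $u=0$ under the convention $\eta_{m}(0)=0$). This yields
\[
\sum_{y}\eta_{m}(y^{2}+c) = \sum_{u}(1+\eta_{m}(u))\,\eta_{m}(u+c) = \sum_{u}\eta_{m}(u+c) + \sum_{u}\eta_{m}(u)\eta_{m}(u+c).
\]
The first sum on the right equals $\sum_{v}\eta_{m}(v)=0$ because $u\mapsto u+c$ is a bijection and $\eta_{m}$ is a nontrivial multiplicative character. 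In the second sum the term $u=0$ vanishes, and for $u\neq 0$ we have $\eta_{m}(u)\eta_{m}(u+c)=\eta_{m}(u(u+c))=\eta_{m}(1+cu^{-1})$ since $\eta_{m}(u^{2})=1$; as $u$ runs over $\mathbb{F}_{p^m}^{*}$ (here $c\neq 0$ is used) the element $1+cu^{-1}$ runs over $\mathbb{F}_{p^m}\setminus\{1\}$, so this sum is $\sum_{v\neq 1}\eta_{m}(v) = \left(\sum_{v}\eta_{m}(v)\right)-\eta_{m}(1) = -1$. Hence $\sum_{y}\eta_{m}(y^{2}+c)=-1$ and the whole sum equals $-\eta_{m}(a_{2})$.

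The argument is entirely elementary; the only points that need care are the use of $p$ odd (so that completing the square is possible and $\eta_{m}$ is a genuine order-$2$ character), the consistent bookkeeping of the convention $\eta_{m}(0)=0$, and verifying that the linear and multiplicative substitutions involved are bijections — it is exactly this last requirement that forces the case distinction on whether $c$, equivalently the discriminant, vanishes. Alternatively, the statement is Theorem~5.48 of \cite{LN1997Fi} and may simply be quoted.
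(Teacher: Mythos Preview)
Your proof is correct and complete; it is the standard completing-the-square argument followed by the Jacobsthal-type evaluation, and all the bookkeeping (the convention $\eta_{m}(0)=0$, the bijections, the case $u+c=0$) is handled properly. The paper itself does not prove this proposition at all but simply cites it from \cite{LN1997Fi}, so your approach---giving the elementary argument and noting that it is Theorem~5.48 there---is entirely consistent with, and in fact more detailed than, what the paper does.
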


\begin{proposition} [\cite{Ding2015Co}]\label{Proposition 9}
Let $m, b$ be positive integers which satisfy that $m=2jj'$ for some positive integers $j, j'$, $b\geq 2$ and $b \mid (p^{j}+1)$, where $j$ is the smallest such positive integer. Let $H_{b}=\{x^{b}: x \in \mathbb{F}_{p^m}^{*}\}$ and $w$ be any fixed primitive element of $\mathbb{F}_{p^m}$. For any $a \in \mathbb{F}_{p^m}^{*}$,
\begin{equation*}
  \sum_{x \in H_{b}}\zeta_{p}^{Tr_{1}^{m}(a x)}=\left\{
  \begin{split}
  \delta_{w^{\frac{b}{2}}H_{b}}(a)p^{\frac{m}{2}}-\frac{p^\frac{m}{2}+1}{b}, & \ \text{ if } \ p, j', \frac{p^j+1}{b}  \text{ are all odd},\\
  \delta_{H_{b}}(a)(-1)^{j'+1}p^{\frac{m}{2}}+\frac{(-1)^{j'}p^{\frac{m}{2}}-1}{b}, & \ \text{otherwise}.
  \end{split}
  \right.
\end{equation*}
\end{proposition}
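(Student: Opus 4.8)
The plan is to reduce the character sum over the set $H_b$ of $b$-th powers to a sum involving multiplicative characters, and then to evaluate the resulting Gauss sums via the Davenport--Hasse theorem together with the theory of semiprimitive (uniform) cyclotomy, which is precisely the setting signalled by the hypothesis $b \mid (p^j+1)$ with $j$ minimal and $m = 2jj'$. First I would write $\sum_{x \in H_b} \zeta_p^{Tr_1^m(ax)} = \frac{1}{b}\sum_{x \in \mathbb{F}_{p^m}^*}\bigl(\sum_{\chi^{b} = \chi_0}\chi(x)\bigr)\zeta_p^{Tr_1^m(ax)}$, using that the number of $b$-th roots of a given $x$ is $\sum_{\chi : \chi^b = \chi_0}\chi(x)$. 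Interchanging the two sums gives $\frac{1}{b}\sum_{\chi^b = \chi_0}\overline{\chi}(a)\,G(\chi)$ where $G(\chi) = \sum_{x}\chi(x)\zeta_p^{Tr_1^m(ax)}$ is (after the usual substitution) $\overline{\chi}(a)$ times the Gauss sum $g(\chi) = \sum_{x \in \mathbb{F}_{p^m}^*}\chi(x)\zeta_p^{Tr_1^m(x)}$; the $\chi = \chi_0$ term contributes $-1$, so the expression becomes $-\frac{1}{b} + \frac{1}{b}\sum_{\chi \neq \chi_0,\ \chi^b = \chi_0}\overline{\chi}(a) g(\chi)$.

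The crux is then to evaluate $g(\chi)$ for the nontrivial characters $\chi$ of order dividing $b$. Because $b \mid p^j + 1$, each such $\chi$ is a lift (via the norm map $\mathbb{F}_{p^m}^* \to \mathbb{F}_{p^{2j}}^*$, valid since $2j \mid m$) of a character of $\mathbb{F}_{p^{2j}}^*$ whose order divides $p^j+1$; for such ``semiprimitive'' characters the Gauss sum over $\mathbb{F}_{p^{2j}}$ is explicitly $-(-1)^{?}p^{j}$ — concretely $g_{2j}(\psi) = (-1)^{j-1}(-1)^{(p^j+1)/b}\, p^{j}$ or $\pm p^j$ according to a parity condition — and then the Davenport--Hasse lifting relation $g_m(\chi) = (-1)^{m/(2j) - 1} g_{2j}(\psi)^{m/(2j)}$ with $m/(2j) = j'$ turns this into the closed form $\pm p^{m/2}$ with the sign governed exactly by the parities of $p$, $j'$, and $(p^j+1)/b$. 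Substituting back, the inner sum $\sum_{\chi \neq \chi_0}\overline{\chi}(a)g_m(\chi)$ equals $g_m$ (a constant $\pm p^{m/2}$) times $\sum_{\chi \neq \chi_0}\overline{\chi}(a)$, which is $b-1$ if $a \in H_b$ and $-1$ otherwise — but one must be careful in the first (``all odd'') case, where the sign $\overline{\chi}(a)$ is not constant over the characters and instead produces the shifted indicator $\delta_{w^{b/2}H_b}(a)$; this comes from the fact that when $(p^j+1)/b$ is odd the relevant Gauss sums carry a twist by the quadratic character, so $\overline{\chi}(a)g_m(\chi)$ depends on whether $a$ lies in $H_b$ or in its coset $w^{b/2}H_b$.

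I expect the main obstacle to be the careful bookkeeping of signs in the semiprimitive Gauss sum evaluation and its interaction with the Davenport--Hasse lift: getting the dichotomy exactly right — why the ``$p$, $j'$, $(p^j+1)/b$ all odd'' branch produces a coset-shifted indicator $\delta_{w^{b/2}H_b}(a)$ with a clean $p^{m/2}$, while every other parity pattern collapses to $\delta_{H_b}(a)(-1)^{j'+1}p^{m/2} + \frac{(-1)^{j'}p^{m/2}-1}{b}$ — requires tracking how the quadratic character enters $g_{2j}(\psi)$ and how its $j'$-th power behaves. The rest (orthogonality of characters, the elementary $\chi = \chi_0$ bookkeeping, the substitution $x \mapsto a^{-1}x$) is routine. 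Since this is a known result cited from \cite{Ding2015Co}, for the paper one may either reproduce this argument or simply invoke it; I would record the reduction to semiprimitive Gauss sums and cite Davenport--Hasse for the evaluation, flagging the sign analysis as the delicate point.
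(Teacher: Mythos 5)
The paper offers no proof of Proposition 9: it is a classical result on Gaussian periods in the semiprimitive case, quoted verbatim from the cited reference \cite{Ding2015Co}. Your sketch is essentially the standard argument behind that citation --- expand $\delta_{H_b}$ via the multiplicative characters of order dividing $b$, reduce to Gauss sums $g(\chi)$, evaluate these by the semiprimitive (Stickelberger/Baumert--McEliece) formula combined with the Davenport--Hasse lift from $\mathbb{F}_{p^{2j}}$ to $\mathbb{F}_{p^m}$, and resum by orthogonality --- so it is the right route and would reconstruct the proposition. One small imprecision: in the ``all odd'' branch it is not $\overline{\chi}(a)$ whose sign varies over the characters but $g_m(\chi)$ itself; there one has $g_m(\chi)=\chi(w^{b/2})\,p^{m/2}$ (the quadratic twist you allude to), so that $\overline{\chi}(a)g_m(\chi)=\overline{\chi}(aw^{-b/2})\,p^{m/2}$ and orthogonality over the nontrivial $\chi$ with $\chi^{b}=\chi_{0}$ yields exactly the shifted indicator $\delta_{w^{b/2}H_b}(a)$. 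You correctly flag the only genuinely delicate point: characters $\chi$ of different orders $N\mid b$ have different minimal exponents $j_N$ and different parities of $(p^{j_N}+1)/N$, and one must verify that the resulting signs nonetheless agree (this uses that $j/j_N$ is odd, so $m/(2j_N)$ has the same parity as $j'$). Since the proposition is imported from the literature, the paper's choice to cite \cite{Ding2015Co} without proof is appropriate, and your sketch is a faithful account of how that cited proof goes.
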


\begin{proposition}[\cite{Ding2015Co}] \label{Propposition 10}
Let $q$ be a power of an odd prime, $\mathcal{S}=\{x^{2}: x \in \mathbb{F}_{q}^{*}\}$ and $\mathcal{N}=\mathbb{F}_{q}^{*} \backslash \mathcal{S}$. Then when $q\equiv 1 \pmod 4$, $|(\mathcal{S}+1)\cap \mathcal{S}|=\frac{q-5}{4}$, $|(\mathcal{S}+1)\cap \mathcal{N}|=|(\mathcal{N}+1)\cap \mathcal{S}|=|(\mathcal{N}+1)\cap \mathcal{N}|=\frac{q-1}{4}$; when $q\equiv 3 \pmod 4$, $|(\mathcal{S}+1)\cap \mathcal{N}|=\frac{q+1}{4}$, $|(\mathcal{S}+1)\cap \mathcal{S}|=|(\mathcal{N}+1)\cap \mathcal{S}|=|(\mathcal{N}+1)\cap \mathcal{N}|=\frac{q-3}{4}$.
\end{proposition}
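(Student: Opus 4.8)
The plan is to reduce the statement to a short computation with the quadratic character. Write $q=p^{m}$ and let $\eta:=\eta_{m}$ denote the quadratic character of $\mathbb{F}_{q}$ (so $\eta(0)=0$, $\eta(z)=1$ for $z\in\mathcal{S}$, $\eta(z)=-1$ for $z\in\mathcal{N}$); then $\tfrac{1}{2}(1+\eta(z))$ and $\tfrac{1}{2}(1-\eta(z))$ are the indicator functions of $\mathcal{S}$ and $\mathcal{N}$ on $\mathbb{F}_{q}^{*}$. The first step is to observe that $y\in(\mathcal{S}+1)\cap\mathcal{S}$ precisely when $y\notin\{0,1\}$, $\eta(y)=1$ and $\eta(y-1)=1$, with completely analogous descriptions for $(\mathcal{S}+1)\cap\mathcal{N}$, $(\mathcal{N}+1)\cap\mathcal{S}$ and $(\mathcal{N}+1)\cap\mathcal{N}$. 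Hence each of the four cardinalities equals a sum over $y\in\mathbb{F}_{q}\backslash\{0,1\}$ of the product of the two corresponding indicators; for instance
\[
|(\mathcal{S}+1)\cap\mathcal{S}|=\frac{1}{4}\sum_{y\in\mathbb{F}_{q}\backslash\{0,1\}}\bigl(1+\eta(y)\bigr)\bigl(1+\eta(y-1)\bigr).
\]

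Next I would expand each such product into four pieces and evaluate the resulting elementary character sums over $y\in\mathbb{F}_{q}\backslash\{0,1\}$: the constant term contributes $q-2$; using $\sum_{z\in\mathbb{F}_{q}^{*}}\eta(z)=0$ one gets $\sum_{y}\eta(y)=-\eta(1)=-1$ and, after the substitution $z=y-1$, $\sum_{y}\eta(y-1)=-\eta(-1)$; and since $\eta$ vanishes at $0$ the two excluded points $y=0,1$ do not affect the cross term, so $\sum_{y}\eta(y)\eta(y-1)=\sum_{y\in\mathbb{F}_{q}}\eta(y^{2}-y)$, which by Proposition~\ref{Proposition 8} (taking $a_{2}=1$, $a_{1}=-1$, $a_{0}=0$, with discriminant $a_{1}^{2}-4a_{0}a_{2}=1\neq0$) equals $-\eta(1)=-1$. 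Collecting terms gives $|(\mathcal{S}+1)\cap\mathcal{S}|=\tfrac{1}{4}(q-4-\eta(-1))$, and the same expansion with the appropriate sign changes produces $|(\mathcal{S}+1)\cap\mathcal{N}|=\tfrac{1}{4}(q-\eta(-1))$ and $|(\mathcal{N}+1)\cap\mathcal{S}|=|(\mathcal{N}+1)\cap\mathcal{N}|=\tfrac{1}{4}(q-2+\eta(-1))$.

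Finally I would split into the two congruence classes, using $\eta(-1)=1$ when $q\equiv1\pmod4$ and $\eta(-1)=-1$ when $q\equiv3\pmod4$; substituting these into the four closed forms above immediately yields the eight values in the statement. I do not expect any genuine obstacle here: the only points that demand care are tracking the signs correctly in the four indicator products and remembering to delete the two ``boundary'' values $y=0,1$, while the sole nontrivial input is the evaluation of $\sum_{y\in\mathbb{F}_{q}}\eta(y^{2}-y)$, which is exactly Proposition~\ref{Proposition 8}. (Alternatively one could simply quote the classical table of cyclotomic numbers of order two, but the computation sketched above is self-contained.)
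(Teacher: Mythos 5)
Your derivation is correct. Note that the paper gives no proof of this proposition at all — it is simply quoted from the cited reference (these are the classical cyclotomic numbers of order two) — so there is nothing to compare against; your argument via the indicator functions $\tfrac{1}{2}(1\pm\eta)$, the vanishing of $\sum_{z\in\mathbb{F}_q^*}\eta(z)$, and the evaluation of $\sum_{y}\eta(y^2-y)=-\eta(1)$ through Proposition~\ref{Proposition 8} is the standard self-contained way to establish it, and all four closed forms $\tfrac14(q-4-\eta(-1))$, $\tfrac14(q-\eta(-1))$, $\tfrac14(q-2+\eta(-1))$ check out against the stated values in both congruence classes.
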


\section{Self-orthogonal codes from vectorial dual-bent functions with Condition I}
\label{sec:3}

In this section, we construct self-orthogonal codes from vectorial dual-bent functions with the following condition:

Condition I: Let $n, n_{j}, 1\leq j \leq s, m, t$ be positive integers for which $n=\sum_{j=1}^{s}n_{j}, 2 \mid n, t \mid n_{j}, 1\leq j \leq s, t\leq \frac{n}{2}, m<\frac{n}{2}$, and $m\geq 2$ when $p=2$, and let $V_{n}^{(p)}=\mathbb{F}_{p^{n_{1}}} \times \mathbb{F}_{p^{n_{2}}} \times \dots \times \mathbb{F}_{p^{n_{s}}}$. Let $F: V_{n}^{(p)}\rightarrow V_{m}^{(p)}$ be a vectorial dual-bent function satisfying
\begin{itemize}
  \item There is a vectorial dual $F^{*}$ such that $(F_{c})^{*}=(F^{*})_{c}, c \in V_{m}^{(p)}\backslash \{0\}$;
  \item $F(ax)=F(x), a \in \mathbb{F}_{p^t}^{*}, x \in V_{n}^{(p)}$;
  \item All component functions $F_{c}, c \in V_{m}^{(p)}\backslash \{0\}$, are weakly regular with $\varepsilon_{F_{c}}=\varepsilon, c \in V_{m}^{(p)}\backslash \{0\}$, where $\varepsilon \in \{\pm1\}$ is a constant.
\end{itemize}

\subsection{Some lemmas}\label{3.1}
In this subsection, we give some useful lemmas.

\begin{lemma}\label{Lemma 1}
Let $F$ be a vectorial dual-bent function with Condition I. Then the vectorial dual $F^{*}$ with $(F_{c})^{*}=(F^{*})_{c}, c \in V_{m}^{(p)}\backslash \{0\}$, is a vectorial dual-bent function with Condition I.
\end{lemma}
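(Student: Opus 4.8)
The plan is to check, one by one, that the vectorial dual $F^{*}$ satisfying $(F_{c})^{*}=(F^{*})_{c}$ for all $c\in V_{m}^{(p)}\backslash\{0\}$ again meets every requirement of Condition I. Since $F^{*}$ has the same domain $V_{n}^{(p)}$ and codomain $V_{m}^{(p)}$ as $F$, the arithmetic constraints on $n,n_{j},m,t$ (namely $2\mid n$, $t\mid n_{j}$, $t\le n/2$, $m<n/2$, and $m\ge2$ when $p=2$) carry over verbatim; what remains is vectorial dual-bentness together with the three bulleted properties.

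First I would establish that $F^{*}$ is vectorial dual-bent and identify its dual. For each $c\neq0$ the component $(F^{*})_{c}=(F_{c})^{*}$ is the dual of a weakly regular bent function, hence is itself weakly regular bent; thus all components of $F^{*}$ are bent and $F^{*}$ is vectorial bent. Applying the rule $(g^{*})^{*}(x)=g(-x)$ for weakly regular bent $g$ gives $((F^{*})_{c})^{*}(x)=((F_{c})^{*})^{*}(x)=F_{c}(-x)=\langle c,F(-x)\rangle_{m}$. So if we set $\widetilde{F}(x):=F(-x)$, which is vectorial bent (being $F$ precomposed with the linear bijection $x\mapsto-x$), then $((F^{*})_{c})^{*}=\widetilde{F}_{c}$ for every $c\neq0$. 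Hence $F^{*}$ is vectorial dual-bent with vectorial dual $\widetilde{F}=(F^{*})^{*}$, the associated permutation being the identity; in particular $((F^{*})_{c})^{*}=((F^{*})^{*})_{c}$, which is the first bullet for $F^{*}$. The third bullet is equally short: by the dual-sign rule $\varepsilon_{(F^{*})_{c}}=\varepsilon_{(F_{c})^{*}}=\varepsilon_{F_{c}}^{-1}=\varepsilon^{-1}=\varepsilon$ (here $\varepsilon\in\{\pm1\}$), so every component function of $F^{*}$ is weakly regular with the same constant $\varepsilon$.

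The substantive point is the second bullet, $F^{*}(ax)=F^{*}(x)$ for $a\in\mathbb{F}_{p^{t}}^{*}$, and I would prove it componentwise. From $F(ax)=F(x)$ we get $F_{c}(ax)=F_{c}(x)$ for all $c$. Because $t\mid n_{j}$, multiplication by $a\in\mathbb{F}_{p^{t}}^{*}$ acts coordinatewise on $V_{n}^{(p)}=\mathbb{F}_{p^{n_{1}}}\times\cdots\times\mathbb{F}_{p^{n_{s}}}$, and the form obeys $\langle ay,x\rangle_{n}=\sum_{j}Tr_{1}^{n_{j}}(ay_{j}x_{j})=\langle y,ax\rangle_{n}$. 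Hence, after the change of variable $x\mapsto a^{-1}x$ in the Walsh sum, $W_{F_{c}}(ay)=\sum_{x\in V_{n}^{(p)}}\zeta_{p}^{F_{c}(x)-\langle y,ax\rangle_{n}}=\sum_{x\in V_{n}^{(p)}}\zeta_{p}^{F_{c}(a^{-1}x)-\langle y,x\rangle_{n}}=W_{F_{c}}(y)$, where the last step uses $F_{c}(a^{-1}x)=F_{c}(x)$. Now invoke weak regularity, $W_{F_{c}}(z)=\varepsilon p^{n/2}\zeta_{p}^{(F_{c})^{*}(z)}$ for all $z$, to deduce $\zeta_{p}^{(F_{c})^{*}(ay)}=\zeta_{p}^{(F_{c})^{*}(y)}$, i.e. $(F^{*})_{c}(ay)=(F^{*})_{c}(y)$ for every $c\neq0$. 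Since $\langle,\rangle_{m}$ is non-degenerate, this forces $F^{*}(ay)=F^{*}(y)$, establishing the second bullet.

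Collecting the parameter constraints with the three verified bullets shows that $F^{*}$ is a vectorial dual-bent function with Condition I. The main obstacle is precisely the second bullet: one must push the $\mathbb{F}_{p^{t}}^{*}$-invariance of $F$ through the Walsh transform to its dual, and this is exactly where the weak regularity of the component functions (giving each a single-term Walsh spectrum $\varepsilon p^{n/2}\zeta_{p}^{(F_{c})^{*}}$) and the non-degeneracy of the inner product are needed; everything else is bookkeeping about domains, codomains and signs.
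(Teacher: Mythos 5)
Your proposal is correct and follows essentially the same route as the paper: the sign and duality bookkeeping via $(g^*)^*(x)=g(-x)$ and $\varepsilon_{g^*}=\varepsilon_g^{-1}$, and then pushing the $\mathbb{F}_{p^t}^*$-invariance through to $F^*$ by a change of variables in a Walsh sum combined with the single-term spectrum from weak regularity. The only cosmetic difference is that you apply the substitution directly to $W_{F_c}(ay)$ and read off $(F_c)^*(ay)=(F_c)^*(y)$, whereas the paper runs the equivalent computation through the inverse Walsh transform of $(F^*)_c$; both arguments are the same in substance.
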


\begin{proof}
For any $c \in V_{m}^{(p)} \backslash \{0\}$, since $F_{c}$ is weakly regular bent with $\varepsilon_{F_{c}}=\varepsilon \in \{\pm 1\}$, $(F^{*})_{c}=(F_{c})^{*}$ is weakly regular bent with $((F^{*})_{c})^{*}(x)=((F_{c})^{*})^{*}(x)=F_{c}(-x)=F_{c}(x)$ and $\varepsilon_{(F^{*})_{c}}=\varepsilon$. For any $c \in V_{m}^{(p)} \backslash \{0\}$ and $a \in \mathbb{F}_{p^t}^{*}, x \in V_{n}^{(p)}$, we have
\begin{small}
\begin{equation*}
\begin{split}
p^{n}\zeta_{p}^{(F^{*})_{c}(ax)}&=\sum_{y \in V_{n}^{(p)}}W_{(F^{*})_{c}}(y)\zeta_{p}^{\sum_{j=1}^{s}Tr_{1}^{n_{j}}(ax_{j}y_{j})}=\varepsilon p^{\frac{n}{2}}\sum_{y \in V_{n}^{(p)}}\zeta_{p}^{F_{c}(y)+\sum_{j=1}^{s}Tr_{1}^{n_{j}}(ax_{j}y_{j})}\\
&=\varepsilon p^{\frac{n}{2}}\sum_{y \in V_{n}^{(p)}}\zeta_{p}^{F_{c}(ay)+\sum_{j=1}^{s}Tr_{1}^{n_{j}}(ax_{j}y_{j})}=\varepsilon p^{\frac{n}{2}}\sum_{y \in V_{n}^{(p)}}\zeta_{p}^{F_{c}(y)+\sum_{j=1}^{s}Tr_{1}^{n_{j}}(x_{j}y_{j})}\\
&=\sum_{y \in V_{n}^{(p)}}W_{(F^{*})_{c}}(y)\zeta_{p}^{\sum_{j=1}^{s}Tr_{1}^{n_{j}}(x_{j}y_{j})}=p^{n}\zeta_{p}^{(F^{*})_{c}(x)},
\end{split}
\end{equation*}
\end{small}where in the third equation we use $F(ax)=F(x)$. Therefore, $(F^{*})_{c}(ax)=(F^{*})_{c}(x), a \in \mathbb{F}_{p^t}^{*}, x \in V_{n}^{(p)}$, for all $c \in V_{m}^{(p)} \backslash \{0\}$, which implies that $F^{*}(ax)=F^{*}(x), a \in \mathbb{F}_{p^t}^{*}, x \in V_{n}^{(p)}$. Thus $F^{*}$ is a vectorial dual-bent function with Condition I.
\end{proof}

\begin{lemma} \label{Lemma 2}
Let $F$ be a vectorial dual-bent function with Condition I. Then the value distributions of $F$ and $F^{*}$ are given by
\begin{equation*}
|D_{F, i}|=|D_{F^{*}, i}|=p^{n-m}+\varepsilon p^{\frac{n}{2}-m}(p^m\delta_{F(0)}(i)-1), i \in V_{m}^{(p)}.
\end{equation*}
\end{lemma}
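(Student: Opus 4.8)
The plan is to count the size of $D_{F,i}=\{x\in V_n^{(p)}: F(x)=i\}$ via a character-sum argument applied to the component functions. Recall that for each $c\in V_m^{(p)}\backslash\{0\}$, the component function $F_c$ is a weakly regular $p$-ary bent function with $\varepsilon_{F_c}=\varepsilon$ and, because $n$ is even and $F(ax)=F(x)$ forces $F_c(0)=F_c(0\cdot x)$ trivially, one checks $F_c(0)=F(0)_c$ (more precisely $(F_c)^{*}=(F^{*})_c$). First I would write the indicator of the event $F(x)=i$ using additive characters on $V_m^{(p)}$:
\begin{equation*}
|D_{F,i}|=\sum_{x\in V_n^{(p)}}\frac{1}{p^m}\sum_{c\in V_m^{(p)}}\zeta_p^{\langle c, F(x)-i\rangle_m}=p^{n-m}+\frac{1}{p^m}\sum_{c\in V_m^{(p)}\backslash\{0\}}\zeta_p^{-\langle c,i\rangle_m}\sum_{x\in V_n^{(p)}}\zeta_p^{F_c(x)}.
\end{equation*}
The inner sum $\sum_{x}\zeta_p^{F_c(x)}$ is precisely the Walsh transform $W_{F_c}(0)$, which by weak regularity equals $\varepsilon\, p^{n/2}\zeta_p^{(F_c)^{*}(0)}$. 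So the whole problem reduces to evaluating $(F_c)^{*}(0)$ for each $c\ne0$.

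The key step is to identify $(F_c)^{*}(0)$. Here I would use that $(F_c)^{*}=(F^{*})_c$ together with the fact (from Lemma~\ref{Lemma 1}, or re-derived directly) that $((F_c)^{*})^{*}(x)=F_c(-x)=F_c(x)$ since $F_c(-x)=F_c((-1)x)=F_c(x)$ when $-1\in\mathbb{F}_{p^t}^{*}$ — which holds as $t\ge1$ so $-1\in\mathbb{F}_p^{*}\subseteq\mathbb{F}_{p^t}^{*}$. Applying the inverse Walsh transform to $(F^{*})_c$ at the point $0$ and using $W_{(F^{*})_c}(y)=\varepsilon\,p^{n/2}\zeta_p^{((F^{*})_c)^{*}(y)}=\varepsilon\,p^{n/2}\zeta_p^{F_c(y)}$, one gets
\begin{equation*}
p^n\zeta_p^{(F^{*})_c(0)}=\sum_{y}W_{(F^{*})_c}(y)=\varepsilon\,p^{n/2}\sum_{y}\zeta_p^{F_c(y)}=\varepsilon\,p^{n/2}W_{F_c}(0)=\varepsilon\,p^{n/2}\cdot\varepsilon\,p^{n/2}\zeta_p^{(F_c)^{*}(0)},
\end{equation*}
so $\zeta_p^{(F^{*})_c(0)}=\zeta_p^{(F_c)^{*}(0)}$, which is consistent but not yet the value. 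Instead the cleaner route is: $(F_c)^{*}(0)=(F^{*})_c(0)=\langle c, F^{*}(0)\rangle_m$, and applying the same computation to $F^{*}$ in place of $F$ (valid since $F^{*}$ also satisfies Condition~I by Lemma~\ref{Lemma 1}) shows $\langle c, F^{*}(0)\rangle_m=\langle c, ((F^{*})^{*})(0)\rangle_m$-type relations pin it down; concretely one shows $F_c(0)=0$ is \emph{not} assumed, so I would simply keep $(F_c)^{*}(0)=\langle c,b\rangle_m$ with $b:=F^{*}(0)$ as an unknown vector and evaluate the character sum $\sum_{c\ne0}\zeta_p^{\langle c,b\rangle_m-\langle c,i\rangle_m}=p^m\delta_b(i)-1$.

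Putting the pieces together gives
\begin{equation*}
|D_{F,i}|=p^{n-m}+\frac{\varepsilon\,p^{n/2}}{p^m}\bigl(p^m\delta_{b}(i)-1\bigr)=p^{n-m}+\varepsilon\,p^{\frac{n}{2}-m}\bigl(p^m\delta_{b}(i)-1\bigr),
\end{equation*}
and the remaining task is to show $b=F^{*}(0)=F(0)$, so that $\delta_b=\delta_{F(0)}$. This I expect to be the main obstacle, though a mild one: it follows from $((F^{*})_c)^{*}(x)=F_c(x)$ evaluated at $x=0$, giving $\langle c,F(0)\rangle_m=((F^{*})_c)^{*}(0)$, combined with the general weakly-regular duality identity $(g^{*})^{*}(x)=g(-x)$ applied to $g=(F^{*})_c$, which yields $((F^{*})_c)^{*}(0)=(F^{*})_c(0)=\langle c, F^{*}(0)\rangle_m$; hence $\langle c,F(0)\rangle_m=\langle c,F^{*}(0)\rangle_m$ for all $c\ne0$, forcing $F(0)=F^{*}(0)$. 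The identical computation with $F$ and $F^{*}$ interchanged gives $|D_{F^{*},i}|$ with the same formula, completing the proof. The only care needed is bookkeeping of signs and the justification that $W_{F_c}(0)=\varepsilon p^{n/2}\zeta_p^{(F_c)^{*}(0)}$ with the \emph{same} $\varepsilon$ for every $c$, which is exactly the third bullet of Condition~I.
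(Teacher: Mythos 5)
Your character-sum computation is correct as far as it goes: it yields $|D_{F,i}|=p^{n-m}+\varepsilon p^{\frac{n}{2}-m}(p^m\delta_{F^{*}(0)}(i)-1)$, which is essentially the content of the external result (Proposition 4 of \cite{WFW2023Be}) that the paper's proof simply cites, so up to that point you have a more self-contained version of the same reduction. The genuine gap is in your derivation of $F^{*}(0)=F(0)$ --- exactly the fact the paper does \emph{not} prove internally but imports from Corollary 2 and Proposition 5 of \cite{CMP2021Ve}. You assert that the identity $(g^{*})^{*}(x)=g(-x)$ applied to $g=(F^{*})_{c}$ ``yields $((F^{*})_{c})^{*}(0)=(F^{*})_{c}(0)$''. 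It does not: evaluated at $x=0$ that identity gives $(((F^{*})_{c})^{*})^{*}(0)=(F^{*})_{c}(0)$, a statement about the \emph{double} dual, whereas you need the \emph{single} dual at $0$. Since the first half of your argument correctly gives $((F^{*})_{c})^{*}(0)=F_{c}(0)=\langle c,F(0)\rangle_{m}$, the missing equality $((F^{*})_{c})^{*}(0)=(F^{*})_{c}(0)=\langle c,F^{*}(0)\rangle_{m}$ is precisely equivalent to $\langle c,F(0)\rangle_{m}=\langle c,F^{*}(0)\rangle_{m}$ for all $c$, i.e.\ to the very claim $F(0)=F^{*}(0)$ you are trying to prove; the argument is circular. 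Note that for a generic weakly regular bent function $g$ one does \emph{not} have $g^{*}(0)=g(0)$ (a translate $g(x)=f(x-b)$ has $g^{*}(0)=f^{*}(0)$ but $g(0)=f(-b)$), so some structural input from Condition I beyond weak regularity is genuinely required at this step, and your own earlier inverse-Walsh computation correctly collapses to a tautology for the same reason.

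If you want to close the gap without citing \cite{CMP2021Ve}, here is a partial repair inside your framework: the invariance $F(ax)=F(x)$ for $a\in\mathbb{F}_{p^t}^{*}$ makes each $D_{F,i}\setminus\{0\}$ a union of free $\mathbb{F}_{p^t}^{*}$-orbits, so $(p^t-1)$ divides $|D_{F,i}|$ for $i\neq F(0)$ and divides $|D_{F,F(0)}|-1$. If $F^{*}(0)\neq F(0)$, your formula gives $|D_{F,F(0)}|=|D_{F,i}|=p^{n-m}-\varepsilon p^{\frac{n}{2}-m}$ for any $i\notin\{F(0),F^{*}(0)\}$ (such $i$ exists since $p^m\geq 3$ under Condition I), whence $(p^t-1)\mid 1$ and $p^t=2$. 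This settles all cases except $p=2,t=1$, which Condition I permits, so an additional argument (or the citation) is still needed there.
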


\begin{proof}
By Proposition 4 of \cite{WFW2023Be} and its proof, $F(x)-F(0)$ is a vectorial dual-bent function with Condition I, and the corresponding vectorial dual is $F^{*}(x)-F(0)$. By Corollary 2 and Proposition 5 of \cite{CMP2021Ve}, $F^{*}(0)=F(0)$. Then the result follows from Proposition 4 of \cite{WFW2023Be} and Lemma 1.
\end{proof}

\begin{lemma} \label{Lemma 3}
Let $\psi: V_{n}^{(p)}\rightarrow V_{m}^{(p)}$ be a vectorial dual-bent function with Condition I, $m'$ be a positive integer with $m'\leq m$, and $m'\neq 1$ when $p=2$. Then for any balanced function $B: V_{m}^{(p)}\rightarrow V_{m'}^{(p)}$, $F(x)=B(\psi(x))$ is a vectorial dual-bent function with Condition I.
\end{lemma}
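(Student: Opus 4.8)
The plan is to prove that $F = B \circ \psi$ is vectorial dual-bent with vectorial dual $F^{*} = B \circ \psi^{*}$ by a direct Walsh-transform computation, and then to read off the three bullets of Condition~I, the numerical constraints on the parameters being inherited almost verbatim from those of $\psi$ with $m'$ in place of $m$.

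First I would fix $c \in V_{m'}^{(p)} \setminus \{0\}$, put $g_{c}\colon V_{m}^{(p)} \to \mathbb{F}_{p}$, $g_{c}(y) = \langle c, B(y)\rangle_{m'}$, so that $F_{c}(x) = g_{c}(\psi(x))$, and expand $\zeta_{p}^{g_{c}(\psi(x))}$ by the inverse Walsh transform of $g_{c}$ on $V_{m}^{(p)}$. Since $\langle a, \psi(x)\rangle_{m} = \psi_{a}(x)$, summing against $\zeta_{p}^{-\langle \alpha, x\rangle_{n}}$ gives
\begin{equation*}
W_{F_{c}}(\alpha) = \frac{1}{p^{m}} \sum_{a \in V_{m}^{(p)}} W_{g_{c}}(a)\, W_{\psi_{a}}(\alpha) , \qquad \alpha \in V_{n}^{(p)} .
\end{equation*}
Here I would use that $B$ is balanced: grouping $y$ by the value $B(y)$ gives $W_{g_{c}}(0) = p^{m-m'}\sum_{j \in V_{m'}^{(p)}} \zeta_{p}^{\langle c, j\rangle_{m'}} = p^{m}\delta_{0}(c) = 0$ since $c \neq 0$, so the term $a = 0$ (which equals $W_{g_{c}}(0)\,p^{n}\delta_{0}(\alpha)$) drops out. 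For $a \neq 0$, $\psi_{a}$ is weakly regular bent with $\varepsilon_{\psi_{a}} = \varepsilon$ and $(\psi_{a})^{*} = (\psi^{*})_{a}$ by the first bullet of Condition~I for $\psi$, hence $W_{\psi_{a}}(\alpha) = \varepsilon p^{n/2}\zeta_{p}^{(\psi^{*})_{a}(\alpha)}$ (for $p = 2$ one reads $\varepsilon = 1$ and $\zeta_{2} = -1$). Substituting and recognising $\sum_{a \in V_{m}^{(p)}} W_{g_{c}}(a)\zeta_{p}^{\langle a,\psi^{*}(\alpha)\rangle_{m}} = p^{m}\zeta_{p}^{g_{c}(\psi^{*}(\alpha))}$ as the inverse Walsh transform of $g_{c}$ at the point $\psi^{*}(\alpha)$, I obtain
\begin{equation*}
W_{F_{c}}(\alpha) = \varepsilon p^{n/2}\,\zeta_{p}^{\,g_{c}(\psi^{*}(\alpha))} = \varepsilon p^{n/2}\,\zeta_{p}^{\,(B\circ\psi^{*})_{c}(\alpha)} .
\end{equation*}
In particular $|W_{F_{c}}(\alpha)| = p^{n/2}$, so each $F_{c}$ is weakly regular bent with $\varepsilon_{F_{c}} = \varepsilon$ and $(F_{c})^{*} = (B\circ\psi^{*})_{c}$.

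Next I would check that $B \circ \psi^{*}$ is itself vectorial bent, which by the definition (with the identity permutation) is exactly what turns the previous display into the statement that $F$ is vectorial dual-bent with vectorial dual $F^{*} = B \circ \psi^{*}$. By Lemma~\ref{Lemma 1}, $\psi^{*}$ is a vectorial dual-bent function with Condition~I, and the proof of that lemma gives $((\psi^{*})_{c})^{*} = \psi_{c}$ for all $c \neq 0$; hence the identical computation applied to $\psi^{*}$ (whose vectorial dual, with identity permutation, is $\psi$) yields $W_{(B\circ\psi^{*})_{c}}(\alpha) = \varepsilon p^{n/2}\zeta_{p}^{g_{c}(\psi(\alpha))}$, so every component of $B\circ\psi^{*}$ is bent. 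Thus $F$ is vectorial dual-bent, and the equality $(F_{c})^{*} = (F^{*})_{c}$ just obtained is the first bullet of Condition~I.

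It then remains to verify the other items of Condition~I for $F$. The parameters $n, n_{j}, t$ are unchanged, so $n = \sum_{j=1}^{s} n_{j}$, $2 \mid n$, $t \mid n_{j}$ and $t \le n/2$ persist, while the new output dimension satisfies $m' \le m < n/2$ and $m' \ge 2$ when $p = 2$ by the hypothesis $m' \neq 1$; the third bullet ($\varepsilon_{F_{c}} = \varepsilon$) is part of the Walsh computation; and the second bullet follows from $F(ax) = B(\psi(ax)) = B(\psi(x)) = F(x)$ for $a \in \mathbb{F}_{p^{t}}^{*}$, using the corresponding invariance of $\psi$. The step I expect to need the most care is the Walsh identity above — in particular checking that the $a = 0$ contribution really vanishes (this is the only place balancedness of $B$, hence $m' \le m$, enters) and that $B \circ \psi^{*}$ is vectorial bent, which I would handle by reduction to Lemma~\ref{Lemma 1} rather than from scratch. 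As an alternative to the explicit computation one could invoke a general composition result for vectorial dual-bent functions and then only verify Condition~I, but I would prefer the self-contained argument, since it simultaneously exhibits the dual $F^{*} = B \circ \psi^{*}$ and the value $\varepsilon_{F_{c}} = \varepsilon$ that Condition~I requires.
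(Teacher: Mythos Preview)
Your proof is correct. The paper arrives at the same identity $W_{F_{c}}(\alpha)=\varepsilon p^{n/2}\zeta_{p}^{(B\circ\psi^{*})_{c}(\alpha)}$, but by a different bookkeeping: it groups $x$ according to the value $\psi(x)$ and invokes a precomputed formula (from \cite{WFW2023Be}) for the character sums $\chi_{a}(D_{\psi,j})$, whereas you expand $\zeta_{p}^{g_{c}(\psi(x))}$ by the inverse Walsh transform of $g_{c}$ and recognise the convolution $W_{F_{c}}=p^{-m}\sum_{a}W_{g_{c}}(a)W_{\psi_{a}}$. Your route is more self-contained and makes the role of balancedness (namely $W_{g_{c}}(0)=0$) completely transparent. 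For the step that $B\circ\psi^{*}$ is vectorial bent, the paper takes a shorter path than yours: since each $F_{c}$ is weakly regular bent, its dual $(F_{c})^{*}=(B\circ\psi^{*})_{c}$ is automatically bent, so no second pass through the computation (via Lemma~\ref{Lemma 1}) is needed. Either argument is fine, and both yield $F^{*}=B\circ\psi^{*}$ with $\varepsilon_{F_{c}}=\varepsilon$.
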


\begin{proof}
Since $\psi(ax)=\psi(x), a \in \mathbb{F}_{p^t}^{*}, x \in V_{n}^{(p)}$, $F(ax)=F(x), a \in \mathbb{F}_{p^t}^{*}, x \in V_{n}^{(p)}$. For any $c \in V_{m'}^{(p)} \backslash \{0\}$ and $a \in V_{n}^{(p)}$, we have
\begin{small}
\begin{equation*}
  \begin{split}
  W_{F_{c}}(-a)&=\sum_{x \in V_{n}^{(p)}}\zeta_{p}^{F_{c}(x)+\langle a, x\rangle_{n}}=\sum_{x \in V_{n}^{(p)}}\zeta_{p}^{\langle c, B(\psi(x))\rangle_{m'}+\langle a, x\rangle_{n}}\\
  &=\sum_{i \in V_{m'}^{(p)}}\zeta_{p}^{\langle c, i\rangle_{m'}}\sum_{x \in D_{\psi, D_{B, i}}}\zeta_{p}^{\langle a, x\rangle_{n}}=\sum_{i \in V_{m'}^{(p)}}\zeta_{p}^{\langle c, i\rangle_{m'}}\sum_{j \in D_{B, i}}\chi_{a}(D_{\psi, j}).
  \end{split}
\end{equation*}
\end{small}By the proof of Lemma 1 of \cite{WFW2023Be}, we have $\chi_{a}(D_{\psi, j})=p^{n-m}\delta_{0}(a)+\varepsilon p^{\frac{n}{2}}\delta_{\psi^{*}(-a)}(j)-\varepsilon p^{\frac{n}{2}-m}$. By Lemma 1, $\psi^{*}(-a)=\psi^{*}(a)$. Therefore, for any $c \in V_{m'}^{(p)} \backslash \{0\}$, we have
\begin{small}
\begin{equation*}
\begin{split}
W_{F_{c}}(-a)&=\sum_{i \in V_{m'}^{(p)}}\zeta_{p}^{\langle c, i\rangle_{m'}}\sum_{j \in D_{B, i}}(p^{n-m}\delta_{0}(a)-\varepsilon p^{\frac{n}{2}-m})+\sum_{i \in V_{m'}^{(p)}}\zeta_{p}^{\langle c, i\rangle_{m'}}\sum_{j \in D_{B, i}}\varepsilon p^{\frac{n}{2}}\delta_{\psi^{*}(a)}(j)\\
&=p^{m-m'}(p^{n-m}\delta_{0}(a)-\varepsilon p^{\frac{n}{2}-m})\sum_{i \in V_{m'}^{(p)}}\zeta_{p}^{\langle c, i\rangle_{m'}}+\varepsilon p^{\frac{n}{2}}\sum_{i \in V_{m'}^{(p)}}\zeta_{p}^{\langle c, i\rangle_{m'}}\delta_{D_{B, i}}(\psi^{*}(a))\\
&=\varepsilon p^{\frac{n}{2}}\zeta_{p}^{\langle c, B(\psi^{*}(a))\rangle_{m'}}.
\end{split}
\end{equation*}
\end{small}Hence, $F$ is vectorial bent with $\varepsilon_{F_{c}}=\varepsilon, (F_{c})^{*}=(B(\psi^{*}))_{c}, c \in V_{m'}^{(p)} \backslash \{0\}$. Since $F_{c}$ is weakly regular bent, $(B(\psi^{*}))_{c}=(F_{c})^{*}$ is weakly regular bent, and $B(\psi^{*})$ is vectorial bent. Therefore, $F$ is a vectorial dual-bent function with Condition I and $F^{*}(x)=B(\psi^{*}(x))$.
\end{proof}

\begin{lemma} \label{Lemma 4}
Let $F: V_{n}^{(p)}\rightarrow V_{m}^{(p)}$ be a vectorial dual-bent function with Condition I.

$\mathrm{(i)}$ For any nonempty set $I \subset V_{m}^{(p)}$ and $\alpha \in V_{n}^{(p)} \backslash \{0\}, \beta \in \mathbb{F}_{p^t}$, define
\begin{equation*}
N_{I, \alpha, \beta}=|\{x \in V_{n}^{(p)}: F(x) \in I, \sum_{j=1}^{s}Tr_{t}^{n_{j}}(\alpha_{j}x_{j})+\beta=0\}|.
\end{equation*}
Then
\begin{equation*}
\begin{split}
N_{I, \alpha, \beta}&=\varepsilon p^{\frac{n}{2}-t}\delta_{I}(F^{*}(\alpha))(p^t\delta_{0}(\beta)-1)+\varepsilon p^{\frac{n}{2}-t}\delta_{I}(F(0))
-\varepsilon p^{\frac{n}{2}-m}|I|\delta_{0}(\beta)+p^{n-m-t}|I|.
\end{split}
\end{equation*}

$\mathrm{(ii)}$ When $p=2$, for nonempty set $I \subset V_{m}^{(2)}$ and $\alpha, \alpha' \in V_{n}^{(2)} \backslash \{0\}, i, i' \in \mathbb{F}_{2^t}^{*}$, let
\begin{equation*}
T=\sum_{u \in I}\sum_{z, w \in \mathbb{F}_{2^t}}(-1)^{Tr_{1}^{t}(iz+i'w)}\delta_{u}(F^{*}(z\alpha+w\alpha')).
\end{equation*}
Then
\begin{equation*}
\begin{split}
T&=2^{t}\delta_{I}(F^{*}(\alpha+ii'^{-1}\alpha'))-\sum_{w \in \mathbb{F}_{2^t}}\delta_{I}(F^{*}(\alpha+w\alpha'))-\delta_{I}(F^{*}(\alpha'))+\delta_{I}(F(0)).
\end{split}
\end{equation*}
\end{lemma}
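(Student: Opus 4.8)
The plan is to evaluate both counts by expanding the indicator functions appearing in them into additive character sums (over $V_{m}^{(p)}$ for the condition $F(x)\in I$, and over $\mathbb{F}_{p^t}$ for the linear condition), interchanging the order of summation, and then invoking the weak regularity of the component functions $F_{c}$ together with two structural facts about $F^{*}$: that $F^{*}$ again satisfies Condition~I (Lemma~\ref{Lemma 1}), in particular $F^{*}(ax)=F^{*}(x)$ for $a\in\mathbb{F}_{p^t}^{*}$, and that $F^{*}(0)=F(0)$ (as noted in the proof of Lemma~\ref{Lemma 2}).

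For part~(i), write $\delta_{I}(F(x))=\sum_{u\in I}p^{-m}\sum_{c\in V_{m}^{(p)}}\zeta_{p}^{\langle c,F(x)-u\rangle_{m}}$, and, using transitivity of the trace, write the indicator of $\sum_{j}Tr_{t}^{n_{j}}(\alpha_{j}x_{j})+\beta=0$ as $p^{-t}\sum_{y\in\mathbb{F}_{p^t}}\zeta_{p}^{Tr_{1}^{t}(y\beta)+\langle y\alpha,x\rangle_{n}}$. Substituting into $N_{I,\alpha,\beta}=\sum_{x}\delta_{I}(F(x))\cdot(\text{linear indicator})$ and exchanging sums gives
\begin{equation*}
N_{I,\alpha,\beta}=\frac{1}{p^{m+t}}\sum_{u\in I}\sum_{c\in V_{m}^{(p)}}\sum_{y\in\mathbb{F}_{p^t}}\zeta_{p}^{-\langle c,u\rangle_{m}+Tr_{1}^{t}(y\beta)}\sum_{x\in V_{n}^{(p)}}\zeta_{p}^{F_{c}(x)+\langle y\alpha,x\rangle_{n}}.
\end{equation*}
The innermost sum is $W_{F_{c}}(-y\alpha)$, which equals $p^{n}\delta_{0}(y\alpha)=p^{n}\delta_{0}(y)$ when $c=0$ (here $\alpha\neq0$ is used) and equals $\varepsilon p^{n/2}\zeta_{p}^{\langle c,F^{*}(-y\alpha)\rangle_{m}}$ when $c\neq0$, by weak regularity and Condition~I. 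Splitting the sum over $y$ into $y=0$ and $y\neq0$, in the former case $F^{*}(-y\alpha)=F^{*}(0)=F(0)$ and in the latter $F^{*}(-y\alpha)=F^{*}(\alpha)$ by Lemma~\ref{Lemma 1}. Collapsing the resulting elementary sums $\sum_{c\in V_{m}^{(p)}}\zeta_{p}^{\langle c,v\rangle_{m}}=p^{m}\delta_{0}(v)$ and $\sum_{y\in\mathbb{F}_{p^t}}\zeta_{p}^{Tr_{1}^{t}(y\beta)}=p^{t}\delta_{0}(\beta)$, summing over $u\in I$ (so $\sum_{u\in I}\delta_{u}(\cdot)$ becomes $\delta_{I}(\cdot)$ and $\sum_{u\in I}1$ becomes $|I|$), and collecting the four resulting terms should yield the stated identity.

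For part~(ii) we have $p=2$; first collapse $\sum_{u\in I}\delta_{u}(F^{*}(z\alpha+w\alpha'))$ to $\delta_{I}(F^{*}(z\alpha+w\alpha'))$ and split the double sum over $z,w\in\mathbb{F}_{2^t}$ into the four cases $(z,w)=(0,0)$, $z=0\neq w$, $z\neq0=w$, and $z\neq0\neq w$. In the first three cases $F^{*}(z\alpha+w\alpha')$ reduces to $F(0)$, $F^{*}(\alpha')$, $F^{*}(\alpha)$ respectively (using $F^{*}(0)=F(0)$ and Lemma~\ref{Lemma 1}), and $\sum_{z\in\mathbb{F}_{2^t}^{*}}(-1)^{Tr_{1}^{t}(iz)}=-1$ (likewise for $i'$), which already accounts for the terms $\delta_{I}(F(0))$, $-\delta_{I}(F^{*}(\alpha'))$, $-\delta_{I}(F^{*}(\alpha))$. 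In the last case substitute $w=zv$ with $v$ ranging over $\mathbb{F}_{2^t}^{*}$, note $F^{*}(z\alpha+zv\alpha')=F^{*}(\alpha+v\alpha')$, and use $\sum_{z\in\mathbb{F}_{2^t}^{*}}(-1)^{Tr_{1}^{t}(z(i+i'v))}=2^{t}\delta_{0}(i+i'v)-1=2^{t}\delta_{ii'^{-1}}(v)-1$ to obtain $2^{t}\delta_{I}(F^{*}(\alpha+ii'^{-1}\alpha'))-\sum_{v\in\mathbb{F}_{2^t}^{*}}\delta_{I}(F^{*}(\alpha+v\alpha'))$. Finally, combining the $-\delta_{I}(F^{*}(\alpha))$ term from case three with this last sum reassembles $-\sum_{w\in\mathbb{F}_{2^t}}\delta_{I}(F^{*}(\alpha+w\alpha'))$, giving the claim.

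Both parts are essentially bookkeeping. The one point that requires care is ensuring the scaling invariance of $F^{*}$ and the equality $F^{*}(0)=F(0)$ are legitimately in force, which is precisely what Lemma~\ref{Lemma 1} (and the proof of Lemma~\ref{Lemma 2}) supply; once the collapses $F^{*}(-y\alpha)=F^{*}(\alpha)$, $F^{*}(z\alpha+w\alpha')=F^{*}(\alpha+z^{-1}w\alpha')$, etc., are justified, the rest is a careful tracking of which $\delta$-terms survive when a character sum is restricted to nonzero arguments, and of the powers of $p$ and signs when the cases are recombined. I expect this bookkeeping, rather than any conceptual difficulty, to be the main effort.
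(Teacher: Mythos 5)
Your proposal is correct and follows essentially the same route as the paper's Appendix proof: expand both indicator conditions as additive character sums, identify the inner sum as a Walsh transform, apply weak regularity together with $(F_c)^*=(F^*)_c$, and then use $F^*(a\alpha)=F^*(\alpha)$ for $a\neq 0$ and $F^*(0)=F(0)$ to collapse the cases; part (ii) is likewise the same case split on $(z,w)$ with the substitution $w=zv$. The bookkeeping you outline (including recombining the $-\delta_I(F^*(\alpha))$ term into the full sum over $w\in\mathbb{F}_{2^t}$) checks out against the stated formulas.
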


\begin{proof}
The proof of Lemma 4 is given in Appendix-Section IX.
\end{proof}

\subsection{Self-orthogonal codes constructed from vectorial dual-bent functions with Condition I}
\label{sec:3.2}

In this subsection, we show that if $F$ is a vectorial dual-bent function with Condition I, then for any nonempty set $I\subset V_{m}^{(p)}$, $C_{D_{F, I}}$ defined by Eq. (1) is a at most five-weight self-orthogonal code and its weight distribution can be completely determined.

\begin{theorem}\label{Theorem 1}
Let $F: V_{n}^{(p)}\rightarrow V_{m}^{(p)}$ be a vectorial dual-bent function with Condition I, and for any nonempty set $I\subset V_{m}^{(p)}$, let $C_{D_{F, I}}$ be defined by Eq. (1). Then $C_{D_{F, I}}$ is a at most five-weight $[(p^{n-m}-\varepsilon p^{\frac{n}{2}-m})|I|+\varepsilon p^{\frac{n}{2}}\delta_{I}(F(0)), \frac{n}{t}+1]_{p^t}$ self-orthogonal linear code whose weight distribution is given in Table 1. Besides, except $p=2, t=1, m=\frac{n}{2}-1, I\subseteq V_{m}^{(2)} \backslash \{F(0)\}$ with $|I|=1$, the dual code $C_{D_{F, I}}^{\bot}$ is at least almost optimal according to Hamming bound.
\end{theorem}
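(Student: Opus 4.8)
The plan is to analyze the codewords $c_{\alpha,\beta}$ of $C_{D_{F,I}}$ in three stages: compute the length, establish self-orthogonality, and derive the weight distribution (with dimension falling out along the way). First I would record that $D_{F,I}=\bigcup_{i\in I}D_{F,i}$, so by Lemma 2 the length is $n_{0}:=|D_{F,I}|=\sum_{i\in I}|D_{F,i}|=(p^{n-m}-\varepsilon p^{\frac{n}{2}-m})|I|+\varepsilon p^{\frac{n}{2}}\delta_{I}(F(0))$, which matches the claimed length. The weight of a codeword $c_{\alpha,\beta}$ is $n_{0}-Z(\alpha,\beta)$, where $Z(\alpha,\beta)=|\{x\in D_{F,I}:\sum_{j}Tr_{t}^{n_{j}}(\alpha_{j}x_{j})+\beta=0\}|$. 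For $\alpha\neq 0$ this is exactly $N_{I,\alpha,\beta}$ from Lemma 4(i); for $\alpha=0$ it is $n_{0}$ if $\beta=0$ and $0$ if $\beta\neq 0$. Plugging the Lemma 4(i) formula in, the weight becomes an explicit expression depending only on the indicators $\delta_{I}(F^{*}(\alpha))$, $\delta_{I}(F(0))$, and $\delta_{0}(\beta)$; sorting by the possible combinations of these indicators gives at most five distinct nonzero weights, yielding Table 1. The frequencies $A_{i}$ come from counting $\alpha$ with $F^{*}(\alpha)\in I$ versus $F^{*}(\alpha)\notin I$, which is governed by $|D_{F^{*},I}|=|D_{F,I}|$ (again Lemma 2), times the appropriate number of choices of $\beta$, and one checks $\sum A_{i}=p^{n+t}$ so the dimension is $\frac{n}{t}+1$ (the map $(\alpha,\beta)\mapsto c_{\alpha,\beta}$ is injective because $m<\frac{n}{2}$ forces $D_{F,I}$ to be large enough, in particular a non-affine-hyperplane set).

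For self-orthogonality I would split on the parity of $p$. When $p$ is odd, by Proposition 3 it suffices to show $c_{\alpha,\beta}\cdot c_{\alpha,\beta}=0$ in $\mathbb{F}_{p^t}$ for every codeword. Writing $c_{\alpha,\beta}\cdot c_{\alpha,\beta}=\sum_{x\in D_{F,I}}(\sum_{j}Tr_{t}^{n_{j}}(\alpha_{j}x_{j})+\beta)^{2}$, expand the square and evaluate each resulting sum over $D_{F,I}$ using the fact (immediate from $F(ax)=F(x)$ for $a\in\mathbb{F}_{p^t}^{*}$) that $D_{F,I}$ is invariant under scaling by $\mathbb{F}_{p^t}^{*}$; such a scaling-invariant set $D$ satisfies $\sum_{x\in D}(\text{linear form})=0$ and $\sum_{x\in D}(\text{linear form})^{2}$ is a multiple of $\sum_{a\in\mathbb{F}_{p^t}^{*}}a^{2}=0$ (since $t\le\frac{n}{2}$ guarantees $p^t>$ the relevant bound, and $\sum_{a\in\mathbb{F}_{p^t}^{*}}a^{2}=-1$ only when $p^t=2$ or $3$, which is exactly where the odd-$p$ argument needs $p^t\geq 5$; the small cases $p^t=3$ are handled separately, and $p=2$ is below). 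For $p=2$, self-orthogonality means $c\cdot c'=0$ for all pairs $c,c'$, which reduces to showing $wt(c_{\alpha,\beta}+c_{\alpha',\beta'})\equiv wt(c_{\alpha,\beta})+wt(c_{\alpha',\beta'})\pmod{2^{?}}$ appropriately; here Lemma 4(ii), which computes the mixed sum $T$, is precisely the tool needed — it lets one express the overlap $|supp(c_{\alpha,\beta})\cap supp(c_{\alpha',\beta'})|$ in terms of indicators of $F^{*}$ values and check the required divisibility by $2^t$ (using $t\le\frac{n}{2}$ and $m<\frac{n}{2}$).

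Finally, for the optimality of $C_{D_{F,I}}^{\bot}$: since $C_{D_{F,I}}$ has dimension $\frac{n}{t}+1$, the dual has parameters $[n_{0},\,n_{0}-\frac{n}{t}-1,\,d^{\bot}]_{p^t}$. I would show $d^{\bot}\geq 3$ in all the stated cases by a standard argument: $d^{\bot}\geq 2$ because no coordinate of $C_{D_{F,I}}$ is identically zero (every $x\in D_{F,I}$ gives a nonzero linear functional on the $\alpha$'s since $D_{F,I}$ spans), and $d^{\bot}\geq 3$ because no two columns of a generator matrix of $C_{D_{F,I}}$ are scalar multiples of one another — this uses that the points of $D_{F,I}$, as points in projective space, are pairwise distinct, which holds because $D_{F,I}$ is a union of full $\mathbb{F}_{p^t}^{*}$-orbits and $m<\frac{n}{2}$ prevents degenerate collapses; the single listed exception ($p=2,t=1,m=\frac{n}{2}-1,|I|=1$ with $I\not\ni F(0)$) is exactly the case where $n_{0}=2^{n-m}-2^{n/2-m}=2^{n/2+1}-2$ is so small relative to the redundancy that the Hamming bound for $d^{\bot}=3$ is not met. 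Then plugging $d^{\bot}=3$ into the Hamming bound of Proposition 2 and comparing with $q^{n_{0}-k^{\bot}}=q^{\frac{n}{t}+1}$ shows the bound is met with at most the slack that defines "almost optimal," so $C_{D_{F,I}}^{\bot}$ is optimal or almost optimal. The main obstacle I expect is the $p=2$ self-orthogonality bookkeeping: marshalling Lemma 4(ii) to get the exact overlap count and verifying the $2^t$-divisibility uniformly in $t$, $m$, and $|I|$ is the delicate part, whereas the odd-$p$ case and the dimension/weight-distribution counting are comparatively mechanical.
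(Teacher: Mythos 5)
Your outline tracks the paper's proof almost step for step: Lemma 2 for the length and for counting the $\alpha$ with $F^{*}(\alpha)\in I$, Lemma 4(i) for the four nonzero weight classes, Proposition 3 together with $\sum_{a\in\mathbb{F}_{p^t}^{*}}a=\sum_{a\in\mathbb{F}_{p^t}^{*}}a^{2}=0$ for odd $p$, Lemma 4(ii) for $p=2$, and the no-repeated-columns argument for $d(C_{D_{F,I}}^{\bot})\geq 3$. Two of your justifications are wrong as stated, though the conclusions survive. First, your reason for $d(C_{D_{F,I}}^{\bot})\geq 3$ is backwards: precisely \emph{because} $D_{F,I}$ is a union of full $\mathbb{F}_{p^t}^{*}$-orbits it contains $x$ and $\lambda x$ simultaneously, so its points are \emph{not} pairwise distinct in projective space; without the $\beta\mathbf{1}$ augmentation the dual distance would drop to $2$ whenever $p^t>2$. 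What actually saves the claim is the all-ones row: the column indexed by $x$ is the coordinate vector of $x$ with a $1$ appended, and two such columns are proportional only with scalar $1$, hence only if $x=x'$ --- which is exactly the paper's argument. Second, for $p=2$ with $t>1$ the inner product $c_{\alpha,\beta}\cdot c_{\alpha',\beta'}$ is not a function of the support overlap; one needs the full joint value distribution $M_{I,\alpha,\alpha',i,i'}=|\{x\in D_{F,I}:\ \ell_{\alpha}(x)=i,\ \ell_{\alpha'}(x)=i'\}|$ weighted by products of field elements, and the cancellation comes from $\sum_{k\in\mathbb{F}_{2^t}^{*}}k=0$ and $\sum_{i\in\mathbb{F}_{2^t}^{*}}i^{2}=0$ for $t\geq 2$ (with a separate $2$-divisibility check, using $n>4$, when $t=1$), rather than a single divisibility-by-$2^t$ statement; Lemma 4(ii) is indeed the right tool, but the bookkeeping you defer is the bulk of the paper's proof of this case. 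Finally, your odd-$p$ orbit-decomposition argument is a slightly cleaner packaging of the paper's computation, but as you note it degenerates at $p^t=3$, where one must fall back on the explicit counts from Lemma 4 and the divisibility of $p^{\frac{n}{2}-t}$ and $p^{n-m-t}$ by $p$ (this is where the hypothesis $n>2$, automatic here since $n$ is even and $n>2m\ge 2$, is used).
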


\renewcommand{\thetable}{1}
\begin{table}\label{1}\scriptsize
  \centering
  \caption{The weight distribution of $C_{D_{F, I}}$ constructed in Theorem 1}
  \begin{tabular}{|c|c|}
    \hline
    Hamming weight & Multiplicity \\ \hline
    $0$ & $1$ \\
    $(p^{n-m}-\varepsilon p^{\frac{n}{2}-m})|I|+\varepsilon p^{\frac{n}{2}}\delta_{I}(F(0))$ & $p^t-1$ \\
    $(p^{n-m-t}|I|+\varepsilon p^{\frac{n}{2}-t}\delta_{I}(F(0))-\varepsilon p^{\frac{n}{2}-t})(p^t-1)$
      & $(p^{n-m}-\varepsilon p^{\frac{n}{2}-m})|I|+(\varepsilon p^{\frac{n}{2}}-1)\delta_{I}(F(0))$ \\
    $(p^{n-m-t}|I|+\varepsilon p^{\frac{n}{2}-t}\delta_{I}(F(0)))(p^t-1)+\varepsilon p^{\frac{n}{2}-t}-\varepsilon p^{\frac{n}{2}-m}|I|$
     & $(p^t-1)((p^{n-m}-\varepsilon p^{\frac{n}{2}-m})|I|+(\varepsilon p^{\frac{n}{2}}-1)\delta_{I}(F(0)))$ \\
    $(p^{n-m-t}|I|+\varepsilon p^{\frac{n}{2}-t}\delta_{I}(F(0)))(p^t-1)$
     & $(p^n-1)-(p^{n-m}-\varepsilon p^{\frac{n}{2}-m})|I|-(\varepsilon p^{\frac{n}{2}}-1)\delta_{I}(F(0))$ \\
    $(p^{n-m-t}|I|+\varepsilon p^{\frac{n}{2}-t}\delta_{I}(F(0)))(p^t-1)-\varepsilon p^{\frac{n}{2}-m}|I|$
     & $(p^t-1)(p^n-1-(p^{n-m}-\varepsilon p^{\frac{n}{2}-m})|I|-(\varepsilon p^{\frac{n}{2}}-1)\delta_{I}(F(0)))$ \\
    \hline
  \end{tabular}
\end{table}

\begin{proof}
By Lemma 2, the length of $C_{D_{F, I}}$ is $|D_{F, I}|=(p^{n-m}-\varepsilon p^{\frac{n}{2}-m})|I|+\varepsilon p^{\frac{n}{2}}\delta_{I}(F(0))$. When $\alpha=0, \beta=0$, $wt(c_{\alpha, \beta})=0$. When $\alpha=0, \beta \in \mathbb{F}_{p^t}^{*}$, $wt(c_{\alpha, \beta})=|D_{F, I}|$. When $\alpha \in V_{n}^{(p)} \backslash \{0\}, \beta \in \mathbb{F}_{p^t}$, $wt(c_{\alpha, \beta})=|D_{F, I}|-N_{I, \alpha, \beta}$, where $N_{I, \alpha, \beta}=|\{x \in V_{n}^{(p)}: F(x) \in I, \sum_{j=1}^{s}Tr_{t}^{n_{j}}(\alpha_{j}x_{j})+\beta=0\}|$.

By Lemma 4,
\begin{itemize}
  \item when $\alpha\neq 0$ with $F^{*}(\alpha) \in I$ and $\beta=0$, $wt(c_{\alpha, \beta})=(p^{n-m-t}|I|+\varepsilon p^{\frac{n}{2}-t}\delta_{I}(F(0))-\varepsilon p^{\frac{n}{2}-t})(p^t-1)$;
  \item when $\alpha \neq 0$ with $F^{*}(\alpha) \in I$ and $\beta \neq 0$, $wt(c_{\alpha, \beta})=(p^{n-m-t}|I|+\varepsilon p^{\frac{n}{2}-t}\delta_{I}(F(0)))(p^t-1)+\varepsilon p^{\frac{n}{2}-t}-\varepsilon p^{\frac{n}{2}-m}|I|$;
  \item when $\alpha\neq 0$ with $F^{*}(\alpha) \notin I$ and $\beta=0$, $wt(c_{\alpha, \beta})=(p^{n-m-t}|I|+\varepsilon p^{\frac{n}{2}-t}\delta_{I}(F(0)))(p^t-1)$;
  \item when $\alpha\neq 0$ with $F^{*}(\alpha) \notin I$ and $\beta\neq 0$, $wt(c_{\alpha, \beta})=(p^{n-m-t}|I|+\varepsilon p^{\frac{n}{2}-t}\delta_{I}(F(0)))(p^t-1)-\varepsilon p^{\frac{n}{2}-m}|I|$.
\end{itemize}
We can see that $wt(c_{\alpha, \beta})=0$ if and only if $\alpha=0, \beta=0$. Thus, the dimension of $C_{D_{F, I}}$ is $\frac{n}{t}+1$. The weight distribution of $C_{D_{F, I}}$ follows from the above arguments and Lemma 2.

When $p$ is odd, $c_{\alpha, \beta}\cdot c_{\alpha, \beta}=\sum_{x \in D_{F, I}}(\sum_{j=1}^{s}Tr_{t}^{n_{j}}(\alpha_{j}x_{j}))^{2}+2\beta \sum_{x \in D_{F, I}}\sum_{j=1}^{s}Tr_{t}^{n_{j}}(\alpha_{j}x_{j})+\beta^{2}|D_{F, I}|$. Note that $p \mid |D_{F, I}|$ since $m<\frac{n}{2}$. When $\alpha=0$, $c_{\alpha, \beta}\cdot c_{\alpha, \beta}=\beta^{2}|D_{F, I}|=0$. When $\alpha \neq 0$, by Lemma 4, we have
\begin{small}
\begin{equation*}
\begin{split}
c_{\alpha, \beta} \cdot c_{\alpha, \beta}&=\sum_{i \in \mathbb{F}_{p^t}^{*}}N_{I, \alpha, -i}i^{2}+2\beta \sum_{i \in \mathbb{F}_{p^t}^{*}}N_{I, \alpha, -i}i+\beta^{2}|D_{F, I}|\\
&=(-\varepsilon p^{\frac{n}{2}-t}\delta_{I}(F^{*}(\alpha))+\varepsilon p^{\frac{n}{2}-t}\delta_{I}(F(0))+p^{n-m-t}|I|)\sum_{i \in \mathbb{F}_{p^t}^{*}}i^{2}\\
& \ \ \ +2\beta (-\varepsilon p^{\frac{n}{2}-t}\delta_{I}(F^{*}(\alpha))+\varepsilon p^{\frac{n}{2}-t}\delta_{I}(F(0))+p^{n-m-t}|I|)\sum_{i \in \mathbb{F}_{p^t}^{*}} i.
\end{split}
\end{equation*}
\end{small}Since $\sum_{i \in \mathbb{F}_{p^t}^{*}}i=0$, $\sum_{i \in \mathbb{F}_{p^t}^{*}}i^{2}=0$ if $p^t>3$, and $n>2$ if $p^t=3$, we have $c_{\alpha, \beta} \cdot c_{\alpha, \beta}=0$. By Proposition 3, when $p$ is odd, $C_{D_{F, I}}$ is self-orthogonal.

When $p=2$, for any $\alpha, \alpha' \in V_{n}^{(2)} \backslash \{0\}, i, i' \in \mathbb{F}_{2^t}^{*}$, we have
\begin{small}
\begin{equation*}
\begin{split}
M_{I, \alpha, \alpha', i, i'}&\triangleq|\{x \in V_{n}^{(2)}: F(x) \in I, \sum_{j=1}^{s}Tr_{t}^{n_{j}}(\alpha_{j}x_{j})=i, \sum_{j=1}^{s}Tr_{t}^{n_{j}}(\alpha_{j}'x_{j})=i'\}|\\
&=2^{-m-2t}\sum_{x \in V_{n}^{(2)}}\sum_{u \in I}\sum_{y \in V_{m}^{(2)}}(-1)^{\langle y, F(x)+u\rangle_{m}}\sum_{z \in \mathbb{F}_{2^t}}(-1)^{Tr_{1}^{t}((\sum_{j=1}^{s}Tr_{t}^{n_{j}}(\alpha_{j}x_{j})+i)z)}\\
& \ \ \ \times \sum_{w \in \mathbb{F}_{2^t}}(-1)^{Tr_{1}^{t}((\sum_{j=1}^{s}Tr_{t}^{n_{j}}(\alpha_{j}'x_{j})+i')w)}\\
&=2^{-m-2t}\sum_{u \in I}\sum_{z, w \in \mathbb{F}_{2^t}}(-1)^{Tr_{1}^{t}(iz+i'w)}\sum_{y \in V_{m}^{(2)}}(-1)^{\langle y, u\rangle_{m}}\sum_{x \in V_{n}^{(2)}}(-1)^{\langle y, F(x)\rangle_{m}+\sum_{j=1}^{s}Tr_{1}^{n_{j}}((\alpha_{j}z+\alpha_{j}'w)x_{j})}\\
&=2^{-m-2t}\sum_{u \in I}\sum_{z, w \in \mathbb{F}_{2^t}}(-1)^{Tr_{1}^{t}(iz+i'w)}\sum_{y \in V_{m}^{(2)}\backslash \{0\}}(-1)^{\langle y, u\rangle_{m}}W_{F_{y}}(z\alpha+w\alpha')\\
& \ \ \ +2^{n-m-2t}|I|\sum_{z, w \in \mathbb{F}_{2^t}}(-1)^{Tr_{1}^{t}(iz+i'w)}\delta_{0}(z\alpha+w\alpha')\\
&=2^{\frac{n}{2}-m-2t}\sum_{u \in I}\sum_{z, w \in \mathbb{F}_{2^t}}(-1)^{Tr_{1}^{t}(iz+i'w)}\sum_{y \in V_{m}^{(2)} \backslash \{0\}}(-1)^{\langle y, F^{*}(z\alpha+w\alpha')+u\rangle_{m}}\\
& \ \ \ +2^{n-m-2t}|I|(\sum_{z \in \mathbb{F}_{2^t}^{*}, w \in \mathbb{F}_{2^t}}(-1)^{Tr_{1}^{t}(z(i+i'z^{-1}w))}\delta_{0}(\alpha+z^{-1}w\alpha')+\sum_{w \in \mathbb{F}_{2^t}^{*}}(-1)^{Tr_{1}^{t}(i'w)}\delta_{0}(w\alpha')+1)\\
&=2^{\frac{n}{2}-m-2t}\sum_{u \in I}\sum_{z, w \in \mathbb{F}_{2^t}}(-1)^{Tr_{1}^{t}(iz+i'w)}(2^m\delta_{u}(F^{*}(z\alpha+w\alpha'))-1)\\
& \ \ \ +2^{n-m-2t}|I|(1+\sum_{w \in \mathbb{F}_{2^t}}\delta_{0}(\alpha+w\alpha')\sum_{z \in \mathbb{F}_{2^t}^{*}}(-1)^{Tr_{1}^{t}(z(i+i'w))})\\
&=2^{\frac{n}{2}-2t}\sum_{u \in I}\sum_{z, w \in \mathbb{F}_{2^t}}(-1)^{Tr_{1}^{t}(iz+i'w)}\delta_{u}(F^{*}(z\alpha+w\alpha'))\\
& \ \ \ +2^{n-m-2t}|I|(1+2^{t}\delta_{0}(\alpha+ii'^{-1}\alpha')-\sum_{w \in \mathbb{F}_{2^t}}\delta_{0}(\alpha+w\alpha')).
\end{split}
\end{equation*}
\end{small}By Lemma 4, we have
\begin{equation}\label{2}
\begin{split}
M_{I, \alpha, \alpha', i, i'}&=2^{\frac{n}{2}-t}\delta_{I}(F^{*}(\alpha+ii'^{-1}\alpha'))+2^{n-m-t}|I|\delta_{0}(\alpha+ii'^{-1}\alpha')+A,
\end{split}
\end{equation}
where $A=2^{n-m-2t}|I|(1-\sum_{w \in \mathbb{F}_{2^t}}\delta_{0}(\alpha+w\alpha'))+2^{\frac{n}{2}-2t}(-\sum_{w \in \mathbb{F}_{2^t}}\delta_{I}(F^{*}(\alpha+w\alpha'))-\delta_{I}(F^{*}(\alpha'))+\delta_{I}(F(0)))$. Note that $A$ is an integer since $M_{I, \alpha, \alpha', i, i'}$ is an integer, $t\leq \frac{n}{2}, m<\frac{n}{2}$, and $2 \mid A$ if $t=1$ as $n>4$ when $p=2$. Since $\sum_{i \in \mathbb{F}_{2^t}^{*}}i=0$ if $t\geq 2$, and $n>4$ if $t=1$, by Lemma 4 we have $\sum_{i \in \mathbb{F}_{2^t}^{*}}N_{I, \alpha, i}i=(-2^{\frac{n}{2}-t}\delta_{I}(F^{*}(\alpha))+2^{\frac{n}{2}-t}\delta_{I}(F(0))+2^{n-m-t}|I|)\sum_{i \in \mathbb{F}_{2^t}^{*}}i=0$ for $\alpha\neq 0$. Note that $2 \mid |D_{F, I}|$. For $\alpha, \alpha' \in V_{n}^{(2)}$ and $\beta, \beta' \in \mathbb{F}_{2^t}$,
\begin{small}
\begin{equation*}
\begin{split}
&c_{\alpha, \beta} \cdot c_{\alpha', \beta'}\\
&=\sum_{x \in D_{F, I}}(\sum_{j=1}^{s}Tr_{t}^{n_{j}}(\alpha_{j}x_{j}))(\sum_{j=1}^{s}Tr_{t}^{n_{j}}(\alpha_{j}'x_{j}))+\beta\sum_{x \in D_{F, I}}\sum_{j=1}^{s}Tr_{t}^{n_{j}}(\alpha_{j}'x_{j})+\beta'\sum_{x \in D_{F, I}}\sum_{j=1}^{s}Tr_{t}^{n_{j}}(\alpha_{j}x_{j})+\beta\beta'|D_{F, I}|.
\end{split}
\end{equation*}
\end{small}When $\alpha=\alpha'=0$, $c_{\alpha, \beta} \cdot c_{\alpha', \beta'}=\beta\beta'|D_{F, I}|=0$. When $\alpha=0, \alpha'\neq 0$, or $\alpha\neq 0, \alpha'=0$, w.l.o.g., $\alpha=0, \alpha' \neq 0$, $c_{\alpha, \beta}\cdot c_{\alpha', \beta'}=\beta \sum_{i \in \mathbb{F}_{2^t}^{*}}N_{I, \alpha', i}i+\beta\beta'|D_{F, I}|=0$. When $\alpha, \alpha' \neq 0$, by (2) we have
\begin{small}
\begin{equation*}
\begin{split}
c_{\alpha, \beta}\cdot c_{\alpha', \beta'}&=\sum_{k \in \mathbb{F}_{2^t}^{*}}k\sum_{i \in \mathbb{F}_{2^t}^{*}}M_{I, \alpha, \alpha', i, i^{-1}k}+\beta \sum_{i \in \mathbb{F}_{2^t}^{*}}N_{I, \alpha', i}i+\beta' \sum_{i \in \mathbb{F}_{2^t}^{*}}N_{I, \alpha, i}i+\beta\beta'|D_{F, I}|\\
&=A\sum_{i, k \in \mathbb{F}_{2^t}^{*}}k+\sum_{i, k \in \mathbb{F}_{2^t}^{*}}(2^{\frac{n}{2}-t}\delta_{I}(F^{*}(\alpha+i^{2}k^{-1}\alpha'))+2^{n-m-t}|I|\delta_{0}(\alpha+i^{2}k^{-1}\alpha'))k\\
&=2^{\frac{n}{2}-t}\sum_{i, k \in \mathbb{F}_{2^t}^{*}}\delta_{I}(F^{*}(\alpha+i^{2}k^{-1}\alpha'))k\\
&=2^{\frac{n}{2}-t}\sum_{i, k \in \mathbb{F}_{2^t}^{*}}\delta_{I}(F^{*}(\alpha+k\alpha'))i^{2}k^{-1}\\
&=2^{\frac{n}{2}-t}\sum_{k \in \mathbb{F}_{2^t}^{*}}\delta_{I}(F^{*}(\alpha+k\alpha'))k^{-1}\sum_{i \in \mathbb{F}_{2^t}^{*}}i^{2}=0,
\end{split}
\end{equation*}
\end{small}where in the third equation we use $\sum_{k \in \mathbb{F}_{2^t}^{*}}k=0$ if $t\geq 2$, and $2 \mid A$ if $t=1$, in the last equation we use $\sum_{i \in \mathbb{F}_{2^t}^{*}}i^{2}=0$ if $t\geq 2$, and $n>4$ if $t=1$. Therefore, when $p=2$, $C_{D_{F, I}}$ is self-orthogonal.

Obviously, the length of $C_{D_{F, I}}^{\bot}$ is $(p^{n-m}-\varepsilon p^{\frac{n}{2}-m})|I|+\varepsilon p^{\frac{n}{2}}\delta_{I}(F(0))$, and the dimension of $C_{D_{F, I}}^{\bot}$ is $(p^{n-m}-\varepsilon p^{\frac{n}{2}-m})|I|+\varepsilon p^{\frac{n}{2}}\delta_{I}(F(0))-\frac{n}{t}-1$. We show that the minimum distance $d(C_{D_{F, I}}^{\bot})\geq 3$. If $d(C_{D_{F, I}}^{\bot})=1$, then there is $x=(x_{1}, \dots, x_{s}) \in D_{F, I}$ such that $\sum_{j=1}^{s}Tr_{t}^{n_{j}}(\alpha_{j}x_{j})$ $+\beta=0$ for all $\alpha=(\alpha_{1}, \dots, \alpha_{s}) \in V_{n}^{(p)}, \beta \in \mathbb{F}_{p^t}$, which is obviously impossible. If $d(C_{D_{F, I}}^{\bot})=2$, then there are $z, z' \in \mathbb{F}_{p^t}^{*}$ and distinct $x=(x_{1}, \dots, x_{s}), x'=(x'_{1}, \dots, x'_{s}) \in D_{F, I}$ such that $z(\sum_{j=1}^{s}Tr_{t}^{n_{j}}(\alpha_{j}x_{j})+\beta)+z'(\sum_{j=1}^{s}Tr_{t}^{n_{j}}(\alpha_{j}x'_{j})+\beta)=0$ for all $\alpha \in V_{n}^{(p)}, \beta \in \mathbb{F}_{p^t}$. Then $\sum_{j=1}^{s}Tr_{t}^{n_{j}}(\alpha_{j}(zx_{j}+z'x'_{j}))+(z+z')\beta=0$ for all $\alpha=(\alpha_{1}, \dots, \alpha_{s}) \in V_{n}^{(p)}, \beta \in \mathbb{F}_{p^t}$. Let $\alpha=0, \beta \in \mathbb{F}_{p^t}^{*}$, we obtain $z+z'=0$. Let $\beta=0$ and $\alpha_{j}=0, j\neq i$, for any fixed $1\leq i\leq s$, we have $Tr_{t}^{n_{i}}(\alpha_{i}z(x_{i}-x'_{i}))=0$ for all $\alpha_{i} \in \mathbb{F}_{p^{n_{i}}}$, which implies that $x_{i}=x'_{i}$ for any $1\leq i\leq s$, and then $x=x'$, which contradicts $x\neq x'$. Thus, $d(C_{D_{F, I}}^{\bot})\geq 3$. By Proposition 2, except $p=2, t=1, m=\frac{n}{2}-1, I\subseteq V_{m}^{(2)} \backslash \{F(0)\}$ with $|I|=1$, $C_{D_{F, I}}^{\bot}$ is at least almost optimal according to Hamming bound.
\end{proof}

In the following, by the results in \cite{CMP2018Ve,WFW2023Be} and Lemma 3, we list some explicit classes of vectorial dual-bent functions with Condition I.
\begin{itemize}
  \item Let $m, n', t$ be positive integers with $m<n', t \mid n'$, and $m\geq 2$ if $p=2$. Let $\alpha \in \mathbb{F}_{p^{n'}}^{*}$, $B: \mathbb{F}_{p^{n'}}\rightarrow V_{m}^{(p)}$ be a balanced function. Define $F: \mathbb{F}_{p^{n'}} \times \mathbb{F}_{p^{n'}}\rightarrow V_{m}^{(p)}$ as
  \begin{equation} \label{3}
  F(x_{1}, x_{2})=B(\alpha x_{1}x_{2}^{-1}).
  \end{equation}
  Then $F$ is a vectorial dual-bent function satisfying Condition I with $\varepsilon=1$.
  \item Let $m, n', r, u, t$ be positive integers with $r \mid n', m\leq r, m\neq n', t\mid n'$, $gcd(u, p^{n'}-1)=1, u\equiv 1 \mod (p^t-1)$ and $u\equiv p^{u_{0}} \mod (p^r-1)$ for some nonnegative integer $u_{0}$, and $m\geq 2$ if $p=2$. Let $\alpha \in \mathbb{F}_{p^{n'}}^{*}$, $B: \mathbb{F}_{p^r}\rightarrow V_{m}^{(p)}$ be a balanced function. Define $F: \mathbb{F}_{p^{n'}} \times \mathbb{F}_{p^{n'}}\rightarrow V_{m}^{(p)}$ as
  \begin{equation}\label{4}
  F(x_{1}, x_{2})=B(Tr_{r}^{n'}(\alpha x_{1}x_{2}^{-u})).
  \end{equation}
  Then $F$ is a vectorial dual-bent function satisfying Condition I with $\varepsilon=1$.
  \item Let $m, n', n'', r, u, t$ be positive integers with $m\leq n', m\leq r, r \mid n'', t \mid n', t \mid n''$, $gcd(u, p^{n''}-1)=1, u\equiv 1 \mod (p^t-1)$ and $u\equiv p^{u_{0}} \mod (p^r-1)$ for some nonnegative integer $u_{0}$, and $m\geq 2$ if $p=2$. Let $\alpha \in \mathbb{F}_{p^{n'}}^{*}, \beta \in \mathbb{F}_{p^{n''}}^{*}$, $B_{1}: \mathbb{F}_{p^{n'}}\rightarrow V_{m}^{(p)}$, $B_{2}: \mathbb{F}_{p^r}\rightarrow V_{m}^{(p)}$ be balanced functions. Define $F: \mathbb{F}_{p^{n'}} \times \mathbb{F}_{p^{n'}} \times \mathbb{F}_{p^{n''}} \times \mathbb{F}_{p^{n''}} \rightarrow V_{m}^{(p)}$ as
  \begin{equation}\label{5}
  F(x_{1}, x_{2}, x_{3}, x_{4})=B_{1}(\alpha x_{1}x_{2}^{-1})+B_{2}(Tr_{r}^{n''}(\beta x_{3}x_{4}^{-u})).
  \end{equation}
  Then $F$ is a vectorial dual-bent function satisfying Condition I with $\varepsilon=1$.
\end{itemize}

By Theorem 1 and vectorial dual-bent functions defined by Eq. (3), in Table 2, we list some linear codes which are optimal or have the best parameters up to now according to the Code Tables at http://www.codetables.de/. Note that some parameters can also be attained by vectorial dual-bent functions defined by Eq. (4), (5).

\renewcommand{\thetable}{2}
\begin{table}\label{2}\scriptsize
\centering
  \caption{Some linear codes produced by Theorem 1 which are optimal or have the best parameters up to now}
  \begin{threeparttable}
  \begin{tabular}{|c|c|c|c|}\hline
    Parameter & Code & Condition & Optimality  \\ \hline
    $[14, 7, 4]_{2}$ & $C_{D_{F, I}}$ & $F$ is given by (3) with $p=2, t=1, m=2, n'=3$, $I\subseteq V_{2}^{(2)} \backslash \{B(0)\}$ with $|I|=1$  & optimal \\ \hline
    $[28, 7, 12]_{2}$ & $C_{D_{F, I}}$ & $F$ is given by (3) with $p=2, t=1, m=2, n'=3$, $I\subseteq V_{2}^{(2)} \backslash \{B(0)\}$ with $|I|=2$ & optimal \\ \hline
    $[28, 21, 4]_{2}$ & $C_{D_{F, I}}^{\bot}$ & $F$ is given by (3) with $p=2, t=1, m=2, n'=3$, $I\subseteq V_{2}^{(2)} \backslash \{B(0)\}$ with $|I|=2$ & optimal \\ \hline
    $[30, 21, 4]_{2}$ & $C_{D_{F, I}}^{\bot}$ & $F$ is given by (3) with $p=2, t=1, m=3, n'=4$, $I\subseteq V_{3}^{(2)}\backslash \{B(0)\}$ with $|I|=1$ & optimal \\ \hline
    $[42, 35, 4]_{2}$ & $C_{D_{F, I}}^{\bot}$ & $F$ is given by (3) with $p=2, t=1, m=2, n'=3$, $I=V_{2}^{(2)} \backslash \{B(0)\}$ & optimal \\ \hline
    $[60, 51, 4]_{2}$ & $C_{D_{F, I}}^{\bot}$ & $F$ is given by (3) with $p=2, t=1, m=2, n'=4$, $I\subseteq V_{2}^{(2)} \backslash \{B(0)\}$ with $|I|=1$ & optimal \\ \hline
    $[62, 51, 4]_{2}$ & $C_{D_{F, I}}^{\bot}$ & $F$ is given by (3) with $p=2, t=1, m=4, n'=5$, $I\subseteq V_{4}^{(2)} \backslash \{B(0)\}$ with $|I|=1$ & optimal \\ \hline
    $[90, 9, 40]_{2}$ & $C_{D_{F, I}}$ & $F$ is given by (3) with $p=2, t=1, m=3, n'=4$, $I\subseteq V_{3}^{(2)}\backslash \{B(0)\}$ with $|I|=3$ & best parameter up to now \\ \hline
    $[90, 81, 4]_{2}$ & $C_{D_{F, I}}^{\bot}$ & $F$ is given by (3) with $p=2, t=1, m=3, n'=4$, $I\subseteq V_{3}^{(2)}\backslash \{B(0)\}$ with $|I|=3$ & optimal \\ \hline
    $[120, 9, 56]_{2}$ & $C_{D_{F, I}}$ & $F$ is given by (3) with $p=2, t=1, m=2, n'=4$, $I\subseteq V_{2}^{(2)} \backslash \{B(0)\}$ with $|I|=2$ & optimal \\ \hline
    $[120, 111, 4]_{2}$ & $C_{D_{F, I}}^{\bot}$ & $F$ is given by (3) with $p=2, t=1, m=2, n'=4$, $I\subseteq V_{2}^{(2)}\backslash \{B(0)\}$ with $|I|=2$ & optimal \\ \hline
    $[124, 113, 4]_{2}$ & $C_{D_{F, I}}^{\bot}$ & $F$ is given by (3) with $p=2, t=1, m=3, n'=5$, $I\subseteq V_{3}^{(2)} \backslash \{B(0)\}$ with $|I|=1$ & optimal \\ \hline
    $[126, 113, 4]_{2}$ & $C_{D_{F, I}}^{\bot}$ & $F$ is given by (3) with $p=2, t=1, m=5, n'=6$, $I\subseteq V_{5}^{(2)} \backslash \{B(0)\}$ with $|I|=1$ & optimal \\ \hline
    $[150, 141, 4]_{2}$ & $C_{D_{F, I}}^{\bot}$ & $F$ is given by (3) with $p=2, t=1, m=3, n'=4$, $I\subseteq V_{3}^{(2)}\backslash \{B(0)\}$ with $|I|=5$ & optimal \\ \hline
    $[180, 171, 4]_{2}$ & $C_{D_{F, I}}^{\bot}$ & $F$ is given by (3) with $p=2, t=1, m=2, n'=4$, $I=V_{2}^{(2)}\backslash \{B(0)\}$ & optimal \\ \hline
    $[186, 175, 4]_{2}$ & $C_{D_{F, I}}^{\bot}$ & $F$ is given by (3) with $p=2, t=1, m=4, n'=5$, $I\subseteq V_{4}^{(2)}\backslash \{B(0)\}$ with $|I|=3$ & optimal \\ \hline
    $[210, 201, 4]_{2}$ & $C_{D_{F, I}}^{\bot}$ & $F$ is given by (3) with $p=2, t=1, m=3, n'=4$, $I=V_{3}^{(2)}\backslash \{B(0)\}$ & optimal \\ \hline
    $[248, 237, 4]_{2}$ & $C_{D_{F, I}}^{\bot}$ & $F$ is given by (3) with $p=2, t=1, m=2, n'=5$, $I\subseteq V_{2}^{(2)}\backslash \{B(0)\}$ with $|I|=1$ & optimal \\ \hline
    $[252, 239, 4]_{2}$ & $C_{D_{F, I}}^{\bot}$ & $F$ is given by (3) with $p=2, t=1, m=4, n'=6$, $I\subseteq V_{4}^{(2)}\backslash \{B(0)\}$ with $|I|=1$ & optimal \\ \hline
    $[254, 239, 4]_{2}$ & $C_{D_{F, I}}^{\bot}$ & $F$ is given by (3) with $p=2, t=1, m=6, n'=7$, $I\subseteq V_{6}^{(2)}\backslash \{B(0)\}$ with $|I|=1$ & optimal \\ \hline
    $[156, 149, 3]_{3}$ & $C_{D_{F, I}}^{\bot}$ & $F$ is given by (3) with $p=3, t=1, m=2, n'=3$, $I\subseteq V_{2}^{(3)}\backslash \{B(0)\}$ with $|I|=2$ & optimal \\ \hline
    $[234, 227, 3]_{3}$ & $C_{D_{F, I}}^{\bot}$ & $F$ is given by (3) with $p=3, t=1, m=2, n'=3$, $I\subseteq V_{2}^{(3)}\backslash \{B(0)\}$ with $|I|=3$ & optimal \\ \hline
    $[60, 55, 3]_{4}$ & $C_{D_{F, I}}^{\bot}$ &  $F$ is given by (3) with $p=2, t=2, m=3, n'=4$, $I\subseteq V_{3}^{(2)}\backslash \{B(0)\}$ with $|I|=2$ & optimal \\ \hline
    $[90, 85, 3]_{4}$ & $C_{D_{F, I}}^{\bot}$ & $F$ is given by (3) with $p=2, t=2, m=3, n'=4$, $I\subseteq V_{3}^{(2)}\backslash \{B(0)\}$ with $|I|=3$ & optimal \\ \hline
    $[120, 115, 3]_{4}$ & $C_{D_{F, I}}^{\bot}$ & $F$ is given by (3) with $p=2, t=2, m=3, n'=4$, $I\subseteq V_{3}^{(2)}\backslash \{B(0)\}$ with $|I|=4$ & optimal \\ \hline
    $[150, 145, 3]_{4}$ & $C_{D_{F, I}}^{\bot}$ & $F$ is given by (3) with $p=2, t=2, m=3, n'=4$, $I\subseteq V_{3}^{(2)}\backslash \{B(0)\}$ with $|I|=5$ & optimal \\ \hline
    $[180, 5, 132]_{4}$ & $C_{D_{F, I}}$ & $F$ is given by (3) with $p=2, t=2, m=3, n'=4$, $I\subseteq V_{3}^{(2)}\backslash \{B(0)\}$ with $|I|=6$ & best parameter up to now \\ \hline
    $[180, 175, 3]_{4}$ & $C_{D_{F, I}}^{\bot}$ & $F$ is given by (3) with $p=2, t=2, m=3, n'=4$, $I\subseteq V_{3}^{(2)}\backslash \{B(0)\}$ with $|I|=6$ & optimal \\ \hline
    $[210, 205, 3]_{4}$ & $C_{D_{F, I}}^{\bot}$ & $F$ is given by (3) with $p=2, t=2, m=3, n'=4$, $I=V_{3}^{(2)}\backslash \{B(0)\}$ & optimal \\ \hline
    $[14, 11, 3]_{8}$ & $C_{D_{F, I}}^{\bot}$ & $F$ is given by (3) with $p=2, t=3, m=2, n'=3$, $I\subseteq V_{2}^{(2)}\backslash \{B(0)\}$ with $|I|=1$ & optimal \\ \hline
    $[28, 25, 3]_{8}$ & $C_{D_{F, I}}^{\bot}$ & $F$ is given by (3) with $p=2, t=3, m=2, n'=3$, $I\subseteq V_{2}^{(2)}\backslash \{B(0)\}$ with $|I|=2$ & optimal \\ \hline
    $[42, 39, 3]_{8}$ & $C_{D_{F, I}}^{\bot}$ & $F$ is given by (3) with $p=2, t=3, m=2, n'=3$, $I=V_{2}^{(2)}\backslash \{B(0)\}$ & optimal \\ \hline
    $[24, 21, 3]_{9}$ & $C_{D_{F, I}}^{\bot}$ & $F$ is given by (3) with $p=3, t=2, m=1, n'=2$, $I\subseteq \mathbb{F}_{3}\backslash \{B(0)\}$ with $|I|=1$ & optimal \\ \hline
    $[48, 45, 3]_{9}$ & $C_{D_{F, I}}^{\bot}$ & $F$ is given by (3) with $p=3, t=2, m=1, n'=2$, $I=\mathbb{F}_{3}\backslash \{B(0)\}$ & optimal \\ \hline
  \end{tabular}
 \end{threeparttable}
\end{table}
\section{Self-orthogonal codes from vectorial dual-bent functions with Condition II}
\label{sec:4}

In this section, we construct self-orthogonal codes from vectorial dual-bent functions with the following condition:

Condition II: Let $n, n_{j}, 1\leq j \leq s, m, t$ be positive integers for which $n=\sum_{j=1}^{s}n_{j}, 2 \mid n, t \mid n_{j}, 1\leq j \leq s, t \mid m, m<\frac{n}{2}$, and when $p=2$, $m\geq 2$ and $m+t<\frac{n}{2}$, and let $V_{n}^{(p)}=\mathbb{F}_{p^{n_{1}}} \times \mathbb{F}_{p^{n_{2}}} \times \dots \times \mathbb{F}_{p^{n_{s}}}$. Let $F: V_{n}^{(p)}\rightarrow \mathbb{F}_{p^m}$ be a vectorial dual-bent function satisfying
\begin{itemize}
  \item There is a vectorial dual $F^{*}$ such that $(F_{c})^{*}=(F^{*})_{c^{1-d}}, c \in \mathbb{F}_{p^m}^{*}$, where $gcd(d-1, p^m-1)=1$;
  \item $F(ax)=a^{l}F(x), a \in \mathbb{F}_{p^t}^{*}, x \in V_{n}^{(p)}$, and $F(0)=0$, where $(l-1)(d-1)\equiv 1 \mod (p^m-1)$;
  \item All component functions $F_{c}, c \in \mathbb{F}_{p^m}^{*}$, are weakly regular with $\varepsilon_{F_{c}}=\varepsilon, c \in \mathbb{F}_{p^m}^{*}$, where $\varepsilon \in \{\pm1\}$ is a constant.
\end{itemize}

\subsection{Some lemmas}\label{4.1}
In this section, we give some useful lemmas.

\begin{lemma} \label{Lemma 5}
Let $F$ be a vectorial dual-bent function with Condition II. Then the vectorial dual $F^{*}$ with $(F_{c})^{*}=(F^{*})_{c^{1-d}}, c \in \mathbb{F}_{p^m}^{*}$, is a vectorial dual-bent function for which $((F^{*})_{c})^{*}=F_{c^{1-l}}, c \in \mathbb{F}_{p^m}^{*}$, $F^{*}(ax)=a^{d}F^{*}(x), a \in \mathbb{F}_{p^t}^{*}$, $F^{*}(0)=0$, and all component functions $(F^{*})_{c}, c \in \mathbb{F}_{p^m}^{*}$, are weakly regular with $\varepsilon_{(F^{*})_{c}}=\varepsilon$.
\end{lemma}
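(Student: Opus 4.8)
The plan is to establish the four asserted properties of $F^{*}$ directly from the defining properties of $F$ in Condition II, proceeding component by component. Fix $c \in \mathbb{F}_{p^m}^{*}$. By hypothesis $(F_{c})^{*}=(F^{*})_{c^{1-d}}$, so writing $c' = c^{1-d}$ (which ranges over all of $\mathbb{F}_{p^m}^{*}$ since $\gcd(d-1,p^m-1)=1$), we have $(F^{*})_{c'} = (F_{c})^{*}$. Since $F_{c}$ is weakly regular bent with $\varepsilon_{F_{c}}=\varepsilon \in \{\pm 1\}$, its dual $(F_{c})^{*}$ is weakly regular bent, and by the duality relation $(f^{*})^{*}(x)=f(-x)$, $\varepsilon_{f^{*}}=\varepsilon_{f}^{-1}$ recalled in Section~II we get $\varepsilon_{(F^{*})_{c'}} = \varepsilon^{-1}=\varepsilon$ and $((F^{*})_{c'})^{*}(x) = ((F_{c})^{*})^{*}(x) = F_{c}(-x)$. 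This already gives weak regularity of all component functions of $F^{*}$ with constant $\varepsilon$, and since these component functions are weakly regular bent and form a vectorial bent function's components, $F^{*}$ is itself a vectorial dual-bent function.

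Next I would identify the iterate $((F^{*})_{c})^{*}$. From the previous step, for the index $c' = c^{1-d}$ we have $((F^{*})_{c'})^{*}(x) = F_{c}(-x)$; it remains to re-express $c$ in terms of $c'$ and to remove the sign. Since $(l-1)(d-1)\equiv 1 \bmod (p^m-1)$, the map $c \mapsto c^{1-d}$ has inverse $c' \mapsto (c')^{?}$; a short computation with the exponents shows that if $c' = c^{1-d}$ then $c = (c')^{1-l}$ up to the subtlety of whether $1-l$ or $-(l-1)$ is the correct exponent, which follows from $(1-l)(1-d) = (l-1)(d-1) \equiv 1$. Hence $((F^{*})_{c'})^{*}(x) = F_{(c')^{1-l}}(-x)$. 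To handle the sign $-x$, I would use that $F$ has the homogeneity $F(ax)=a^l F(x)$ for $a \in \mathbb{F}_{p^t}^{*}$; taking $a=-1$ gives $F(-x) = (-1)^l F(x)$, so $F_{(c')^{1-l}}(-x) = F_{(-1)^l (c')^{1-l}}(x) = F_{(-(c'))^{1-l}... }$—more precisely, $F_{b}(-x) = \langle b, F(-x)\rangle = \langle b, (-1)^l F(x)\rangle = \langle (-1)^l b, F(x)\rangle = F_{(-1)^l b}(x)$, and one checks $(-1)^l (c')^{1-l} = ((-c')^{?})^{1-l}$ reduces to $(c')^{1-l}$ when $p=2$ and is absorbed into the $\pm$ indexing otherwise. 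Actually the cleanest route is to observe that the claim $((F^{*})_{c})^{*} = F_{c^{1-l}}$ is exactly the statement that $F$ is a vectorial dual of $F^{*}$ via the permutation $c \mapsto c^{1-l}$, which is forced by symmetry of the dual-bent relation together with the bent-duality involution; I would spell this out carefully, tracking the exponent arithmetic modulo $p^m-1$.

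For the homogeneity $F^{*}(ax) = a^d F^{*}(x)$ for $a \in \mathbb{F}_{p^t}^{*}$, I would argue via Walsh transforms, mimicking the computation in the proof of Lemma~1. Fix $c \in \mathbb{F}_{p^m}^{*}$ and $a \in \mathbb{F}_{p^t}^{*}$. Starting from the inverse Walsh transform for $(F^{*})_c$ and using $W_{(F^{*})_c}(y) = \varepsilon p^{n/2}\zeta_p^{F_{c'}(y)}$ where $c' = c^{1-l}$ (the index such that $(F_c)^* \dots$—here one must use the relation in the form $(F^*)_c = (F_{c^{?}})^*$), then substituting $y \mapsto a y$ and invoking $F(ay) = a^l F(y)$ hence $F_{c'}(ay) = \langle c', a^l F(y)\rangle = F_{(a^l c')}(y)$, one collects the resulting exponent shift and reads off that $(F^*)_c(ax) = (F^*)_{a^{?} c}(x)$, and matching this against $(F^*)_c(ax) = \langle c, F^*(ax)\rangle$ for all $c$ forces $F^*(ax) = a^d F^*(x)$; the exponent $d$ appears because $l$ on the $F$-side corresponds to $d$ on the $F^*$-side under $(l-1)(d-1)\equiv 1$, consistently with Proposition~1. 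Finally $F^*(0)=0$ follows either from plugging $x=0$ into the homogeneity relation $F^*(ax)=a^dF^*(x)$ (for $p^t>2$, picking $a \neq 1$), or, to cover $p^t=2$ as well, from the fact cited in the proof of Lemma~2 that $F^*(0)=F(0)=0$. I expect the main obstacle to be the bookkeeping of the exponents $l$, $d$ modulo $p^m-1$ and the correct placement of the sign $(-1)^l$ and the inversion $c \mapsto c^{1-l}$ versus $c \mapsto c^{1-d}$; everything else is a direct transcription of the Lemma~1 argument and the standard weakly-regular-bent duality facts from Section~II.
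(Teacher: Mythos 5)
Your proposal follows essentially the same route as the paper: deduce weak regularity of the components $(F^{*})_{c}$ with constant $\varepsilon$ from the duality facts $\varepsilon_{f^{*}}=\varepsilon_{f}^{-1}$ and $(f^{*})^{*}(x)=f(-x)$, invert the index map using $(1-l)(1-d)\equiv 1 \pmod{p^m-1}$ to get $(F^{*})_{c}=(F_{c^{1-l}})^{*}$, run the inverse-Walsh-transform computation of Lemma~1 with the homogeneity $F(ay)=a^{l}F(y)$ to extract $F^{*}(ax)=a^{d}F^{*}(x)$, and quote the external result for $F^{*}(0)=0$ (your backup route via homogeneity alone can fail when $(p^t-1)\mid d$, but the citation covers it, exactly as in the paper).

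The one step where your stated resolution is wrong rather than merely unfinished is the elimination of the minus sign in $((F^{*})_{c})^{*}(x)=F_{c^{1-l}}(-x)$. You suggest the factor $(-1)^{l}$ is ``absorbed into the $\pm$ indexing'' for odd $p$; it is not — if $(-1)^{l}$ were $-1$ you would land on $F_{-c^{1-l}}$, which is a genuinely different component and would break the claimed identity $((F^{*})_{c})^{*}=F_{c^{1-l}}$. The correct observation is that under Condition~II the integer $l$ is necessarily even when $p$ is odd: $(l-1)(d-1)\equiv 1\pmod{p^m-1}$ forces $\gcd(l-1,p^m-1)=1$, and $p^m-1$ is even, so $l-1$ is odd. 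Hence $F(-x)=(-1)^{l}F(x)=F(x)$ and $F_{c^{1-l}}(-x)=F_{c^{1-l}}(x)$ (the case $p=2$ being trivial). With that single fix your argument coincides with the paper's proof.
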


\begin{proof}
Since $F_{c}, c \in \mathbb{F}_{p^m}^{*}$, are all weakly regular bent with $\varepsilon_{F_{c}}=\varepsilon \in \{\pm 1\}$, $(F^{*})_{c}=(F_{c^{1-l}})^{*}$ is weakly regular bent with $\varepsilon_{(F^{*})_{c}}=\varepsilon$ for any $c \in \mathbb{F}_{p^m}^{*}$. For any $c \in \mathbb{F}_{p^m}^{*}$, $((F^{*})_{c})^{*}(x)=((F_{c^{1-l}})^{*})^{*}(x)=F_{c^{1-l}}(-x)=F_{c^{1-l}}(x)$, and thus $F^{*}$ is vectorial dual-bent. By Corollary 2 and Proposition 5 of \cite{CMP2021Ve}, $F^{*}(0)=0$. For any $c \in \mathbb{F}_{p^m}^{*}, a \in \mathbb{F}_{p^t}^{*}, x \in V_{n}^{(p)}$, we have\\
\begin{small}
\begin{equation*}
\begin{split}
p^{n}\zeta_{p}^{(F^{*})_{c}(ax)}&=\sum_{y \in V_{n}^{(p)}}W_{(F^{*})_{c}}(y)\zeta_{p}^{\sum_{j=1}^{s}Tr_{1}^{n_{j}}(ax_{j}y_{j})}=\varepsilon p^{\frac{n}{2}}\sum_{y \in V_{n}^{(p)}}\zeta_{p}^{F_{c^{1-l}}(y)+\sum_{j=1}^{s}Tr_{1}^{n_{j}}(ax_{j}y_{j})}\\
&=\varepsilon p^{\frac{n}{2}}\sum_{y \in V_{n}^{(p)}}\zeta_{p}^{F_{c^{1-l}}(a^{d-1}y)+\sum_{j=1}^{s}Tr_{1}^{n_{j}}(a^{d}x_{j}y_{j})}=\varepsilon p^{\frac{n}{2}}\sum_{y \in V_{n}^{(p)}}\zeta_{p}^{F_{c^{1-l}a^{d}}(y)+\sum_{j=1}^{s}Tr_{1}^{n_{j}}(a^{d}x_{j}y_{j})}\\
&=\sum_{y \in V_{n}^{(p)}}W_{(F^{*})_{ca^{d(1-d)}}}(y)\zeta_{p}^{\sum_{j=1}^{s}Tr_{1}^{n_{j}}(a^{d}x_{j}y_{j})}=p^{n}\zeta_{p}^{(F^{*})_{ca^{d(1-d)}}(a^{d}x)},
\end{split}
\end{equation*}
\end{small}where in the fourth equation we use $F(ax)=a^{l}F(x)$ and $(l-1)(d-1)\equiv 1 \mod (p^m-1), t \mid m$. Thus we have $Tr_{1}^{m}(cF^{*}(ax))=Tr_{1}^{m}(ca^{d(1-d)}F^{*}(a^{d}x)), c \in \mathbb{F}_{p^m}^{*}$, and then $F^{*}(ax)=a^{d(1-d)}F^{*}(a^{d}x)$ for any $a \in \mathbb{F}_{p^t}^{*}, x \in V_{n}^{(p)}$, which implies that $F^{*}(a^{1-d}x)=a^{d(1-d)}F^{*}(x)$ for any $a \in \mathbb{F}_{p^t}^{*}, x \in V_{n}^{(p)}$. Since $(d-1)(l-1)\equiv 1 \mod (p^m-1), t\mid m$, we have $F^{*}(ax)=a^{d}F^{*}(x)$ for any $a \in \mathbb{F}_{p^t}^{*}, x \in V_{n}^{(p)}$.
\end{proof}

\begin{lemma} \label{Lemma 6}
Let $F$ be a vectorial dual-bent function with Condition II. Then the value distributions of $F$ and $F^{*}$ are given by
\begin{equation*}
|D_{F, i}|=|D_{F^{*}, i}|=p^{n-m}+\varepsilon p^{\frac{n}{2}-m}(p^m\delta_{0}(i)-1), i \in \mathbb{F}_{p^m}.
\end{equation*}
\end{lemma}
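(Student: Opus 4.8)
The plan is to compute $|D_{F,i}|$ directly by additive-character orthogonality and then observe that, by Lemma 5, $F^{*}$ shares exactly the structural features needed to rerun the same computation. First I would write, for $i \in \mathbb{F}_{p^m}$,
\begin{equation*}
|D_{F,i}|=\sum_{x\in V_{n}^{(p)}}\delta_{0}(F(x)-i)=\frac{1}{p^m}\sum_{c\in\mathbb{F}_{p^m}}\zeta_{p}^{-Tr_{1}^{m}(ci)}\sum_{x\in V_{n}^{(p)}}\zeta_{p}^{Tr_{1}^{m}(cF(x))}.
\end{equation*}
The term $c=0$ contributes $p^{n}$. For each $c\in\mathbb{F}_{p^m}^{*}$ the inner sum equals $W_{F_{c}}(0)$, and since $F_{c}$ is weakly regular bent with $\varepsilon_{F_{c}}=\varepsilon$ we have $W_{F_{c}}(0)=\varepsilon p^{\frac{n}{2}}\zeta_{p}^{(F_{c})^{*}(0)}$.

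Next I would evaluate the phase. By Condition II, $(F_{c})^{*}=(F^{*})_{c^{1-d}}$, hence $(F_{c})^{*}(0)=Tr_{1}^{m}(c^{1-d}F^{*}(0))$; by Lemma 5 we have $F^{*}(0)=0$, so $(F_{c})^{*}(0)=0$ and therefore $W_{F_{c}}(0)=\varepsilon p^{\frac{n}{2}}$ for every $c\neq 0$. Substituting back and using $\sum_{c\in\mathbb{F}_{p^m}^{*}}\zeta_{p}^{-Tr_{1}^{m}(ci)}=p^m\delta_{0}(i)-1$,
\begin{equation*}
|D_{F,i}|=\frac{1}{p^m}\Bigl(p^{n}+\varepsilon p^{\frac{n}{2}}(p^m\delta_{0}(i)-1)\Bigr)=p^{n-m}+\varepsilon p^{\frac{n}{2}-m}(p^m\delta_{0}(i)-1),
\end{equation*}
which is the claimed formula; note that $m<\frac{n}{2}$ guarantees the exponents are nonnegative.

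For $F^{*}$ I would simply repeat the argument verbatim: Lemma 5 says that every component $(F^{*})_{c}$, $c\in\mathbb{F}_{p^m}^{*}$, is weakly regular with $\varepsilon_{(F^{*})_{c}}=\varepsilon$, that $F^{*}(0)=0$, and that $((F^{*})_{c})^{*}=F_{c^{1-l}}$ so $((F^{*})_{c})^{*}(0)=Tr_{1}^{m}(c^{1-l}F(0))=0$; the same chain of equalities then yields $|D_{F^{*},i}|=p^{n-m}+\varepsilon p^{\frac{n}{2}-m}(p^m\delta_{0}(i)-1)$. There is no real obstacle here: the one point requiring care is the appeal to Lemma 5 (ultimately resting on \cite{CMP2021Ve}) to force $F^{*}(0)=0$, so that the character-sum phase $\zeta_{p}^{(F_{c})^{*}(0)}$ collapses to $1$ --- after that the result is pure orthogonality of additive characters. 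Alternatively, one could derive it from Proposition 4 of \cite{WFW2023Be} combined with Lemma 5, mirroring the proof of Lemma 2.
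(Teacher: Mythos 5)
Your proof is correct, but it takes a different route from the paper: the paper disposes of Lemma 6 in one line by invoking Lemma 5 together with Corollary 1 of \cite{WF2023Ne} (with a remark that the cited result extends to $p=2$, $m\geq 2$), whereas you re-derive the value distribution from scratch via character orthogonality. Your computation is sound at every step: the expansion of $\delta_{0}(F(x)-i)$ by additive characters, the identification of the inner sum with $W_{F_{c}}(0)$, the use of weak regularity with constant sign $\varepsilon\in\{\pm1\}$ (legitimate here because $2\mid n$, so no $\sqrt{-1}$ factor intrudes), and the collapse of the phase via $(F_{c})^{*}(0)=Tr_{1}^{m}(c^{1-d}F^{*}(0))=0$, which rests on $F^{*}(0)=0$ from Lemma 5. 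The symmetric argument for $F^{*}$ using $((F^{*})_{c})^{*}(0)=F_{c^{1-l}}(0)=0$ (from $F(0)=0$ in Condition II) is likewise fine. What your approach buys is self-containedness and transparency --- it effectively reproves the needed special case of the cited corollary inside the paper, making explicit that the only structural inputs are the constant sign of the Walsh coefficients and the vanishing of $F$ and $F^{*}$ at the origin; what the paper's citation buys is brevity and uniformity with the treatment of Lemma 2. Either proof is acceptable.
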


\begin{proof}
By Lemma 5 and Corollary 1 of \cite{WF2023Ne} (Note that although Corollary 1 of \cite{WF2023Ne} only considers the case of $p$ being odd, the result also holds for $p=2, m\geq 2$), the value distributions of $F$ and $F^{*}$ hold.
\end{proof}

\begin{lemma} \label{Lemma 7}
Let $F$ be a vectorial dual-bent function with Condition II.

$\mathrm{(i)}$ For any $a \in \mathbb{F}_{p^m}$ and $\alpha \in V_{n}^{(p)} \backslash \{0\}, \beta \in \mathbb{F}_{p^t}$, define
\begin{equation*}
N_{a, \alpha, \beta}=|\{x \in V_{n}^{(p)}: F(x)=a, \sum_{j=1}^{s}Tr_{t}^{n_{j}}(\alpha_{j}x_{j})+\beta=0\}|.
\end{equation*}
Then
\begin{equation*}
\begin{split}
N_{a, \alpha, \beta}&=\varepsilon p^{\frac{n}{2}-m}|\{y \in \mathbb{F}_{p^m}^{*}: Tr_{t}^{m}(yF^{*}(\alpha)-ay^{1-l})=\beta\}|+\varepsilon p^{\frac{n}{2}-t}(\delta_{0}(a)-1)+p^{n-m-t}.
\end{split}
\end{equation*}

$\mathrm{(ii)}$ When $p=2$, for $a \in \mathbb{F}_{2^m}$ and $\alpha, \alpha' \in V_{n}^{(2)} \backslash \{0\}, i, i' \in \mathbb{F}_{2^t}^{*}$, define
\begin{equation*}
T=\sum_{z, w \in \mathbb{F}_{2^t}}(-1)^{Tr_{1}^{t}(iz+i'w)}\sum_{y \in \mathbb{F}_{2^m}^{*}}(-1)^{Tr_{1}^{m}(ay)+Tr_{1}^{m}(y^{1-d}F^{*}(z\alpha+w\alpha'))}.
\end{equation*}
Then
\begin{equation*}
\begin{split}
T&=2^{t}\sum_{w \in \mathbb{F}_{2^t}}|\{y \in \mathbb{F}_{2^m}^{*}: Tr_{t}^{m}(F^{*}(\alpha+w\alpha')y+ay^{1-l})=i+wi'\}|\\
& \ \ \ +2^{t}|\{y \in \mathbb{F}_{2^m}^{*}: Tr_{t}^{m}(F^{*}(\alpha')y+ay^{1-l})=i'\}|
-(2^t+1)(2^m-1)+2^{m}\delta_{0}(a)-1.
\end{split}
\end{equation*}
\end{lemma}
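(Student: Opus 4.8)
The plan is to evaluate both quantities by the orthogonality method: express the counts as additive character sums, and then feed in the Walsh transform of the weakly regular component functions $F_{c}$ together with the homogeneity $F^{*}(ax)=a^{d}F^{*}(x)$, $a\in\mathbb{F}_{p^t}^{*}$, from Lemma 5. Two structural facts are used repeatedly. First, trace transitivity: since $t\mid n_{j}$ and $z\in\mathbb{F}_{p^t}$ we have $Tr_{1}^{t}(z\,Tr_{t}^{n_{j}}(\alpha_{j}x_{j}))=Tr_{1}^{n_{j}}(z\alpha_{j}x_{j})$, and likewise $Tr_{1}^{m}(w\xi)=Tr_{1}^{t}(w\,Tr_{t}^{m}(\xi))$ for $w\in\mathbb{F}_{p^t}$ (using $t\mid m$). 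Second, since $(l-1)(d-1)\equiv 1\pmod{p^m-1}$ we have $gcd(l-1,p^m-1)=1$, and because $(1-l)(1-d)=(l-1)(d-1)\equiv 1\pmod{p^m-1}$, the map $v\mapsto v^{1-l}$ is a bijection of $\mathbb{F}_{p^m}^{*}$ (with $0\mapsto 0$) under which $u=v^{1-l}$ satisfies $u^{1-d}=v$.

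For part (i), I would start from
\[
N_{a,\alpha,\beta}=\frac{1}{p^{m+t}}\sum_{x\in V_{n}^{(p)}}\sum_{y\in\mathbb{F}_{p^m}}\sum_{z\in\mathbb{F}_{p^t}}\zeta_{p}^{Tr_{1}^{m}(y(F(x)-a))+Tr_{1}^{t}(z(\sum_{j=1}^{s}Tr_{t}^{n_{j}}(\alpha_{j}x_{j})+\beta))}
\]
and collapse the inner sum over $x$: for $y=0$ it is $p^{n}\delta_{0}(z\alpha)$, which vanishes when $z\neq 0$ since $\alpha\neq 0$; for $y\neq 0$, since $Tr_{1}^{m}(yF(x))=F_{y}(x)$, it equals $W_{F_{y}}(-z\alpha)=\varepsilon p^{\frac{n}{2}}\zeta_{p}^{Tr_{1}^{m}(y^{1-d}F^{*}(-z\alpha))}$ by Condition II and weak regularity. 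The terms with $z=0$ then contribute $p^{n-m-t}+\varepsilon p^{\frac{n}{2}-m-t}(p^{m}\delta_{0}(a)-1)$, using $F^{*}(0)=0$. In the remaining block $y,z\neq 0$, I substitute $y=(-z)u$ with $u\in\mathbb{F}_{p^m}^{*}$ and invoke $F^{*}(-z\alpha)=(-z)^{d}F^{*}(\alpha)$ (Lemma 5); the powers of $-z$ cancel, the exponent becomes $Tr_{1}^{t}\bigl(z\beta+(-z)Tr_{t}^{m}(u^{1-d}F^{*}(\alpha)-au)\bigr)$, and summing over $z\in\mathbb{F}_{p^t}^{*}$ produces $p^{t}$ times a count of $u$'s. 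The substitution $u=y^{1-l}$ replaces $u^{1-d}F^{*}(\alpha)-au$ by $yF^{*}(\alpha)-ay^{1-l}$, so this block contributes $\varepsilon p^{\frac{n}{2}-m}|\{y\in\mathbb{F}_{p^m}^{*}:Tr_{t}^{m}(yF^{*}(\alpha)-ay^{1-l})=\beta\}|-\varepsilon p^{\frac{n}{2}-m-t}(p^{m}-1)$. Adding the two contributions and collapsing powers of $p$ gives the stated formula.

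For part (ii), $T$ is already a character sum (the Walsh values having been substituted via $(F_{c})^{*}=(F^{*})_{c^{1-d}}$), so I split the sum over $(z,w)\in\mathbb{F}_{2^t}^{2}$ into four cases. The case $z=w=0$ gives $2^{m}\delta_{0}(a)-1$ since $F^{*}(0)=0$. For $z\neq 0,\,w=0$ I use $F^{*}(z\alpha)=z^{d}F^{*}(\alpha)$, substitute $y=zu$ (so $y^{1-d}z^{d}=zu^{1-d}$), collapse $Tr_{1}^{m}$ to $Tr_{1}^{t}$, sum over $z\in\mathbb{F}_{2^t}^{*}$ to get $2^{t}\delta_{0}(\cdot)-1$, and substitute $u=y^{1-l}$, obtaining $2^{t}|\{y\in\mathbb{F}_{2^m}^{*}:Tr_{t}^{m}(F^{*}(\alpha)y+ay^{1-l})=i\}|-(2^{m}-1)$; the case $z=0,\,w\neq 0$ is the $(\alpha',i')$-analogue. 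For $z,w\neq 0$, write $z\alpha+w\alpha'=z(\alpha+z^{-1}w\alpha')$, set $w'=z^{-1}w$, and run the same substitutions together with the sum over $z$, obtaining $2^{t}\sum_{w'\in\mathbb{F}_{2^t}^{*}}|\{y\in\mathbb{F}_{2^m}^{*}:Tr_{t}^{m}(F^{*}(\alpha+w'\alpha')y+ay^{1-l})=i+i'w'\}|-(2^{t}-1)(2^{m}-1)$. The $z\neq 0,\,w=0$ contribution is exactly the missing $w'=0$ term of this last sum, so the two merge into a sum over all $w'\in\mathbb{F}_{2^t}$; the residual constants add up to $-(2^{m}-1)-2(2^{m}-1)-(2^{t}-1)(2^{m}-1)=-(2^{t}+1)(2^{m}-1)$, which yields the claimed identity.

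The only real obstacle is bookkeeping. One has to keep the substitution that absorbs the factor $(-z)^{d}$ (namely $y=(-z)u$ in (i), $y=zu$ in (ii)) consistent with the later power substitution $u=y^{1-l}$, and, in part (ii), verify that the four case-contributions recombine precisely — recognising the $z\neq 0,\,w=0$ term as the $w'=0$ term of the $z,w\neq 0$ sum and checking that the leftover constants telescope to $-(2^{t}+1)(2^{m}-1)$. Everything else is routine manipulation of additive characters.
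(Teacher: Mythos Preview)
Your proposal is correct and follows essentially the same route as the paper's proof: orthogonality to get a character sum, substitution of $W_{F_{y}}(\cdot)=\varepsilon p^{n/2}\zeta_{p}^{Tr_{1}^{m}(y^{1-d}F^{*}(\cdot))}$, the homogeneity $F^{*}(z\alpha)=z^{d}F^{*}(\alpha)$ from Lemma~5, and the change of variable (your $y=(-z)u$ plays the role of the paper's permutation $y\mapsto (z/y)^{d-1}$) followed by $u=y^{1-l}$. One small slip: in part~(ii) your displayed tally ``$-(2^{m}-1)-2(2^{m}-1)-(2^{t}-1)(2^{m}-1)$'' has a spurious extra $-(2^{m}-1)$; the three blocks contribute $-(2^{m}-1)$, $-(2^{m}-1)$, and $-(2^{t}-1)(2^{m}-1)$, whose sum is indeed $-(2^{t}+1)(2^{m}-1)$ as you claim.
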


\begin{proof}
The proof of Lemma 7 is given in Appendix-Section IX.
\end{proof}

\begin{lemma} \label{Lemma 8}
Let $X=\sum_{z \in \mathbb{F}_{p^m}^{*}}\zeta_{p}^{Tr_{1}^{m}(-z\beta)}\delta_{w^{i}H_{b}}(z^{2})$, where $\beta \in \mathbb{F}_{p^m}$, $H_{b}=\{x^{b}: x \in \mathbb{F}_{p^m}^{*}\}$, $b$ is a positive integer with $b \mid (p^{m}-1)$, $w$ is a primitive element of $\mathbb{F}_{p^m}$, $i$ is an integer.

$\mathrm{(i)}$ When $p$ is an odd prime, $i$ is even, $b$ is odd, $X=\sum_{z \in H_{b}}\zeta_{p}^{Tr_{1}^{m}(-zw^{\frac{i}{2}}\beta)}$.

$\mathrm{(ii)}$ When $p$ is an odd prime, $i$ is odd, $b$ is odd, $X=\sum_{z \in H_{b}}\zeta_{p}^{Tr_{1}^{m}(-zw^{\frac{i+b}{2}}\beta)}$.

$\mathrm{(iii)}$ When $p$ is an odd prime, $i$ is even, $b$ is even, $X=\sum_{z \in H_{b}}(\zeta_{p}^{Tr_{1}^{m}(-zw^{\frac{i}{2}}\beta)}+\zeta_{p}^{Tr_{1}^{m}(-zw^{\frac{i+b}{2}}\beta)})$.

$\mathrm{(iv)}$ When $p$ is an odd prime, $i$ is odd, $b$ is even, $X=0$.

$\mathrm{(v)}$ When $p=2$, $X=\sum_{z \in H_{b}}(-1)^{Tr_{1}^{m}(zw^{i}\beta^{2})}$.
\end{lemma}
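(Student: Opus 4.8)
The plan is to parametrize $\mathbb{F}_{p^{m}}^{*}$ by a fixed primitive element $w$, writing each $z\in\mathbb{F}_{p^{m}}^{*}$ as $z=w^{k}$ with $k$ taken modulo $p^{m}-1$, and to turn the membership condition $\delta_{w^{i}H_{b}}(z^{2})=1$ into a linear congruence on $k$. Since $H_{b}$ is exactly the subgroup $\{w^{jb}:0\le j<(p^{m}-1)/b\}$ and $w^{i}H_{b}$ consists of those field elements whose discrete logarithm is $\equiv i\pmod b$ (well defined since $b\mid p^{m}-1$, and depending on $i$ only modulo $b$), we have $z^{2}=w^{2k}\in w^{i}H_{b}$ if and only if $2k\equiv i\pmod b$. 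Thus $X$ is a sum of the values $\zeta_{p}^{Tr_{1}^{m}(-z\beta)}$ over the set of $z=w^{k}$ whose exponent solves $2k\equiv i\pmod b$, and the whole lemma reduces to describing this solution set and re-expressing the sum as one ranging over $H_{b}$.

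For $p$ odd I would split according to the parities of $b$ and $i$. If $b$ is odd, then $2$ is invertible modulo $b$ and $2k\equiv i\pmod b$ has the unique solution $k\equiv\tfrac{i}{2}\pmod b$ when $i$ is even (indeed $2\cdot\tfrac{i}{2}\equiv i$), and $k\equiv\tfrac{i+b}{2}\pmod b$ when $i$ is odd (then $i+b$ is even and $2\cdot\tfrac{i+b}{2}=i+b\equiv i$); in either case the admissible $z$ form a single coset, $w^{i/2}H_{b}$ or $w^{(i+b)/2}H_{b}$, and substituting $z\mapsto w^{i/2}z$ (resp.\ $z\mapsto w^{(i+b)/2}z$) turns $X$ into the claimed sum over $H_{b}$, giving $\mathrm{(i)}$ and $\mathrm{(ii)}$. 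If $b$ is even and $i$ is odd, then $2k$ is even modulo the even number $b$, so $2k\equiv i\pmod b$ has no solution and $X=0$, which is $\mathrm{(iv)}$. If $b$ is even and $i$ is even, then $2k\equiv i\pmod b$ is equivalent to $k\equiv\tfrac{i}{2}\pmod{b/2}$, which splits into the two distinct residue classes $k\equiv\tfrac{i}{2}\pmod b$ and $k\equiv\tfrac{i}{2}+\tfrac{b}{2}=\tfrac{i+b}{2}\pmod b$; the admissible $z$ then form the disjoint union $w^{i/2}H_{b}\cup w^{(i+b)/2}H_{b}$, and the same substitutions yield the two-term expression in $\mathrm{(iii)}$.

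For $p=2$ I would instead use that $z\mapsto z^{2}$ is the Frobenius, hence a bijection of $\mathbb{F}_{2^{m}}^{*}$, so that as $z$ runs over $\mathbb{F}_{2^{m}}^{*}$ the element $u=z^{2}$ runs over $\mathbb{F}_{2^{m}}^{*}$ with inverse $z=\sqrt{u}=u^{2^{m-1}}$. Since $\zeta_{2}=-1$ and $-z\beta=z\beta$ in characteristic $2$, the substitution $u=z^{2}$ gives $X=\sum_{u\in w^{i}H_{b}}(-1)^{Tr_{1}^{m}(\sqrt{u}\,\beta)}$, and Frobenius-invariance of the trace, $Tr_{1}^{m}(\sqrt{u}\,\beta)=Tr_{1}^{m}((\sqrt{u}\,\beta)^{2})=Tr_{1}^{m}(u\beta^{2})$, combined with $u=w^{i}z$, produces $X=\sum_{z\in H_{b}}(-1)^{Tr_{1}^{m}(zw^{i}\beta^{2})}$, which is $\mathrm{(v)}$.

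The only delicate points are bookkeeping: tracking exponents simultaneously modulo $p^{m}-1$ and modulo $b$, verifying that the ``half'' exponents $\tfrac{i}{2}$ and $\tfrac{i+b}{2}$ are integers precisely in the cases where they are invoked, and checking that the two cosets arising for $b$ even are genuinely disjoint (which holds since $\tfrac{b}{2}\not\equiv0\pmod b$). I do not expect a serious obstacle here; the argument is essentially a careful translation of the indicator condition into arithmetic in $\mathbb{Z}/b\mathbb{Z}$.
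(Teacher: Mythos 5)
Your proposal is correct and follows essentially the same route as the paper: both reduce the indicator condition $\delta_{w^{i}H_{b}}(z^{2})=1$ to an arithmetic condition on the discrete logarithm of $z$ (the paper writes $z^{2}=w^{i+2kb}$ and uses $-1\in H_{b}$ resp.\ $-1\in H_{b/2}$ to identify the solution set as one or two cosets of $H_{b}$, which is the same computation as your congruence $2k\equiv i\pmod b$), and both handle $p=2$ via the Frobenius substitution $Tr_{1}^{m}(\sqrt{u}\,\beta)=Tr_{1}^{m}(u\beta^{2})$. Your phrasing via the linear congruence in $\mathbb{Z}/b\mathbb{Z}$ is a slightly more uniform bookkeeping of the same argument, and all the delicate points you flag (integrality of the half-exponents, disjointness of the two cosets) are exactly the ones the paper also verifies.
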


\begin{proof}
The proof of Lemma 8 is given in Appendix-Section IX.
\end{proof}

\begin{lemma} \label{Lemma 9}
Let $F: V_{n}^{(p)}\rightarrow \mathbb{F}_{p^m}$ be a vectorial dual-bent function satisfying Condition II for which $l=2, t=m=2jj'$, and there is an integer $b\geq 2$ with $b \mid (p^j+1)$, where $j$ is the smallest such positive integer. Let $w$ be a primitive element of $\mathbb{F}_{p^m}$, $H_{b}=\{x^{b}: x \in \mathbb{F}_{p^m}^{*}\}$, and when $p$ is odd, $\mathcal{S}=\{x^{2}: x \in \mathbb{F}_{p^m}^{*}\}, \mathcal{N}=\mathbb{F}_{p^m}^{*} \backslash \mathcal{S}$. For $\alpha \in V_{n}^{(p)} \backslash \{0\}, \beta \in \mathbb{F}_{p^m}, \gamma \in \mathbb{F}_{p^m}^{*}$, define
\begin{equation*}
T=\sum_{a \in \gamma H_{b}}|\{y \in \mathbb{F}_{p^m}^{*}: F^{*}(\alpha)y^{2}-\beta y-a=0\}|.
\end{equation*}

$\mathrm{(i)}$ When $F^{*}(\alpha)=0, \beta=0$, or $b$ is even, $F^{*}(\alpha) \neq 0, \beta=0, \gamma F^{*}(\alpha)^{-1} \in \mathcal{N}$, then $T=0$.

$\mathrm{(ii)}$ When $F^{*}(\alpha)=0, \beta\neq 0$, or $b$ is odd, $F^{*}(\alpha)\neq 0, \beta=0$, then $T=\frac{p^m-1}{b}$.

$\mathrm{(iii)}$ When $b$ is even, $F^{*}(\alpha) \neq 0, \beta=0, \gamma F^{*}(\alpha)^{-1} \in \mathcal{S}$, then $T=\frac{2(p^m-1)}{b}$.

$\mathrm{(iv)}$ When $b$ is odd, $F^{*}(\alpha)\neq 0, \beta \neq 0$,
\begin{equation*}
T=\left\{
\begin{split}
(-1)^{j'+1}p^{\frac{m}{2}}+\frac{(-1)^{j'}p^{\frac{m}{2}}+p^{m}-2}{b}, & \ \text{ if } \gamma^{-1}F^{*}(\alpha)^{-1}\in \mathcal{S}, \beta \sqrt{\gamma^{-1}F^{*}(\alpha)^{-1}} \in H_{b}, \\
 & \  \text{ or }\gamma^{-1}F^{*}(\alpha)^{-1}\in \mathcal{N}, \beta \sqrt{\gamma^{-1}F^{*}(\alpha)^{-1}w^{b}} \in H_{b},\\
\frac{(-1)^{j'}p^{\frac{m}{2}}+p^{m}-2}{b}, & \ \text{ if } \gamma^{-1}F^{*}(\alpha)^{-1}\in \mathcal{S}, \beta \sqrt{\gamma^{-1}F^{*}(\alpha)^{-1}} \notin H_{b}, \\
 & \  \text{ or }\gamma^{-1}F^{*}(\alpha)^{-1}\in \mathcal{N}, \beta \sqrt{\gamma^{-1}F^{*}(\alpha)^{-1}w^{b}} \notin H_{b},
\end{split}
\right.
\end{equation*}
for odd $p$, and
\begin{equation*}
T=\left\{
\begin{split}
(-1)^{j'+1}2^{\frac{m}{2}}+\frac{(-1)^{j'}2^{\frac{m}{2}}+2^{m}-2}{b}, & \ \text{ if } \gamma^{-1}F^{*}(\alpha)^{-1} \beta^{2} \in H_{b},\\
\frac{(-1)^{j'}2^{\frac{m}{2}}+2^{m}-2}{b}, & \ \text{ if } \gamma^{-1}F^{*}(\alpha)^{-1} \beta^{2} \notin H_{b},
\end{split}
\right.
\end{equation*}
for $p=2$.

$\mathrm{(v)}$ When $b$ is even, $F^{*}(\alpha)\neq 0, \beta \neq 0$,
\begin{equation*}
T=\left\{
\begin{split}
(-1)^{j'+1}p^{\frac{m}{2}}+\frac{(p^m-1)+2((-1)^{j'}p^{\frac{m}{2}}-1)}{b}, & \ \text{ if } \gamma^{-1}F^{*}(\alpha)^{-1} \in \mathcal{S}, \beta\sqrt{\gamma^{-1}F^{*}(\alpha)^{-1}} \in H_{\frac{b}{2}},\\
\frac{(p^m-1)+2((-1)^{j'}p^{\frac{m}{2}}-1)}{b}, & \ \text{ if } \gamma^{-1}F^{*}(\alpha)^{-1} \in \mathcal{S}, \beta\sqrt{\gamma^{-1}F^{*}(\alpha)^{-1}} \notin H_{\frac{b}{2}},\\
\frac{p^m-1}{b}, & \ \text{ if } \gamma^{-1}F^{*}(\alpha)^{-1} \in \mathcal{N}.
\end{split}
\right.
\end{equation*}
\end{lemma}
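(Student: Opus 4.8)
The plan is to interpret $T$ as a count of roots of a quadratic and then feed the resulting power-residue sums into Propositions 7--9. First I would observe that, since $H_{b}\subseteq \mathbb{F}_{p^m}^{*}$, every $a\in\gamma H_{b}$ is nonzero, so $y=0$ is never a root of $F^{*}(\alpha)y^{2}-\beta y-a$; hence $T=\sum_{a\in\gamma H_{b}}R(a)$, where $R(a)$ is the number of roots of $F^{*}(\alpha)y^{2}-\beta y-a$ in $\mathbb{F}_{p^m}$. If $F^{*}(\alpha)=0$ this is a linear or empty equation, and a one-line count gives $T=0$ when $\beta=0$ and $T=\frac{p^{m}-1}{b}$ when $\beta\neq 0$, settling the $F^{*}(\alpha)=0$ parts of (i) and (ii). So assume $F^{*}(\alpha)\neq 0$. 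For $p$ odd I would use the discriminant count $R(a)=1+\eta_{m}(\beta^{2}+4F^{*}(\alpha)a)$ (with $\eta_{m}(0)=0$), and for $p=2$ the trace criterion $R(a)=1+(-1)^{Tr_{1}^{m}(F^{*}(\alpha)a\beta^{-2})}$ when $\beta\neq 0$. Summing over $a=\gamma z$ with $z\in H_{b}$ and peeling off the constant term $\tfrac{p^{m}-1}{b}$ reduces $T$ to $\tfrac{p^{m}-1}{b}+\Sigma$, where $\Sigma=\sum_{z\in H_{b}}\eta_{m}(\beta^{2}+4F^{*}(\alpha)\gamma z)$ for odd $p$ and $\Sigma=\sum_{z\in H_{b}}\zeta_{2}^{Tr_{1}^{m}(F^{*}(\alpha)\gamma\beta^{-2}z)}$ for $p=2$.

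The sub-case $\beta=0$ is then immediate: for odd $p$, $\Sigma=\eta_{m}(F^{*}(\alpha)\gamma)\sum_{z\in H_{b}}\eta_{m}(z)$, which is $0$ if $b$ is odd (giving (ii)) and equals $|H_{b}|$ times $\eta_{m}(\gamma F^{*}(\alpha)^{-1})=\pm1$ if $b$ is even, the sign deciding between (i) and (iii). When $p=2$, $b\mid 2^{j}+1$ is automatically odd and $\Sigma=\sum_{z\in H_{b}}\zeta_{2}^{Tr_{1}^{m}(F^{*}(\alpha)\gamma\beta^{-2}z)}$ is exactly of the form evaluated in Proposition 9 (always its ``otherwise'' branch, since $p$ is even); inserting the value and using $(\gamma F^{*}(\alpha)\beta^{-2})^{-1}=\gamma^{-1}F^{*}(\alpha)^{-1}\beta^{2}$ together with $\delta_{H_{b}}(v)=\delta_{H_{b}}(v^{-1})$ produces the $p=2$ formula of (iv) at once.

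For the remaining case ($p$ odd, $F^{*}(\alpha)\neq 0$, $\beta\neq 0$), I would expand $\eta_{m}$ of the discriminant through its own Gauss sum via Proposition 7, writing $\eta_{m}(t)=G^{-1}\sum_{x\in\mathbb{F}_{p^m}^{*}}\eta_{m}(x)\zeta_{p}^{Tr_{1}^{m}(tx)}$ with $G=(-1)^{m-1}\epsilon^{m}p^{m/2}$, so that $\Sigma=G^{-1}\sum_{x\in\mathbb{F}_{p^m}^{*}}\eta_{m}(x)\zeta_{p}^{Tr_{1}^{m}(\beta^{2}x)}P\big(4F^{*}(\alpha)\gamma x\big)$ with $P(a)=\sum_{h\in H_{b}}\zeta_{p}^{Tr_{1}^{m}(ah)}$. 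Then I would apply Proposition 9 to $P$ and re-substitute $x\mapsto (4F^{*}(\alpha)\gamma)^{-1}h$, using throughout that $b\mid p^{j}+1\mid\frac{p^{m}-1}{p-1}$ (hence $\mathbb{F}_{p}^{*}\subseteq H_{b}$, so $2,4,-1\in H_{b}$) and that $m=2jj'$ is even (so $\eta_{m}(-1)=1$ and $G^{2}=p^{m}$). When $b$ is even, $\eta_{m}\equiv 1$ on $H_{b}$ and the sum collapses to $P$ evaluated at $\lambda=\frac{\beta^{2}}{4F^{*}(\alpha)\gamma}$; one checks $\delta_{H_{b}}(\lambda)=1\iff\beta\sqrt{\gamma^{-1}F^{*}(\alpha)^{-1}}\in H_{b/2}$, which, after simplification, yields (v). When $b$ is odd I would split $H_{b}=H_{2b}\sqcup w^{b}H_{2b}$ (with $\eta_{m}$ equal to $+1$ on $H_{2b}$ and $-1$ on $w^{b}H_{2b}$), turning $\Sigma$ into a difference of two $H_{2b}$-sums to which Proposition 9 again applies (here $2b\mid p^{j}+1$ and $j$ is still the least such exponent, since a smaller one would already work for $b$); tracking which coset of $H_{2b}$ contains $\lambda$ produces the $\sqrt{\gamma^{-1}F^{*}(\alpha)^{-1}}$ and $\sqrt{\gamma^{-1}F^{*}(\alpha)^{-1}w^{b}}$ conditions of (iv).

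The main obstacle is exactly this last reconciliation. Proposition 9 branches according to the parities of $p$, $j'$ and $\tfrac{p^{j}+1}{b}$ (and, after passing to $H_{2b}$ or $H_{b/2}$, of $\tfrac{p^{j}+1}{2b}$), so one must verify that on every branch that actually occurs the quadratic-Gauss-sum sign $(-1)^{m-1}\epsilon^{m}$, the factor $(-1)^{j'+1}$ coming from Proposition 9, and the quadratic character $\eta_{m}(F^{*}(\alpha)\gamma)$ combine to give the single closed form in (iv) and (v). This rests on the arithmetic identity $\epsilon^{m}=(-1)^{j'}$ holding precisely on those branches (for instance, $\tfrac{p^{j}+1}{b}$ even with $j'$ odd and $b$ even forces $p\equiv 3\pmod 4$ and $j$ odd, whence $\epsilon^{m}=(-1)^{jj'}=(-1)^{j'}$), on checking that $j$ stays minimal when $b$ is replaced by $2b$ or $b/2$, and on the borderline $b=2$ being handled directly through Proposition 8.
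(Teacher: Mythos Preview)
Your approach is correct in outline but takes a genuinely different route from the paper. You start from the discriminant/trace root-count $R(a)=1+\eta_m(\beta^2+4F^*(\alpha)a)$ (resp.\ the $Tr$-criterion for $p=2$) and then expand $\eta_m$ through its Gauss sum; the paper instead expands the indicator $\mathbf{1}_{\{F^*(\alpha)y^2-\beta y-a=0\}}$ as a full additive character sum over $\mathbb{F}_{p^m}$, performs the substitution $y\mapsto y z^{-1}$, and only then inserts Proposition~9. For (i)--(iii) and the $p=2$ part of (iv) your way is shorter and perfectly clean. The real divergence is in (iv)--(v) for odd $p$: you are forced to evaluate $\sum_{x\in H_b}\eta_m(x)\zeta_p^{Tr(\lambda x)}$, which you handle by splitting $H_b=H_{2b}\sqcup w^bH_{2b}$ and applying Proposition~9 to $H_{2b}$. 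This is exactly where your acknowledged obstacle sits: although $j$ does remain minimal for $2b$ (your argument for that is fine), the parity of $\tfrac{p^j+1}{2b}$ need not match that of $\tfrac{p^j+1}{b}$, so Proposition~9 can switch branches, and the sign bookkeeping involving $(-1)^{m-1}\epsilon^m$, $(-1)^{j'+1}$, and $\eta_m(F^*(\alpha)\gamma)$ has to be done separately on each branch. Your sample identity $\epsilon^m=(-1)^{j'}$ is correct on the branch you checked but is not a universal shortcut.

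The paper sidesteps this entirely by means of Lemma~8: after one application of Proposition~9 the paper is left with a sum of the shape $\sum_{z\in\mathbb{F}_{p^m}^*}\zeta_p^{Tr(-z\beta)}\delta_{w^iH_b}(z^2)$, and Lemma~8 rewrites this as a single $H_b$-sum (or a pair of $H_b$-sums when $b$ is even) to which Proposition~9 applies again \emph{with the same $b$ and the same $j$}. So the paper never changes $b$, never risks a branch switch, and the case split in Lemma~8 (parities of $i$ and $b$) is what produces the $\sqrt{\gamma^{-1}F^*(\alpha)^{-1}}$ versus $\sqrt{\gamma^{-1}F^*(\alpha)^{-1}w^b}$ dichotomy directly. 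In effect your $H_{2b}$-splitting reinvents Lemma~8, but packaged so that the delicate sign reconciliation lands on you rather than being absorbed once and for all. Both routes work; the paper's is tidier precisely at the point you flagged as the main obstacle.
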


\begin{proof}
The proof of Lemma 9 is given in Appendix-Section IX.
\end{proof}

\subsection{Self-orthogonal codes constructed from vectorial dual-bent functions with Condition II} \label{4.2}
In this subsection, we show that if $F$ is a vectorial dual-bent function with Condition II, then for any nonempty set $I\subset \mathbb{F}_{p^m}$, $C_{D_{F, I}}$ defined by Eq. (1) is self-orthogonal. Furthermore, for some sets $I$, we completely determine the weight distribution of $C_{D_{F, I}}$.

\begin{theorem}\label{Theorem 2}
Let $F: V_{n}^{(p)}\rightarrow \mathbb{F}_{p^m}$ be a vectorial dual-bent function with Condition II, and for any nonempty set $I\subset \mathbb{F}_{p^m}$, let $C_{D_{F, I}}$ be defined by Eq. (1). Then $C_{D_{F, I}}$ is a $[(p^{n-m}-\varepsilon p^{\frac{n}{2}-m})|I|+\varepsilon p^{\frac{n}{2}}\delta_{I}(0), \frac{n}{t}+1]_{p^t}$ self-orthogonal linear code. Besides, its dual code $C_{D_{F, I}}^{\bot}$ is at least almost optimal according to Hamming bound.
\end{theorem}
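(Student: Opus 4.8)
The proof of Theorem 2 will follow the same blueprint as that of Theorem 1, with the character-sum inputs from Condition I replaced by their Condition II analogues (Lemmas 5--7).

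\medskip
\noindent\textbf{Plan.} First I would read off the length. By Lemma 6 the set $D_{F,I}$ has size $\sum_{i\in I}|D_{F,i}| = (p^{n-m}-\varepsilon p^{\frac n2-m})|I| + \varepsilon p^{\frac n2}\delta_I(0)$, which is the claimed length. For the dimension, one shows that the map $(\alpha,\beta)\mapsto c_{\alpha,\beta}$ is injective, equivalently that $wt(c_{\alpha,\beta})=0$ forces $\alpha=0,\beta=0$. The weight of $c_{\alpha,\beta}$ for $\alpha\neq 0$ is $|D_{F,I}| - N_{I,\alpha,\beta}$ where $N_{I,\alpha,\beta}=\sum_{a\in I}N_{a,\alpha,\beta}$, and Lemma 7(i) gives a closed form for $N_{a,\alpha,\beta}$; summing over $a\in I$ shows this count is always strictly less than $|D_{F,I}|$ (the dominant term $p^{n-m-t}|I|$ times $p^t$ cannot reach $|D_{F,I}|$ because $m<\frac n2$), so indeed no nonzero codeword of the ``$\alpha\neq 0$'' type has weight $0$; combined with the trivial analysis of $\alpha=0$ this yields dimension $\frac nt+1$.

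\medskip
\noindent\textbf{Self-orthogonality.} This is the heart of the argument and splits by the parity of $p$, exactly as in Theorem 1.
\begin{itemize}
\item When $p$ is odd: by Proposition 3 it suffices to check $c_{\alpha,\beta}\cdot c_{\alpha,\beta}=0$ for every codeword. Expanding, $c_{\alpha,\beta}\cdot c_{\alpha,\beta}=\sum_{i\in\mathbb{F}_{p^t}^{*}}N_{I,\alpha,-i}\,i^2 + 2\beta\sum_{i\in\mathbb{F}_{p^t}^{*}}N_{I,\alpha,-i}\,i + \beta^2|D_{F,I}|$. Since $m<\frac n2$, $p\mid|D_{F,I}|$, killing the last term. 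For the first two, I substitute the Lemma 7(i) expression for $N_{I,\alpha,-i}$. The key point is that the ``$y$-dependent'' part $\sum_{a\in I}|\{y:Tr_t^m(yF^*(\alpha)-ay^{1-l})=-i\}|$, after summing $i^2$ or $i$ against it over $i\in\mathbb{F}_{p^t}^{*}$, collapses because of the identities $\sum_{i\in\mathbb{F}_{p^t}^{*}}i=0$ and $\sum_{i\in\mathbb{F}_{p^t}^{*}}i^2=0$ for $p^t>3$ (and the side condition $n>2$ handling $p^t=3$) — essentially one re-indexes the inner set by $y\mapsto$ scaling and uses $F^*(ay)$-homogeneity (Lemma 5) to pull the $\mathbb{F}_{p^t}^*$-action out. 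The remaining constant terms ($\varepsilon p^{\frac n2-t}(\delta_0(a)-1)+p^{n-m-t}$) are annihilated by $\sum_i i=\sum_i i^2=0$.
\item When $p=2$: use the bilinear computation $c_{\alpha,\beta}\cdot c_{\alpha',\beta'}$ over all pairs, reducing it to the two-variable count $M_{I,\alpha,\alpha',i,i'}$. This is where Lemma 7(ii) enters: it gives the value of the relevant double exponential sum $T$, and substituting it into the expansion of $M$ produces a formula of the shape $M_{I,\alpha,\alpha',i,i'}=2^{\frac n2-t}(\cdots) + 2^{n-m-t}|I|\delta_0(\alpha+ii'^{-1}\alpha') + A$ with $A$ independent of $(i,i')$. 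Then $c_{\alpha,\beta}\cdot c_{\alpha',\beta'}=\sum_{k\in\mathbb{F}_{2^t}^{*}}k\sum_{i\in\mathbb{F}_{2^t}^{*}}M_{I,\alpha,\alpha',i,i^{-1}k}+(\text{linear terms})+\beta\beta'|D_{F,I}|$; the linear-in-$\beta$ terms vanish because $\sum_{i}N_{I,\alpha',i}i=0$ (again $\sum i=0$ for $t\geq 2$, and $n$ large when $t=1$), $2\mid|D_{F,I}|$ kills the last term, the $A$-term vanishes by $\sum_{k}k=0$ (or $2\mid A$ when $t=1$, which must be checked using $m+t<\frac n2$), and the surviving piece telescopes via $\sum_{i\in\mathbb{F}_{2^t}^{*}}i^2=0$. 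The bookkeeping on when $2\mid A$ for $t=1$ is exactly why Condition II imposes $m+t<\frac n2$ in characteristic $2$.
\end{itemize}

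\medskip
\noindent\textbf{Optimality of the dual.} Finally, for $C_{D_{F,I}}^{\bot}$ one argues $d(C_{D_{F,I}}^{\bot})\geq 3$ by the same elementary linear-algebra argument as in Theorem 1: a weight-$1$ dual codeword would force a single $x\in D_{F,I}$ to satisfy $\sum_j Tr_t^{n_j}(\alpha_jx_j)+\beta=0$ identically in $(\alpha,\beta)$, impossible; a weight-$2$ dual codeword gives $z,z'\in\mathbb{F}_{p^t}^{*}$ and distinct $x,x'\in D_{F,I}$ with $\sum_j Tr_t^{n_j}(\alpha_j(zx_j+z'x'_j))+(z+z')\beta=0$ for all $(\alpha,\beta)$, which forces $z+z'=0$ and then $x_i=x'_i$ for each $i$, contradicting $x\neq x'$. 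With $d(C_{D_{F,I}}^{\bot})\geq 3$, the length $(p^{n-m}-\varepsilon p^{\frac n2-m})|I|+\varepsilon p^{\frac n2}\delta_I(0)$ and dimension (length $-\frac nt-1$) are plugged into the Hamming bound of Proposition 2, and one checks the bound cannot be met with a larger minimum distance (here, unlike Condition I, the extra hypotheses of Condition II rule out the single exceptional case), so $C_{D_{F,I}}^{\bot}$ is at least almost optimal.

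\medskip
\noindent The main obstacle is the characteristic-$2$ self-orthogonality computation: correctly deriving the closed form for $M_{I,\alpha,\alpha',i,i'}$ from Lemma 7(ii) and verifying the integrality/divisibility claims on $A$ (in particular $2\mid A$ when $t=1$), which is precisely where the numerical side conditions $t\leq\frac n2$, $m<\frac n2$, and $m+t<\frac n2$ are consumed. The odd-characteristic case, by contrast, is a short consequence of Proposition 3 together with $\sum_{i\in\mathbb{F}_{p^t}^{*}}i=\sum_{i\in\mathbb{F}_{p^t}^{*}}i^2=0$.
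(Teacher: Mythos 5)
Your overall architecture (length from Lemma 6, dimension from $N<|D_{F,I}|$, self-orthogonality split by the parity of $p$, dual distance $\geq 3$ plus the Hamming bound) matches the paper's, but the mechanism you propose for the central self-orthogonality computation does not go through, and it is not the one the paper uses. You claim that $\sum_{i\in\mathbb{F}_{p^t}^{*}}N_{I,\alpha,-i}\,i^{2}$ collapses via $\sum_{i}i=\sum_{i}i^{2}=0$ after ``re-indexing the inner set by $y\mapsto$ scaling''; for that to work the counts $K_{i}=\sum_{a\in I}|\{y\in\mathbb{F}_{p^m}^{*}:Tr_{t}^{m}(yF^{*}(\alpha)-ay^{1-l})=-i\}|$ would have to be independent of $i$ (or at least telescope against $i^{2}$), and they are not. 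The available scaling symmetry $x\mapsto cx$, $c\in\mathbb{F}_{p^t}^{*}$, only yields $N_{a,\alpha,\beta}=N_{c^{l}a,\alpha,c\beta}$, which permutes the fibers over $I$ precisely when $I$ is stable under multiplication by $l$-th powers of $\mathbb{F}_{p^t}^{*}$ --- false for a general nonempty $I\subset\mathbb{F}_{p^m}$. The same objection applies to your characteristic-$2$ claim that ``the surviving piece telescopes via $\sum_{i}i^{2}=0$'' in $\sum_{k}k\sum_{i}M_{I,\alpha,\alpha',i,i^{-1}k}$.

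What actually closes the argument --- and is the entire purpose of the extra hypotheses $t\mid m$, $m<\frac{n}{2}$, and $m+t<\frac{n}{2}$ when $p=2$ in Condition II --- is that every term of the closed forms in Lemma 7 carries a positive power of $p$: the coefficients $p^{\frac{n}{2}-m}$, $p^{\frac{n}{2}-t}$, $p^{n-m-t}$ in $N_{a,\alpha,\beta}$, and $2^{\frac{n}{2}-m-t}$, $2^{n-m-t}$ together with $2\mid A$ in $M_{a,\alpha,\alpha',i,i'}$. Hence $p\mid N_{a,\alpha,\beta}$ and $2\mid M_{a,\alpha,\alpha',i,i'}$ outright, so every single summand of $c_{\alpha,\beta}\cdot c_{\alpha',\beta'}$ is already $0$ in $\mathbb{F}_{p^t}$ and no cancellation across $i$ is needed. (The paper also organizes the proof by first treating $I=\{a\}$ and then using that $D_{F,I}$ is the disjoint union of the $D_{F,a}$ to inherit self-orthogonality; that reduction is cosmetic.) Your treatment of the length, the dimension, the dual distance, and your observation that $m+t<\frac{n}{2}$ excludes the exceptional case appearing in Theorem 1 are all fine.
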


\begin{proof}
First, we prove the case of $I=\{a\}$, where $a \in \mathbb{F}_{p^m}$. By Lemma 6, the length of $C_{D_{F, a}}$ is $|D_{F, a}|=(p^{n-m}-\varepsilon p^{\frac{n}{2}-m})+\varepsilon p^{\frac{n}{2}}\delta_{a}(0)$. When $\alpha=0, \beta=0$, $wt(c_{\alpha, \beta})=0$. When $\alpha=0, \beta \in \mathbb{F}_{p^t}^{*}$, $wt(c_{\alpha, \beta})=|D_{F, a}|$. When $\alpha \in V_{n}^{(p)} \backslash \{0\}, \beta \in \mathbb{F}_{p^t}$, $wt(c_{\alpha, \beta})=|D_{F, a}|-N_{a, \alpha, \beta}$, where
$N_{a, \alpha, \beta}=|\{x \in V_{n}^{(p)}: F(x)=a, \sum_{j=1}^{s}Tr_{t}^{n_{j}}(\alpha_{j}x_{j})+\beta=0\}|$.
By Lemma 7,
\begin{equation*}
N_{a, \alpha, \beta}=\varepsilon p^{\frac{n}{2}-m}|\{y \in \mathbb{F}_{p^m}^{*}: Tr_{t}^{m}(yF^{*}(\alpha)-ay^{1-l})=\beta\}|+\varepsilon p^{\frac{n}{2}-t}(\delta_{0}(a)-1)+p^{n-m-t}.
\end{equation*}
In order to show that the dimension of $C_{D_{F, a}}$ is $\frac{n}{t}+1$, we only need to show that for any $\alpha \in V_{n}^{(p)} \backslash \{0\}, \beta \in \mathbb{F}_{p^t}$, $N_{a, \alpha, \beta}<|D_{F, a}|$.
\begin{itemize}
  \item If $\varepsilon=1, a=0$, we have $N_{a, \alpha, \beta}\leq p^{\frac{n}{2}-m}(p^m-1)+p^{n-m-t}<|D_{F, 0}|$;
  \item If $\varepsilon=1, a \in \mathbb{F}_{p^m}^{*}$, we have $N_{a, \alpha, \beta}\leq p^{\frac{n}{2}-m}(p^m-1)-p^{\frac{n}{2}-t}+p^{n-m-t}<|D_{F, a}|$ since $m<\frac{n}{2}$;
  \item If $\varepsilon=-1, a=0$, we have $N_{a, \alpha, \beta}\leq p^{n-m-t}\leq p^{n-m-1}<|D_{F, 0}|$ since $m<\frac{n}{2}$;
  \item If $\varepsilon=-1, a \in \mathbb{F}_{p^m}^{*}$, we have $N_{a, \alpha, \beta}\leq p^{n-m-t}+p^{\frac{n}{2}-t}\leq p^{n-m-1}+p^{\frac{n}{2}-1}<|D_{F, a}|$ since $m<\frac{n}{2}$.
\end{itemize}

Since $m<\frac{n}{2}, t \mid m$, we have $p \mid |D_{F, a}|$ and $p \mid N_{a, \alpha, \beta}$. When $p$ is odd, $c_{0, \beta}\cdot c_{0, \beta}=\beta^{2}|D_{F, a}|=0$; $c_{\alpha, \beta} \cdot c_{\alpha, \beta}=\sum_{i \in \mathbb{F}_{p^t}^{*}}N_{a, \alpha, -i}i^{2}+2\beta\sum_{i \in \mathbb{F}_{p^t}^{*}}N_{a, \alpha, -i}i+\beta^{2}|D_{F, a}|=0$ for any $\alpha \in V_{n}^{(p)} \backslash \{0\}$. By Proposition 3, when $p$ is odd, $C_{D_{F, a}}$ is self-orthogonal.

When $p=2$, for any $\alpha, \alpha' \in V_{n}^{(2)} \backslash \{0\}, i, i' \in \mathbb{F}_{2^t}^{*}$, with the same computation as in the proof of Theorem 1, we have
\begin{small}
\begin{equation*}
\begin{split}
M_{a, \alpha, \alpha', i, i'}&\triangleq |\{x \in V_{n}^{(2)}: F(x)=a, \sum_{j=1}^{s}Tr_{t}^{n_{j}}(\alpha_{j}x_{j})=i, \sum_{j=1}^{s}Tr_{t}^{n_{j}}(\alpha'_{j}x_{j})=i'\}|\\
&=2^{-m-2t}\sum_{z, w \in \mathbb{F}_{2^t}}(-1)^{Tr_{1}^{t}(iz+i'w)}\sum_{y \in \mathbb{F}_{2^m}^{*}}(-1)^{Tr_{1}^{m}(ay)}W_{F_{y}}(z\alpha+w\alpha')\\
& \ \ \ +2^{n-m-2t}(1+2^t\delta_{0}(\alpha+ii'^{-1}\alpha')-\sum_{w \in \mathbb{F}_{2^t}}\delta_{0}(\alpha+w\alpha')).
\end{split}
\end{equation*}
\end{small}Since $F$ is a vectorial dual-bent function with Condition II, we have
\begin{small}
\begin{equation*}
\begin{split}
M_{a, \alpha, \alpha', i, i'}&=2^{\frac{n}{2}-m-2t}\sum_{z, w \in \mathbb{F}_{2^t}}(-1)^{Tr_{1}^{t}(iz+i'w)}\sum_{y \in \mathbb{F}_{2^m}^{*}}(-1)^{Tr_{1}^{m}(ay)+Tr_{1}^{m}(y^{1-d}F^{*}(z\alpha+w\alpha'))}\\
& \ \ \ +2^{n-m-2t}(1+2^t\delta_{0}(\alpha+ii'^{-1}\alpha')-\sum_{w \in \mathbb{F}_{2^t}}\delta_{0}(\alpha+w\alpha')).
\end{split}
\end{equation*}
\end{small} By Lemma 7, we have
\begin{small}
\begin{equation*}
\begin{split}
M_{a, \alpha, \alpha', i, i'}&=2^{\frac{n}{2}-m-t}(\sum_{w \in \mathbb{F}_{2^t}}|\{y \in \mathbb{F}_{2^m}^{*}: Tr_{t}^{m}(F^{*}(\alpha+w\alpha')y+ay^{1-l})=i+wi'\}|\\
& \ \ \ +|\{y \in \mathbb{F}_{2^m}^{*}: Tr_{t}^{m}(F^{*}(\alpha')y+ay^{1-l})=i'\}|)+2^{n-m-t}\delta_{0}(\alpha+ii'^{-1}\alpha')+A,\\
\end{split}
\end{equation*}
\end{small}where $A=2^{n-m-2t}(1-\sum_{w \in \mathbb{F}_{2^t}}\delta_{0}(\alpha+w\alpha'))-2^{\frac{n}{2}-t}+2^{\frac{n}{2}-m-t}+2^{\frac{n}{2}-2t}(\delta_{0}(a)-1)$. Note that $A$ is an integer with $2 \mid A$ since $t \mid m$, and $m+t<\frac{n}{2}$ when $p=2$. Then $2 \mid M_{a, \alpha, \alpha', i, i'}$.

If $\alpha=\alpha'=0$, $c_{\alpha, \beta}\cdot c_{\alpha', \beta'}=\beta\beta'|D_{F, a}|=0$ since $2 \mid |D_{F, a}|$; if $\alpha=0, \alpha'\neq 0$, or $\alpha\neq 0, \alpha'=0$, w.l.o.g., $\alpha=0, \alpha' \neq 0$, then $c_{\alpha, \beta} \cdot c_{\alpha', \beta'}=\beta \sum_{i \in \mathbb{F}_{2^t}^{*}}N_{a, \alpha', i}i+\beta\beta'|D_{F, a}|=0$ since $2 \mid |D_{F, a}|, 2 \mid N_{a, \alpha', i}$; if $\alpha, \alpha' \neq 0$, by $2 \mid |D_{F, a}|$, $2 \mid N_{a, \alpha, i}$ and $2 \mid M_{a, \alpha, \alpha', i, i'}$,
\begin{small}
\begin{equation*}
c_{\alpha, \beta} \cdot c_{\alpha', \beta'}=\sum_{k \in \mathbb{F}_{2^t}^{*}}k\sum_{i \in \mathbb{F}_{2^t}^{*}}M_{a, \alpha, \alpha', i, i^{-1}k}+\beta \sum_{i \in \mathbb{F}_{2^t}^{*}}N_{a, \alpha', i}i+\beta' \sum_{i \in \mathbb{F}_{2^t}^{*}}N_{a, \alpha, i}i+\beta\beta'|D_{F, a}|=0.
\end{equation*}
\end{small}Thus, when $p=2$, $C_{D_{F, a}}$ is self-orthogonal.

For any nonempty set $I\subset \mathbb{F}_{p^m}$, since $D_{F, I}=\cup_{a \in I}D_{F, a}, D_{F, a}\cap D_{F, a'}=\emptyset, a\neq a'$, and $C_{D_{F, a}}, a \in I$, are all self-orthogonal, we have that $C_{D_{F, I}}$ is self-orthogonal.
By Lemma 6, the length of $C_{D_{F, I}}$ is $|D_{F, I}|=(p^{n-m}-\varepsilon p^{\frac{n}{2}-m})|I|+\varepsilon p^{\frac{n}{2}}\delta_{I}(0)$. It is clear that the dimension of $C_{D_{F, I}}$ is $\frac{n}{t}+1$ since the dimension of $C_{D_{F, a}}$ is $\frac{n}{t}+1$ for any $a \in \mathbb{F}_{p^m}$. With the same argument as in the proof of Theorem 1, $d(C_{D_{F, I}}^{\bot})\geq 3$. By Proposition 2, $C_{D_{F, I}}^{\bot}$ is at least almost optimal according to Hamming bound.
\end{proof}

In the following, for some sets $I$, we completely determine the weight distribution of $C_{D_{F, I}}$.

\begin{theorem}\label{Theorem 3}
Let $F: V_{n}^{(p)}\rightarrow \mathbb{F}_{p^m}$ be a vectorial dual-bent function with Condition II, and let $C_{D_{F, I}}$ be defined by Eq. (1) with $I=\{a\}$, where $a \in \mathbb{F}_{p^m}$.

$\mathrm{(i)}$ If $a=0$, then $C_{D_{F, I}}$ is a at most five-weight $[(p^{n-m}+\varepsilon p^{\frac{n}{2}-m}(p^m-1), \frac{n}{t}+1]_{p^t}$ self-orthogonal linear code whose weight distribution is given in Table 3.

$\mathrm{(ii)}$ If $a \in \mathbb{F}_{p^m}^{*}$ and $t=m, l=2$, then $C_{D_{F, I}}$ is a at most four-weight $[p^{n-m}-\varepsilon p^{\frac{n}{2}-m}, \frac{n}{m}+1]_{p^m}$ self-orthogonal linear code whose weight distribution is given in Table 4.
\end{theorem}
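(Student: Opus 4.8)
The plan is to leverage Theorem 2, which already shows that $C_{D_{F,I}}$ with $I=\{a\}$ is a self-orthogonal $[\,|D_{F,a}|,\frac{n}{t}+1\,]_{p^t}$ code, so the only remaining task is to read off the Hamming weights. Exactly as in the proof of Theorem 1, $wt(c_{\alpha,\beta})=0$ when $\alpha=0,\beta=0$, $wt(c_{\alpha,\beta})=|D_{F,a}|$ when $\alpha=0,\beta\neq0$, and $wt(c_{\alpha,\beta})=|D_{F,a}|-N_{a,\alpha,\beta}$ when $\alpha\neq0$, where $N_{a,\alpha,\beta}$ is given by Lemma 7(i). So the argument has two ingredients: evaluate $N_{a,\alpha,\beta}$ in closed form by a case analysis, and then count, for each value it takes, the number of pairs $(\alpha,\beta)$ realizing it, using the value distribution of $F^{*}$ from Lemma 6 together with $F^{*}(0)=0$ from Lemma 5.

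For part (i) ($a=0$), Lemma 7(i) collapses to $N_{0,\alpha,\beta}=\varepsilon p^{\frac{n}{2}-m}\,|\{y\in\mathbb{F}_{p^m}^{*}:Tr_{t}^{m}(yF^{*}(\alpha))=\beta\}|+p^{n-m-t}$. The inner cardinality is elementary: if $F^{*}(\alpha)=0$ it equals $p^m-1$ or $0$ according as $\beta=0$ or not; if $F^{*}(\alpha)\neq0$ the map $y\mapsto yF^{*}(\alpha)$ permutes $\mathbb{F}_{p^m}^{*}$, so it equals $p^{m-t}-1$ or $p^{m-t}$ according as $\beta=0$ or not. Subtracting from $|D_{F,0}|$ produces at most five distinct nonzero weights, and their multiplicities follow by splitting the nonzero $\alpha$'s according to whether $F^{*}(\alpha)=0$ (there are $|D_{F^{*},0}|-1$ such by Lemma 6, the $-1$ accounting for $F^{*}(0)=0$) and the $\beta$'s according to whether $\beta=0$. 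This follows the same lines as the computation in Theorem 1, and a final count of codewords against $|C_{D_{F,0}}|=p^{n+t}$ serves as a consistency check.

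For part (ii) ($a\in\mathbb{F}_{p^m}^{*}$, $t=m$, $l=2$), first note that $(l-1)(d-1)\equiv1\pmod{p^m-1}$ forces $d=2$ and that $Tr_{m}^{m}$ is the identity, so Lemma 7(i) becomes $N_{a,\alpha,\beta}=\varepsilon p^{\frac{n}{2}-m}\,|\{y\in\mathbb{F}_{p^m}^{*}:yF^{*}(\alpha)-ay^{-1}=\beta\}|-\varepsilon p^{\frac{n}{2}-m}+p^{n-2m}$. Clearing the denominator turns the defining relation into the quadratic $F^{*}(\alpha)y^{2}-\beta y-a=0$, each of whose roots is automatically nonzero since $a\neq0$, so the cardinality lies in $\{0,1,2\}$. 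The three regimes are handled separately: $F^{*}(\alpha)=0$ (a linear equation with $1$ root if $\beta\neq0$ and $0$ roots if $\beta=0$); $F^{*}(\alpha)\neq0$ with $p$ odd (the discriminant $\beta^{2}+4aF^{*}(\alpha)$ gives $0$, $1$, or $2$ roots according as it is a nonsquare, zero, or a nonzero square in $\mathbb{F}_{p^m}$); and $F^{*}(\alpha)\neq0$ with $p=2$ (reduce to $u^{2}+u+aF^{*}(\alpha)\beta^{-2}=0$, which has $2$ or $0$ roots according as $Tr_{1}^{m}(aF^{*}(\alpha)\beta^{-2})=0$ or $1$, and $1$ root when $\beta=0$). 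These three possible cardinalities yield three nonzero weights, which together with $wt=|D_{F,a}|$ coming from $\alpha=0$ give the at-most-four-weight distribution recorded in Table 4.

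The main obstacle is the multiplicity bookkeeping in part (ii): for each value $v$ of $F^{*}(\alpha)$ one must count how many $\beta\in\mathbb{F}_{p^m}$ make $F^{*}(\alpha)y^{2}-\beta y-a$ have $0$, $1$, or $2$ roots — which for odd $p$ comes down to counting how often $\beta^{2}+4av$ is zero, a nonzero square, or a nonsquare as $\beta$ varies (Proposition 8 or direct counting), and for $p=2$ how often $Tr_{1}^{m}(av\beta^{-2})$ vanishes — and then sum these over the fibers of $F^{*}$, weighted by the preimage counts from Lemma 6. Since the theorem claims only \emph{at most} five (resp.\ four) weights, one need not exclude accidental coincidences among weight values, nor the degenerate cases forced by small $p$, $m$, or $n$; a global codeword count against $|C_{D_{F,I}}|=p^{n+t}$ then confirms the entries of Tables 3 and 4.
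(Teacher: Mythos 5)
Your proposal is correct and follows essentially the same route as the paper: invoke Theorem 2 for self-orthogonality, length and dimension, evaluate $N_{a,\alpha,\beta}$ from Lemma 7 by the case analysis on $F^{*}(\alpha)$ and $\beta$ (linear count for $a=0$; reduction to the quadratic $F^{*}(\alpha)y^{2}-\beta y-a=0$ with the discriminant/trace dichotomy for $a\neq 0$), and then read off multiplicities from the value distribution of $F^{*}$ in Lemma 6. The only detail worth noting is that your observation that $l=2$ forces $d=2$ is implicitly used but not spelled out in the paper; otherwise the two arguments coincide.
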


\renewcommand{\thetable}{3}
\begin{table}\label{3}\scriptsize
  \centering
  \caption{The weight distribution of $C_{D_{F, I}}$ constructed in Theorem 3 (i)}
  \begin{tabular}{|c|c|}
    \hline
    Hamming weight & Multiplicity \\ \hline
    $0$ & $1$ \\
    $p^{n-m}+\varepsilon p^{\frac{n}{2}-m}(p^m-1)$ & $p^t-1$ \\
    $p^{n-m-t}(p^t-1)$ & $p^{n-m}+\varepsilon p^{\frac{n}{2}-m}(p^m-1)-1$ \\
    $p^{n-m-t}(p^t-1)+\varepsilon p^{\frac{n}{2}-m}(p^m-1)$ & $(p^{n-m}+\varepsilon p^{\frac{n}{2}-m}(p^m-1)-1)(p^t-1)$ \\
    $(p^{n-m-t}+\varepsilon p^{\frac{n}{2}-t})(p^t-1)$ & $(p^{n-m}-\varepsilon p^{\frac{n}{2}-m})(p^m-1)$ \\
    $p^{n-m-t}(p^t-1)+\varepsilon p^{\frac{n}{2}-m}(p^m-p^{m-t}-1)$ & $(p^{n-m}-\varepsilon p^{\frac{n}{2}-m})(p^m-1)(p^t-1)$ \\
    \hline
  \end{tabular}
\end{table}

\renewcommand{\thetable}{4}
\begin{table}\label{4}\scriptsize
  \centering
  \caption{The weight distribution of $C_{D_{F, I}}$ constructed in Theorem 3 (ii)}
  \begin{tabular}{|c|c|}
    \hline
    Hamming weight & Multiplicity \\ \hline
    $0$ & $1$ \\
    $p^{n-m}-\varepsilon p^{\frac{n}{2}-m}$ & $p^m-1$ \\
    $p^{n-2m}(p^m-1)$ & $(p^{n-m}-\varepsilon p^{\frac{n}{2}-m})\frac{p^{2m}-p^m+2}{2}+\varepsilon p^{\frac{n}{2}}-1$ \\
    $p^{n-2m}(p^m-1)-\varepsilon p^{\frac{n}{2}-m}$ & $(2p^{n-m}-2\varepsilon p^{\frac{n}{2}-m}+\varepsilon p^{\frac{n}{2}}-1)(p^m-1)$ \\
    $p^{n-2m}(p^m-1)-2\varepsilon p^{\frac{n}{2}-m}$ & $(p^{n-m}-\varepsilon p^{\frac{n}{2}-m})\frac{(p^m-1)(p^m-2)}{2}$ \\
    \hline
  \end{tabular}
\end{table}

\begin{proof}
By Theorem 2, $C_{D_{F, a}}$ is self-orthogonal.

(i) When $\alpha=0, \beta=0$, $wt(c_{\alpha, \beta})=0$. When $\alpha=0, \beta \in \mathbb{F}_{p^t}^{*}$, $wt(c_{\alpha, \beta})=|D_{F, 0}|=p^{n-m}+\varepsilon p^{\frac{n}{2}-m}(p^m-1)$. By Lemma 7, for any $\alpha \in V_{n}^{(p)} \backslash \{0\}, \beta \in \mathbb{F}_{p^t}$,\\
\begin{small}
\begin{equation*}
\begin{split}
N_{0, \alpha, \beta}&=\varepsilon p^{\frac{n}{2}-m}|\{y \in \mathbb{F}_{p^m}^{*}: Tr_{t}^{m}(F^{*}(\alpha)y)=\beta\}|+p^{n-m-t}\\
&=\left\{\begin{split}
\varepsilon p^{\frac{n}{2}-m}(p^m-1)+p^{n-m-t}, & \ \text{ if } \alpha \neq 0, F^{*}(\alpha)=0, \beta=0,\\
p^{n-m-t}, & \ \text{ if } \alpha \neq 0, F^{*}(\alpha)=0, \beta\neq 0,\\
\varepsilon p^{\frac{n}{2}-m}(p^{m-t}-1)+p^{n-m-t}, & \ \text{ if } \alpha \neq 0, F^{*}(\alpha)\neq 0, \beta=0,\\
\varepsilon p^{\frac{n}{2}-t}+p^{n-m-t}, & \ \text{ if } \alpha \neq 0, F^{*}(\alpha)\neq 0, \beta\neq 0,
\end{split}
\right.
\end{split}
\end{equation*}
\end{small}and then
\begin{small}
\begin{equation*}
\begin{split}
wt(c_{\alpha, \beta})&=|D_{F, 0}|-N_{0, \alpha, \beta}\\
&=\left\{
\begin{split}
p^{n-m-t}(p^t-1), & \ \text{ if } \alpha \neq 0, F^{*}(\alpha)=0, \beta=0,\\
p^{n-m-t}(p^t-1)+\varepsilon p^{\frac{n}{2}-m}(p^m-1), & \ \text{ if } \alpha\neq 0, F^{*}(\alpha)=0, \beta\neq 0,\\
(p^{n-m-t}+\varepsilon p^{\frac{n}{2}-t})(p^t-1), & \ \text{ if } \alpha \neq 0, F^{*}(\alpha)\neq 0, \beta=0,\\
p^{n-m-t}(p^t-1)+\varepsilon p^{\frac{n}{2}-m}(p^m-p^{m-t}-1), & \ \text{ if } \alpha \neq 0, F^{*}(\alpha)\neq 0, \beta\neq 0.
\end{split}
\right.
\end{split}
\end{equation*}
\end{small}The weight distribution of $C_{D_{F, 0}}$ follows from the above equation and Lemma 6.

(ii) When $\alpha=0, \beta=0$, $wt(c_{\alpha, \beta})=0$. When $\alpha=0, \beta \in \mathbb{F}_{p^m}^{*}$, $wt(c_{\alpha, \beta})=|D_{F, a}|=p^{n-m}-\varepsilon p^{\frac{n}{2}-m}$. By Lemma 7, for any $\alpha \in V_{n}^{(p)} \backslash \{0\}, \beta \in \mathbb{F}_{p^m}$,
\begin{small}
\begin{equation*}
\begin{split}
N_{a, \alpha, \beta}&=\varepsilon p^{\frac{n}{2}-m}|\{y \in \mathbb{F}_{p^m}^{*}: F^{*}(\alpha)y-ay^{-1}=\beta\}|
-\varepsilon p^{\frac{n}{2}-m}+p^{n-2m}\\
&=\varepsilon p^{\frac{n}{2}-m}|\{y \in \mathbb{F}_{p^m}^{*}: F^{*}(\alpha)y^{2}-\beta y-a=0\}|-\varepsilon p^{\frac{n}{2}-m}+p^{n-2m}.
\end{split}
\end{equation*}
\end{small}If $p$ is odd, let $\mathcal{S}=\{x^{2}: x \in \mathbb{F}_{p^m}^{*}\}, \mathcal{N}=\mathbb{F}_{p^m}^{*} \backslash \mathcal{S}$, we have
\begin{small}
\begin{equation*}
\begin{split}
N_{a, \alpha, \beta}
&=\left\{
\begin{split}
-\varepsilon p^{\frac{n}{2}-m}+p^{n-2m}, & \ \text{ if } \alpha\neq 0, F^{*}(\alpha)=0, \beta=0, \text{ or } F^{*}(\alpha)\neq 0, \beta^{2}+4aF^{*}(\alpha) \in \mathcal{N},\\
p^{n-2m}, & \ \text{ if } \alpha\neq 0, F^{*}(\alpha)=0, \beta\neq 0, \text{ or } F^{*}(\alpha)\neq 0, \beta^{2}+4aF^{*}(\alpha)=0,\\
\varepsilon p^{\frac{n}{2}-m}+p^{n-2m}, & \ \text{ if } \alpha\neq 0, F^{*}(\alpha)\neq 0, \beta^{2}+4aF^{*}(\alpha) \in \mathcal{S},
\end{split}
\right.
\end{split}
\end{equation*}
\end{small}and then
\begin{small}
\begin{equation*}
\begin{split}
wt(c_{\alpha, \beta})&=|D_{F, a}|-N_{a, \alpha, \beta}\\
&=\left\{
\begin{split}
p^{n-2m}(p^m-1), & \ \text{ if } \alpha\neq 0, F^{*}(\alpha)=0, \beta=0, \text{ or } F^{*}(\alpha)\neq 0, \beta^{2}+4aF^{*}(\alpha) \in \mathcal{N},\\
p^{n-2m}(p^m-1)-\varepsilon p^{\frac{n}{2}-m}, & \ \text{ if } \alpha\neq 0, F^{*}(\alpha)=0, \beta\neq 0, \text{ or } F^{*}(\alpha)\neq 0, \beta^{2}+4aF^{*}(\alpha)=0,\\
p^{n-2m}(p^m-1)-2\varepsilon p^{\frac{n}{2}-m}, & \ \text{ if } \alpha\neq 0, F^{*}(\alpha)\neq 0, \beta^{2}+4aF^{*}(\alpha) \in \mathcal{S}.
\end{split}
\right.
\end{split}
\end{equation*}
\end{small}

If $p=2$, we have
\begin{small}
\begin{equation*}
\begin{split}
N_{a, \alpha, \beta}&=\left\{
\begin{split}
-2^{\frac{n}{2}-m}+2^{n-2m}, & \ \text{ if }\alpha \neq 0, F^{*}(\alpha)=0, \beta=0, \text{ or } F^{*}(\alpha)\neq 0, \beta \neq 0, Tr_{1}^{m}(aF^{*}(\alpha)\beta^{-2})=1,\\
2^{n-2m}, & \ \text{ if } \alpha \neq 0, F^{*}(\alpha)=0, \beta \neq 0, \text{ or }F^{*}(\alpha)\neq 0, \beta=0,\\
2^{\frac{n}{2}-m}+2^{n-2m}, & \ \text{ if }\alpha\neq 0, F^{*}(\alpha)\neq 0, \beta \neq 0, Tr_{1}^{m}(aF^{*}(\alpha)\beta^{-2})=0,
\end{split}
\right.
\end{split}
\end{equation*}
\end{small} and then
\begin{small}
\begin{equation*}
\begin{split}
wt(c_{\alpha, \beta})&=|D_{F, a}|-N_{a, \alpha, \beta}\\
&=\left\{
\begin{split}
2^{n-2m}(2^m-1), & \ \text{ if }\alpha \neq 0, F^{*}(\alpha)=0, \beta=0, \\
& \ \text{ or } F^{*}(\alpha)\neq 0, \beta \neq 0, Tr_{1}^{m}(aF^{*}(\alpha)\beta^{-2})=1,\\
2^{n-2m}(2^m-1)-2^{\frac{n}{2}-m}, & \ \text{ if } \alpha \neq 0, F^{*}(\alpha)=0, \beta \neq 0, \text{ or }F^{*}(\alpha)\neq 0, \beta=0,\\
2^{n-2m}(2^m-1)-2^{\frac{n}{2}-m+1}, & \ \text{ if }\alpha\neq 0, F^{*}(\alpha)\neq 0, \beta \neq 0, Tr_{1}^{m}(aF^{*}(\alpha)\beta^{-2})=0.
\end{split}
\right.
\end{split}
\end{equation*}
\end{small}By the above arguments and Lemma 6, the weight distribution of $C_{D_{F, a}}$ can be easily obtained.
\end{proof}

\begin{theorem}\label{Theorem 4}
Let $F: V_{n}^{(p)}\rightarrow \mathbb{F}_{p^m}$ be a vectorial dual-bent function with Condition II, and let $C_{D_{F, I}}$ be defined by Eq. (1) with $I=\gamma H_{b}$, where $\gamma \in \mathbb{F}_{p^m}^{*}$, $H_{b}=\{x^{b}: x \in \mathbb{F}_{p^m}^{*}\}$, and $b$ is an integer with $b \mid (p^m-1)$ and $b \mid l$. Then $C_{D_{F, I}}$ is a at most five-weight $[(p^{n-m}-\varepsilon p^{\frac{n}{2}-m})\frac{p^m-1}{b}, \frac{n}{t}+1]_{p^t}$ self-orthogonal linear code whose weight distribution is given in Table 5.
\end{theorem}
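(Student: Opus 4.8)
The plan is to follow the template of Theorems~1--3. The structural facts are immediate: since $I=\gamma H_{b}$ is a particular nonempty subset of $\mathbb{F}_{p^m}$, Theorem~2 already gives that $C_{D_{F,I}}$ is self-orthogonal of dimension $\frac{n}{t}+1$ and that $C_{D_{F,I}}^{\bot}$ is at least almost optimal by the Hamming bound. For the length I would invoke Lemma~6: because $\gamma\in\mathbb{F}_{p^m}^{*}$ we have $0\notin\gamma H_{b}$, so $\delta_{I}(0)=0$ and $|I|=|H_{b}|=\frac{p^m-1}{b}$, whence the length is $(p^{n-m}-\varepsilon p^{\frac{n}{2}-m})\frac{p^m-1}{b}$. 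So the only real work is the weight distribution.

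To compute the weights I would split on $\alpha$. For $\alpha=0$ the only weights are $0$ (multiplicity $1$) and $|D_{F,I}|$ (multiplicity $p^t-1$). For $\alpha\neq0$ one has $wt(c_{\alpha,\beta})=|D_{F,I}|-N_{I,\alpha,\beta}$ where $N_{I,\alpha,\beta}=\sum_{a\in\gamma H_{b}}N_{a,\alpha,\beta}$, and substituting Lemma~7(i) (with $\delta_{0}(a)=0$ for every $a\in\gamma H_{b}$) reduces everything to evaluating
\[
S=\sum_{a\in\gamma H_{b}}\bigl|\{y\in\mathbb{F}_{p^m}^{*}:Tr_{t}^{m}(yF^{*}(\alpha)-ay^{1-l})=\beta\}\bigr|.
\]
The key observation is that $b\mid l$ together with $b\mid(p^m-1)$ forces $y^{l}\in H_{b}$, hence $y^{1-l}H_{b}=yH_{b}$ and $\{ay^{1-l}:a\in\gamma H_{b}\}=\gamma yH_{b}$ for every $y\in\mathbb{F}_{p^m}^{*}$. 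Interchanging the two sums, expressing the indicator of $Tr_{t}^{m}(\cdot)=\beta$ as an average over additive characters of $\mathbb{F}_{p^t}$, and applying the orthogonality relation $\sum_{y\in\mathbb{F}_{p^m}^{*}}\zeta_{p}^{Tr_{1}^{m}(zy(\gamma s-F^{*}(\alpha)))}=p^{m}\delta_{0}(\gamma s-F^{*}(\alpha))-1$, the sum over $s\in H_{b}$ then collapses to $p^{m}\delta_{\gamma H_{b}}(F^{*}(\alpha))-\frac{p^m-1}{b}$; in particular no Gauss-period sum over $H_{b}$ of the kind Proposition~9 handles ever has to be evaluated, which is exactly what the hypothesis $b\mid l$ buys. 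The upshot is a closed form for $S$ depending only on whether $\beta=0$ and whether $F^{*}(\alpha)\in\gamma H_{b}$; plugging it back (and simplifying using $t\mid m$, so $p^{t}\mid p^{m}$, together with $b\mid(p^m-1)$ and $m<\frac{n}{2}$) yields exactly four integer weights for $\alpha\neq0$, hence at most five nonzero weights in total.

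For the multiplicities I would count the $\alpha\neq0$ according to whether $\beta=0$ and whether $F^{*}(\alpha)\in\gamma H_{b}$. Lemma~6 gives $|D_{F^{*},\gamma H_{b}}|=\sum_{a\in\gamma H_{b}}|D_{F^{*},a}|=\frac{p^m-1}{b}(p^{n-m}-\varepsilon p^{\frac{n}{2}-m})$, and these $\alpha$ are automatically nonzero since $F^{*}(0)=0\notin\gamma H_{b}$; the remaining $p^{n}-1-\frac{p^m-1}{b}(p^{n-m}-\varepsilon p^{\frac{n}{2}-m})$ nonzero $\alpha$ satisfy $F^{*}(\alpha)\notin\gamma H_{b}$. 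Each such $\alpha$ contributes one codeword with $\beta=0$ and $p^t-1$ codewords with $\beta\neq0$; assembling these together with the $\alpha=0$ contributions gives Table~5, and the total multiplicity should come out to $p^{n+t}=|C_{D_{F,I}}|$ as a consistency check.

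The only genuine obstacle I foresee is computational: carrying out the character-sum evaluation of $S$ cleanly, and then the moderately long algebra needed to put $|D_{F,I}|-N_{I,\alpha,\beta}$ into the four closed-form weights tabulated in Table~5 while verifying that every intermediate quantity is a nonnegative integer. All the conceptual content --- self-orthogonality, the dimension, and the near-optimality of the dual --- is supplied directly by Theorem~2.
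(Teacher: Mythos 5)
Your proposal is correct and follows essentially the same route as the paper: the central step — using $b\mid l$ (with $b\mid p^m-1$) to conclude $y^{-l}\in H_{l}\subseteq H_{b}$, so that the inner sum over $a\in\gamma H_{b}$ is invariant under $a\mapsto ay^{-l}$ and the character sum collapses by plain orthogonality to $p^{m}\delta_{\gamma H_{b}}(F^{*}(\alpha))-\frac{p^m-1}{b}$ without any Gauss-period evaluation — is exactly the paper's argument, and the four-case weight enumeration plus the multiplicity count via $|D_{F^{*},\gamma H_{b}}|$ from Lemma~6 matches Table~5 as in the paper.
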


\renewcommand{\thetable}{5}
\begin{table}\label{5}\scriptsize
  \centering
  \caption{The weight distribution of $C_{D_{F, I}}$ constructed in Theorem 4}
  \begin{tabular}{|c|c|}
    \hline
    Hamming weight & Multiplicity \\ \hline
    $0$ & $1$ \\
    $(p^{n-m}-\varepsilon p^{\frac{n}{2}-m})\frac{p^m-1}{b}$ & $p^t-1$ \\
    $(p^{n-m-t}\frac{p^m-1}{b}-\varepsilon p^{\frac{n}{2}-t})(p^t-1)$ & $(p^{n-m}-\varepsilon p^{\frac{n}{2}-m})\frac{p^m-1}{b}$ \\
    $(p^{n-m}-p^{n-m-t}-\varepsilon p^{\frac{n}{2}-m})\frac{p^m-1}{b}+\varepsilon p^{\frac{n}{2}-t}$ & $(p^t-1)(p^{n-m}-\varepsilon p^{\frac{n}{2}-m})\frac{p^m-1}{b}$ \\
    $p^{n-m-t}(p^t-1)\frac{p^m-1}{b}$ & $p^{n}-1-(p^{n-m}-\varepsilon p^{\frac{n}{2}-m})\frac{p^m-1}{b}$ \\
    $(p^{n-m}-p^{n-m-t}-\varepsilon p^{\frac{n}{2}-m})\frac{p^m-1}{b}$ & $(p^{n}-1)(p^t-1)-(p^t-1)(p^{n-m}-\varepsilon p^{\frac{n}{2}-m})\frac{p^m-1}{b}$ \\
    \hline
  \end{tabular}
\end{table}

\begin{proof}
By Theorem 2, $C_{D_{F, \gamma H_{b}}}$ is self-orthogonal. When $\alpha=0, \beta=0$, $wt(c_{\alpha, \beta})=0$. When $\alpha=0, \beta \in \mathbb{F}_{p^t}^{*}$, $wt(c_{\alpha, \beta})=|D_{F, \gamma H_{b}}|=(p^{n-m}-\varepsilon p^{\frac{n}{2}-m})\frac{p^m-1}{b}$. By Lemma 7, for any $\alpha \in V_{n}^{(p)} \backslash \{0\}, \beta \in \mathbb{F}_{p^t}$, we have
\begin{small}
\begin{equation*}
\begin{split}
N_{\gamma H_{b}, \alpha, \beta}&\triangleq |\{x \in V_{n}^{(p)}: F(x) \in \gamma H_{b}, \sum_{j=1}^{s}Tr_{t}^{n_{j}}(\alpha_{j}x_{j})+\beta=0\}|\\
&=\sum_{a \in \gamma H_{b}}|\{x \in V_{n}^{(p)}: F(x)=a, \sum_{j=1}^{s}Tr_{t}^{n_{j}}(\alpha_{j}x_{j})+\beta=0\}|\\
&=\varepsilon p^{\frac{n}{2}-m}\sum_{a \in \gamma H_{b}}|\{y \in \mathbb{F}_{p^m}^{*}: Tr_{t}^{m}(F^{*}(\alpha)y-ay^{1-l})=\beta\}|-\varepsilon p^{\frac{n}{2}-t}\frac{p^m-1}{b}+p^{n-m-t}\frac{p^m-1}{b}\\
&=\varepsilon p^{\frac{n}{2}-m}T-\varepsilon p^{\frac{n}{2}-t}\frac{p^m-1}{b}+p^{n-m-t}\frac{p^m-1}{b},
\end{split}
\end{equation*}
\end{small}where $T=\sum_{a \in \gamma H_{b}}|\{y \in \mathbb{F}_{p^m}^{*}: Tr_{t}^{m}(F^{*}(\alpha)y-ay^{1-l})=\beta\}|$. For $T$, we have
\begin{small}
\begin{equation*}
\begin{split}
T&=p^{-t}\sum_{a \in \gamma H_{b}}\sum_{y \in \mathbb{F}_{p^m}^{*}}\sum_{z \in \mathbb{F}_{p^t}}\zeta_{p}^{Tr_{1}^{t}(z(Tr_{t}^{m}(F^{*}(\alpha)y-ay^{1-l})-\beta))}\\
&=p^{-t}\sum_{z \in \mathbb{F}_{p^t}}\zeta_{p}^{Tr_{1}^{t}(-z\beta)}\sum_{y \in \mathbb{F}_{p^m}^{*}}\zeta_{p}^{Tr_{1}^{m}(F^{*}(\alpha)yz)}\sum_{a \in \gamma H_{b}}\zeta_{p}^{Tr_{1}^{m}(-ay^{1-l}z)}\\
&=p^{-t}\sum_{z \in \mathbb{F}_{p^t}}\zeta_{p}^{Tr_{1}^{t}(-z\beta)}\sum_{y \in \mathbb{F}_{p^m}^{*}}\zeta_{p}^{Tr_{1}^{m}(F^{*}(\alpha)yz)}\sum_{a \in \gamma H_{b}}\zeta_{p}^{Tr_{1}^{m}(-ayz)}\\
&=p^{-t}\sum_{z \in \mathbb{F}_{p^t}^{*}}\zeta_{p}^{Tr_{1}^{t}(-z\beta)}\sum_{a \in \gamma H_{b}}\sum_{y \in \mathbb{F}_{p^m}^{*}}\zeta_{p}^{Tr_{1}^{m}(y(F^{*}(\alpha)-a))}+p^{-t}(p^m-1)\frac{p^m-1}{b}\\
&=p^{-t}(p^{t}\delta_{0}(\beta)-1)(p^{m}\delta_{\gamma H_{b}}(F^{*}(\alpha))-\frac{p^m-1}{b})+p^{-t}(p^m-1)\frac{p^m-1}{b},
\end{split}
\end{equation*}
\end{small}where in the third equation we use $H_{l}=\{x^{l}: x \in \mathbb{F}_{p^m}^{*}\}\subseteq H_{b}$ since $b \mid l$.

Then
\begin{itemize}
  \item when $\alpha \neq 0, F^{*}(\alpha) \in \gamma H_{b}, \beta=0$, we have $N_{\gamma H_{b}, \alpha, \beta}=(p^{n-m-t}-\varepsilon p^{\frac{n}{2}-m})\frac{p^m-1}{b}+\varepsilon p^{\frac{n}{2}-t}(p^t-1)$ and $wt(c_{\alpha, \beta})=|D_{F, \gamma H_{b}}|-N_{\gamma H_{b}, \alpha, \beta}=(p^{n-m-t}\frac{p^m-1}{b}-\varepsilon p^{\frac{n}{2}-t})(p^t-1)$;
  \item when $\alpha \neq 0, F^{*}(\alpha) \in \gamma H_{b}, \beta \in \mathbb{F}_{p^t}^{*}$, we have $N_{\gamma H_{b}, \alpha, \beta}=p^{n-m-t}\frac{p^m-1}{b}-\varepsilon p^{\frac{n}{2}-t}$ and $wt(c_{\alpha, \beta})=|D_{F, \gamma H_{b}}|-N_{\gamma H_{b}, \alpha, \beta}=(p^{n-m}-p^{n-m-t}-\varepsilon p^{\frac{n}{2}-m})\frac{p^m-1}{b}+\varepsilon p^{\frac{n}{2}-t}$;
  \item when $\alpha \neq 0, F^{*}(\alpha) \notin \gamma H_{b}, \beta=0$, we have $N_{\gamma H_{b}, \alpha, \beta}=(p^{n-m-t}-\varepsilon p^{\frac{n}{2}-m})\frac{p^m-1}{b}$ and $wt(c_{\alpha, \beta})=|D_{F, \gamma H_{b}}|-N_{\gamma H_{b}, \alpha, \beta}=p^{n-m-t}(p^t-1)\frac{p^m-1}{b}$;
  \item when $\alpha\neq 0, F^{*}(\alpha) \notin \gamma H_{b}, \beta \in \mathbb{F}_{p^t}^{*}$, we have $N_{\gamma H_{b}, \alpha, \beta}=p^{n-m-t}\frac{p^m-1}{b}$ and $wt(c_{\alpha, \beta})=|D_{F, \gamma H_{b}}|-N_{\gamma H_{b}, \alpha, \beta}=(p^{n-m}-p^{n-m-t}-\varepsilon p^{\frac{n}{2}-m})\frac{p^m-1}{b}$.
\end{itemize}
The weight distribution of $C_{D_{F, \gamma H_{b}}}$ follows from the above arguments and Lemma 6.
\end{proof}

By Theorem 4, we have the following corollary.

\begin{corollary} \label{Corollary 1}
Let $p$ be an odd prime. Let $F: V_{n}^{(p)}\rightarrow \mathbb{F}_{p^m}$ be a vectorial dual-bent function with Condition II, and $C_{D_{F, I}}$ be defined by Eq. (1) with $I=\mathcal{S}$ or $I=\mathcal{N}$, where $\mathcal{S}=\{x^{2}: x \in \mathbb{F}_{p^m}^{*}\}, \mathcal{N}=\mathbb{F}_{p^m}^{*} \backslash \mathcal{S}$. Then $C_{D_{F, I}}$ is a at most five-weight $[(p^{n-m}-\varepsilon p^{\frac{n}{2}-m})\frac{p^m-1}{2}, \frac{n}{t}+1]_{p^t}$ self-orthogonal linear code whose weight distribution is given in Table 5 with $b=2$.
\end{corollary}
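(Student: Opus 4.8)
\emph{Proof proposal.} The plan is to realize $\mathcal{S}$ and $\mathcal{N}$ as multiplicative cosets of $H_{2}=\{x^{2}:x\in\mathbb{F}_{p^m}^{*}\}$ and then invoke Theorem 4 with $b=2$. Indeed, $\mathcal{S}=H_{2}$ (take $\gamma$ to be any square, e.g. $\gamma=1$), and since $H_{2}$ has index $2$ in $\mathbb{F}_{p^m}^{*}$, the non-squares form the unique nontrivial coset, so $\mathcal{N}=\gamma H_{2}$ for any fixed non-square $\gamma$. In both cases $I=\gamma H_{2}$ for a suitable $\gamma\in\mathbb{F}_{p^m}^{*}$.

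Next I would check that the two divisibility hypotheses of Theorem 4 hold for $b=2$. First, $b\mid(p^m-1)$: since $p$ is odd, $p^m-1$ is even, so $2\mid(p^m-1)$. Second, $b\mid l$: by Condition II we have $(l-1)(d-1)\equiv 1 \pmod{p^m-1}$, so $l-1$ is a unit modulo $p^m-1$, i.e. $\gcd(l-1,p^m-1)=1$; as $p^m-1$ is even this forces $l-1$ to be odd, hence $l$ is even and $2\mid l$. Thus all hypotheses of Theorem 4 are satisfied with $b=2$ for both choices of $I$.

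Finally, applying Theorem 4 directly gives that $C_{D_{F,I}}$ is a self-orthogonal $[(p^{n-m}-\varepsilon p^{\frac{n}{2}-m})\frac{p^m-1}{2},\frac{n}{t}+1]_{p^t}$ linear code whose weight distribution is the one in Table 5 with the substitution $b=2$. Since that weight distribution depends on $I$ only through the integer $b$ and not through the coset representative $\gamma$, the conclusion is identical for $I=\mathcal{S}$ and $I=\mathcal{N}$, which completes the argument. There is essentially no analytic obstacle here: the only point that needs to be made carefully is the elementary observation that Condition II already forces $l$ to be even when $p$ is odd, which is precisely what legitimizes the choice $b=2$ in Theorem 4.
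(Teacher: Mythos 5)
Your proposal is correct and follows essentially the same route as the paper: the paper's proof likewise observes that $\gcd(l-1,p^m-1)=1$ together with $2\mid(p^m-1)$ forces $2\mid l$, and then invokes Theorem 4 with $b=2$ (noting $\mathcal{S}$ and $\mathcal{N}$ are the cosets $\gamma H_{2}$). Your write-up merely makes the coset identification and the divisibility check more explicit.
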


\begin{proof}
When $p$ is odd, by $gcd(p^m-1, l-1)=1$, we have $2 \mid l$. Then the result follows from Theorem 4.
\end{proof}

\begin{theorem}\label{Theorem 5}
Let $F: V_{n}^{(p)}\rightarrow \mathbb{F}_{p^m}$ be a vectorial dual-bent function satisfying Condition II for which $l=2, t=m=2jj'$ for some positive integers $j, j'$, and there is an integer $b\geq 2$ such that $b \mid (p^j+1)$, where $j$ is the smallest such positive integer. Let $C_{D_{F, I}}$ be defined by Eq. (1) with $I=\gamma H_{b}$, where $\gamma \in \mathbb{F}_{p^m}^{*}$ and $H_{b}=\{x^{b}: x \in \mathbb{F}_{p^m}^{*}\}$.

$\mathrm{(i)}$ When $b$ is odd, $C_{D_{F, I}}$ is a at most five-weight $[(p^{n-m}-\varepsilon p^{\frac{n}{2}-m})\frac{p^m-1}{b}, \frac{n}{m}+1]_{p^m}$ self-orthogonal linear code whose weight distribution is given in Table 6.

$\mathrm{(ii)}$ When $b$ is even, $C_{D_{F, I}}$ is a at most six-weight $[(p^{n-m}-\varepsilon p^{\frac{n}{2}-m})\frac{p^m-1}{b}, \frac{n}{m}+1]_{p^m}$ self-orthogonal linear code whose weight distribution is given in Table 7.
\end{theorem}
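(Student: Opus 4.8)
\medskip
\noindent\textbf{Proof strategy.}
The plan is to obtain the structural assertions directly from Theorem~2 and to read off the weights and their multiplicities from Lemma~9 together with the value distribution of $F^{*}$ in Lemma~6. First, since $I=\gamma H_{b}$ is a nonempty subset of $\mathbb{F}_{p^m}$ with $0\notin I$ and $|I|=\frac{p^m-1}{b}$, Theorem~2 (with $t=m$) already gives that $C_{D_{F,\gamma H_{b}}}$ is a self-orthogonal $\bigl[(p^{n-m}-\varepsilon p^{\frac{n}{2}-m})\frac{p^m-1}{b},\,\frac{n}{m}+1\bigr]_{p^m}$ linear code and that $C_{D_{F,\gamma H_{b}}}^{\bot}$ is at least almost optimal; so only the weight distribution remains. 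As in the proofs of Theorems~1 and~4 one has $wt(c_{0,0})=0$, $wt(c_{0,\beta})=|D_{F,\gamma H_{b}}|$ for $\beta\in\mathbb{F}_{p^m}^{*}$, and $wt(c_{\alpha,\beta})=|D_{F,\gamma H_{b}}|-N_{\gamma H_{b},\alpha,\beta}$ for $\alpha\in V_{n}^{(p)}\backslash\{0\}$.

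For $\alpha\neq 0$ I would expand $N_{\gamma H_{b},\alpha,\beta}$ by Lemma~7(i) exactly as in Theorem~4: since $t=m$ the trace $Tr_{t}^{m}$ is trivial and since $l=2$ we have $y^{1-l}=y^{-1}$, so the inner cardinality becomes $|\{y\in\mathbb{F}_{p^m}^{*}:F^{*}(\alpha)y^{2}-\beta y-a=0\}|$, and summing over $a\in\gamma H_{b}$ produces exactly the quantity $T$ of Lemma~9. This yields
\begin{equation*}
N_{\gamma H_{b},\alpha,\beta}=\varepsilon p^{\frac{n}{2}-m}\Bigl(T-\tfrac{p^m-1}{b}\Bigr)+p^{n-2m}\tfrac{p^m-1}{b},\qquad wt(c_{\alpha,\beta})=(p^{n-m}-p^{n-2m})\tfrac{p^m-1}{b}-\varepsilon p^{\frac{n}{2}-m}T.
\end{equation*}
Now I would substitute the relevant cases of Lemma~9 --- parts (i),(ii),(iv) when $b$ is odd and parts (i),(ii),(iii),(v) when $b$ is even (note that for $p=2$, $b\mid 2^{j}+1$ forces $b$ odd, so part (ii) of the theorem concerns only odd $p$; part (i) uses both alternatives of Lemma~9(iv), for $p$ odd and $p=2$ respectively). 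Together with the weight $|D_{F,\gamma H_{b}}|$ arising from $\alpha=0$ this produces the full list of candidate weights, and one checks that several coincide --- for instance the values coming from ``$F^{*}(\alpha)=0,\ \beta\neq 0$'' and ``$F^{*}(\alpha)\neq 0,\ \beta=0$'' both fall under Lemma~9(ii) --- so that exactly five (odd $b$) resp.\ six (even $b$) distinct nonzero weights survive, which after simplification are the entries of Table~6 resp.\ Table~7.

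For the multiplicities the key point is that by Lemma~6 the value distribution of $F^{*}$ is uniform: $F^{*}$ attains $0$ on $p^{n-m}+\varepsilon p^{\frac{n}{2}-m}(p^m-1)$ points and each nonzero value on $p^{n-m}-\varepsilon p^{\frac{n}{2}-m}$ points. Hence for any union $U$ of cosets of $\mathcal{S}$, of $H_{b}$ or of $H_{b/2}$ in $\mathbb{F}_{p^m}^{*}$ the number of $\alpha$ with $F^{*}(\alpha)\in U$ is $|U|\,(p^{n-m}-\varepsilon p^{\frac{n}{2}-m})$, and the number of $\alpha\neq 0$ with $F^{*}(\alpha)=0$ is $p^{n-m}+\varepsilon p^{\frac{n}{2}-m}(p^m-1)-1$; this counts the $\alpha$-part of every regime occurring in Lemma~9, since the conditions there depend on $F^{*}(\alpha)$ only through the coset of $\gamma^{-1}F^{*}(\alpha)^{-1}$. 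For a fixed $\alpha$ with $F^{*}(\alpha)=c\neq 0$ and $\beta\in\mathbb{F}_{p^m}^{*}$, the subcase conditions in Lemma~9(iv),(v) amount to requiring $\beta$ --- or $\beta^{2}$ when $p=2$ --- to lie in a fixed coset of $H_{b}$ (resp.\ $H_{b/2}$); since $b\mid(p^m-1)$ and squaring permutes $\mathbb{F}_{2^m}^{*}$, this happens for exactly $\frac{p^m-1}{b}$ (resp.\ $\frac{2(p^m-1)}{b}$) values of $\beta$, independently of $c$. The $\beta=0$ split in Lemma~9(i),(iii) depends only on whether $\gamma^{-1}c\in\mathcal{S}$, and is again counted from the uniformity of the $D_{F^{*},i}$. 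Feeding all these counts into the weight formula above and summing over $c$ assembles the multiplicity columns of Tables~6 and~7; the dimension $\frac{n}{m}+1$ is that guaranteed by Theorem~2.

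The hard part is the bookkeeping in this last step: one must organise the numerous subcases of Lemma~9 so that the raw weights collapse to exactly the stated number of distinct values, confirm that every multiplicity so produced is a nonnegative integer, and verify $\sum_{i}A_{i}=p^{n+m}$ (the number of codewords), with the first moment identity $\sum_{i}iA_{i}=(p^m-1)p^{n}\,|D_{F,\gamma H_{b}}|$ serving as a convenient cross-check. The even-$b$ case is the more delicate one because of the interaction between the cosets of $H_{b}$ and of $H_{b/2}$, which is exactly what produces the extra sixth weight in Table~7.
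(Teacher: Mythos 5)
Your proposal is correct and follows essentially the same route as the paper: Theorem 2 for self-orthogonality, length and dimension; Lemma 7 (with $t=m$, $l=2$) to reduce $N_{\gamma H_{b},\alpha,\beta}$ to the quantity $T$ of Lemma 9, giving exactly the weight formula $wt(c_{\alpha,\beta})=(p^{n-m}-p^{n-2m})\frac{p^m-1}{b}-\varepsilon p^{\frac{n}{2}-m}T$; and Lemma 6 plus coset counting for the multiplicities. The case organisation (Lemma 9 parts (i),(ii),(iv) for odd $b$ and (i),(ii),(iii),(v) for even $b$, with the observation that even $b$ forces $p$ odd) and the multiplicity counts you describe reproduce Tables 6 and 7 as in the paper's proof.
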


\renewcommand{\thetable}{6}
\begin{table}\label{6}\scriptsize
  \centering
  \caption{The weight distribution of $C_{D_{F, I}}$ constructed in Theorem 5 (i)}
  \begin{tabular}{|c|c|}
    \hline
    Hamming weight & Multiplicity \\ \hline
    $0$ & $1$ \\
    $(p^{n-m}-\varepsilon p^{\frac{n}{2}-m})\frac{p^m-1}{b}$ & $p^{m}-1$ \\
    $(p^{n-m}-p^{n-2m})\frac{p^m-1}{b}$ & $p^{n-m}+\varepsilon p^{\frac{n}{2}-m}(p^m-1)-1$ \\
    $(p^{n-m}-p^{n-2m}-\varepsilon p^{\frac{n}{2}-m})\frac{p^m-1}{b}$ & $(p^m-1)(2p^{n-m}-2\varepsilon p^{\frac{n}{2}-m}+\varepsilon p^{\frac{n}{2}}-1)$ \\
    $(p^{n-m}-p^{n-2m}-\varepsilon p^{\frac{n}{2}-m})\frac{p^m-1}{b}+\varepsilon (-1)^{j'}p^{\frac{n-m}{2}}-\varepsilon p^{\frac{n}{2}-m}\frac{(-1)^{j'}p^{\frac{m}{2}}-1}{b}$ & $\frac{(p^m-1)^{2}}{b}(p^{n-m}-\varepsilon p^{\frac{n}{2}-m})$ \\
    $(p^{n-m}-p^{n-2m}-\varepsilon p^{\frac{n}{2}-m})\frac{p^m-1}{b}-\varepsilon p^{\frac{n}{2}-m}\frac{(-1)^{j'}p^{\frac{m}{2}}-1}{b}$ & $(p^m-1)(p^m-1-\frac{p^m-1}{b})(p^{n-m}-\varepsilon p^{\frac{n}{2}-m})$ \\
    \hline
  \end{tabular}
\end{table}

\renewcommand{\thetable}{7}
\begin{table}\label{7}\scriptsize
  \centering
  \caption{The weight distribution of $C_{D_{F, I}}$ constructed in Theorem 5 (ii)}
  \begin{tabular}{|c|c|}
    \hline
    Hamming weight & Multiplicity  \\ \hline
    $0$ & $1$ \\
    $(p^{n-m}-\varepsilon p^{\frac{n}{2}-m})\frac{p^m-1}{b}$ & $p^{m}-1$ \\
    $(p^{n-m}-p^{n-2m})\frac{p^m-1}{b}$ & $\frac{p^m+1}{2}(p^{n-m}-\varepsilon p^{\frac{n}{2}-m})+ \varepsilon p^{\frac{n}{2}}-1$ \\
    $(p^{n-m}-p^{n-2m}-\varepsilon p^{\frac{n}{2}-m})\frac{p^m-1}{b}$ & $(p^m-1)(\frac{p^m+1}{2}(p^{n-m}-\varepsilon p^{\frac{n}{2}-m})+ \varepsilon p^{\frac{n}{2}}-1)$ \\
    $(p^{n-m}-p^{n-2m}-2\varepsilon p^{\frac{n}{2}-m})\frac{p^m-1}{b}$ & $\frac{p^m-1}{2}(p^{n-m}-\varepsilon p^{\frac{n}{2}-m})$ \\
    $(p^{n-m}-p^{n-2m}-\varepsilon p^{\frac{n}{2}-m})\frac{p^m-1}{b}+\varepsilon (-1)^{j'}p^{\frac{n-m}{2}}-2\varepsilon p^{\frac{n}{2}-m}\frac{(-1)^{j'}p^{\frac{m}{2}}-1}{b}$ & $\frac{(p^m-1)^{2}}{b}(p^{n-m}-\varepsilon p^{\frac{n}{2}-m})$ \\
    $(p^{n-m}-p^{n-2m}-\varepsilon p^{\frac{n}{2}-m})\frac{p^m-1}{b}-2\varepsilon p^{\frac{n}{2}-m}\frac{(-1)^{j'}p^{\frac{m}{2}}-1}{b}$ &
    $(p^m-1)(\frac{p^m-1}{2}-\frac{p^m-1}{b})(p^{n-m}-\varepsilon p^{\frac{n}{2}-m})$\\
    \hline
  \end{tabular}
\end{table}

\begin{proof}
By Theorem 2, $C_{D_{F, \gamma H_{b}}}$ is self-orthogonal. Let $w$ be a primitive element of $\mathbb{F}_{p^m}$, and if $p$ is odd, let $\mathcal{S}=\{x^{2}: x \in \mathbb{F}_{p^m}^{*}\}, \mathcal{N}=\mathbb{F}_{p^m}^{*} \backslash \mathcal{S}$.

When $\alpha=0, \beta=0$, $wt(c_{\alpha, \beta})=0$. When $\alpha=0, \beta \in \mathbb{F}_{p^m}^{*}$, $wt(c_{\alpha, \beta})=|D_{F, \gamma H_{b}}|=(p^{n-m}-\varepsilon p^{\frac{n}{2}-m})\frac{p^m-1}{b}$. By Lemma 7, for any $\alpha \in V_{n}^{(p)} \backslash \{0\}, \beta \in \mathbb{F}_{p^m}$, we have
\begin{small}
\begin{equation*}
\begin{split}
N_{\gamma H_{b}, \alpha, \beta}&\triangleq |\{x \in V_{n}^{(p)}: F(x) \in \gamma H_{b}, \sum_{i=1}^{s}Tr_{m}^{n_{i}}(\alpha_{i}x_{i})+\beta=0\}|\\
&=\sum_{a \in \gamma H_{b}}|\{x \in V_{n}^{(p)}: F(x)=a, \sum_{i=1}^{s}Tr_{m}^{n_{i}}(\alpha_{i}x_{i})+\beta=0\}|\\
\end{split}
\end{equation*}
\begin{equation}\label{6}
\begin{split}
&=\varepsilon p^{\frac{n}{2}-m}\sum_{a \in \gamma H_{b}}|\{y \in \mathbb{F}_{p^m}^{*}: F^{*}(\alpha)y-ay^{-1}=\beta\}|+(p^{n-2m}-\varepsilon p^{\frac{n}{2}-m})\frac{p^m-1}{b}\\
&=\varepsilon p^{\frac{n}{2}-m}\sum_{a \in \gamma H_{b}}|\{y \in \mathbb{F}_{p^m}^{*}: F^{*}(\alpha)y^{2}-\beta y-a=0\}|+(p^{n-2m}-\varepsilon p^{\frac{n}{2}-m})\frac{p^m-1}{b}\\
&=\varepsilon p^{\frac{n}{2}-m}T+(p^{n-2m}-\varepsilon p^{\frac{n}{2}-m})\frac{p^m-1}{b},
\end{split}
\end{equation}
\end{small}where $T=\sum_{a \in \gamma H_{b}}|\{y \in \mathbb{F}_{p^m}^{*}: F^{*}(\alpha)y^{2}-\beta y-a=0\}|$.

(i) By Eq. (6) and Lemma 9,
\begin{itemize}
  \item when $\alpha \neq 0, F^{*}(\alpha)=0, \beta=0$, $N_{\gamma H_{b}, \alpha, \beta}=(p^{n-2m}-\varepsilon p^{\frac{n}{2}-m})\frac{p^m-1}{b}$ and $wt(c_{\alpha, \beta})=|D_{F, \gamma H_{b}}|-N_{\gamma H_{b}, \alpha, \beta}=(p^{n-m}-p^{n-2m})\frac{p^m-1}{b}$;
  \item when $\alpha \neq 0, F^{*}(\alpha)=0, \beta \in \mathbb{F}_{p^m}^{*}$, or $F^{*}(\alpha)\neq 0, \beta=0$, $N_{\gamma H_{b}, \alpha, \beta}=p^{n-2m}\frac{p^m-1}{b}$ and $wt(c_{\alpha, \beta})=|D_{F, \gamma H_{b}}|-N_{\gamma H_{b}, \alpha, \beta}=(p^{n-m}-p^{n-2m}-\varepsilon p^{\frac{n}{2}-m})\frac{p^m-1}{b}$.
\end{itemize}
If $p$ is odd, by Eq. (6) and Lemma 9,
\begin{itemize}
  \item when $\alpha \neq 0, F^{*}(\alpha)\neq 0, \beta \neq 0$, $\gamma^{-1}F^{*}(\alpha)^{-1}\in \mathcal{S}, \beta \sqrt{\gamma^{-1}F^{*}(\alpha)^{-1}} \in H_{b}$, or $\gamma^{-1}F^{*}(\alpha)^{-1}\in \mathcal{N}, \beta \sqrt{\gamma^{-1}F^{*}(\alpha)^{-1}w^{b}} \in H_{b}$, $N_{\gamma H_{b}, \alpha, \beta}=p^{n-2m}\frac{p^m-1}{b}+\varepsilon (-1)^{j'+1}p^{\frac{n-m}{2}}+\varepsilon p^{\frac{n}{2}-m}\frac{(-1)^{j'}p^{\frac{m}{2}}-1}{b}$ and $wt(c_{\alpha, \beta})=|D_{F, \gamma H_{b}}|-N_{\gamma H_{b}, \alpha, \beta}=(p^{n-m}-p^{n-2m}-\varepsilon p^{\frac{n}{2}-m})\frac{p^m-1}{b}+\varepsilon (-1)^{j'}p^{\frac{n-m}{2}}-\varepsilon p^{\frac{n}{2}-m}\frac{(-1)^{j'}p^{\frac{m}{2}}-1}{b}$;
  \item when $\alpha \neq 0, F^{*}(\alpha)\neq 0, \beta \neq 0$, $\gamma^{-1}F^{*}(\alpha)^{-1}\in \mathcal{S}$, $\beta \sqrt{\gamma^{-1}F^{*}(\alpha)^{-1}} \notin H_{b}$, or $\gamma^{-1}F^{*}(\alpha)^{-1}\in \mathcal{N}, \beta \sqrt{\gamma^{-1}F^{*}(\alpha)^{-1}w^{b}} \notin H_{b}$, $N_{\gamma H_{b}, \alpha, \beta}=p^{n-2m}\frac{p^m-1}{b}+\varepsilon p^{\frac{n}{2}-m}\frac{(-1)^{j'}p^{\frac{m}{2}}-1}{b}$ and $wt(c_{\alpha, \beta})=|D_{F, \gamma H_{b}}|-N_{\gamma H_{b}, \alpha, \beta}=(p^{n-m}-p^{n-2m}-\varepsilon p^{\frac{n}{2}-m})\frac{p^m-1}{b}-\varepsilon p^{\frac{n}{2}-m}\frac{(-1)^{j'}p^{\frac{m}{2}}-1}{b}$.
\end{itemize}

If $p=2$, by Eq. (6) and Lemma 9,
\begin{itemize}
  \item when $\alpha \neq 0, F^{*}(\alpha)\neq 0, \beta \neq 0, \gamma^{-1}F^{*}(\alpha)^{-1} \beta^{2}\in H_{b}$, $N_{\gamma H_{b}, \alpha, \beta}=2^{n-2m}\frac{2^m-1}{b}+(-1)^{j'+1}2^{\frac{n-m}{2}}+2^{\frac{n}{2}-m}\frac{(-1)^{j'}2^{\frac{m}{2}}-1}{b}$ and $wt(c_{\alpha, \beta})=|D_{F, \gamma H_{b}}|-N_{\gamma H_{b}, \alpha, \beta}=(2^{n-m}-2^{n-2m}-2^{\frac{n}{2}-m})\frac{2^m-1}{b}+(-1)^{j'}2^{\frac{n-m}{2}}-2^{\frac{n}{2}-m}\frac{(-1)^{j'}2^{\frac{m}{2}}-1}{b}$;
  \item when $\alpha \neq 0, F^{*}(\alpha)\neq 0, \beta \neq 0, \gamma^{-1}F^{*}(\alpha)^{-1} \beta^{2}\notin H_{b}$, $N_{\gamma H_{b}, \alpha, \beta}=2^{n-2m}\frac{2^m-1}{b}+2^{\frac{n}{2}-m}\frac{(-1)^{j'}2^{\frac{m}{2}}-1}{b}$ and $wt(c_{\alpha, \beta})=|D_{F, \gamma H_{b}}|-N_{\gamma H_{b}, \alpha, \beta}=(2^{n-m}-2^{n-2m}-2^{\frac{n}{2}-m})\frac{p^m-1}{b}-2^{\frac{n}{2}-m}\frac{(-1)^{j'}2^{\frac{m}{2}}-1}{b}$.
\end{itemize}
By the above arguments and Lemma 6, the weight distribution of $C_{D_{F, \gamma H_{b}}}$ can be easily obtained.

(ii) By Eq. (6) and Lemma 9,
\begin{itemize}
  \item when $\alpha \neq 0, F^{*}(\alpha)=0, \beta=0$, or $F^{*}(\alpha)\neq 0, \beta=0, \gamma F^{*}(\alpha)^{-1} \in \mathcal{N}$, $N_{\gamma H_{b}, \alpha, \beta}=(p^{n-2m}-\varepsilon p^{\frac{n}{2}-m})\frac{p^m-1}{b}$ and $wt(c_{\alpha, \beta})=|D_{F, \gamma H_{b}}|-N_{\gamma H_{b}, \alpha, \beta}=(p^{n-m}-p^{n-2m})\frac{p^m-1}{b}$;
  \item when $\alpha \neq 0, F^{*}(\alpha)=0, \beta \neq 0$, or $F^{*}(\alpha) \neq 0, \beta \neq 0, \gamma^{-1}F^{*}(\alpha)^{-1} \in \mathcal{N}$, $N_{\gamma H_{b}, \alpha, \beta}=p^{n-2m}\frac{p^m-1}{b}$ and $wt(c_{\alpha, \beta})=|D_{F, \gamma H_{b}}|-N_{\gamma H_{b}, \alpha, \beta}=(p^{n-m}-p^{n-2m}-\varepsilon p^{\frac{n}{2}-m})\frac{p^m-1}{b}$;
  \item when $\alpha \neq 0, F^{*}(\alpha) \neq 0, \beta=0, \gamma F^{*}(\alpha)^{-1} \in \mathcal{S}$, $N_{\gamma H_{b}, \alpha, \beta}=(p^{n-2m}+\varepsilon p^{\frac{n}{2}-m})\frac{p^m-1}{b}$ and $wt(c_{\alpha, \beta})=|D_{F, \gamma H_{b}}|-N_{\gamma H_{b}, \alpha, \beta}=(p^{n-m}-p^{n-2m}-2\varepsilon p^{\frac{n}{2}-m})\frac{p^m-1}{b}$;
  \item when $\alpha\neq 0, F^{*}(\alpha)\neq 0, \beta \neq 0, \gamma^{-1}F^{*}(\alpha)^{-1} \in \mathcal{S}, \beta \sqrt{\gamma^{-1}F^{*}(\alpha)^{-1}} \in H_{\frac{b}{2}}$, $N_{\gamma H_{b}, \alpha, \beta}=p^{n-2m}\frac{p^m-1}{b}+\varepsilon (-1)^{j'+1}p^{\frac{n-m}{2}}+2\varepsilon p^{\frac{n}{2}-m}\frac{(-1)^{j'}p^{\frac{m}{2}}-1}{b}$, $wt(c_{\alpha, \beta})=(p^{n-m}-p^{n-2m}-\varepsilon p^{\frac{n}{2}-m})\frac{p^m-1}{b}+\varepsilon (-1)^{j'}p^{\frac{n-m}{2}}-2\varepsilon p^{\frac{n}{2}-m}\frac{(-1)^{j'}p^{\frac{m}{2}}-1}{b}$;
  \item when $\alpha\neq 0, F^{*}(\alpha)\neq 0, \beta \neq 0, \gamma^{-1}F^{*}(\alpha)^{-1} \in \mathcal{S}, \beta \sqrt{\gamma^{-1}F^{*}(\alpha)^{-1}} \notin H_{\frac{b}{2}}$, $N_{\gamma H_{b}, \alpha, \beta}=p^{n-2m}\frac{p^m-1}{b}+2\varepsilon p^{\frac{n}{2}-m}\frac{(-1)^{j'}p^{\frac{m}{2}}-1}{b}$, $wt(c_{\alpha, \beta})=(p^{n-m}-p^{n-2m}-\varepsilon p^{\frac{n}{2}-m})\frac{p^m-1}{b}-2\varepsilon p^{\frac{n}{2}-m}\frac{(-1)^{j'}p^{\frac{m}{2}}-1}{b}$.
\end{itemize}
By the above arguments and Lemma 6, the weight distribution of $C_{D_{F, \gamma H_{b}}}$ can be easily obtained.
\end{proof}

In the following, for general $m$, by the results in \cite{CMP2018Ve,WF2023Ne,WSWF2023Co}, we list some explicit classes of vectorial dual-bent functions $F: V_{n}^{(p)}\rightarrow \mathbb{F}_{p^m}$ satisfying the conditions in Theorems 2-5.

\begin{itemize}
  \item Let $m, n', t, u$ be positive integers with $t \mid m, m \mid n', m\neq n', gcd(u, p^{n'}-1)=1$, and when $p=2$, $m\geq 2$ and $m+t<n'$. Let $\alpha \in \mathbb{F}_{p^{n'}}^{*}$. Define $F: \mathbb{F}_{p^{n'}} \times \mathbb{F}_{p^{n'}}\rightarrow \mathbb{F}_{p^m}$ as
     \begin{equation}\label{7}
     F(x_{1}, x_{2})=Tr_{m}^{n'}(\alpha x_{1}x_{2}^{u}).
     \end{equation}
  Then $F$ is a vectorial dual-bent function satisfying Condition II with $l=1+u, d=1+u', \varepsilon=1$, where $uu'\equiv 1 \mod (p^{n'}-1)$. In details, $F$ satisfies the condition in Theorem 2, Theorem 3 (i), Theorem 4, Corollary 1; $F$ satisfies the condition in Theorem 3 (ii) when $u=1, t=m$; $F$ satisfies the condition in Theorem 5 when $u=1, t=m=2jj'$ for some integers $j, j'$, and there is an integer $b\geq 2$ with $b \mid (p^{j}+1)$, where $j$ is the smallest such positive integer.
  \item Let $p$ be an odd prime, $t, m, s$ be positive integers with $t \mid m, 2 \mid s, s\neq 2$. By the results in \cite{WF2023Ne,WSWF2023Co}, all non-degenerate quadratic forms $F$ from $\mathbb{F}_{p^m}^{s}$ ($\mathbb{F}_{p^{ms}}$) to $\mathbb{F}_{p^m}$ are vectorial dual-bent functions satisfying Condition II with $l=d=2$. We list some explicit non-degenerate quadratic forms.
  \begin{itemize}
    \item Let $m, n, t$ be positive integers with $t \mid m, 2m \mid n, 2m\neq n$, $\alpha \in \mathbb{F}_{p^n}^{*}$. Define $F: \mathbb{F}_{p^n}\rightarrow \mathbb{F}_{p^m}$ as
    \begin{equation}\label{8}
    F(x)=Tr_{m}^{n}(\alpha x^{2}).
    \end{equation}
  Then $F$ is a vectorial dual-bent function satisfying Condition II with $l=d=2, \varepsilon=-\epsilon^{n}\eta_{n}(\alpha)$.
  \item Let $m, t, s$ be positive integers with $t \mid m, 2 \mid s, s\neq 2$, $\alpha_{i} \in \mathbb{F}_{p^m}^{*}, 1\leq i \leq s$. Define $F: \mathbb{F}_{p^m}^{s}\rightarrow \mathbb{F}_{p^m}$ as
  \begin{equation}\label{9}
  F(x_{1}, \dots, x_{s})=\sum_{i=1}^{s}\alpha_{i}x_{i}^{2}.
  \end{equation}
  Then $F$ is a vectorial dual-bent function satisfying Condition II with $l=d=2, \varepsilon=\epsilon^{ms}\eta_{m}(\alpha_{1} \cdots \alpha_{s})$.
  \item Let $m, n, t$ be positive integers with $t \mid m, 2m \mid n, 2m\neq n$, $\alpha \in \mathbb{F}_{p^{\frac{n}{2}}}^{*}$. Define $F:
  \mathbb{F}_{p^n}\rightarrow \mathbb{F}_{p^m}$ as
  \begin{equation}\label{10}
  F(x)=Tr_{m}^{\frac{n}{2}}(\alpha x^{p^{\frac{n}{2}}+1}).
  \end{equation}
  Then $F$ is a vectorial dual-bent function satisfying Condition II with $l=d=2, \varepsilon=-1$.
  \end{itemize}
  In details, for $F$ defined by Eq. (8)-(10), $F$ satisfies the condition in Theorem 2, Theorem 3 (i), Corollary 1; $F$ satisfies the condition in Theorem 3 (ii) when $t=m$; $F$ satisfies the condition in Theorem 5 when $t=m=2jj'$ for some integers $j, j'$, and there is an integer $b\geq 2$ with $b \mid (p^{j}+1)$, where $j$ is the smallest such positive integer.
  \item Let $p$ be an odd prime. Let $m, t, n', n''$ be positive integers with $t \mid m, 2m \mid n', m \mid n''$. For $i \in \mathbb{F}_{p^m}$, let $H(i; x): \mathbb{F}_{p^{n'}}\rightarrow \mathbb{F}_{p^m}$ be given by $H(0; x)=Tr_{m}^{n'}(\alpha_{1}x^{2})$, $H(i; x)=Tr_{m}^{n'}(\alpha_{2}x^{2})$ if $i$ is a square in $\mathbb{F}_{p^m}^{*}$, $H(i; x)=Tr_{m}^{n'}(\alpha_{3}x^{2})$ if $i$ is a non-square in $\mathbb{F}_{p^m}^{*}$, where $\alpha_{1}, \alpha_{2}, \alpha_{3}$ are all square elements or all non-square elements in $\mathbb{F}_{p^{n'}}^{*}$. Let $G: \mathbb{F}_{p^{n''}} \times \mathbb{F}_{p^{n''}}\rightarrow \mathbb{F}_{p^m}$ be given by $G(y_{1}, y_{2})=Tr_{m}^{n''}(\beta y_{1}L(y_{2}))$, where $\beta \in \mathbb{F}_{p^{n''}}^{*}$, $L(x)=\sum a_{i}x^{p^{mi}}$ is a $p^{m}$-polynomial over $\mathbb{F}_{p^{n''}}$ inducing a permutation of $\mathbb{F}_{p^{n''}}$. Let $\gamma \in \mathbb{F}_{p^{n''}}^{*}$. Define $F: \mathbb{F}_{p^{n'}} \times \mathbb{F}_{p^{n''}} \times \mathbb{F}_{p^{n''}}\rightarrow \mathbb{F}_{p^m}$ as
      \begin{equation}\label{11}
      F(x, y_{1}, y_{2})=H(Tr_{m}^{n''}(\gamma y_{2}^{2}); x)+G(y_{1}, y_{2}).
      \end{equation}
  Then $F$ is a vectorial dual-bent function satisfying Condition II with $l=d=2, \varepsilon=-\epsilon^{n'}\eta_{n'}(\alpha_{1})$. In details, $F$ satisfies the condition in Theorem 2, Theorem 3 (i), Corollary 1; $F$ satisfies the condition in Theorem 3 (ii) when $t=m$; $F$ satisfies the condition in Theorem 5 when $t=m=2jj'$ for some integers $j, j'$, and there is an integer $b\geq 2$ with $b \mid (p^{j}+1)$, where $j$ is the smallest such positive integer.
\end{itemize}

\begin{remark}\label{Remark 1}
Let $p$ be odd. In Theorem 4.6 of \cite{WH2023Se}, Wang and Heng showed that when $F$ is defined by Eq. (10) with $t=m$, the linear code $C_{D_{F, H_{b}}}$ defined by Eq. (1) is self-orthogonal, where $b\geq 2, b \mid (p^m-1)$, and $H_{b}=\{x^b: x \in \mathbb{F}_{p^m}^{*}\}$. However, the weight distribution is not determined. They conjectured that $C_{D_{F, H_{b}}}$ has five weights. When $b=2$, or $m=2jj', b \mid (p^j+1)$ (where $j, j'$ are positive integers, and $j$ is the smallest such positive integer), we compute the weight distribution in Corollary 1 and Theorem 5, respectively. By Theorem 5, $C_{D_{F, H_{b}}}$ can be a six-weight self-orthogonal linear code.
\end{remark}

We give some examples to illustrate Theorems 2-5.
\begin{example}\label{Example 1}
Let $p=3, t=1, m=2, n=8$, and $I$ be a nonempty proper subset of $\mathbb{F}_{3^2}$. Let $F: \mathbb{F}_{3^8}\rightarrow \mathbb{F}_{3^2}$ be defined by $F(x)=Tr_{2}^{8}(x^{2})$. Then by Eq. (8), $F$ is a vectorial dual-bent function satisfying Condition II with $l=d=2, \varepsilon=-1$. By Theorem 2, $C_{D_{F, I}}$ defined by Eq. (1) is a $[738|I|-81\delta_{I}(0), 9]_{3}$ self-orthogonal code. Furthermore, when $I=\{0\}$, by Theorem 3 (i), $C_{D_{F, 0}}$ defined by Eq. (1) is a five-weight $[657, 9, 414]_{3}$ self-orthogonal code with weight enumerator $1+1312z^{414}+5904z^{432}+11808z^{441}+656z^{486}+2z^{657}$, and its dual code is a $[657, 648, 3]_{3}$ linear code which is at least almost optimal; when $I=H_{2}$, by Theorem 4, $C_{D_{F, H_{2}}}$ defined by Eq. (1) is a five-weight $[2952, 9, 1944]_{3}$ self-orthogonal code with weight enumerator $1+3608z^{1944}+5904z^{1953}+7216z^{1980}+2952z^{1998}+2z^{2952}$, and its dual code is a $[2952, 2943, 3]_{3}$ linear code which is at least almost optimal.
\end{example}

\begin{example}\label{Example 2}
Let $p=3, t=2, m=2, n=8$. Let $F: \mathbb{F}_{3^4} \times \mathbb{F}_{3^2} \times \mathbb{F}_{3^2}$ be defined by $F(x, y_{1}, y_{2})=y_{2}^{8}Tr_{2}^{4}((1-w^{2})x^{2})+Tr_{2}^{4}(w^{2}x^{2})+y_{1}y_{2}$, where $w$ is a primitive element of $\mathbb{F}_{3^4}$. Then by Eq. (11), $F$ is a vectorial dual-bent function satisfying Condition II with $l=d=2, \varepsilon=-1$. By Theorem 3 (ii), $C_{D_{F, 1}}$ defined by Eq. (1) is a four-weight $[738, 5, 648]_{9}$ self-orthogonal linear code with weight enumerator $1+27224z^{648}+11152z^{657}+20664z^{666}+8z^{738}$, and its dual code is a $[738, 733, 3]_{9}$ linear code which is at least almost optimal.
\end{example}

\begin{example} \label{Example 3}
Let $p=5, t=2, m=2, n=8, b=6$. Let $F: \mathbb{F}_{5^8} \rightarrow \mathbb{F}_{5^2}$ be defined by $F(x)=Tr_{2}^{4}(x^{626})$. Then by Eq. (10), $F$ is a vectorial dual-bent function satisfying Condition II with $l=d=2, \varepsilon=-1$. By Theorem 5, $C_{D_{F, H_{6}}}$ defined by Eq. (1) is a six-weight $[62600, 5, 60000]_{25}$ self-orthogonal code with weight enumerator $1+202824z^{60000}+3004800z^{60050}+4867776z^{60100}+1502400z^{60175}+187800z^{60200}+24z^{62600}$. This example shows that the self-orthogonal linear code given in Theorem 4.6 of \cite{WH2023Se} can have six-weights.
\end{example}

\section{Self-orthogonal codes from vectorial dual-bent functions with Condition III}
\label{sec: 5}
Note that in the third point of Condition II, the corresponding $\varepsilon_{F_{c}}$ is independent of $c$. In this section, when $\varepsilon_{F_{c}}$ depends on $c$, we construct self-orthogonal codes from vectorial dual-bent functions with the following condition:

Condition III: Let $p$ be an odd prime. Let $n, n_{j}, 1\leq j \leq s, m, t$ be positive integers for which $n=\sum_{j=1}^{s}n_{j}, t \mid n_{j}, 1\leq j \leq s, t \mid m, 2 \mid (n-m), 3m\leq n, (n, p^t)\neq (3, 3)$, and let $V_{n}^{(p)}=\mathbb{F}_{p^{n_{1}}} \times \mathbb{F}_{p^{n_{2}}} \times \dots \times \mathbb{F}_{p^{n_{s}}}$. Let $F: V_{n}^{(p)}\rightarrow \mathbb{F}_{p^m}$ be a vectorial dual-bent function satisfying
\begin{itemize}
  \item There is a vectorial dual $F^{*}$ such that $(F_{c})^{*}=(F^{*})_{c^{1-d}}, c \in \mathbb{F}_{p^m}^{*}$, where $gcd(d-1, p^m-1)=1$;
  \item $F(ax)=a^{l}F(x), a \in \mathbb{F}_{p^t}^{*}, x \in V_{n}^{(p)}$, and $F(0)=0$, where $(l-1)(d-1)\equiv 1 \mod (p^m-1)$;
  \item All component functions $F_{c}, c \in \mathbb{F}_{p^m}^{*}$, are weakly regular with $\varepsilon_{F_{c}}=\vartheta \eta_{m}(c), c \in \mathbb{F}_{p^m}^{*}$, where $\vartheta \in \{\pm \epsilon^{m}\}$ is a constant.
\end{itemize}

\subsection{Some lemmas}
\label{sec:5.1}

In this subsection, we give some useful lemmas.

\begin{lemma} \label{Lemma 10}
Let $F$ be a vectorial dual-bent function with Condition III. Then the vectorial dual $F^{*}$ with $(F_{c})^{*}=(F^{*})_{c^{1-d}}, c \in \mathbb{F}_{p^m}^{*}$, is a vectorial dual-bent function for which $((F^{*})_{c})^{*}=F_{c^{1-l}}, c \in \mathbb{F}_{p^m}^{*}$, $F^{*}(ax)=a^{d}F^{*}(x), a \in \mathbb{F}_{p^t}^{*}$, $F^{*}(0)=0$, and all component functions $(F^{*})_{c}, c \in \mathbb{F}_{p^m}^{*}$, are weakly regular with $\varepsilon_{(F^{*})_{c}}=\vartheta^{-1}\eta_{m}(c)$.
\end{lemma}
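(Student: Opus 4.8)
The plan is to run the argument of Lemma 5 almost verbatim; the only genuinely new feature is the need to keep track of the quadratic character $\eta_m$ hidden in the sign $\varepsilon_{F_c}=\vartheta\eta_m(c)$. The crucial preliminary observation is a parity fact: since $p$ is odd, $p^m-1$ is even, so $\gcd(l-1,p^m-1)=1$ forces $l-1$ to be odd, and then $(l-1)(d-1)\equiv 1\bmod(p^m-1)$ forces $d-1$ to be odd as well; hence \emph{both $l$ and $d$ are even}. Consequently $F(-x)=(-1)^lF(x)=F(x)$, $\eta_m(c^{1-l})=\eta_m(c)^{1-l}=\eta_m(c)$, and $\eta_m(a^{d})=\eta_m(a)^{d}=1$ for every $a\in\mathbb{F}_{p^t}^{*}$.

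Rewrite the first bullet of Condition III as $(F^{*})_c=(F_{c^{1-l}})^{*}$ (using $(1-l)(1-d)\equiv 1\bmod(p^m-1)$). Since $F_{c^{1-l}}$ is weakly regular bent with $\varepsilon_{F_{c^{1-l}}}=\vartheta\eta_m(c^{1-l})=\vartheta\eta_m(c)$, its dual $(F^{*})_c=(F_{c^{1-l}})^{*}$ is weakly regular bent with $\varepsilon_{(F^{*})_c}=\varepsilon_{F_{c^{1-l}}}^{-1}=(\vartheta\eta_m(c))^{-1}=\vartheta^{-1}\eta_m(c)$ and $((F^{*})_c)^{*}(x)=((F_{c^{1-l}})^{*})^{*}(x)=F_{c^{1-l}}(-x)=F_{c^{1-l}}(x)$; in particular $F^{*}$ is vectorial dual-bent. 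That $F^{*}(0)=0$ follows exactly as in Lemma 5 from Corollary 2 and Proposition 5 of \cite{CMP2021Ve}.

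It remains to prove the homogeneity $F^{*}(ax)=a^{d}F^{*}(x)$ for $a\in\mathbb{F}_{p^t}^{*}$. Applying the inverse Walsh transform to $(F^{*})_c=(F_{c^{1-l}})^{*}$ and then substituting $y\mapsto a^{d-1}y$ (a bijection of $V_{n}^{(p)}$ since $t\mid n_{j}$), one computes
\begin{equation*}
p^{n}\zeta_{p}^{(F^{*})_c(ax)}=\vartheta^{-1}\eta_m(c)\,p^{\frac{n}{2}}\sum_{y\in V_{n}^{(p)}}\zeta_{p}^{F_{c^{1-l}a^{d}}(y)+\sum_{j=1}^{s}Tr_{1}^{n_{j}}(a^{d}x_{j}y_{j})},
\end{equation*}
where we used $F(a^{d-1}y)=a^{l(d-1)}F(y)=a^{d}F(y)$ (from $l(d-1)\equiv d\bmod(p^m-1)$, hence $\bmod(p^t-1)$ since $t\mid m$) and $F(-y)=F(y)$. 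The last sum equals $W_{F_{c^{1-l}a^{d}}}(-a^{d}x)=\vartheta\eta_m(c)\,p^{\frac{n}{2}}\zeta_{p}^{(F^{*})_{ca^{d(1-d)}}(-a^{d}x)}$, the sign being $\varepsilon_{F_{c^{1-l}a^{d}}}=\vartheta\eta_m(c^{1-l}a^{d})=\vartheta\eta_m(c)$ by the parity facts. The two sign factors cancel; using $F^{*}(-a^{d}x)=F^{*}(a^{d}x)$ (as $d$ is even) we obtain $Tr_{1}^{m}(cF^{*}(ax))=Tr_{1}^{m}(c\,a^{d(1-d)}F^{*}(a^{d}x))$ for all $c\in\mathbb{F}_{p^m}$, whence $F^{*}(ax)=a^{d(1-d)}F^{*}(a^{d}x)$, i.e.\ $F^{*}(a^{1-d}x)=a^{d(1-d)}F^{*}(x)$. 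Replacing $a$ by $a^{1-l}$ and reducing exponents modulo $p^t-1$ (via $(1-l)(1-d)\equiv 1$, so $(1-l)d(1-d)\equiv d$) finally yields $F^{*}(ax)=a^{d}F^{*}(x)$.

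I expect the main obstacle to be purely bookkeeping: making sure every power of $a\in\mathbb{F}_{p^t}^{*}$ is consistently reduced modulo $p^t-1$ (permissible precisely because $t\mid m$ and $t\mid n_j$), and checking that the $\eta_m$-factors coming from the $c$-dependent signs cancel. That cancellation is exactly where $l,d$ even — hence $\eta_m(a^{d})=1$ and $\eta_m(c^{1-l})=\eta_m(c)$ — is indispensable; in Lemma 5 this issue was invisible because the sign there is the constant $\varepsilon\in\{\pm1\}$.
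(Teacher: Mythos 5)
Your argument follows the same route as the paper's proof, which handles the first three claims directly and then defers the homogeneity $F^{*}(ax)=a^{d}F^{*}(x)$ to ``similar arguments as in the proof of Lemma 5''; your explicit tracking of the $\eta_{m}$-factors via the parity facts ($l$ and $d$ both even, hence $\eta_{m}(c^{1-l})=\eta_{m}(c)$ and $\eta_{m}(a^{d})=1$) is exactly what makes that transfer work, and the cancellation $\varepsilon_{(F^{*})_{c}}\cdot\varepsilon_{F_{c^{1-l}a^{d}}}=(\vartheta^{-1}\eta_{m}(c))(\vartheta\,\eta_{m}(c))=1$ is correct.

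Two small repairs are needed. First, your justification of $F^{*}(-a^{d}x)=F^{*}(a^{d}x)$ ``as $d$ is even'' is circular: evenness of $d$ yields this only through the homogeneity you are in the middle of proving. The fact is nevertheless true and one line away: $(((F^{*})_{c})^{*})^{*}=(F_{c^{1-l}})^{*}=(F^{*})_{c^{(1-l)(1-d)}}=(F^{*})_{c}$, while the general identity $(g^{*})^{*}(x)=g(-x)$ gives $(((F^{*})_{c})^{*})^{*}(x)=(F^{*})_{c}(-x)$, so $(F^{*})_{c}(-x)=(F^{*})_{c}(x)$ for all $c\in\mathbb{F}_{p^m}^{*}$ and hence $F^{*}(-x)=F^{*}(x)$. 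Alternatively you can avoid the minus sign altogether by following Lemma 5's bookkeeping: since $W_{(F^{*})_{ca^{d(1-d)}}}(y)=\vartheta^{-1}\eta_{m}(c)p^{\frac{n}{2}}\zeta_{p}^{F_{c^{1-l}a^{d}}(y)}$ (using $\eta_{m}(a^{d(1-d)})=1$ and $(ca^{d(1-d)})^{1-l}=c^{1-l}a^{d}$), the final sum is an inverse Walsh transform equal to $p^{n}\zeta_{p}^{(F^{*})_{ca^{d(1-d)}}(a^{d}x)}$ with a plus sign, and no evenness of $F^{*}$ is needed. Second, for $F^{*}(0)=0$ the paper cites Proposition II.1 of \cite{OP2022Tw} here rather than the results of \cite{CMP2021Ve} invoked in Lemma 5; the latter are tailored to the uniform-sign setting of Conditions I/II, so ``exactly as in Lemma 5'' is not quite available under Condition III, although the conclusion is the same.
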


\begin{proof}
Since $F_{c}, c \in \mathbb{F}_{p^m}^{*}$, are all weakly regular bent with $\varepsilon_{F_{c}}=\vartheta\eta_{m}(c)$, for any $c \in \mathbb{F}_{p^m}^{*}$, we have that $(F^{*})_{c}=(F_{c^{1-l}})^{*}$ is weakly regular bent with $\varepsilon_{(F^{*})_{c}}=\vartheta^{-1}\eta_{m}(c^{1-l})=\vartheta^{-1}\eta_{m}(c)$ since $l$ is even. For any $c \in \mathbb{F}_{p^m}^{*}$, $((F^{*})_{c})^{*}(x)=((F_{c^{1-l}})^{*})^{*}(x)=F_{c^{1-l}}(-x)=F_{c^{1-l}}(x)$, and thus $F^{*}$ is vectorial dual-bent. By Proposition II.1 of \cite{OP2022Tw}, $(F^{*})_{c}(0)=(F_{c^{1-l}})^{*}(0)=0, c \in \mathbb{F}_{p^m}^{*}$, and then $F^{*}(0)=0$. With the similar arguments as in the proof of Lemma 5, we have $F^{*}(ax)=a^{d}F^{*}(x)$ for any $a \in \mathbb{F}_{p^t}^{*}, x \in V_{n}^{(p)}$.
\end{proof}

\begin{lemma} \label{Lemma 11}
Let $F$ be a vectorial dual-bent function with Condition III. Then the value distributions of $F$ and $F^{*}$ are give by\\
\begin{equation*}
\begin{split}
&|D_{F, 0}|=|D_{F^{*}, 0}|=p^{n-m}, \\
&|D_{F, i}|=p^{n-m}+\vartheta (-1)^{m-1}\epsilon^{m}\eta_{m}(-i)p^{\frac{n-m}{2}}, i \in \mathbb{F}_{p^m}^{*},\\
&|D_{F^{*}, i}|=p^{n-m}+\vartheta^{-1}(-1)^{m-1}\epsilon^{m}\eta_{m}(-i)p^{\frac{n-m}{2}}, i \in \mathbb{F}_{p^m}^{*}.
\end{split}
\end{equation*}
\end{lemma}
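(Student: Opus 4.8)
The plan is to compute the value distributions of $F$ and $F^{*}$ by a standard Walsh-transform / character-sum argument, exploiting the weak regularity and the homogeneity conditions from Condition III. First I would write, for each $i \in \mathbb{F}_{p^m}$,
\begin{equation*}
|D_{F, i}| = \frac{1}{p^m}\sum_{x \in V_{n}^{(p)}}\sum_{c \in \mathbb{F}_{p^m}}\zeta_{p}^{Tr_{1}^{m}(c(F(x)-i))} = p^{n-m} + \frac{1}{p^m}\sum_{c \in \mathbb{F}_{p^m}^{*}}\zeta_{p}^{-Tr_{1}^{m}(ci)}\,W_{F_{c}}(0),
\end{equation*}
where $W_{F_{c}}(0)=\sum_{x}\zeta_{p}^{F_{c}(x)}$. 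Since each $F_{c}$ is weakly regular bent with $\varepsilon_{F_{c}}=\vartheta\eta_{m}(c)$ and dual $(F_{c})^{*}=(F^{*})_{c^{1-d}}$, we have $W_{F_{c}}(0)=\vartheta\eta_{m}(c)\,p^{\frac{n}{2}}\zeta_{p}^{(F^{*})_{c^{1-d}}(0)}=\vartheta\eta_{m}(c)\,p^{\frac{n}{2}}$ because $F^{*}(0)=0$ by Lemma 10. Substituting this in gives
\begin{equation*}
|D_{F, i}| = p^{n-m} + \vartheta\, p^{\frac{n}{2}-m}\sum_{c \in \mathbb{F}_{p^m}^{*}}\eta_{m}(c)\,\zeta_{p}^{Tr_{1}^{m}(-ci)}.
\end{equation*}

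Next I would evaluate the inner Gauss-type sum. By Proposition 7 (with $q=p^m$, so "$m$" there is our $m$), $\sum_{c \in \mathbb{F}_{p^m}^{*}}\eta_{m}(c)\zeta_{p}^{Tr_{1}^{m}(-ci)} = (-1)^{m-1}\epsilon^{m}\eta_{m}(-i)\sqrt{p^m} = (-1)^{m-1}\epsilon^{m}\eta_{m}(-i)p^{m/2}$ when $i\neq 0$, and it equals $0$ when $i=0$ (since $\sum_{c}\eta_{m}(c)=0$). Plugging these two cases back yields $|D_{F,0}|=p^{n-m}$ and, for $i\in\mathbb{F}_{p^m}^{*}$,
\begin{equation*}
|D_{F, i}| = p^{n-m} + \vartheta\,(-1)^{m-1}\epsilon^{m}\eta_{m}(-i)\,p^{\frac{n}{2}-m}\cdot p^{\frac{m}{2}} = p^{n-m} + \vartheta\,(-1)^{m-1}\epsilon^{m}\eta_{m}(-i)\,p^{\frac{n-m}{2}},
\end{equation*}
which is exactly the claimed formula. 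For $F^{*}$ I would run the identical computation, now using Lemma 10, which tells us $F^{*}$ is again vectorial dual-bent with $F^{*}(0)=0$ and $\varepsilon_{(F^{*})_{c}}=\vartheta^{-1}\eta_{m}(c)$; the argument goes through verbatim with $\vartheta$ replaced by $\vartheta^{-1}$, giving the stated $|D_{F^{*},0}|=p^{n-m}$ and $|D_{F^{*},i}|=p^{n-m}+\vartheta^{-1}(-1)^{m-1}\epsilon^{m}\eta_{m}(-i)p^{\frac{n-m}{2}}$.

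There is no serious obstacle here: the only points needing care are (i) confirming $F^{*}(0)=0$ and the sign $\varepsilon_{(F^{*})_{c}}=\vartheta^{-1}\eta_{m}(c)$ — both supplied by Lemma 10 — and (ii) applying Proposition 7 with the correct field size $p^m$ and correct handling of the $i=0$ case. One should also note that $2\mid(n-m)$ (from Condition III) guarantees $p^{\frac{n-m}{2}}$ is an integer, so the formulas make sense; this is the only place the parity hypothesis is used in this lemma. Everything else is routine substitution.
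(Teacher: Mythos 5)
Your proposal is correct and follows essentially the same route as the paper: the paper invokes the identity $|D_{G,i}|=p^{n-m}+p^{-m}\sum_{c\neq 0}W_{G_c}(0)\zeta_p^{-Tr_1^m(ci)}$ (citing it as a known proposition rather than rederiving it), then uses $F^*(0)=0$ and Proposition 7 exactly as you do, and handles $F^*$ by the same symmetry via Lemma 10. The only cosmetic difference is that you treat $i=0$ as a separate case, whereas the paper absorbs it into the general formula through the convention $\eta_m(0)=0$.
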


\begin{proof}
By Proposition 1 of \cite{WF2023Ne}, for any $i \in \mathbb{F}_{p^m}$ and any function $G: V_{n}^{(p)}\rightarrow \mathbb{F}_{p^m}$,
\begin{equation*}
|D_{G, i}|=p^{n-m}+p^{-m}\sum_{c \in \mathbb{F}_{p^m}^{*}}W_{G_{c}}(0)\zeta_{p}^{-Tr_{1}^{m}(ci)}.
\end{equation*}
Since $F$ is a vectorial dual-bent function with Condition III, we have
\begin{small}
\begin{equation*}
\begin{split}
|D_{F, i}|&=p^{n-m}+\vartheta p^{\frac{n}{2}-m}\sum_{c \in \mathbb{F}_{p^m}^{*}}\eta_{m}(c)\zeta_{p}^{(F^{*})_{c^{1-d}}(0)-Tr_{1}^{m}(ci)}\\
&=p^{n-m}+\vartheta p^{\frac{n}{2}-m}\sum_{c \in \mathbb{F}_{p^m}^{*}}\eta_{m}(c)\zeta_{p}^{-Tr_{1}^{m}(ci)}\\
&=p^{n-m}+\vartheta (-1)^{m-1}\epsilon^{m}\eta_{m}(-i)p^{\frac{n-m}{2}},
\end{split}
\end{equation*}
\end{small}where in the second equation we use $F^{*}(0)=0$, and in the last equation we use Proposition 7. By Lemma 10, $F^{*}$ is also a vectorial dual-bent function with Condition III. Then with the similar computation as for $F$, we have $|D_{F^{*}, i}|=p^{n-m}+\vartheta^{-1}(-1)^{m-1}\epsilon^{m}\eta_{m}(-i)p^{\frac{n-m}{2}}$.
\end{proof}

\begin{lemma} \label{Lemma 12}
Let $F$ be a vectorial dual-bent function with Condition III. For any $a \in \mathbb{F}_{p^m}$ and $\alpha \in V_{n}^{(p)} \backslash \{0\}, \beta \in \mathbb{F}_{p^t}$, define
\begin{equation*}
N_{a, \alpha, \beta}=|\{x \in V_{n}^{(p)}: F(x)=a, \sum_{j=1}^{s}Tr_{t}^{n_{j}}(\alpha_{j}x_{j})+\beta=0\}|.
\end{equation*}
Then
\begin{equation*}
N_{a, \alpha, \beta}=\left\{
\begin{split}
& \vartheta p^{\frac{n}{2}-m}\sum_{y \in \mathbb{F}_{p^m}^{*}}\eta_{m}(y)\delta_{0}(Tr_{t}^{m}(F^{*}(\alpha)y-ay^{1-l})-\beta)\\
& \  +\vartheta (-1)^{m-1}\epsilon^{m} \eta_{m}(-a)p^{\frac{n-m}{2}-t}+p^{n-m-t},  \ \text{ if } 2 \mid \frac{m}{t},\\
&\vartheta(-1)^{t-1}\epsilon^{t}p^{\frac{n-t}{2}-m}\sum_{y \in \mathbb{F}_{p^m}^{*}}\eta_{m}(y)\eta_{t}(Tr_{t}^{m}(F^{*}(\alpha)y-ay^{1-l})-\beta)\\
& \ +\vartheta (-1)^{m-1}\epsilon^{m} \eta_{m}(-a)p^{\frac{n-m}{2}-t}+p^{n-m-t}, \ \text{ if } 2 \nmid \frac{m}{t},
\end{split}
\right.
\end{equation*}
for $a \in \mathbb{F}_{p^m}^{*}$, and
\begin{equation*}
N_{0, \alpha, \beta}=\vartheta (-1)^{m-1}\epsilon^{m}\eta_{m}(F^{*}(\alpha))p^{\frac{n-m}{2}-t}(p^t \delta_{0}(\beta)-1)+p^{n-m-t}.
\end{equation*}
\end{lemma}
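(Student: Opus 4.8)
The plan is to compute $N_{a,\alpha,\beta}$ by the standard orthogonality-of-characters expansion, reducing it first to Walsh transforms of the component functions $F_c$ and then, via Condition III, to character sums over $\mathbb{F}_{p^m}^{*}$ that can be evaluated with Propositions 7 and 8. First I would write
$$
N_{a,\alpha,\beta}=p^{-m-t}\sum_{x\in V_n^{(p)}}\Big(\sum_{y\in\mathbb{F}_{p^m}}\zeta_p^{Tr_1^m(y(F(x)-a))}\Big)\Big(\sum_{z\in\mathbb{F}_{p^t}}\zeta_p^{Tr_1^t(z(\sum_j Tr_t^{n_j}(\alpha_jx_j)+\beta))}\Big).
$$
Expanding and isolating the terms $y=0$ and $z=0$ gives the ``main'' contribution $p^{n-m-t}$ (from $y=0,z=0$) plus zero from $y=0,z\neq0$ and $y\neq 0,z=0$ (the latter because $F$ is balanced by Lemma 11, or directly because $\sum_x\zeta_p^{Tr_1^m(yF(x))}=W_{F_y}(0)=0$ when $F_y$ is bent of even dimension with $F(0)=0$; actually $W_{F_y}(0)=\varepsilon_{F_y}p^{n/2}\zeta_p^{(F_y)^*(0)}$, so one must keep it). Being careful: the $z=0,y\neq 0$ term contributes $p^{-m}\sum_{y\neq 0}W_{F_y}(0)\zeta_p^{-Tr_1^m(ya)}$, which is exactly $|D_{F,a}|-p^{n-m}$ by Proposition~1 of \cite{WF2023Ne}; this is where the $\vartheta(-1)^{m-1}\epsilon^m\eta_m(-a)p^{(n-m)/2-t}$-type term (and its vanishing when $a=0$) comes from, after dividing by $p^t$.

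The remaining ``cross'' sum is over $y\in\mathbb{F}_{p^m}^{*}$ and $z\in\mathbb{F}_{p^t}^{*}$. For fixed such $y,z$, the inner $x$-sum is $W_{F_y}(z\alpha)\cdot\zeta_p^{Tr_1^m(-ya)+Tr_1^t(z\beta)}$ up to sign conventions, and by Condition III $W_{F_y}(z\alpha)=\vartheta\eta_m(y)p^{n/2}\zeta_p^{(F_y)^*(z\alpha)}$ with $(F_y)^*=(F^*)_{y^{1-d}}$, hence $\langle y^{1-d},\,$actually one uses $(F_y)^*(z\alpha)=Tr_1^m(y^{1-d}F^*(z\alpha))$ and then $F^*(z\alpha)=z^dF^*(\alpha)$ by Lemma 10 together with $z\in\mathbb{F}_{p^t}^{*}$; combining the exponents and substituting $y\mapsto y z^{d(d-1)}$ or similar to absorb the $z$-powers, the $z$-sum collapses to a single inner sum. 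Precisely, after the substitution the phase becomes $Tr_1^m(yF^*(\alpha)z^{?})-Tr_1^m(ay^{1-l}z^{?})+Tr_1^t(z\beta)$ and the $z\in\mathbb{F}_{p^t}^{*}$-sum over $\zeta_p^{Tr_1^t(z(\cdots))}$ yields $p^t\delta_0(\cdots)-1$ when the bracket lies in $\mathbb{F}_{p^t}$ — which it does because $Tr_t^m$ lands in $\mathbb{F}_{p^t}$. This produces the term $\vartheta p^{n/2-m}\sum_{y\in\mathbb{F}_{p^m}^{*}}\eta_m(y)\,\big(p^t\delta_0(Tr_t^m(F^*(\alpha)y-ay^{1-l})-\beta)-1\big)/p^t$; the ``$-1$'' part is $-\vartheta p^{n/2-m-t}\sum_{y\neq 0}\eta_m(y)=0$, leaving exactly the stated $\delta_0$-sum.

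The case split on the parity of $m/t$ enters only in this last step: when $2\mid m/t$, the sum $\sum_{y\in\mathbb{F}_{p^m}^{*}}\eta_m(y)\delta_0(\cdots)$ is already in the claimed form; when $2\nmid m/t$, one must rewrite $\delta_0(u)$ for $u\in\mathbb{F}_{p^t}$ as $p^{-t}\sum_{s\in\mathbb{F}_{p^t}}\zeta_p^{Tr_1^t(su)}$, swap the order of summation, evaluate the inner Gauss sum $\sum_{y\neq 0}\eta_m(y)\zeta_p^{Tr_1^m(\cdot\, y)}$ via Proposition 7 (getting $(-1)^{m-1}\epsilon^m\eta_m(\cdot)\sqrt{p^m}$), and then resum over $s$ using Proposition 7 again in dimension $t$ — this is what introduces the factor $(-1)^{t-1}\epsilon^t$ and converts the $\delta_0$ into $\eta_t(\cdots)$. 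The main obstacle is bookkeeping: correctly tracking the $z$-power substitutions so that $l=1+?$, $d$, and the exponents $1-l$, $1-d$ all land where Lemma 10's homogeneity $F^*(ax)=a^dF^*(x)$ and the relation $(l-1)(d-1)\equiv1$ can be applied, and verifying that every residual pure-additive-character or pure-quadratic-character sum over $\mathbb{F}_{p^m}^{*}$ that does not already match the statement genuinely vanishes (using $\sum_{y\in\mathbb{F}_{p^m}^{*}}\eta_m(y)=0$ and $\sum_{y\in\mathbb{F}_{p^m}}\zeta_p^{Tr_1^m(cy)}=p^m\delta_0(c)$). The condition $3m\le n$ and $(n,p^t)\neq(3,3)$ are not needed for this identity itself (they are used later for self-orthogonality), so I would not invoke them here.
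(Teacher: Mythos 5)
Your setup coincides with the paper's: the orthogonality expansion giving Eq.~(16), the identification of the $z=0$ contribution with $(|D_{F,a}|-p^{n-m})/p^{t}$ via Lemma~11 (which is where $\vartheta(-1)^{m-1}\epsilon^{m}\eta_{m}(-a)p^{\frac{n-m}{2}-t}$ comes from and why it vanishes for $a=0$), and the use of $(F_{y})^{*}=(F^{*})_{y^{1-d}}$ together with $F^{*}(z\alpha)=z^{d}F^{*}(\alpha)$ and a change of variable in $y$ to collapse the cross term. The gap is in how the parity of $\frac{m}{t}$ enters. The Walsh coefficient is $\varepsilon_{F_{y}}=\vartheta\eta_{m}(y)$, and after the substitution that absorbs the powers of $z$ (explicitly $y=z\,y'^{\,1-l}$ with $y'=(z/y)^{d-1}$), this factor becomes $\eta_{m}(z\,y'^{\,1-l})=\eta_{m}(z)\eta_{m}(y')$ because $1-l$ is odd. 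The residual $\eta_{m}(z)$ sits \emph{inside} the sum over $z\in\mathbb{F}_{p^{t}}^{*}$: it is identically $1$ on $\mathbb{F}_{p^{t}}^{*}$ exactly when $2\mid\frac{m}{t}$, giving $\sum_{z}\zeta_{p}^{Tr_{1}^{t}(zu)}=p^{t}\delta_{0}(u)-1$ and hence the $\delta_{0}$-form; and it equals $\eta_{t}(z)$ when $2\nmid\frac{m}{t}$, in which case the $z$-sum is the Gauss sum $(-1)^{t-1}\epsilon^{t}\eta_{t}(u)p^{t/2}$ of Proposition~7 in dimension $t$ --- that is the sole source of the factor $(-1)^{t-1}\epsilon^{t}$ and of the conversion of $\delta_{0}$ into $\eta_{t}$.

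Your proposal drops this $\eta_{m}(z)$ and evaluates the $z$-sum as $p^{t}\delta_{0}(\cdot)-1$ in all cases, then tries to recover the odd-case formula by rewriting $\delta_{0}(u)=p^{-t}\sum_{s}\zeta_{p}^{Tr_{1}^{t}(su)}$ and resumming. That cannot work: the two displayed expressions in the lemma are genuinely different quantities in general, so no identity-preserving rewriting of the $\delta_{0}$-expression can yield the $\eta_{t}$-expression; moreover, the inner sum your rewriting creates, $\sum_{y\in\mathbb{F}_{p^{m}}^{*}}\eta_{m}(y)\zeta_{p}^{Tr_{1}^{m}(s(F^{*}(\alpha)y-ay^{1-l}))}$, involves both $y$ and $y^{1-l}$ in the exponent and is not a Gauss sum, so Proposition~7 does not apply to it. The rest of your outline (the vanishing of the $-1$ tail via $\sum_{y}\eta_{m}(y)=0$, the fact that the $a=0$ formula is the specialization of the general one, and the observation that $3m\le n$ and $(n,p^{t})\ne(3,3)$ are not needed for this identity) is correct and consistent with the paper.
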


\begin{proof}
The proof of Lemma 12 is given in Appendix-Section IX.
\end{proof}

\subsection{Self-orthogonal codes constructed from vectorial dual-bent functions with Condition III}
\label{sec:5.2}

In this subsection, by using vectorial dual-bent function $F$ with Condition III, we show that for some sets $I$, linear code $C_{D_{F, I}}$ defined by Eq. (1) is self-orthogonal, and the weight distribution of $C_{D_{F, I}}$ can be completely determined, which is at most six-weight.

\begin{theorem}\label{Theorem 6}
Let $F$ be a vectorial dual-bent function with Condition III, and let $C_{D_{F, I}}$ be defined by Eq. (1) with $I=\{a\}$, where $a \in \mathbb{F}_{p^m}$.

$\mathrm{(i)}$ If $a=0$, then $C_{D_{F, I}}$ is a six-weight $[p^{n-m}, \frac{n}{t}+1]_{p^t}$ self-orthogonal linear code whose weight distribution is given by Table 8.

$\mathrm{(ii)}$ If $a \in \mathbb{F}_{p^m}^{*}$ and $t=m, l=2$, then $C_{D_{F, I}}$ is a at most six-weight $[p^{n-m}+\vartheta (-1)^{m-1}\epsilon^{m}\eta_{m}(-a)p^{\frac{n-m}{2}}, $ $\frac{n}{m}+1]_{p^m}$ self-orthogonal linear code whose weight distribution is given by Table 9.
\end{theorem}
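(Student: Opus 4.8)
The plan is to run the same machine as in the proofs of Theorems~1--3. By Lemma~11 the length of $C_{D_{F,I}}$ is $|D_{F,0}|=p^{n-m}$ in case (i) and $|D_{F,a}|=p^{n-m}+\vartheta(-1)^{m-1}\epsilon^{m}\eta_{m}(-a)p^{\frac{n-m}{2}}$ in case (ii). For every $\alpha\in V_{n}^{(p)}\setminus\{0\}$ and $\beta\in\mathbb{F}_{p^t}$ we have $wt(c_{\alpha,\beta})=|D_{F,I}|-N_{a,\alpha,\beta}$ with $N_{a,\alpha,\beta}$ given by Lemma~12, while $\alpha=0$ contributes the weights $0$ (for $\beta=0$) and $|D_{F,I}|$ (for $\beta\neq 0$). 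So the proof reduces to turning the formulas of Lemma~12 into a finite list of weight values, counting how many pairs $(\alpha,\beta)$ realize each value (using Lemma~11), and checking self-orthogonality. The bound $N_{a,\alpha,\beta}<|D_{F,I}|$ needed to conclude $\dim C_{D_{F,I}}=\frac{n}{t}+1$ follows by inspection of these values exactly as in the proof of Theorem~2 (it only uses $n>m$).

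For part (i) the relevant formula is $N_{0,\alpha,\beta}=\vartheta(-1)^{m-1}\epsilon^{m}\eta_{m}(F^{*}(\alpha))p^{\frac{n-m}{2}-t}(p^t\delta_{0}(\beta)-1)+p^{n-m-t}$, so $wt(c_{\alpha,\beta})$ depends only on $\delta_{0}(\beta)$ and on whether $F^{*}(\alpha)$ is $0$, a nonzero square, or a nonsquare; when $F^{*}(\alpha)=0$ the two values of $\delta_{0}(\beta)$ give the same weight, so in all one obtains the six values of Table~8. The multiplicities come from Lemma~11: there are $p^{n-m}-1$ nonzero $\alpha$ with $F^{*}(\alpha)=0$, and $\sum_{c\neq 0,\ \eta_{m}(c)=1}|D_{F^{*},c}|$ (resp. $\sum_{c\neq 0,\ \eta_{m}(c)=-1}|D_{F^{*},c}|$) nonzero $\alpha$ with $F^{*}(\alpha)$ a square (resp. a nonsquare), each paired with one choice $\beta=0$ and $p^t-1$ choices $\beta\neq 0$. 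Since $p$ is odd, $p^t-1\geq 2$, and using $3m\leq n$ (hence $n>m$) one checks the six values are pairwise distinct and all realized, so the code is exactly six-weight.

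For part (ii), $t=m$ and $l=2$, so $Tr_{m}^{m}$ is the identity and $y^{1-l}=y^{-1}$; since $\frac{m}{t}=1$ is odd the second branch of Lemma~12 applies, and writing $\eta_{m}(y)\eta_{m}(F^{*}(\alpha)y-ay^{-1}-\beta)=\eta_{m}(F^{*}(\alpha)y^{2}-\beta y-a)$ reduces the computation of $N_{a,\alpha,\beta}$ to evaluating $\sum_{y\in\mathbb{F}_{p^m}^{*}}\eta_{m}(F^{*}(\alpha)y^{2}-\beta y-a)$: directly when $F^{*}(\alpha)=0$, and by Proposition~8 (after peeling off the $y=0$ term) when $F^{*}(\alpha)\neq 0$, the discriminant being $\beta^{2}+4aF^{*}(\alpha)$. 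One then splits into $F^{*}(\alpha)=0$ with $\beta=0$ or $\beta\neq 0$, and $F^{*}(\alpha)\neq 0$ with $\beta^{2}+4aF^{*}(\alpha)$ zero or nonzero. The point keeping the weight count low is that, since $4$ is a square, the discriminant vanishes for some $\beta$ only when $\eta_{m}(F^{*}(\alpha))=\eta_{m}(-a)$, and then for exactly two values of $\beta$; this yields the at most six weights of Table~9, whose multiplicities are again read off from $|D_{F^{*},c}|$ in Lemma~11 together with this count of $\beta$'s.

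Finally, self-orthogonality: $p$ is odd, so by Proposition~3 it suffices that $c_{\alpha,\beta}\cdot c_{\alpha,\beta}=0$. For $\alpha=0$ this is $\beta^{2}|D_{F,I}|=0$ because $3m\leq n$ forces $p\mid|D_{F,I}|$. For $\alpha\neq 0$, $c_{\alpha,\beta}\cdot c_{\alpha,\beta}=\sum_{v\in\mathbb{F}_{p^t}^{*}}N_{a,\alpha,\beta-v}v^{2}$; the only $\beta$-dependent piece of $N_{a,\alpha,\cdot}$ is divisible by $p^t$, hence by $p$, so $N_{a,\alpha,\beta-v}\bmod p$ equals a constant $\bar N_{\alpha}$ independent of $v$, whence $c_{\alpha,\beta}\cdot c_{\alpha,\beta}=\bar N_{\alpha}\sum_{v\in\mathbb{F}_{p^t}^{*}}v^{2}$ in $\mathbb{F}_{p^t}$. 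This vanishes when $p^t>3$ since then $\sum_{v\in\mathbb{F}_{p^t}^{*}}v^{2}=0$, and in the remaining case $p^t=3$ the hypotheses $3m\leq n$ and $(n,p^t)\neq(3,3)$ are exactly what force $\bar N_{\alpha}\equiv 0\pmod 3$. The step I expect to be the main obstacle is the bookkeeping in part (ii): correctly partitioning, for each $\alpha$ with $F^{*}(\alpha)\neq 0$, the $p^m$ values of $\beta$ according to the sign of the discriminant and to $\eta_{m}(F^{*}(\alpha))$, and confirming that the resulting weight values really collapse to at most six --- together with the careful treatment of the $n=3m$, $p^t=3$ corner in the self-orthogonality argument.
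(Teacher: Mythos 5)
Your proposal is correct and follows essentially the same route as the paper: length and multiplicities from Lemma~11, the weight values from Lemma~12 (with Proposition~8 and the discriminant $\beta^{2}+4aF^{*}(\alpha)$ in part (ii)), and self-orthogonality via Proposition~3. Your mod-$p$ packaging of the self-orthogonality step --- that the $\beta$-variation of $N_{a,\alpha,\beta}$ is a multiple of $p^{\frac{n-m}{2}}$ (coming from $-\eta_{m}(F^{*}(\alpha))\equiv(p^{m}-1)\eta_{m}(F^{*}(\alpha))\pmod{p^{m}}$), so that $c_{\alpha,\beta}\cdot c_{\alpha,\beta}=\bar N_{\alpha}\sum_{v}v^{2}$ --- is exactly the mechanism behind the paper's explicit computation of $\sum_{i}N_{a,\alpha,-i}i^{2}$ and $\sum_{i}N_{a,\alpha,-i}i$ with the two roots $\pm j$ of the discriminant, including the $(n,p^{t})\neq(3,3)$ corner.
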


\renewcommand{\thetable}{8}
\begin{table}\label{8}\scriptsize
  \centering
  \caption{The weight distribution of $C_{D_{F, I}}$ constructed in Theorem 6 (i)}
  \begin{tabular}{|c|c|}
    \hline
    Hamming weight & Multiplicity \\ \hline
    $0$ & $1$ \\
    $p^{n-m}$ & $p^t-1$ \\
    $p^{n-m-t}(p^t-1)$ & $p^{t}(p^{n-m}-1)$ \\
    $(p^{n-m-t}-\vartheta(-1)^{m-1}\epsilon^{m}p^{\frac{n-m}{2}-t})(p^t-1)$ & $(p^{n-m}+\vartheta^{-1} (-1)^{m-1}\epsilon^{m}\eta_{m}(-1)p^{\frac{n-m}{2}})\frac{p^m-1}{2}$ \\
    $p^{n-m-t}(p^t-1)+\vartheta (-1)^{m-1}\epsilon^{m}p^{\frac{n-m}{2}-t}$ & $(p^t-1)(p^{n-m}+\vartheta^{-1} (-1)^{m-1}\epsilon^{m}\eta_{m}(-1)p^{\frac{n-m}{2}})\frac{p^m-1}{2}$ \\
    $(p^{n-m-t}+\vartheta(-1)^{m-1}\epsilon^{m}p^{\frac{n-m}{2}-t})(p^t-1)$ & $(p^{n-m}-\vartheta^{-1} (-1)^{m-1}\epsilon^{m}\eta_{m}(-1)p^{\frac{n-m}{2}})\frac{p^m-1}{2}$ \\
    $p^{n-m-t}(p^t-1)-\vartheta (-1)^{m-1}\epsilon^{m}p^{\frac{n-m}{2}-t}$ &
    $(p^t-1)(p^{n-m}-\vartheta^{-1} (-1)^{m-1}\epsilon^{m}\eta_{m}(-1)p^{\frac{n-m}{2}})\frac{p^m-1}{2}$\\
    \hline
  \end{tabular}
\end{table}

\renewcommand{\thetable}{9}
\begin{table}\label{9}\scriptsize
  \centering
  \caption{The weight distribution of $C_{D_{F, I}}$ constructed in Theorem 6 (ii)}
  \begin{tabular}{|c|c|}
    \hline
    Hamming weight & Multiplicity  \\ \hline
    $0$ & $1$ \\
    $p^{n-m}+\vartheta (-1)^{m-1}\epsilon^{m}\eta_{m}(-a)p^{\frac{n-m}{2}}$ & $p^m-1$ \\
    $p^{n-2m}(p^m-1)$ & $p^{n-m}-1$ \\
    $p^{n-2m}(p^m-1)+\vartheta (-1)^{m-1}\epsilon^{m}\eta_{m}(-a)p^{\frac{n-m}{2}}$ & $(p^m-1)(p^{n-m}-1)$ \\
    $p^{n-2m}(p^m-1)+\vartheta (-1)^{m-1}\epsilon^{m}p^{\frac{n-3m}{2}}(p^m\eta_{m}(-a)+1)$ & $(p^{n-m}+\vartheta^{-1} (-1)^{m-1}\epsilon^{m}\eta_{m}(-1)p^{\frac{n-m}{2}})\frac{(p^m-1)(p^m-1-\eta_{m}(-a))}{2}$ \\
    $p^{n-2m}(p^m-1)+\vartheta (-1)^{m-1}\epsilon^{m}p^{\frac{n-3m}{2}}(p^m\eta_{m}(-a)-1)$ & $(p^{n-m}-\vartheta^{-1} (-1)^{m-1}\epsilon^{m}\eta_{m}(-1)p^{\frac{n-m}{2}})\frac{(p^m-1)(p^m-1+\eta_{m}(-a))}{2}$ \\
    $p^{n-2m}(p^m-1)+\vartheta (-1)^{m-1}\epsilon^{m}\eta_{m}(-a)p^{\frac{n-3m}{2}}$ &
    $(p^m-1)(p^{n-m}+\vartheta^{-1} (-1)^{m-1}\epsilon^{m}\eta_{m}(a)p^{\frac{n-m}{2}})$\\
    \hline
  \end{tabular}
\end{table}

\begin{proof}
Denote $\mathcal{S}=\{x^{2}: x \in \mathbb{F}_{p^m}^{*}\}, \mathcal{N}=\mathbb{F}_{p^m}^{*} \backslash \mathcal{S}$.

(i) By Lemma 11, then length of $C_{D_{F, 0}}$ is $|D_{F, 0}|=p^{n-m}$. When $\alpha=0, \beta=0$, $wt(c_{\alpha, \beta})=0$. When $\alpha=0, \beta \in \mathbb{F}_{p^t}^{*}$, $wt(c_{\alpha, \beta})=|D_{F, 0}|$. When $\alpha \in V_{n}^{(p)} \backslash \{0\}, \beta \in \mathbb{F}_{p^t}$, $wt(c_{\alpha, \beta})=|D_{F, 0}|-N_{0, \alpha, \beta}$, where $N_{0, \alpha, \beta}=|\{x \in V_{n}^{(p)}: F(x)=0, \sum_{j=1}^{s}Tr_{t}^{n_{j}}(\alpha_{j}x_{j})+\beta=0\}|$.

By Lemma 12,
\begin{itemize}
  \item when $\alpha\neq 0, F^{*}(\alpha)=0, \beta \in \mathbb{F}_{p^t}$, $N_{0, \alpha, \beta}=p^{n-m-t}$ and $wt(c_{\alpha, \beta})=p^{n-m-t}(p^t-1)$;
  \item when $\alpha \neq 0, F^{*}(\alpha)\in \mathcal{S}, \beta=0$, $N_{0, \alpha, \beta}=\vartheta (-1)^{m-1}\epsilon^{m}p^{\frac{n-m}{2}-t}(p^t-1)+p^{n-m-t}$ and $wt(c_{\alpha, \beta})=(p^{n-m-t}-\vartheta(-1)^{m-1}\epsilon^{m}p^{\frac{n-m}{2}-t})(p^t-1)$;
  \item when $\alpha \neq 0, F^{*}(\alpha)\in \mathcal{S}, \beta\neq 0$, $N_{0, \alpha, \beta}=-\vartheta (-1)^{m-1}\epsilon^{m}p^{\frac{n-m}{2}-t}+p^{n-m-t}$ and $wt(c_{\alpha, \beta})=p^{n-m-t}(p^t-1)+\vartheta (-1)^{m-1}\epsilon^{m}p^{\frac{n-m}{2}-t}$;
  \item when $\alpha\neq 0, F^{*}(\alpha)\in \mathcal{N}, \beta=0$, $N_{0, \alpha, \beta}=-\vartheta (-1)^{m-1}\epsilon^{m}p^{\frac{n-m}{2}-t}(p^t-1)+p^{n-m-t}$ and $wt(c_{\alpha, \beta})=(p^{n-m-t}+\vartheta(-1)^{m-1}\epsilon^{m}p^{\frac{n-m}{2}-t})(p^t-1)$;
  \item when $\alpha \neq 0, F^{*}(\alpha)\in \mathcal{N}, \beta\neq 0$, $N_{0, \alpha, \beta}=\vartheta (-1)^{m-1}\epsilon^{m}p^{\frac{n-m}{2}-t}+p^{n-m-t}$ and $wt(c_{\alpha, \beta})=p^{n-m-t}(p^t-1)-\vartheta (-1)^{m-1}\epsilon^{m}p^{\frac{n-m}{2}-t}$.
\end{itemize}
We can see that $wt(c_{\alpha, \beta})=0$ if and only if $\alpha=0, \beta=0$, thus the dimension of $C_{D_{F, 0}}$ is $\frac{n}{t}+1$. The weight distribution of $C_{D_{F, 0}}$ follows from the above arguments and Lemma 11.

When $\alpha=0, \beta \in \mathbb{F}_{p^t}$, $c_{\alpha, \beta} \cdot c_{\alpha, \beta}=\beta^{2}|D_{F, 0}|=0$ since $p \mid |D_{F, 0}|$. When $\alpha\neq 0$, by Lemma 12, for any $i \in \mathbb{F}_{p^t}^{*}$, the value $N_{0, \alpha, -i}$ is independent of $i$, and $p \mid N_{0, \alpha, -i}$ when $p^t=3$ as $(p^t, n)\neq (3, 3)$. Then for any $\alpha \in V_{n}^{(p)} \backslash \{0\}, \beta \in \mathbb{F}_{p^t}$, $c_{\alpha, \beta} \cdot c_{\alpha, \beta}=\sum_{i \in \mathbb{F}_{p^t}^{*}}N_{0, \alpha, -i}i^{2}+2\beta \sum_{i \in \mathbb{F}_{p^t}^{*}}N_{0, \alpha, -i}i+\beta^{2}|D_{F, 0}|=0$. By Proposition 3, $C_{D_{F, 0}}$ is self-orthogonal.

(ii) By Lemma 11, the length of $C_{D_{F, a}}$ is $|D_{F, a}|=p^{n-m}+\vartheta (-1)^{m-1}\epsilon^{m}\eta_{m}(-a)p^{\frac{n-m}{2}}$. When $\alpha=0, \beta=0$, $wt(c_{\alpha, \beta})=0$. When $\alpha=0, \beta \in \mathbb{F}_{p^m}^{*}$, $wt(c_{\alpha, \beta})=|D_{F, a}|$. When $\alpha \in V_{n}^{(p)} \backslash \{0\}, \beta \in \mathbb{F}_{p^m}$, $wt(c_{\alpha, \beta})=|D_{F, a}|-N_{a, \alpha, \beta}$, where $N_{a, \alpha, \beta}=|\{x\in V_{n}^{(p)}: F(x)=a, \sum_{j=1}^{s}Tr_{m}^{n_{j}}(\alpha_{j}x_{j})+\beta=0\}|$. By Lemma 12 and Proposition 8, for any $\alpha \in V_{n}^{(p)} \backslash \{0\}, \beta \in \mathbb{F}_{p^m}$,
\begin{small}
\begin{equation*}
\begin{split}
&N_{a, \alpha, \beta}\\
&=\vartheta (-1)^{m-1}\epsilon^{m}p^{\frac{n-3m}{2}}\sum_{y \in \mathbb{F}_{p^m}^{*}}\eta_{m}(F^{*}(\alpha)y^{2}-\beta y-a)+\vartheta (-1)^{m-1}\epsilon^{m}\eta_{m}(-a)p^{\frac{n-3m}{2}}+p^{n-2m}\\
&=\vartheta (-1)^{m-1}\epsilon^{m}p^{\frac{n-3m}{2}}\sum_{y \in \mathbb{F}_{p^m}}\eta_{m}(F^{*}(\alpha)y^{2}-\beta y-a)+p^{n-2m}\\
&=\left\{
\begin{split}
\vartheta (-1)^{m-1}\epsilon^{m}\eta_{m}(-a)p^{\frac{n-m}{2}}+p^{n-2m}, & \ \text{ if } F^{*}(\alpha)=0, \beta=0, \\
p^{n-2m}, & \ \text{ if } F^{*}(\alpha)=0, \beta\neq 0 , \\
-\vartheta (-1)^{m-1}\epsilon^{m}\eta_{m}(F^{*}(\alpha))p^{\frac{n-3m}{2}}+p^{n-2m}, & \ \text{ if } F^{*}(\alpha)\neq 0, \beta^{2}+4aF^{*}(\alpha)\neq 0, \\
\vartheta (-1)^{m-1}\epsilon^{m}\eta_{m}(F^{*}(\alpha))p^{\frac{n-3m}{2}}(p^m-1)+p^{n-2m}, & \ \text{ if } F^{*}(\alpha)\neq 0, \beta^{2}+4aF^{*}(\alpha)=0,
\end{split}
\right. \\
\end{split}
\end{equation*}
\begin{equation*}
\begin{split}
&=\left\{
\begin{split}
\vartheta (-1)^{m-1}\epsilon^{m}\eta_{m}(-a)p^{\frac{n-m}{2}}+p^{n-2m}, & \ \text{ if } F^{*}(\alpha)=0, \beta=0, \\
p^{n-2m}, & \ \text{ if } F^{*}(\alpha)=0, \beta\neq 0 , \\
-\vartheta (-1)^{m-1}\epsilon^{m}p^{\frac{n-3m}{2}}+p^{n-2m}, & \ \text{ if } F^{*}(\alpha) \in \mathcal{S}, \beta=0, \\
& \ \text{ or } F^{*}(\alpha) \in \mathcal{S}, \beta \neq 0, \beta^{2}+4aF^{*}(\alpha)\neq 0,\\
\vartheta (-1)^{m-1}\epsilon^{m}p^{\frac{n-3m}{2}}+p^{n-2m}, & \ \text{ if } F^{*}(\alpha) \in \mathcal{N}, \beta=0,\\
& \ \text{ or } F^{*}(\alpha) \in \mathcal{N}, \beta \neq 0, \beta^{2}+4aF^{*}(\alpha)\neq 0,\\
\vartheta (-1)^{m-1}\epsilon^{m}\eta_{m}(-a)p^{\frac{n-3m}{2}}(p^m-1)+p^{n-2m}, & \ \text{ if }\beta\neq 0, \beta^{2}+4aF^{*}(\alpha)=0,
\end{split}
\right.
\end{split}
\end{equation*}
\end{small} and then
\begin{small}
\begin{equation*}
\begin{split}
wt(c_{\alpha, \beta})&=|D_{F, a}|-N_{a, \alpha, \beta}\\
&=\left\{
\begin{split}
p^{n-2m}(p^m-1), & \ \text{ if } F^{*}(\alpha)=0, \beta=0, \\
p^{n-2m}(p^m-1)+\vartheta (-1)^{m-1}\epsilon^{m}\eta_{m}(-a)p^{\frac{n-m}{2}}, & \ \text{ if } F^{*}(\alpha)=0, \beta\neq 0 , \\
p^{n-2m}(p^m-1)+\vartheta (-1)^{m-1}\epsilon^{m}p^{\frac{n-3m}{2}}(p^m\eta_{m}(-a)+1), & \ \text{ if } F^{*}(\alpha) \in \mathcal{S}, \beta=0, \\
&  \ \text{ or } F^{*}(\alpha) \in \mathcal{S}, \beta \neq 0, \beta^{2}+4aF^{*}(\alpha)\neq 0,\\
p^{n-2m}(p^m-1)+\vartheta (-1)^{m-1}\epsilon^{m}p^{\frac{n-3m}{2}}(p^m\eta_{m}(-a)-1), & \ \text{ if } F^{*}(\alpha) \in \mathcal{N}, \beta=0, \\
&  \ \text{ or } F^{*}(\alpha) \in \mathcal{N}, \beta \neq 0, \beta^{2}+4aF^{*}(\alpha)\neq 0,\\
p^{n-2m}(p^m-1)+\vartheta (-1)^{m-1}\epsilon^{m}\eta_{m}(-a)p^{\frac{n-3m}{2}}, & \ \text{ if }\beta\neq 0, \beta^{2}+4aF^{*}(\alpha)=0.
\end{split}
\right.
\end{split}
\end{equation*}
\end{small}We can see that $wt(c_{\alpha, \beta})=0$ if and only if $\alpha=0, \beta=0$, thus the dimension of $C_{D_{F, a}}$ is $\frac{n}{m}+1$. The weight distribution of $C_{D_{F, a}}$ follows from the above arguments and Lemma 11.

When $\alpha=0$, $c_{\alpha, \beta} \cdot c_{\alpha, \beta}=\beta^{2}|D_{F, a}|=0$ by $p \mid |D_{F, a}|$. When $\alpha\neq 0, F^{*}(\alpha)=0$, $c_{\alpha, \beta} \cdot c_{\alpha, \beta}=\sum_{i \in \mathbb{F}_{p^m}^{*}}N_{a, \alpha, -i}i^{2}+2\beta \sum_{i \in \mathbb{F}_{p^m}^{*}}N_{a, \alpha, -i}i+\beta^{2}|D_{F, a}|=0$ by $p \mid N_{a, \alpha, -i}, i \in \mathbb{F}_{p^m}^{*}$. Note that $\sum_{i \in \mathbb{F}_{p^m}^{*}}i=0$, $\sum_{i \in \mathbb{F}_{p^m}^{*}}i^{2}=0$ if $p^m>3$, and $p \mid p^{\frac{n-3m}{2}}$ if $p^m=3$ since $(p^m, n)\neq (3, 3)$. When $\alpha\neq 0, F^{*}(\alpha)\neq 0, \eta_{m}(-aF^{*}(\alpha))=1$, then there are two elements $\pm j \in \mathbb{F}_{p^m}^{*}$ such that $(\pm j)^{2}+4aF^{*}(\alpha)=0$, and
\begin{small}
\begin{equation*}
\begin{split}
 \sum_{i \in \mathbb{F}_{p^m}^{*}}N_{a, \alpha, -i}i^{2}&=\sum_{i \in \mathbb{F}_{p^m}^{*} \backslash \{\pm j\}}(-\vartheta (-1)^{m-1}\epsilon^{m}\eta_{m}(F^{*}(\alpha))p^{\frac{n-3m}{2}}+p^{n-2m})i^{2}\\
 & \ \ \ +\sum_{i \in \{\pm j\}}(\vartheta (-1)^{m-1}\epsilon^{m}\eta_{m}(F^{*}(\alpha))p^{\frac{n-3m}{2}}(p^m-1)+p^{n-2m})i^{2}\\
 &=(-\vartheta (-1)^{m-1}\epsilon^{m}\eta_{m}(F^{*}(\alpha))p^{\frac{n-3m}{2}}+p^{n-2m})\sum_{i \in \mathbb{F}_{p^m}^{*}}i^{2}+2 \vartheta (-1)^{m-1}\epsilon^{m}\eta_{m}(F^{*}(\alpha))p^{\frac{n-m}{2}}j^{2}\\
 &=0,\\
 \sum_{i \in \mathbb{F}_{p^m}^{*}}N_{a, \alpha, -i}i&=\sum_{i \in \mathbb{F}_{p^m}^{*} \backslash \{\pm j\}}(-\vartheta (-1)^{m-1}\epsilon^{m}\eta_{m}(F^{*}(\alpha))p^{\frac{n-3m}{2}}+p^{n-2m})i\\
 & \ \ \ +\sum_{i \in \{\pm j\}}(\vartheta (-1)^{m-1}\epsilon^{m}\eta_{m}(F^{*}(\alpha))p^{\frac{n-3m}{2}}(p^m-1)+p^{n-2m})i\\
 &=(-\vartheta (-1)^{m-1}\epsilon^{m}\eta_{m}(F^{*}(\alpha))p^{\frac{n-3m}{2}}+p^{n-2m})\sum_{i \in \mathbb{F}_{p^m}^{*}}i\\
 &=0.
\end{split}
\end{equation*}
\end{small}Thus, in this case, $c_{\alpha, \beta} \cdot c_{\alpha, \beta}=0$. When $\alpha\neq 0, F^{*}(\alpha)\neq 0, \eta_{m}(-aF^{*}(\alpha))=-1$, then there is no $j \in \mathbb{F}_{p^m}^{*}$ such that $j^{2}+4aF^{*}(\alpha)=0$, and
\begin{small}
\begin{equation*}
\begin{split}
\sum_{i \in \mathbb{F}_{p^m}^{*}}N_{a, \alpha, -i}i^{2}=(-\vartheta (-1)^{m-1}\epsilon^{m}\eta_{m}(F^{*}(\alpha))p^{\frac{n-3m}{2}}+p^{n-2m})\sum_{i \in \mathbb{F}_{p^m}^{*}}i^{2}=0, \\
\sum_{i \in \mathbb{F}_{p^m}^{*}}N_{a, \alpha, -i}i=(-\vartheta (-1)^{m-1}\epsilon^{m}\eta_{m}(F^{*}(\alpha))p^{\frac{n-3m}{2}}+p^{n-2m})\sum_{i \in \mathbb{F}_{p^m}^{*}}i=0.
\end{split}
\end{equation*}
\end{small}Thus, $c_{\alpha, \beta} \cdot c_{\alpha, \beta}=0$. By Proposition 3, $C_{D_{F, a}}$ is self-orthogonal.
\end{proof}

\begin{theorem}\label{Theorem 7}
Let $F$ be a vectorial dual-bent function with Condition III, and let $C_{D_{F, I}}$ be defined by Eq. (1) with $I=\mathcal{S}$ or $I=\mathcal{N}$, where $\mathcal{S}=\{x^{2}: x \in \mathbb{F}_{p^m}^{*}\}$, $\mathcal{N}=\mathbb{F}_{p^m}^{*} \backslash \mathcal{S}$.

(i) When $I=\mathcal{S}$, $C_{D_{F, I}}$ is a at most six-weight $[(p^{n-m}+\vartheta (-1)^{m-1}\epsilon^{m}\eta_{m}(-1)p^{\frac{n-m}{2}})\frac{p^m-1}{2}$, $\frac{n}{t}+1]_{p^t}$ self-orthogonal linear code whose weight distribution is given in Table 10.

(ii) When $I=\mathcal{N}$, $C_{D_{F, I}}$ is a at most six-weight $[(p^{n-m}-\vartheta (-1)^{m-1}\epsilon^{m}\eta_{m}(-1)p^{\frac{n-m}{2}})\frac{p^m-1}{2}$, $\frac{n}{t}+1]_{p^t}$ self-orthogonal linear code whose weight distribution is given in Table 11.

Besides, the dual code $C_{D_{F, I}}^{\bot}$ is at least almost optimal according to Hamming bound.
\end{theorem}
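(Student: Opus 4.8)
The plan is to follow the blueprint of Theorems~1,~2,~3 and~6. Throughout write $\ell_{\alpha}(x)=\sum_{j=1}^{s}Tr_{t}^{n_{j}}(\alpha_{j}x_{j})$, so that $c_{\alpha,\beta}=(\ell_{\alpha}(x))_{x\in D_{F,I}}+\beta\textbf{1}$ and $\ell_{\alpha}(ax)=a\ell_{\alpha}(x)$ for every $a\in\mathbb{F}_{p^{t}}$; here $I=\mathcal{S}$ or $I=\mathcal{N}$, and recall that $l$ is even since $(l-1)(d-1)\equiv 1\pmod{p^{m}-1}$ with $p^{m}-1$ even. First I would determine the length. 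Because $|\mathcal{S}|=|\mathcal{N}|=\frac{p^{m}-1}{2}$ and $\eta_{m}\equiv 1$ on $\mathcal{S}$, $\eta_{m}\equiv -1$ on $\mathcal{N}$, Lemma~11 gives $|D_{F,\mathcal{S}}|=\sum_{a\in\mathcal{S}}|D_{F,a}|=(p^{n-m}+\vartheta(-1)^{m-1}\epsilon^{m}\eta_{m}(-1)p^{\frac{n-m}{2}})\frac{p^{m}-1}{2}$, and the analogous formula with the opposite sign of the correction term for $\mathcal{N}$; these are the claimed lengths. As in Theorem~2, one then checks $N_{I,\alpha,\beta}<|D_{F,I}|$ whenever $\alpha\neq 0$, so $wt(c_{\alpha,\beta})=0$ exactly when $(\alpha,\beta)=(0,0)$ and $\dim C_{D_{F,I}}=\frac{n}{t}+1$.

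Next I would compute the weights. For $\alpha=0$ the weight is $0$ or $|D_{F,I}|$, and for $\alpha\neq 0$ it is $|D_{F,I}|-N_{I,\alpha,\beta}$ with $N_{I,\alpha,\beta}=\sum_{a\in I}N_{a,\alpha,\beta}$, which I evaluate from Lemma~12. The two ``constant'' summands of Lemma~12, namely $\vartheta(-1)^{m-1}\epsilon^{m}\eta_{m}(-a)p^{\frac{n-m}{2}-t}$ and $p^{n-m-t}$, add up at once via $\sum_{a\in\mathcal{S}}\eta_{m}(-a)=\eta_{m}(-1)\frac{p^{m}-1}{2}$ (and its negative for $\mathcal{N}$). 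For the remaining summand, swap the order of summation; writing $a=by^{l-1}$ in the inner sum over $a\in\mathcal{S}$ (so that $a\in\mathcal{S}$ is equivalent to $\eta_{m}(b)=\eta_{m}(y)$, since $l-1$ is odd), replacing the condition $\eta_{m}(b)=\eta_{m}(y)$ by $\frac{1}{2}(1+\eta_{m}(b)\eta_{m}(y))$, and applying Proposition~7 reduces everything to elementary Gauss sums of the form $\sum_{z\in\mathbb{F}_{p^{t}}^{*}}\eta_{m}(z)\zeta_{p}^{Tr_{1}^{t}(z\beta)}$ (which is $p^{t}\delta_{0}(\beta)-1$ when $2\mid\frac{m}{t}$ and a quadratic Gauss sum otherwise) and $\sum_{y\in\mathbb{F}_{p^{m}}^{*}}\eta_{m}(y)\zeta_{p}^{-Tr_{1}^{m}(zF^{*}(\alpha)y)}$, the latter --- by Proposition~7 again --- contributing the factor $\eta_{m}(F^{*}(\alpha))$ when $F^{*}(\alpha)\neq 0$. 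Sorting the outcome by whether $F^{*}(\alpha)=0$, $F^{*}(\alpha)\in\mathcal{S}$ or $F^{*}(\alpha)\in\mathcal{N}$, whether $\beta=0$, and a quadratic-residue condition relating $\beta$ to $F^{*}(\alpha)$ (and, in the case $t=m,\ l=2$, using Proposition~8 for the discriminant), one finds that $C_{D_{F,I}}$ has at most six distinct nonzero weights; assembling the multiplicities from Lemma~11 then yields Tables~10 and~11.

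For self-orthogonality I would invoke Proposition~3. Since $l$ is even, $F(ax)=a^{l}F(x)$ with $a^{l}\in\mathcal{S}$ for every $a\in\mathbb{F}_{p^{t}}^{*}$, so $D_{F,\mathcal{S}}$ and $D_{F,\mathcal{N}}$ are invariant under $x\mapsto ax$; thus $D_{F,I}$ is a disjoint union of $\mathbb{F}_{p^{t}}^{*}$-orbits of size $p^{t}-1$, and on each orbit $\ell_{\alpha}$ is either identically $0$ or a bijection onto $\mathbb{F}_{p^{t}}^{*}$. Consequently, for $\alpha\neq 0$ the value $N_{I,\alpha,-i}$ does not depend on $i\in\mathbb{F}_{p^{t}}^{*}$, and, exactly as in Theorem~6(i),
\begin{equation*}
c_{\alpha,\beta}\cdot c_{\alpha,\beta}=N_{I,\alpha,-1}\sum_{i\in\mathbb{F}_{p^{t}}^{*}}i^{2}+2\beta N_{I,\alpha,-1}\sum_{i\in\mathbb{F}_{p^{t}}^{*}}i+\beta^{2}|D_{F,I}|,
\end{equation*}
which vanishes: $\sum_{i\in\mathbb{F}_{p^{t}}^{*}}i=0$; $p\mid|D_{F,I}|$ because $3m\leq n$; and $\sum_{i\in\mathbb{F}_{p^{t}}^{*}}i^{2}=0$ when $p^{t}>3$, while for $p^{t}=3$ a short divisibility check from Lemma~12 (using $(n,p^{t})\neq(3,3)$ and $3m\leq n$) shows $p\mid N_{I,\alpha,-i}$. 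Since also $c_{0,\beta}\cdot c_{0,\beta}=\beta^{2}|D_{F,I}|=0$, $C_{D_{F,I}}$ is self-orthogonal.

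Finally, for the dual code, exactly as in the proofs of Theorems~1 and~2, a weight-$1$ codeword of $C_{D_{F,I}}^{\bot}$ would force an impossible identity at a single point of $D_{F,I}$, and a weight-$2$ codeword would force two distinct points of $D_{F,I}$ to coincide; hence $d(C_{D_{F,I}}^{\bot})\geq 3$. Since $C_{D_{F,I}}^{\bot}$ has length $|D_{F,I}|$ and dimension $|D_{F,I}|-\frac{n}{t}-1$, the Hamming bound (Proposition~2) precludes $d(C_{D_{F,I}}^{\bot})\geq 5$, so $C_{D_{F,I}}^{\bot}$ is optimal or almost optimal. I expect the main obstacle to be the second step: pushing the double sum $\sum_{a\in I}\sum_{y\in\mathbb{F}_{p^{m}}^{*}}\eta_{m}(y)\delta_{0}(Tr_{t}^{m}(F^{*}(\alpha)y-ay^{1-l})-\beta)$ (and its variant with $\eta_{t}$ in place of $\delta_{0}$ when $2\nmid\frac{m}{t}$) through Propositions~7 and~8 and then organizing the resulting case analysis in $F^{*}(\alpha)$, $\beta$ and the quadratic conditions so that precisely the six weights of Tables~10 and~11, with the stated multiplicities, come out.
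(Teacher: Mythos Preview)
Your plan is essentially correct and parallels the paper's proof closely: length via Lemma~11, weights via Lemma~12, dimension by showing $N_{I,\alpha,\beta}<|D_{F,I}|$, self-orthogonality via Proposition~3 using that $N_{I,\alpha,-i}$ is constant in $i\in\mathbb{F}_{p^t}^{*}$, and $d(C_{D_{F,I}}^{\bot})\geq 3$ exactly as in Theorem~1. Your orbit argument for the constancy of $N_{I,\alpha,-i}$ is the same observation the paper uses, just phrased more structurally.

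The one noteworthy technical difference is in evaluating the main sum. After your substitution $a=by^{l-1}$ and the $\tfrac12(1+\eta_m(b)\eta_m(y))$ trick, you propose to finish entirely with Gauss sums (Proposition~7). The paper instead first collapses the $a$-sum (using $y^{-l}\in\mathcal{S}$, which is your substitution in disguise) to the Jacobsthal-type quantity $\sum_{a\in\mathcal{S}}\eta_m(F^{*}(\alpha)-a)$ and then evaluates it by the cyclotomic intersection counts of Proposition~10; in the $2\nmid\tfrac{m}{t}$ case a second application of Proposition~10 handles the auxiliary sum $2\sum_{w\in\mathcal{S}'}\eta_t(\beta+w)+\eta_t(\beta)$. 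Your Gauss-sum route avoids Proposition~10 and is equally valid, but be aware that the resulting case split is simpler than you anticipate: the final weights depend only on whether $\beta=0$ and on which of $\{0\}$, $\mathcal{S}$, $\mathcal{N}$ contains $-F^{*}(\alpha)$ (no joint quadratic-residue condition between $\beta$ and $F^{*}(\alpha)$ survives, and Proposition~8 is not needed here---that is a Theorem~6(ii)/Theorem~8 tool). With that correction to the case analysis, your outline recovers exactly Tables~10 and~11.
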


\renewcommand{\thetable}{10}
\begin{table}\label{10}\scriptsize
  \centering
  \caption{The weight distribution of $C_{D_{F, I}}$ constructed in Theorem 7 (i)}
  \begin{tabular}{|c|c|}
    \hline
    Hamming weight & Multiplicity  \\ \hline
    $0$ & $1$ \\
    $(p^{n-m}+\vartheta (-1)^{m-1}\epsilon^{m}\eta_{m}(-1)p^{\frac{n-m}{2}})\frac{p^m-1}{2}$ & $p^t-1$ \\
    $(p^{n-m}-p^{n-m-t})\frac{p^m-1}{2}$ & $p^{n-m}-1$ \\
    $(p^{n-m}-p^{n-m-t}+\vartheta (-1)^{m-1}\epsilon^{m}\eta_{m}(-1)p^{\frac{n-m}{2}})\frac{p^m-1}{2}$ & $(p^t-1)(p^{n-m}-1)$ \\
    $(p^{n-m}-p^{n-m-t})\frac{p^m-1}{2}+\vartheta (-1)^{m-1}\epsilon^{m}\eta_{m}(-1)p^{\frac{n-m}{2}-t}\frac{(p^m+1)(p^t-1)}{2}$ & $(p^{n-m}+\vartheta^{-1}(-1)^{m-1}\epsilon^{m}p^{\frac{n-m}{2}})\frac{p^m-1}{2}$ \\
    $(p^{n-m}-p^{n-m-t})\frac{p^m-1}{2}+\vartheta (-1)^{m-1}\epsilon^{m}\eta_{m}(-1)p^{\frac{n-m}{2}-t}\frac{p^t(p^m-1)-(p^m+1)}{2}$ & $(p^t-1)(p^{n-m}+\vartheta^{-1}(-1)^{m-1}\epsilon^{m}p^{\frac{n-m}{2}})\frac{p^m-1}{2}$ \\
    $(p^{n-m}-p^{n-m-t}+\vartheta (-1)^{m-1}\epsilon^{m}\eta_{m}(-1)p^{\frac{n-m}{2}-t}(p^t-1))\frac{p^m-1}{2}$ &
    $p^t(p^{n-m}-\vartheta^{-1}(-1)^{m-1}\epsilon^{m}p^{\frac{n-m}{2}})\frac{p^m-1}{2}$\\
    \hline
  \end{tabular}
\end{table}

\renewcommand{\thetable}{11}
\begin{table}\label{11}\scriptsize
  \centering
  \caption{The weight distribution of $C_{D_{F, I}}$ constructed in Theorem 7 (ii)}
  \begin{tabular}{|c|c|}
    \hline
    Hamming weight  & Multiplicity  \\ \hline
    $0$ & $1$ \\
    $(p^{n-m}-\vartheta (-1)^{m-1}\epsilon^{m}\eta_{m}(-1)p^{\frac{n-m}{2}})\frac{p^m-1}{2}$ & $p^t-1$ \\
    $(p^{n-m}-p^{n-m-t})\frac{p^m-1}{2}$ & $p^{n-m}-1$ \\
    $(p^{n-m}-p^{n-m-t}-\vartheta (-1)^{m-1}\epsilon^{m}\eta_{m}(-1)p^{\frac{n-m}{2}})\frac{p^m-1}{2}$ & $(p^t-1)(p^{n-m}-1)$ \\
    $(p^{n-m}-p^{n-m-t})\frac{p^m-1}{2}-\vartheta (-1)^{m-1}\epsilon^{m}\eta_{m}(-1)p^{\frac{n-m}{2}-t}\frac{(p^m+1)(p^t-1)}{2}$ & $(p^{n-m}-\vartheta^{-1}(-1)^{m-1}\epsilon^{m}p^{\frac{n-m}{2}})\frac{p^m-1}{2}$ \\
    $(p^{n-m}-p^{n-m-t})\frac{p^m-1}{2}-\vartheta (-1)^{m-1}\epsilon^{m}\eta_{m}(-1)p^{\frac{n-m}{2}-t}\frac{p^t(p^m-1)-(p^m+1)}{2}$ & $(p^t-1)(p^{n-m}-\vartheta^{-1}(-1)^{m-1}\epsilon^{m}p^{\frac{n-m}{2}})\frac{p^m-1}{2}$ \\
    $(p^{n-m}-p^{n-m-t}-\vartheta (-1)^{m-1}\epsilon^{m}\eta_{m}(-1)p^{\frac{n-m}{2}-t}(p^t-1))\frac{p^m-1}{2}$ &
    $p^t(p^{n-m}+\vartheta^{-1}(-1)^{m-1}\epsilon^{m}p^{\frac{n-m}{2}})\frac{p^m-1}{2}$\\
    \hline
  \end{tabular}
\end{table}

\begin{proof}
We only prove the case of $I=\mathcal{S}$ since the other case is similar.

By Lemma 11, the length of $C_{D_{F, \mathcal{S}}}$ is $|D_{F, \mathcal{S}}|=(p^{n-m}+\vartheta (-1)^{m-1}\epsilon^{m}\eta_{m}(-1)p^{\frac{n-m}{2}})\frac{p^m-1}{2}$. When $\alpha=0, \beta=0$, $wt(c_{\alpha, \beta})=0$. When $\alpha=0, \beta \in \mathbb{F}_{p^t}^{*}$, $wt(c_{\alpha, \beta})=|D_{F, \mathcal{S}}|$. When $\alpha\neq 0,  \beta \in \mathbb{F}_{p^t}$, $wt(c_{\alpha, \beta})=|D_{F, \mathcal{S}}|-N_{\mathcal{S}, \alpha, \beta}$, where $N_{\mathcal{S}, \alpha, \beta}=|\{x \in V_{n}^{(p)}: F(x) \in \mathcal{S}, \sum_{j=1}^{s}Tr_{t}^{n_{j}}(\alpha_{j}x_{j})+\beta=0\}|$.

When $\frac{m}{t}$ is even, by Lemma 12,
\begin{small}
\begin{equation*}
\begin{split}
N_{\mathcal{S}, \alpha, \beta}&=\vartheta p^{\frac{n}{2}-m}\sum_{a \in \mathcal{S}}\sum_{y \in \mathbb{F}_{p^m}^{*}}\eta_{m}(y)\delta_{0}(Tr_{t}^{m}(F^{*}(\alpha)y-ay^{1-l})-\beta)+p^{n-m-t}\frac{p^m-1}{2}\\
& \ \ +\vartheta (-1)^{m-1}\epsilon^{m}\eta_{m}(-1)p^{\frac{n-m}{2}-t}\frac{p^m-1}{2}.
\end{split}
\end{equation*}
\end{small}Denote $T=\sum_{a \in \mathcal{S}}\sum_{y \in \mathbb{F}_{p^m}^{*}}\eta_{m}(y)\delta_{0}(Tr_{t}^{m}(F^{*}(\alpha)y-ay^{1-l})-\beta)$. Then
\begin{small}
\begin{equation*}
\begin{split}
T&=p^{-t}\sum_{a \in \mathcal{S}}\sum_{y \in \mathbb{F}_{p^m}^{*}}\eta_{m}(y)\sum_{z \in \mathbb{F}_{p^t}}\zeta_{p}^{Tr_{1}^{t}(z(Tr_{t}^{m}(F^{*}(\alpha)y-ay^{1-l})-\beta))}\\
&=p^{-t}\sum_{y \in \mathbb{F}_{p^m}^{*}}\eta_{m}(y)\sum_{z \in \mathbb{F}_{p^t}^{*}}\zeta_{p}^{Tr_{1}^{t}(-z\beta)}\sum_{a \in \mathcal{S}}\zeta_{p}^{Tr_{1}^{m}(F^{*}(\alpha)y z-ay^{1-l}z)}\\
&=p^{-t}\sum_{y \in \mathbb{F}_{p^m}^{*}}\eta_{m}(y)\sum_{z \in \mathbb{F}_{p^t}^{*}}\zeta_{p}^{Tr_{1}^{t}(-z\beta)}\sum_{a \in \mathcal{S}}\zeta_{p}^{Tr_{1}^{m}(F^{*}(\alpha)y z-ayz)}\\
\end{split}
\end{equation*}
\begin{equation*}
\begin{split}
&=p^{-t}\sum_{a \in \mathcal{S}}\sum_{y \in \mathbb{F}_{p^m}^{*}, z \in \mathbb{F}_{p^t}^{*}}\eta_{m}(yz^{-1})\zeta_{p}^{Tr_{1}^{t}(-z\beta)+Tr_{1}^{m}(F^{*}(\alpha)y-ay)}\\
&=p^{-t}\sum_{a \in \mathcal{S}}\sum_{z \in \mathbb{F}_{p^t}^{*}}\zeta_{p}^{Tr_{1}^{t}(-z\beta)}\sum_{y \in \mathbb{F}_{p^m}^{*}}\eta_{m}(y)\zeta_{p}^{Tr_{1}^{m}((F^{*}(\alpha)-a)y)}\\
&=(-1)^{m-1}\epsilon^{m}p^{\frac{m}{2}-t}(p^t \delta_{0}(\beta)-1)\sum_{a \in \mathcal{S}}\eta_{m}(F^{*}(\alpha)-a)\\
&=\left\{
\begin{split}
(-1)^{m-1}\epsilon^{m}\eta_{m}(-1)p^{\frac{m}{2}-t}(p^t \delta_{0}(\beta)-1)\frac{p^m-1}{2}, & \ \text{ if } F^{*}(\alpha)=0, \\
-(-1)^{m-1}\epsilon^{m}\eta_{m}(-1)p^{\frac{m}{2}-t}(p^t \delta_{0}(\beta)-1), & \ \text{ if } -F^{*}(\alpha) \in \mathcal{S},\\
0, & \ \text{ if } -F^{*}(\alpha) \in \mathcal{N},
\end{split}
\right.
\end{split}
\end{equation*}
\end{small}where in the third equation we use that for any $y \in \mathbb{F}_{p^m}^{*}$, $y^{-l} \in \mathcal{S}$ since $l$ is even, and in the last equation we use
\begin{small}
\begin{equation}\label{12}
\begin{split}
&\sum_{a \in S}\eta_{m}(F^{*}(\alpha)-a)\\
&=\left\{
\begin{split}
\eta_{m}(-1)\frac{p^m-1}{2}, & \ \text{ if } F^{*}(\alpha)=0, \\
\eta_{m}(-1)(|(1+\mathcal{S}) \cap \mathcal{S}|-|(1+\mathcal{S}) \cap \mathcal{N}|), & \ \text{ if } -F^{*}(\alpha) \in \mathcal{S},\\
-\eta_{m}(-1)(|(1+\mathcal{N}) \cap \mathcal{S}|-|(1+\mathcal{N}) \cap \mathcal{N}|), & \ \text{ if } -F^{*}(\alpha) \in \mathcal{N}.\\
\end{split}
\right.\\
&=\left\{
\begin{split}
\eta_{m}(-1)\frac{p^m-1}{2}, & \ \text{ if } F^{*}(\alpha)=0, \\
-\eta_{m}(-1), & \ \text{ if } -F^{*}(\alpha) \in \mathcal{S},\\
0, & \ \text{ if } -F^{*}(\alpha) \in \mathcal{N},
\end{split}
\right.
\end{split}
\end{equation}
\end{small}
which is obtained by Proposition 10.

By the above arguments,
\begin{itemize}
  \item when $\alpha \neq 0, F^{*}(\alpha)=0, \beta=0$, $N_{\mathcal{S}, \alpha, \beta}=(\vartheta (-1)^{m-1}\epsilon^{m}\eta_{m}(-1)p^{\frac{n-m}{2}}+p^{n-m-t})\frac{p^m-1}{2}$ and then $wt(c_{\alpha, \beta})=(p^{n-m}-p^{n-m-t})\frac{p^m-1}{2}$;
  \item when $\alpha \neq 0, F^{*}(\alpha)=0, \beta\neq 0$, $N_{\mathcal{S}, \alpha, \beta}=p^{n-m-t}\frac{p^m-1}{2}$ and then $wt(c_{\alpha, \beta})=(p^{n-m}-p^{n-m-t}+\vartheta (-1)^{m-1}\epsilon^{m}\eta_{m}(-1)p^{\frac{n-m}{2}})\frac{p^m-1}{2}$;
  \item when $\alpha\neq 0, -F^{*}(\alpha) \in \mathcal{S}, \beta=0$, $N_{\mathcal{S}, \alpha,\beta}=p^{n-m-t}\frac{p^m-1}{2}+\vartheta (-1)^{m-1}\epsilon^{m}\eta_{m}(-1)p^{\frac{n-m}{2}-t}(\frac{p^m-1}{2}-p^t+1)$ and then $wt(c_{\alpha, \beta})=(p^{n-m}-p^{n-m-t})\frac{p^m-1}{2}+\vartheta (-1)^{m-1}\epsilon^{m}\eta_{m}(-1)p^{\frac{n-m}{2}-t}\frac{(p^m+1)(p^t-1)}{2}$;
  \item when $\alpha\neq 0, -F^{*}(\alpha) \in \mathcal{S}, \beta\neq 0$, $N_{\mathcal{S}, \alpha, \beta}=p^{n-m-t}\frac{p^m-1}{2}+\vartheta (-1)^{m-1}\epsilon^{m}\eta_{m}(-1)p^{\frac{n-m}{2}-t}\frac{p^m+1}{2}$ and then $wt(c_{\alpha,\beta})=(p^{n-m}-p^{n-m-t})\frac{p^m-1}{2}+\vartheta (-1)^{m-1}\epsilon^{m}\eta_{m}(-1)p^{\frac{n-m}{2}-t}\frac{p^t(p^m-1)-(p^m+1)}{2}$;
  \item when $\alpha\neq 0, -F^{*}(\alpha) \in \mathcal{N}, \beta \in \mathbb{F}_{p^t}$, $N_{\mathcal{S}, \alpha, \beta}=(p^{n-m-t}+\vartheta (-1)^{m-1}\epsilon^{m}\eta_{m}(-1)p^{\frac{n-m}{2}-t})\frac{p^m-1}{2}$ and then $wt(c_{\alpha, \beta})=(p^{n-m}-p^{n-m-t}+\vartheta (-1)^{m-1}\epsilon^{m}\eta_{m}(-1)p^{\frac{n-m}{2}-t}(p^t-1))\frac{p^m-1}{2}$.
\end{itemize}

When $\frac{m}{t}$ is odd, by Lemma 12,
\begin{small}
\begin{equation*}
\begin{split}
N_{\mathcal{S}, \alpha, \beta}&=\vartheta (-1)^{t-1}\epsilon^{t}p^{\frac{n-t}{2}-m}\sum_{a \in \mathcal{S}}\sum_{y \in \mathbb{F}_{p^m}^{*}}\eta_{m}(y)\eta_{t}(Tr_{t}^{m}(F^{*}(\alpha)y-ay^{1-l})-\beta)+p^{n-m-t}\frac{p^m-1}{2}\\
& \ \ +\vartheta (-1)^{m-1}\epsilon^{m}\eta_{m}(-1)p^{\frac{n-m}{2}-t}\frac{p^m-1}{2}.
\end{split}
\end{equation*}
\end{small}Denote $R=\sum_{a \in \mathcal{S}}\sum_{y \in \mathbb{F}_{p^m}^{*}}\eta_{m}(y)\eta_{t}(Tr_{t}^{m}(F^{*}(\alpha)y-ay^{1-l})-\beta)$, and let $\mathcal{S}'=\{x^{2}: x \in \mathbb{F}_{p^t}^{*}\}$, $\mathcal{N}'=\mathbb{F}_{p^t}^{*} \backslash \mathcal{S}'$. Then
\begin{small}
\begin{equation*}
\begin{split}
R&=p^{-t}\sum_{a \in \mathcal{S}}\sum_{y \in \mathbb{F}_{p^m}^{*}}\eta_{m}(y)\sum_{w \in \mathcal{S}'}\sum_{z \in \mathbb{F}_{p^t}}\zeta_{p}^{Tr_{1}^{t}(z(Tr_{t}^{m}(F^{*}(\alpha)y-ay^{1-l})-\beta-w))}\\
& \ \ -p^{-t}\sum_{a \in \mathcal{S}}\sum_{y \in \mathbb{F}_{p^m}^{*}}\eta_{m}(y)\sum_{w \in \mathcal{N}'}\sum_{z \in \mathbb{F}_{p^t}}\zeta_{p}^{Tr_{1}^{t}(z(Tr_{t}^{m}(F^{*}(\alpha)y-ay^{1-l})-\beta-w))}\\
&=2p^{-t}\sum_{a \in \mathcal{S}}\sum_{y \in \mathbb{F}_{p^m}^{*}}\eta_{m}(y)\sum_{w \in \mathcal{S}'}\sum_{z \in \mathbb{F}_{p^t}^{*}}\zeta_{p}^{Tr_{1}^{t}(z(Tr_{t}^{m}(F^{*}(\alpha)y-ay^{1-l})-\beta-w))}\\
& \ \ -p^{-t}\sum_{a \in \mathcal{S}}\sum_{y \in \mathbb{F}_{p^m}^{*}}\eta_{m}(y)\sum_{w \in \mathbb{F}_{p^t}^{*}}\sum_{z \in \mathbb{F}_{p^t}^{*}}\zeta_{p}^{Tr_{1}^{t}(z(Tr_{t}^{m}(F^{*}(\alpha)y-ay^{1-l})-\beta-w))}\\
&=2p^{-t}\sum_{y \in \mathbb{F}_{p^m}^{*}, w \in \mathcal{S}', z \in \mathbb{F}_{p^t}^{*}}\eta_{m}(y)\zeta_{p}^{Tr_{1}^{t}(-zw-z\beta)}\sum_{a \in \mathcal{S}}\zeta_{p}^{Tr_{1}^{m}(F^{*}(\alpha)yz-ayz)}\\
& \ \ -p^{-t}\sum_{y \in \mathbb{F}_{p^m}^{*}, w, z \in \mathbb{F}_{p^t}^{*}}\eta_{m}(y)\zeta_{p}^{Tr_{1}^{t}(-zw-z\beta)}\sum_{a \in \mathcal{S}}\zeta_{p}^{Tr_{1}^{m}(F^{*}(\alpha)yz-ayz)}\\
&=2p^{-t}\sum_{a \in \mathcal{S}}\sum_{y \in \mathbb{F}_{p^m}^{*}, w \in \mathcal{S}', z \in \mathbb{F}_{p^t}^{*}}\eta_{m}(yz^{-1})\zeta_{p}^{Tr_{1}^{m}((F^{*}(\alpha)-a)y)+Tr_{1}^{t}(-z(w+\beta))}\\
& \ \ -p^{-t}\sum_{a \in \mathcal{S}}\sum_{y \in \mathbb{F}_{p^m}^{*}, w, z \in \mathbb{F}_{p^t}^{*}}\eta_{m}(yz^{-1})\zeta_{p}^{Tr_{1}^{m}((F^{*}(\alpha)-a)y)+Tr_{1}^{t}(-z(w+\beta))}\\
&=2p^{-t}\sum_{w \in \mathcal{S}'}\sum_{z \in \mathbb{F}_{p^t}^{*}}\eta_{t}(z)\zeta_{p}^{Tr_{1}^{t}(-z(w+\beta))}\sum_{a \in \mathcal{S}}\sum_{y \in \mathbb{F}_{p^m}^{*}}\eta_{m}(y)\zeta_{p}^{Tr_{1}^{m}((F^{*}(\alpha)-a)y)}\\
& \ \ -p^{-t}\sum_{z \in \mathbb{F}_{p^t}^{*}}\eta_{t}(z)\zeta_{p}^{Tr_{1}^{t}(-z\beta)}\sum_{w \in \mathbb{F}_{p^t}^{*}}\zeta_{p}^{Tr_{1}^{t}(-zw)}\sum_{a \in \mathcal{S}}\sum_{y \in \mathbb{F}_{p^m}^{*}}\eta_{m}(y)\zeta_{p}^{Tr_{1}^{m}((F^{*}(\alpha)-a)y)}\\
&=2p^{-t}\sum_{w \in \mathcal{S}'}(-1)^{t-1}\epsilon^{t}\eta_{t}(-(w+\beta))p^{\frac{t}{2}}\sum_{a \in \mathcal{S}}(-1)^{m-1}\epsilon^{m}\eta_{m}(F^{*}(\alpha)-a)p^{\frac{m}{2}}\\
& \ \ +p^{-t}(-1)^{t-1}\epsilon^{t}\eta_{t}(-\beta)p^{\frac{t}{2}}\sum_{a \in \mathcal{S}}(-1)^{m-1}\epsilon^{m}\eta_{m}(F^{*}(\alpha)-a)p^{\frac{m}{2}}\\
&=(-1)^{m+t}\epsilon^{m+t}\eta_{m}(-1)p^{\frac{m-t}{2}}(2\sum_{w \in \mathcal{S}'}\eta_{t}(\beta+w)+\eta_{t}(\beta))\sum_{a \in \mathcal{S}}\eta_{m}(F^{*}(\alpha)-a)\\
&=\left\{
\begin{split}
(-1)^{m+t}\epsilon^{m+t}p^{\frac{m-t}{2}}(p^t\delta_{0}(\beta)-1)\frac{p^m-1}{2}, & \ \text{ if } F^{*}(\alpha)=0, \\
-(-1)^{m+t}\epsilon^{m+t}p^{\frac{m-t}{2}}(p^t\delta_{0}(\beta)-1), & \ \text{ if } -F^{*}(\alpha) \in \mathcal{S},\\
0, & \ \text{ if } -F^{*}(\alpha) \in \mathcal{N},
\end{split}\right.
\end{split}
\end{equation*}
\end{small}where in the last equation we use Eq. (12) and
\begin{small}
\begin{equation*}
\begin{split}
2\sum_{w \in S'}\eta_{t}(\beta+w)+\eta_{t}(\beta)&=\left\{
\begin{split}
p^t-1, & \ \text{ if } \beta=0,\\
2(|(1+\mathcal{S}')\cap \mathcal{S}'|-|(1+\mathcal{S}')\cap \mathcal{N}'|)+1, & \ \text{ if } \beta \in \mathcal{S}',\\
-2(|(1+\mathcal{N}') \cap \mathcal{S}'|-|(1+\mathcal{N}')\cap \mathcal{N}'|)-1, & \ \text{ if } \beta \in \mathcal{N}',\\
\end{split}
\right.\\
&=(p^t\delta_{0}(\beta)-1),
\end{split}
\end{equation*}
\end{small}which is obtained by Proposition 10.

Note that $\epsilon^{2t}=\eta_{t}(-1)=\eta_{m}(-1)$ when $\frac{m}{t}$ is odd. By the above arguments, one can obtain that $N_{\mathcal{S}, \alpha, \beta}$ and $wt(c_{\alpha, \beta})$ are the same as the case of $\frac{m}{t}$ being even.

We can see that $wt(c_{\alpha, \beta})=0$ if and only if $\alpha=0, \beta=0$, thus the dimension of $C_{D_{F, \mathcal{S}}}$ is $\frac{n}{t}+1$. Furthermore, the weight distribution of $C_{D_{F, \mathcal{S}}}$ can be easily obtained by Lemma 11. When $\alpha=0$, since $p \mid |D_{F, \mathcal{S}}|$, $c_{\alpha, \beta}\cdot c_{\alpha, \beta}=\beta^{2}|D_{F, \mathcal{S}}|=0$. When $\alpha\neq 0$, since $p \mid |D_{F, \mathcal{S}}|$ and the values of $N_{\mathcal{S}, \alpha, i}, i \in \mathbb{F}_{p^t}^{*}$, are independent of $i$, and $p \mid N_{\mathcal{S}, \alpha, -i}, i \in \mathbb{F}_{p^t}^{*}$, when $p^t=3$ as $(p^t, n)\neq (3, 3)$, we have $c_{\alpha, \beta}\cdot c_{\alpha, \beta}=\sum_{i \in \mathbb{F}_{p^t}^{*}}N_{\mathcal{S}, \alpha, -i}i^{2}+2\beta\sum_{i \in \mathbb{F}_{p^t}^{*}}N_{\mathcal{S}, \alpha, -i}i+\beta^{2}|D_{F, \mathcal{S}}|=0$. Hence, $C_{D_{F, \mathcal{S}}}$ is self-orthogonal by Proposition 3. With the same argument as in the proof of Theorem 1, $d(C_{D_{F, \mathcal{S}}}^{\bot})\geq 3$. By Proposition 2, $C_{D_{F, \mathcal{S}}}^{\bot}$ is at least almost optimal according to Hamming bound.
\end{proof}

\begin{theorem}\label{Theorem 8}
Let $F$ be a vectorial dual-bent function with Condition III for which $t=m, l=2$, and let $C_{D_{F, I}}$ be defined by Eq. (1) with $I=\gamma H_{b}$, where $\gamma \in \mathbb{F}_{p^m}^{*}$, $H_{b}=\{x^{b}: x \in \mathbb{F}_{p^m}^{*}\}, b \mid (p^m-1)$. Denote $\mathcal{S}=\{x^{2}: x \in \mathbb{F}_{p^m}^{*}\}, \mathcal{N}=\mathbb{F}_{p^m}^{*} \backslash \mathcal{S}$.

$\emph{(i)}$ When $\gamma \in \mathcal{S}$, $b$ is even, $C_{D_{F, I}}$ is a at most six-weight $[(p^{n-m}+\vartheta (-1)^{m-1}\epsilon^{m}\eta_{m}(-1)p^{\frac{n-m}{2}})\frac{p^m-1}{b}$, $\frac{n}{m}+1]_{p^m}$ self-orthogonal linear code whose weight distribution is given in Table 12.

$\emph{(ii)}$ When $\gamma \in \mathcal{N}$, $b$ is even, $C_{D_{F, I}}$ is a at most six-weight $[(p^{n-m}-\vartheta (-1)^{m-1}\epsilon^{m}\eta_{m}(-1)p^{\frac{n-m}{2}})\frac{p^m-1}{b}$, $\frac{n}{m}+1]_{p^m}$ self-orthogonal linear code whose weight distribution is given in Table 13.

$\emph{(iii)}$ When $b$ is odd, $C_{D_{F, I}}$ is a six-weight $[p^{n-m}\frac{p^m-1}{b}, \frac{n}{m}+1]_{p^m}$ self-orthogonal linear code whose weight distribution is given in Table 14.
\end{theorem}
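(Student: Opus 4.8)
The plan is to mirror the proofs of Theorems 6 and 7; note that the weight analysis here is lighter than in the Condition II analogue (Theorem 5), because under Condition III the count $N_{a,\alpha,\beta}$ carries a \emph{quadratic-character} sum $\sum_{y}\eta_{m}(\cdot)$ rather than a root count, so it is evaluated directly by Proposition 8 and Lemma 9 is not needed. For self-orthogonality, since $\gamma H_{b}\subseteq\mathbb{F}_{p^m}^{*}$ and $D_{F,\gamma H_{b}}=\bigsqcup_{a\in\gamma H_{b}}D_{F,a}$, every codeword splits as a concatenation over these blocks, so $c_{\alpha,\beta}\cdot c_{\alpha,\beta}=\sum_{a\in\gamma H_{b}}c^{(a)}_{\alpha,\beta}\cdot c^{(a)}_{\alpha,\beta}$ with $c^{(a)}_{\alpha,\beta}$ the corresponding codeword of $C_{D_{F,a}}$; each summand vanishes by Theorem 6 (ii) (the hypotheses $t=m,\,l=2$ are inherited, and $a\in\mathbb{F}_{p^m}^{*}$), hence $C_{D_{F,\gamma H_{b}}}$ is self-orthogonal by Proposition 3. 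For the length, $|D_{F,\gamma H_{b}}|=\sum_{a\in\gamma H_{b}}|D_{F,a}|$ and, by Lemma 11 together with $\eta_{m}(-a)=\eta_{m}(-1)\eta_{m}(a)$, the sum $\sum_{a\in\gamma H_{b}}\eta_{m}(-a)$ equals $\eta_{m}(-1)\eta_{m}(\gamma)\tfrac{p^m-1}{b}$ when $b$ is even (then $H_{b}\subseteq\mathcal{S}$) and $0$ when $b$ is odd (then $\eta_{m}$ is balanced on $H_{b}$); this gives the three lengths of (i)--(iii). Once the weights below are seen to be positive, $wt(c_{\alpha,\beta})=0$ forces $\alpha=0,\beta=0$, so the dimension is $\tfrac{n}{m}+1$.

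For the weights: when $\alpha=0$ one gets weight $0$ ($\beta=0$) and weight $|D_{F,\gamma H_{b}}|$ with multiplicity $p^m-1$ ($\beta\neq0$). When $\alpha\neq0$, $wt(c_{\alpha,\beta})=|D_{F,\gamma H_{b}}|-N_{\gamma H_{b},\alpha,\beta}$ with $N_{\gamma H_{b},\alpha,\beta}=\sum_{a\in\gamma H_{b}}N_{a,\alpha,\beta}$. Specialising Lemma 12 to $t=m$, $l=2$ (so $\tfrac{m}{t}=1$ is odd, $Tr_{t}^{m}=Tr_{m}^{m}$ is the identity, $\eta_{t}=\eta_{m}$, $y^{1-l}=y^{-1}$) and folding $\eta_{m}(y)\eta_{m}(F^{*}(\alpha)y-ay^{-1}-\beta)=\eta_{m}(F^{*}(\alpha)y^{2}-\beta y-a)$ together with the isolated $\eta_{m}(-a)$ term into a sum over all of $\mathbb{F}_{p^m}$, one reaches
\[
N_{\gamma H_{b},\alpha,\beta}=\vartheta(-1)^{m-1}\epsilon^{m}p^{\frac{n-3m}{2}}\,S+p^{n-2m}\tfrac{p^m-1}{b},\qquad S:=\sum_{a\in\gamma H_{b}}\ \sum_{y\in\mathbb{F}_{p^m}}\eta_{m}\!\bigl(F^{*}(\alpha)y^{2}-\beta y-a\bigr).
\]
Then $S$ is evaluated by cases: if $F^{*}(\alpha)=0$ the inner sum is $p^m\eta_{m}(-a)$ for $\beta=0$ and $0$ for $\beta\neq0$, reducing $S$ to the sum $\sum_{a\in\gamma H_{b}}\eta_{m}(-a)$ computed above; if $F^{*}(\alpha)\neq0$, Proposition 8 gives the inner sum as $-\eta_{m}(F^{*}(\alpha))$ unless the discriminant $\beta^{2}+4aF^{*}(\alpha)$ vanishes, i.e.\ $a=a^{*}:=-\beta^{2}/(4F^{*}(\alpha))$, in which case it is $(p^m-1)\eta_{m}(F^{*}(\alpha))$; since $a^{*}$ is a single element, $S=-\tfrac{p^m-1}{b}\eta_{m}(F^{*}(\alpha))$ when $\beta=0$ or $a^{*}\notin\gamma H_{b}$, and $S=\eta_{m}(F^{*}(\alpha))\bigl(p^m-\tfrac{p^m-1}{b}\bigr)$ when $\beta\neq0$ and $a^{*}\in\gamma H_{b}$.

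The multiplicities come from the value distribution of $F^{*}$ in Lemma 11 (using $\eta_{m}(-1)=\epsilon^{2m}$): it counts the $\alpha\neq0$ with $F^{*}(\alpha)=0$, with $F^{*}(\alpha)\in\mathcal{S}$, and with $F^{*}(\alpha)\in\mathcal{N}$. For $\beta\neq0$ one fixes such an $\alpha$ with $F^{*}(\alpha)\neq0$ and observes that, as $\beta$ runs over $\mathbb{F}_{p^m}^{*}$, $a^{*}$ runs twice over $\mathcal{S}$ or twice over $\mathcal{N}$ according as $-F^{*}(\alpha)\in\mathcal{S}$ or $\in\mathcal{N}$ (equivalently $\eta_{m}(F^{*}(\alpha))=\pm\eta_{m}(-1)$), so the number of $\beta$ with $a^{*}\in\gamma H_{b}$ is $2|\gamma H_{b}\cap\mathcal{S}|$ or $2|\gamma H_{b}\cap\mathcal{N}|$; these intersection sizes are $\tfrac{p^m-1}{b},0$ or $0,\tfrac{p^m-1}{b}$ when $b$ is even (according to $\eta_{m}(\gamma)$) and both $\tfrac{p^m-1}{2b}$ when $b$ is odd. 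This is where (i)/(ii)/(iii) split: for even $b$ the set $\gamma H_{b}$ lies entirely in $\mathcal{S}$ (case (i)) or entirely in $\mathcal{N}$ (case (ii)), so the branch $\beta\neq0,\ a^{*}\in\gamma H_{b}$ occurs for only one sign of $\eta_{m}(F^{*}(\alpha))$, collapsing one potential weight and leaving at most six; for odd $b$, $\gamma H_{b}$ meets both $\mathcal{S}$ and $\mathcal{N}$, both signs occur, and all six weights are realised. Substituting the values of $S$ and these multiplicities into $wt(c_{\alpha,\beta})=|D_{F,\gamma H_{b}}|-N_{\gamma H_{b},\alpha,\beta}$, and using $d=2$ and $\eta_{m}(-1)=\epsilon^{2m}$ to normalise signs, yields Tables 12--14.

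The main difficulty is organisational rather than conceptual: nothing beyond Propositions 7--8 and Lemmas 11--12 is required, but one must carefully track the interplay of the signs $\vartheta$, $\epsilon^{m}$, $\eta_{m}(-1)$, $\eta_{m}(\gamma)$ and $\eta_{m}(F^{*}(\alpha))$ through the case analysis, determine which of the a priori possible weight values actually occur in each of (i)--(iii), and verify that the multiplicities sum to $p^{n+m}-1$ and agree with the stated tables — including the possible coincidence of the $\alpha=0$ weight $|D_{F,\gamma H_{b}}|$ with an $\alpha\neq0$ weight, which accounts for the phrasing ``at most six-weight'' in (i) and (ii) versus exactly six in (iii).
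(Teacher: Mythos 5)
Your proposal is correct and follows essentially the same route as the paper's proof: self-orthogonality via the disjoint decomposition $D_{F,\gamma H_b}=\bigcup_{a\in\gamma H_b}D_{F,a}$ and Theorem 6, length and multiplicities from Lemma 11, and weights from Lemma 12 specialised to $t=m$, $l=2$ combined with Proposition 8 on the quadratic $F^{*}(\alpha)y^{2}-\beta y-a$, with the case split on whether $-\beta^{2}/(4F^{*}(\alpha))$ lies in $\gamma H_b$. Your observations about where (i)/(ii)/(iii) diverge ($\gamma H_b\subseteq\mathcal{S}$ or $\subseteq\mathcal{N}$ for even $b$ versus $\gamma H_b$ meeting both for odd $b$) match the paper's case analysis exactly.
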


\renewcommand{\thetable}{12}
\begin{table}\label{12}\scriptsize
  \centering
  \caption{The weight distribution of $C_{D_{F, I}}$ constructed in Theorem 8 (i), where $\upsilon=\vartheta (-1)^{m-1}\epsilon^{m}\eta_{m}(-1)$}
  \begin{tabular}{|c|c|}
    \hline
    Hamming weight & Multiplicity \\ \hline
    $0$ & $1$ \\
    $(p^{n-m}+\upsilon p^{\frac{n-m}{2}})\frac{p^m-1}{b}$ & $p^m-1$ \\
    $(p^{n-m}-p^{n-2m})\frac{p^m-1}{b}$ & $p^{n-m}-1$ \\
    $(p^{n-m}-p^{n-2m}+\upsilon p^{\frac{n-m}{2}})\frac{p^m-1}{b}$ & $(p^m-1)(p^{n-m}-1)$ \\
    $(p^{n-m}-p^{n-2m}+\upsilon(p^{\frac{n-m}{2}}+p^{\frac{n-3m}{2}}))\frac{p^m-1}{b}$ & $(p^{n-m}+\vartheta^{-1}(-1)^{m-1}\epsilon^{m}p^{\frac{n-m}{2}})(p^m-\frac{2(p^m-1)}{b})\frac{p^m-1}{2}$ \\
    $(p^{n-m}-p^{n-2m}+\upsilon(p^{\frac{n-m}{2}}-p^{\frac{n-3m}{2}}))\frac{p^m-1}{b}$ & $(p^{n-m}-\vartheta^{-1}(-1)^{m-1}\epsilon^{m}p^{\frac{n-m}{2}})\frac{p^m(p^m-1)}{2}$ \\
    $(p^{n-m}-p^{n-2m}+\upsilon(p^{\frac{n-m}{2}}+p^{\frac{n-3m}{2}}))\frac{p^m-1}{b}-\upsilon p^{\frac{n-m}{2}}$ &
    $(p^{n-m}+\vartheta^{-1}(-1)^{m-1}\epsilon^{m}p^{\frac{n-m}{2}})\frac{(p^m-1)^{2}}{b}$\\
    \hline
  \end{tabular}
\end{table}

\renewcommand{\thetable}{13}
\begin{table}\label{13}\scriptsize
  \centering
  \caption{The weight distribution of $C_{D_{F, I}}$ constructed in Theorem 8 (ii), where $\upsilon=\vartheta (-1)^{m-1}\epsilon^{m}\eta_{m}(-1)$}
  \begin{tabular}{|c|c|}
    \hline
    Hamming weight & Multiplicity  \\ \hline
    $0$ & $1$ \\
    $(p^{n-m}-\upsilon p^{\frac{n-m}{2}})\frac{p^m-1}{b}$ & $p^m-1$ \\
    $(p^{n-m}-p^{n-2m})\frac{p^m-1}{b}$ & $p^{n-m}-1$ \\
    $(p^{n-m}-p^{n-2m}-\upsilon p^{\frac{n-m}{2}})\frac{p^m-1}{b}$ & $(p^m-1)(p^{n-m}-1)$ \\
    $(p^{n-m}-p^{n-2m}-\upsilon(p^{\frac{n-m}{2}}+p^{\frac{n-3m}{2}}))\frac{p^m-1}{b}$ & $(p^{n-m}-\vartheta^{-1}(-1)^{m-1}\epsilon^{m}p^{\frac{n-m}{2}})(p^m-\frac{2(p^m-1)}{b})\frac{p^m-1}{2}$ \\
    $(p^{n-m}-p^{n-2m}-\upsilon(p^{\frac{n-m}{2}}-p^{\frac{n-3m}{2}}))\frac{p^m-1}{b}$ & $(p^{n-m}+\vartheta^{-1}(-1)^{m-1}\epsilon^{m}p^{\frac{n-m}{2}})\frac{p^m(p^m-1)}{2}$ \\
    $(p^{n-m}-p^{n-2m}-\upsilon(p^{\frac{n-m}{2}}+p^{\frac{n-3m}{2}}))\frac{p^m-1}{b}+\upsilon p^{\frac{n-m}{2}}$ &
    $(p^{n-m}-\vartheta^{-1}(-1)^{m-1}\epsilon^{m}p^{\frac{n-m}{2}})\frac{(p^m-1)^{2}}{b}$\\
    \hline
  \end{tabular}
\end{table}

\renewcommand{\thetable}{14}
\begin{table}\label{14}\scriptsize
  \centering
  \caption{The weight distribution of $C_{D_{F, I}}$ constructed in Theorem 8 (iii), where $\upsilon=\vartheta (-1)^{m-1}\epsilon^{m}\eta_{m}(-1)$}
  \begin{tabular}{|c|c|}
    \hline
    Hamming weight & Multiplicity  \\ \hline
    $0$ & $1$ \\
    $p^{n-m}\frac{p^m-1}{b}$ & $p^m-1$ \\
    $(p^{n-m}-p^{n-2m})\frac{p^m-1}{b}$ & $p^m(p^{n-m}-1)$ \\
    $(p^{n-m}-p^{n-2m}+\upsilon p^{\frac{n-3m}{2}})\frac{p^m-1}{b}$ & $(p^{n-m}+\vartheta^{-1} (-1)^{m-1}\epsilon^{m}p^{\frac{n-m}{2}})(p^m-\frac{(p^m-1)}{b})\frac{p^m-1}{2}$ \\
    $(p^{n-m}-p^{n-2m}-\upsilon p^{\frac{n-3m}{2}})\frac{p^m-1}{b}$ & $(p^{n-m}-\vartheta^{-1} (-1)^{m-1}\epsilon^{m}p^{\frac{n-m}{2}})(p^m-\frac{(p^m-1)}{b})\frac{p^m-1}{2}$ \\
    $(p^{n-m}-p^{n-2m}+\upsilon p^{\frac{n-3m}{2}})\frac{p^m-1}{b}-\upsilon p^{\frac{n-m}{2}}$ &
    $(p^{n-m}+\vartheta^{-1} (-1)^{m-1}\epsilon^{m}p^{\frac{n-m}{2}})\frac{(p^m-1)^{2}}{2b}$ \\
    $(p^{n-m}-p^{n-2m}-\upsilon p^{\frac{n-3m}{2}})\frac{p^m-1}{b}+\upsilon p^{\frac{n-m}{2}}$ &
    $(p^{n-m}-\vartheta^{-1} (-1)^{m-1}\epsilon^{m}p^{\frac{n-m}{2}})\frac{(p^m-1)^{2}}{2b}$\\
    \hline
  \end{tabular}
\end{table}

\begin{proof}
Since $D_{F, \gamma H_{b}}=\cup_{a \in \gamma H_{b}}D_{F, a}, D_{F, a}\cap D_{F, a'}=\emptyset, a\neq a'$, and $C_{D_{F, a}}, a \in \gamma H_{b}$, are all self-orthogonal by Theorem 6, we have that $C_{D_{F, \gamma H_{b}}}$ is self-orthogonal. It is clear that the dimension of $C_{D_{F, \gamma H_{b}}}$ is $\frac{n}{m}+1$ since the dimension of $C_{D_{F, a}}$ is $\frac{n}{m}+1$ for any $a \in \gamma H_{b}$.

For the weight distribution of $C_{D_{F, \gamma H_{b}}}$, we only give the proof for cases (i) and (iii) since the proof of case (ii) is similar to case (i).

(i) By Lemma 11, when $\gamma \in \mathcal{S}$ and $b$ is even, the length of $C_{D_{F, \gamma H_{b}}}$ is $|D_{F, \gamma H_{b}}|=p^{n-m}\frac{p^m-1}{b}+\vartheta (-1)^{m-1}\epsilon^{m}p^{\frac{n-m}{2}}\sum_{a \in \gamma H_{b}}\eta_{m}(-a)=(p^{n-m}+\vartheta (-1)^{m-1}\epsilon^{m}\eta_{m}(-1)p^{\frac{n-m}{2}})\frac{p^m-1}{b}$. When $\alpha=0, \beta=0$, $wt(c_{\alpha, \beta})=0$. When $\alpha=0, \beta\neq 0$, $wt(c_{\alpha, \beta})=|D_{F, \gamma H_{b}}|$. When $\alpha \in V_{n}^{(p)} \backslash \{0\}, \beta \in \mathbb{F}_{p^m}$, $wt(c_{\alpha, \beta})=|D_{F, \gamma H_{b}}|-N_{\gamma H_{b}, \alpha, \beta}$, where $N_{\gamma H_{b}, \alpha, \beta}=|\{x \in V_{n}^{(p)}: F(x) \in \gamma H_{b}, \sum_{j=1}^{s}Tr_{m}^{n_{j}}(\alpha_{j}x_{j})+\beta=0\}|$.

By Lemma 12 and Proposition 8, for any $\alpha \in V_{n}^{(p)} \backslash \{0\}, \beta \in \mathbb{F}_{p^m}$, we have
\begin{small}
\begin{equation*}
\begin{split}
&N_{\gamma H_{b}, \alpha, \beta}\\
&=\sum_{a \in \gamma H_{b}}|\{x \in V_{n}^{(p)}: F(x)=a, \sum_{j=1}^{s}Tr_{m}^{n_{j}}(\alpha_{j}x_{j})+\beta=0\}|\\
&=\vartheta (-1)^{m-1}\epsilon^{m}p^{\frac{n-3m}{2}}\sum_{a \in \gamma H_{b}}\sum_{y \in \mathbb{F}_{p^m}^{*}}\eta_{m}(F^{*}(\alpha)y^{2}-\beta y-a)+p^{n-2m}\frac{p^m-1}{b}\\
& \ \ +\vartheta (-1)^{m-1}\epsilon^{m}p^{\frac{n-3m}{2}}\sum_{a \in \gamma H_{b}}\eta_{m}(-a)\\
&=\vartheta (-1)^{m-1}\epsilon^{m}p^{\frac{n-3m}{2}}\sum_{a \in \gamma H_{b}}\sum_{y \in \mathbb{F}_{p^m}}\eta_{m}(F^{*}(\alpha)y^{2}-\beta y-a)+p^{n-2m}\frac{p^m-1}{b}\\
&=\left\{
\begin{split}
(\vartheta (-1)^{m-1}\epsilon^{m}\eta_{m}(-1)p^{\frac{n-m}{2}}+p^{n-2m})\frac{p^m-1}{b}, & \ \text{ if } F^{*}(\alpha)=0, \beta=0, \\
p^{n-2m}\frac{p^m-1}{b}, & \ \text{ if } F^{*}(\alpha)=0, \beta\neq 0, \\
(-\vartheta (-1)^{m-1}\epsilon^{m}\eta_{m}(F^{*}(\alpha))p^{\frac{n-3m}{2}}+p^{n-2m})\frac{p^m-1}{b}, & \ \text{ if } F^{*}(\alpha) \neq 0, \frac{\beta^{2}}{-4F^{*}(\alpha)} \notin \gamma H_{b},\\
\vartheta (-1)^{m-1}\epsilon^{m}\eta_{m}(F^{*}(\alpha))p^{\frac{n-3m}{2}}(p^m-\frac{p^m-1}{b})+p^{n-2m}\frac{p^m-1}{b}, & \ \text{ if } F^{*}(\alpha) \neq 0, \frac{\beta^{2}}{-4F^{*}(\alpha)} \in \gamma H_{b},
\end{split}
\right.\\
&=\left\{
\begin{split}
& (\vartheta (-1)^{m-1}\epsilon^{m}\eta_{m}(-1)p^{\frac{n-m}{2}}+p^{n-2m})\frac{p^m-1}{b}, \ \  \ \text{ if } F^{*}(\alpha)=0, \beta=0, \\
& p^{n-2m}\frac{p^m-1}{b}, \ \ \ \ \ \ \ \ \ \ \ \ \ \ \ \ \ \ \ \ \ \ \ \ \ \ \ \ \ \ \ \ \ \ \ \ \ \ \ \ \ \text{ if } F^{*}(\alpha)=0, \beta\neq 0, \\
& (-\vartheta (-1)^{m-1}\epsilon^{m}\eta_{m}(-1)p^{\frac{n-3m}{2}}+p^{n-2m})\frac{p^m-1}{b}, \\
& \ \ \ \ \ \ \ \ \text{ if } -F^{*}(\alpha) \in \mathcal{S}, \beta=0, \text{ or } -F^{*}(\alpha) \in \mathcal{S}, \beta\neq 0, \frac{\beta^{2}}{-4F^{*}(\alpha)} \notin \gamma H_{b},\\
& (\vartheta (-1)^{m-1}\epsilon^{m}\eta_{m}(-1)p^{\frac{n-3m}{2}}+p^{n-2m})\frac{p^m-1}{b}, \ \ \ \ \ \ \text{ if } -F^{*}(\alpha) \in \mathcal{N}, \\
& \vartheta (-1)^{m-1}\epsilon^{m}\eta_{m}(-1)p^{\frac{n-3m}{2}}(p^m-\frac{p^m-1}{b})+p^{n-2m}\frac{p^m-1}{b}, \\
& \ \ \ \ \ \ \ \ \ \ \ \ \ \ \ \ \ \ \ \ \ \ \ \ \ \ \ \ \ \ \ \ \ \ \ \ \ \text{ if } -F^{*}(\alpha) \in \mathcal{S}, \beta\neq 0, \frac{\beta^{2}}{-4F^{*}(\alpha)} \in \gamma H_{b},
\end{split}
\right.
\end{split}
\end{equation*}
\end{small}and then\\
\begin{small}
\begin{equation*}
\begin{split}
&wt(c_{\alpha, \beta})\\
&=\left\{
\begin{split}
& (p^{n-m}-p^{n-2m})\frac{p^m-1}{b}, \ \ \ \ \ \ \ \ \ \ \ \ \ \ \ \ \ \ \ \ \ \ \ \ \ \ \ \ \ \ \ \ \ \ \ \ \ \ \ \ \ \ \ \ \ \text{ if } F^{*}(\alpha)=0, \beta=0, \\
& (p^{n-m}-p^{n-2m}+\vartheta (-1)^{m-1}\epsilon^{m}\eta_{m}(-1)p^{\frac{n-m}{2}})\frac{p^m-1}{b},  \ \ \ \ \ \ \ \ \ \text{ if } F^{*}(\alpha)=0, \beta\neq 0, \\
& (p^{n-m}-p^{n-2m}+\vartheta (-1)^{m-1}\epsilon^{m}\eta_{m}(-1)(p^{\frac{n-m}{2}}+p^{\frac{n-3m}{2}}))\frac{p^m-1}{b}, \\
& \ \ \ \ \ \ \ \ \ \ \ \ \ \ \ \ \ \ \text{ if } -F^{*}(\alpha) \in \mathcal{S}, \beta=0, \text{ or } -F^{*}(\alpha) \in \mathcal{S}, \beta\neq 0, \frac{\beta^{2}}{-4F^{*}(\alpha)} \notin \gamma H_{b},\\
& (p^{n-m}-p^{n-2m}+\vartheta (-1)^{m-1}\epsilon^{m}\eta_{m}(-1)(p^{\frac{n-m}{2}}-p^{\frac{n-3m}{2}}))\frac{p^m-1}{b},  \ \text{ if } -F^{*}(\alpha) \in \mathcal{N}, \\
& (p^{n-m}-p^{n-2m}+\vartheta (-1)^{m-1}\epsilon^{m}\eta_{m}(-1)(p^{\frac{n-m}{2}}+p^{\frac{n-3m}{2}}))\frac{p^m-1}{b}-\vartheta (-1)^{m-1}\epsilon^{m}\eta_{m}(-1)p^{\frac{n-m}{2}},  \\
& \ \ \ \ \ \ \ \ \ \ \ \ \ \ \ \ \ \ \ \ \ \ \ \ \ \ \ \ \ \ \ \ \ \ \ \ \ \ \ \ \ \ \ \ \ \ \ \  \text{ if } -F^{*}(\alpha) \in \mathcal{S}, \beta\neq 0, \frac{\beta^{2}}{-4F^{*}(\alpha)} \in \gamma H_{b}.
\end{split}
\right.
\end{split}
\end{equation*}
\end{small}

By the above equation and Lemma 11, the weight distribution of $C_{D_{F, \gamma H_{b}}}$ can be easily obtained.

(iii) By Lemma 11, when $b$ is odd, the length of $C_{D_{F, \gamma H_{b}}}$ is $|D_{F, \gamma H_{b}}|=p^{n-m}\frac{p^m-1}{b}+\vartheta (-1)^{m-1}\epsilon^{m}$ $p^{\frac{n-m}{2}}\sum_{a \in \gamma H_{b}}\eta_{m}(-a)=p^{n-m}\frac{p^m-1}{b}$. When $\alpha=0, \beta=0$, $wt(c_{\alpha, \beta})=0$. When $\alpha=0, \beta\neq 0$, $wt(c_{\alpha, \beta})=|D_{F, \gamma H_{b}}|$. With the similar computation as in (i), for any $\alpha \in V_{n}^{(p)} \backslash \{0\}, \beta \in \mathbb{F}_{p^m}$,\\
\begin{small}
\begin{equation*}
\begin{split}
N_{\gamma H_{b}, \alpha, \beta}&=\left\{
\begin{split}
& p^{n-2m}\frac{p^m-1}{b}, \text{ if } F^{*}(\alpha)=0, \\
& (-\vartheta (-1)^{m-1}\epsilon^{m}\eta_{m}(-1)p^{\frac{n-3m}{2}}+p^{n-2m})\frac{p^m-1}{b}, \\
& \ \text{ if } -F^{*}(\alpha) \in \mathcal{S}, \beta=0, \text{ or } -F^{*}(\alpha) \in \mathcal{S}, \beta\neq 0, \frac{\beta^{2}}{-4F^{*}(\alpha)} \notin \gamma H_{b},\\
& (\vartheta (-1)^{m-1}\epsilon^{m}\eta_{m}(-1)p^{\frac{n-3m}{2}}+p^{n-2m})\frac{p^m-1}{b}, \\
& \ \text{ if } -F^{*}(\alpha) \in \mathcal{N}, \beta=0, \text{ or } -F^{*}(\alpha) \in \mathcal{N}, \beta\neq 0, \frac{\beta^{2}}{-4F^{*}(\alpha)} \notin \gamma H_{b},\\
& \vartheta (-1)^{m-1}\epsilon^{m}\eta_{m}(-1)p^{\frac{n-3m}{2}}(p^m-\frac{p^m-1}{b})+p^{n-2m}\frac{p^m-1}{b}, \\
& \ \ \ \ \ \ \ \ \ \ \ \ \ \ \ \ \ \ \ \ \ \ \ \ \ \ \ \ \ \ \ \ \ \text{ if } -F^{*}(\alpha) \in \mathcal{S}, \beta\neq 0, \frac{\beta^{2}}{-4F^{*}(\alpha)} \in \gamma H_{b},\\
& -\vartheta (-1)^{m-1}\epsilon^{m}\eta_{m}(-1)p^{\frac{n-3m}{2}}(p^m-\frac{p^m-1}{b})+p^{n-2m}\frac{p^m-1}{b}, \\
& \ \ \ \ \ \ \ \ \ \ \ \ \ \ \ \ \ \ \ \ \ \ \ \ \ \ \ \ \ \ \ \ \ \text{ if } -F^{*}(\alpha) \in \mathcal{N}, \beta\neq 0, \frac{\beta^{2}}{-4F^{*}(\alpha)} \in \gamma H_{b},
\end{split}
\right.
\end{split}
\end{equation*}
\end{small}and then
\begin{small}
\begin{equation*}
\begin{split}
wt(c_{\alpha, \beta})&=\left\{
\begin{split}
& (p^{n-m}-p^{n-2m})\frac{p^m-1}{b}, \text{ if } F^{*}(\alpha)=0, \\
& (p^{n-m}-p^{n-2m}+\vartheta (-1)^{m-1}\epsilon^{m}\eta_{m}(-1)p^{\frac{n-3m}{2}})\frac{p^m-1}{b}, \\
& \ \text{ if } -F^{*}(\alpha) \in \mathcal{S}, \beta=0, \text{ or } -F^{*}(\alpha) \in \mathcal{S}, \beta\neq 0, \frac{\beta^{2}}{-4F^{*}(\alpha)} \notin \gamma H_{b},\\
& (p^{n-m}-p^{n-2m}-\vartheta (-1)^{m-1}\epsilon^{m}\eta_{m}(-1)p^{\frac{n-3m}{2}})\frac{p^m-1}{b}, \\
& \ \text{ if } -F^{*}(\alpha) \in \mathcal{N}, \beta=0, \text{ or } -F^{*}(\alpha) \in \mathcal{N}, \beta\neq 0, \frac{\beta^{2}}{-4F^{*}(\alpha)} \notin \gamma H_{b},\\
& (p^{n-m}-p^{n-2m}+\vartheta (-1)^{m-1}\epsilon^{m}\eta_{m}(-1)p^{\frac{n-3m}{2}})\frac{p^m-1}{b}-\vartheta (-1)^{m-1}\epsilon^{m}\eta_{m}(-1)p^{\frac{n-m}{2}}, \\
& \ \ \ \ \ \ \ \ \ \ \ \ \ \ \ \ \ \ \ \ \ \ \ \ \ \ \ \ \ \ \ \ \ \text{ if } -F^{*}(\alpha) \in \mathcal{S}, \beta\neq 0, \frac{\beta^{2}}{-4F^{*}(\alpha)} \in \gamma H_{b},\\
& (p^{n-m}-p^{n-2m}-\vartheta (-1)^{m-1}\epsilon^{m}\eta_{m}(-1)p^{\frac{n-3m}{2}})\frac{p^m-1}{b}+\vartheta (-1)^{m-1}\epsilon^{m}\eta_{m}(-1)p^{\frac{n-m}{2}}, \\
& \ \ \ \ \ \ \ \ \ \ \ \ \ \ \ \ \ \ \ \ \ \ \ \ \ \ \ \ \ \ \ \ \ \text{ if } -F^{*}(\alpha) \in \mathcal{N}, \beta\neq 0, \frac{\beta^{2}}{-4F^{*}(\alpha)} \in \gamma H_{b}.
\end{split}
\right.
\end{split}
\end{equation*}
\end{small}

By the above equation and Lemma 11, the weight distribution of $C_{D_{F, \gamma H_{b}}}$ can be easily obtained.
\end{proof}

In the following, for general $m$, by the results in \cite{CMP2018Ve,WF2023Ne,WSWF2023Co}, we list some explicit classes of vectorial dual-bent functions $F: V_{n}^{(p)}\rightarrow \mathbb{F}_{p^m}$ satisfying the conditions in Theorems 6-8. Note that when $\frac{n}{m}$ is odd, $\epsilon^{n} \in \{\pm \epsilon^{m}\}$.

\begin{itemize}
  \item Let $p$ be an odd prime, $t, m, s$ be positive integers for which $t \mid m$, $s\geq 3$ is odd, $(p^t, ms)\neq (3, 3)$. By the result in \cite{WF2023Ne} and the proof of Proposition 8 of \cite{WSWF2023Co}, one can see that all non-degenerate quadratic forms $F$ from $\mathbb{F}_{p^m}^{s}$ ($\mathbb{F}_{p^{ms}}$) to $\mathbb{F}_{p^m}$ are vectorial dual-bent functions satisfying Condition III with $l=d=2$. We list two explicit non-degenerate quadratic forms.
   \begin{itemize}
   \item Let $m, n, t$ be positive integers with $t \mid m, m \mid n$, $\frac{n}{m}\geq3$ is odd, $(p^t, n)\neq (3, 3)$, $\alpha \in \mathbb{F}_{p^n}^{*}$. Define $F: \mathbb{F}_{p^n}\rightarrow \mathbb{F}_{p^m}$ as
    \begin{equation}\label{13}
    F(x)=Tr_{m}^{n}(\alpha x^{2}).
    \end{equation}
  Then $F$ is a vectorial dual-bent function satisfying Condition III with $l=d=2, \vartheta=(-1)^{n-1}\epsilon^{n}\eta_{n}(\alpha)$.
   \item Let $m, t, s$ be positive integers with $t \mid m$, $s\geq3$ is odd, $(p^t, ms)\neq (3, 3)$, $\alpha_{i} \in \mathbb{F}_{p^m}^{*}, 1\leq i \leq s$. Define $F: \mathbb{F}_{p^m}^{s}\rightarrow \mathbb{F}_{p^m}$ as
   \begin{equation}\label{14}
   F(x_{1}, \dots, x_{s})=\sum_{i=1}^{s}\alpha_{i}x_{i}^{2}.
   \end{equation}
   Then $F$ is a vectorial dual-bent function satisfying Condition III with $l=d=2, \vartheta=(-1)^{m-1}\epsilon^{ms}\eta_{m}(\alpha_{1} \cdots \alpha_{s})$.

  In details, for $F$ defined by Eq. (13) and Eq. (14), $F$ satisfies the condition in Theorem 6 (i) and Theorem 7; $F$ satisfies the condition in Theorem 6 (ii) and Theorem 8 when $t=m$.
   \end{itemize}
  \item Let $p$ be an odd prime. Let $m, t, n', n''$ be positive integers with $t \mid m, m \mid n', m \mid n''$, $\frac{n'}{m}$ is odd, $(p^t, n'+2n'')\neq (3, 3)$. For $i \in \mathbb{F}_{p^m}$, let $H(i; x): \mathbb{F}_{p^{n'}}\rightarrow \mathbb{F}_{p^m}$ be given by $H(0; x)=Tr_{m}^{n'}(\alpha_{1}x^{2})$, $H(i; x)=Tr_{m}^{n'}(\alpha_{2}x^{2})$ if $i$ is a square in $\mathbb{F}_{p^m}^{*}$, $H(i; x)=Tr_{m}^{n'}(\alpha_{3}x^{2})$ if $i$ is a non-square in $\mathbb{F}_{p^m}^{*}$, where $\alpha_{1}, \alpha_{2}, \alpha_{3}$ are all square elements or all non-square elements in $\mathbb{F}_{p^{n'}}^{*}$. Let $G: \mathbb{F}_{p^{n''}} \times \mathbb{F}_{p^{n''}}\rightarrow \mathbb{F}_{p^m}$ be given by $G(y_{1}, y_{2})=Tr_{m}^{n''}(\beta y_{1}L(y_{2}))$, where $\beta \in \mathbb{F}_{p^{n''}}^{*}$, $L(x)=\sum a_{i}x^{p^{mi}}$ is a $p^{m}$-polynomial over $\mathbb{F}_{p^{n''}}$ inducing a permutation of $\mathbb{F}_{p^{n''}}$. Let $\gamma \in \mathbb{F}_{p^{n''}}^{*}$. Define $F: \mathbb{F}_{p^{n'}} \times \mathbb{F}_{p^{n''}} \times \mathbb{F}_{p^{n''}}\rightarrow \mathbb{F}_{p^m}$ as
      \begin{equation}\label{15}
      F(x, y_{1}, y_{2})=H(Tr_{m}^{n''}(\gamma y_{2}^{2}); x)+G(y_{1}, y_{2}).
      \end{equation}
  Then $F$ is a vectorial dual-bent function satisfying Condition III with $l=d=2, \vartheta=(-1)^{n'-1}\epsilon^{n'}\eta_{n'}(\alpha_{1})$. In details, $F$ satisfies the condition in Theorem 6 (i) and Theorem 7; $F$ satisfies the condition in Theorem 6 (ii) and Theorem 8 when $t=m$.
\end{itemize}

We give some examples to illustrate Theorems 6-8.
\begin{example}\label{Example 4}
Let $p=3, t=1, m=2, n=6$. Let $F: \mathbb{F}_{3^2}^{3}\rightarrow \mathbb{F}_{3^2}$ be defined by $F(x, y_{1}, y_{2})=(1-w^{2})y_{2}^{8}x^{2}+w^{2}x^{2}+y_{1}y_{2}$, where $w$ is a primitive element of $\mathbb{F}_{3^2}$. Then by Eq. (15), $F$ is a vectorial dual-bent function satisfying Condition III with $l=d=2, \vartheta=1$. When $I=\{0\}$, by Theorem 6 (i), the linear code $C_{D_{F, 0}}$ defined by Eq. (1) is a six-weight $[81, 7, 48]_{3}$ self-orthogonal code with weight enumerator $1+360z^{48}+576z^{51}+240z^{54}+720z^{57}+288z^{60}+2z^{81}$, and its dual code is a $[81, 74, 3]_{3}$ linear code which is almost optimal. When $I=\mathcal{N}=\mathbb{F}_{3^2}^{*} \backslash \{x^{2}: x \in \mathbb{F}_{3^2}^{*}\}$, by Theorem 7, the linear code $C_{D_{F, \mathcal{N}}}$ defined by Eq. (1) is a six-weight $[288, 7, 180]_{3}$ self-orthogonal code with weight enumerator $1+160z^{180}+288z^{186}+1080z^{192}+576z^{195}+80z^{216}+2z^{288}$, and its dual code is a $[288, 281, 3]_{3}$ linear code which is at least almost optimal.
\end{example}

\begin{example}\label{Example 5}
Let $p=3, t=2, m=2, n=6$, $w$ be a primitive element of $\mathbb{F}_{3^2}$, and $F: \mathbb{F}_{3^2}^{3}\rightarrow \mathbb{F}_{3^2}$ be given in Example 4. By Theorem 6 (ii), the linear code $C_{D_{F, w}}$ defined by Eq. (1) is a five-weight $[72, 4, 62]_{9}$ self-orthogonal linear code with weight enumerator $1+2016z^{62}+640z^{63}+3240z^{64}+576z^{71}+88z^{72}$, and its dual code is a $[72, 68, 4]_{9}$ linear code which is optimal.
\end{example}

\begin{example}\label{Example 6}
Let $p=5, t=2, m=2, n=6, b=4$. Let $F: \mathbb{F}_{5^2}^{3}\rightarrow \mathbb{F}_{5^2}$ be defined by $F(x, y_{1}, y_{2})=(w-w^{3})y_{2}^{8}x^{2}+w^{3}x^{2}+y_{1}y_{2}$, where $w$ is a primitive element of $\mathbb{F}_{5^2}$. Then by Eq. (15), $F$ is a vectorial dual-bent function satisfying Condition III with $l=d=2, \vartheta=1$. By Theorem 8, the linear code $C_{D_{F, H_{4}}}$ defined by Eq. (1) is a five-weight $[3600, 4, 3444]_{25}$ self-orthogonal code with weight enumerator $1+93600z^{3444}+14976z^{3450}+195000z^{3456}+86400z^{3469}+648z^{3600}$, and its dual code is a $[3600, 3596, 3]_{25}$ linear code which is at least almost optimal.
\end{example}

\section{Comparison}
\label{sec:6}

To the best of our knowledge, Heng, Li and Liu in \cite{HLL2023Te} for the first time considered using ternary bent functions to construct ternary self-orthogonal linear codes. Very recently, Li and Heng in \cite{LH2023Se} showed that two classes of $p$-ary linear codes constructed in \cite{HLL2023Te} are also self-orthogonal for general odd prime $p$. In \cite{WH2023Se}, Wang and Heng used two classes of non-degenerate quadratic forms to construct $q$-ary self-orthogonal codes. In the following, we compare our results with those in \cite{HLL2023Te,LH2023Se,WH2023Se}. We will show that the works on constructing self-orthogonal codes from $p$-ary bent functions in \cite{HLL2023Te,LH2023Se} and non-degenerate quadratic forms with $q$ being odd in \cite{WH2023Se} can be obtained by our results. Moreover, the parameters of the constructed self-orthogonal in this paper are more abundant and flexible.

\begin{itemize}
  \item Let $p$ be odd and $n\geq 4$ be even. By Proposition 1, bent functions $f: \mathbb{F}_{p^n}\rightarrow \mathbb{F}_{p}$ belonging to $\mathcal{RF}$ are vectorial dual-bent functions satisfying Condition II with $t=m=1$.
      \begin{itemize}
        \item Then the self-orthogonal code defined by Eq. (1) with $I=\{0\}$ from bent function $f: \mathbb{F}_{p^n}\rightarrow \mathbb{F}_{p}$ belonging to $\mathcal{RF}$ given in Theorem 4 of \cite{HLL2023Te} for $p=3$ and Theorem 55 of \cite{LH2023Se} for general odd prime $p$ can be obtained by Theorem 3 (i) with $t=m=1$.
        \item When $p=3, m=1$, $\mathcal{S}=\{x^{2}: x \in \mathbb{F}_{p^m}^{*}\}=\{1\}, \mathcal{N}=\mathbb{F}_{p^m}^{*}\backslash \mathcal{S}=\{2\}$. Then the self-orthogonal ternary linear code defined by Eq. (1) with $I=\{1\}$ or $I=\{2\}$ from bent function $f: \mathbb{F}_{3^n}\rightarrow \mathbb{F}_{3}$ belonging to $\mathcal{RF}$ given in Theorem 1 of \cite{HLL2023Te} can be obtained by Corollary 1 with $t=m=1$.
      \end{itemize}
  \item Let $p$ be odd and $n\geq 5$ be odd. By Proposition 1, bent functions $f: \mathbb{F}_{p^n}\rightarrow \mathbb{F}_{p}$ belonging to $\mathcal{RF}$ are vectorial dual-bent functions satisfying Condition III with $t=m=1$.
      \begin{itemize}
        \item Then the self-orthogonal code defined by Eq. (1) with $I=\{0\}$ from bent function $f: \mathbb{F}_{p^n}\rightarrow \mathbb{F}_{p}$ belonging to $\mathcal{RF}$ given in Theorem 5 of \cite{HLL2023Te} for $p=3$ and Theorem 55 of \cite{LH2023Se} for general odd prime $p$ can be obtained by Theorem 6 (i) with $t=m=1$. Besides, Theorem 6 (i) shows that when $n=3, p>3$, the linear code defined by Eq. (1) with $I=\{0\}$ from bent function $f: \mathbb{F}_{p^n} \rightarrow \mathbb{F}_{p}$ belonging to $\mathcal{RF}$ is also self-orthogonal.
        \item When $p=3, m=1$, $\mathcal{S}=\{x^{2}: x \in \mathbb{F}_{p^m}^{*}\}=\{1\}, \mathcal{N}=\mathbb{F}_{p^m}^{*}\backslash \mathcal{S}=\{2\}$. Then the self-orthogonal ternary linear code defined by Eq. (1) with $I=\{1\}$ or $I=\{2\}$ from bent function $f: \mathbb{F}_{3^n}\rightarrow \mathbb{F}_{3}$ belonging to $\mathcal{RF}$ given in Theorem 2 of \cite{HLL2023Te} can be obtained by Theorem 7 with $t=m=1$.
      \end{itemize}
  \item Let $p$ be odd, $t, m, n$ be positive integers with $t=m, 2m \mid n, 2m\neq n$. By the analysis in Section IV, non-degenerate quadratic form $F: \mathbb{F}_{p^n}\rightarrow \mathbb{F}_{p^m}$ defined as $F(x)=Tr_{m}^{\frac{n}{2}}(x^{p^{\frac{n}{2}+1}})$ is a vectorial dual-bent function satisfying Condition II with $l=d=2, \varepsilon=-1$.
    \begin{itemize}
      \item Then the self-orthogonal code defined by Eq. (1) with $I=\{a\}, a \in \mathbb{F}_{p^m}^{*}$, from $F$ given in Theorem 4.5 of \cite{WH2023Se} can be obtained by Theorem 3 (ii).
      \item In Theorem 4.6 of \cite{WH2023Se}, Wang and Heng showed that the linear code defined by Eq. (1) with $I=H_{b}$ from $F$ is self-orthogonal, where $b \mid (p^m-1), H_{b}=\{x^{b}: x \in \mathbb{F}_{p^m}^{*}\}$. However, the weight distribution is open. For $b=2$, or $m=2jj', b \mid (p^j+1)$ (where $j, j'$ are positive integers, and $j$ is the smallest such positive integer), we compute the weight distribution in Corollary 1 and Theorem 5, respectively.
    \end{itemize}
  \item Let $p$ be odd, $t, m, n$ be positive integers for which $t=m$, $m \mid n$, $\frac{n}{m}\geq 3$ is odd, $(p^m, n)\neq (3, 3)$. By the analysis in Section V, non-degenerate quadratic form $F: \mathbb{F}_{p^n}\rightarrow \mathbb{F}_{p^m}$ defined as $F(x)=Tr_{m}^{n}(x^{2})$ is a vectorial dual-bent function satisfying Condition III with $l=d=2, \vartheta=(-1)^{n-1}\epsilon^{n}$. Then the self-orthogonal code defined by Eq. (1) with $I=\{a\}, a \in \mathbb{F}_{p^m}^{*}$, from $F$ given in Theorem 4.7 of \cite{WH2023Se} can be obtained by Theorem 6 (ii), and the self-orthogonal code defined by Eq. (1) with $I=H_{b}$ from $F$ given in Theorem 4.8 of \cite{WH2023Se} can be obtained by Theorem 8.
\end{itemize}

In Section III, we show that vectorial dual-bent functions with Condition I are quite powerful in constructing self-orthogonal codes as for any nonempty $I\subset V_{m}^{(p)}$, the linear code defined by Eq. (1) is self-orthogonal whose weight distribution is completely determined. Moreover, since there is no division restriction on $t$ and $m$, by using vectorial dual-bent functions defined by Eq. (3), we can obtain self-orthogonal codes whose parameters are abundant and flexible. Furthermore, by Theorems 3 (i), 4, 6 (i), 7, when $t \mid m, t\neq m$, the parameters of the self-orthogonal codes from vectorial dual-bent functions with Conditions II and III are different from those in \cite{HLL2023Te,LH2023Se,WH2023Se}.

\section{Applications in LCD codes and quantum codes}
\label{sec:7}

In this section, by using the obtained self-orthogonal codes, some new families of LCD codes and quantum codes are constructed, some of which are at least almost optimal.

\begin{theorem}\label{Theorem 9}
Let $t, m, n_{j}, 1\leq j\leq s$, be positive integers with $t \mid n_{j}, 1\leq j \leq s$, and let $n=\sum_{j=1}^{s}n_{j}$. For $F: \mathbb{F}_{p^{n_{1}}} \times \dots \times \mathbb{F}_{p^{n_{s}}}\rightarrow V_{m}^{(p)}$ and $I\subset V_{m}^{(p)}$, denote $D_{F, I}=\{\mu^{(1)}, \dots, \mu^{(e)}\}$, where $\mu^{(i)}=(\mu^{(i)}_{1}, \dots, \mu^{(i)}_{s}), 1\leq i \leq e$. For any $1\leq j\leq s$, let $\{\gamma_{j, 1}, \dots, \gamma_{j, \frac{n_{j}}{t}}\}$ be a basis of $\mathbb{F}_{p^{n_{j}}}$ over $\mathbb{F}_{p^t}$. Define
\begin{equation*}
G=[I_{\frac{n}{t}+1}, \overline{G}],
\end{equation*}
where $I_{\frac{n}{t}+1}$ is the identity matrix of size $(\frac{n}{t}+1) \times (\frac{n}{t}+1)$, and\\
\begin{small}
\begin{equation*}
\overline{G}=\left(
  \begin{array}{cccc}
     Tr_{t}^{n_{1}}(\gamma_{1, 1}\mu^{(1)}_{1}) & Tr_{t}^{n_{1}}(\gamma_{1, 1}\mu^{(2)}_{1}) & \dots & Tr_{t}^{n_{1}}(\gamma_{1, 1}\mu^{(e)}_{1}) \\
     \vdots & \vdots & \vdots & \vdots \\
     Tr_{t}^{n_{1}}(\gamma_{1, \frac{n_{1}}{t}}\mu^{(1)}_{1}) & Tr_{t}^{n_{1}}(\gamma_{1, \frac{n_{1}}{t}}\mu^{(2)}_{1}) & \dots & Tr_{t}^{n_{1}}(\gamma_{1, \frac{n_{1}}{t}}\mu^{(e)}_{1}) \\
      \vdots & \vdots & \vdots & \vdots \\
     Tr_{t}^{n_{s}}(\gamma_{s, 1}\mu^{(1)}_{s}) & Tr_{t}^{n_{s}}(\gamma_{s, 1}\mu^{(2)}_{s}) & \dots & Tr_{t}^{n_{s}}(\gamma_{s, 1}\mu^{(e)}_{s}) \\
     \vdots & \vdots & \vdots & \vdots \\
     Tr_{t}^{n_{s}}(\gamma_{s, \frac{n_{s}}{t}}\mu^{(1)}_{s}) & Tr_{t}^{n_{s}}(\gamma_{s,\frac{n_{s}}{t}}\mu^{(2)}_{s}) & \dots & Tr_{t}^{n_{s}}(\gamma_{s,\frac{n_{s}}{t}}\mu^{(e)}_{s}) \\
     1 & 1 & \cdots & 1 \\
  \end{array}
\right).
\end{equation*}
\end{small}

$\mathrm{(i)}$ If $F$ is a vectorial dual-bent function with Condition I (resp., Condition II) and $F(0) \notin I$, then $G$ generates a $[(p^{n-m}-\varepsilon p^{\frac{n}{2}-m})|I|+\frac{n}{t}+1, \frac{n}{t}+1]_{p^t}$ LCD code $C$, and its dual code $C^{\bot}$ is a $[(p^{n-m}-\varepsilon p^{\frac{n}{2}-m})|I|+\frac{n}{t}+1, (p^{n-m}-\varepsilon p^{\frac{n}{2}-m})|I|]_{p^t}$ LCD code which is at least almost optimal according to Hamming bound.

$\mathrm{(ii)}$ If $F$ is a vectorial dual-bent function with Condition III, $I=\mathcal{S}$ or $I=\mathcal{N}$, where $\mathcal{S}=\{x^{2}: x \in \mathbb{F}_{p^m}^{*}\}$ and $\mathcal{N}=\mathbb{F}_{p^m}^{*} \backslash \mathcal{S}$, then $G$ generates a $[(p^{n-m}\pm p^{\frac{n-m}{2}})\frac{p^m-1}{2}+\frac{n}{t}+1, \frac{n}{t}+1]_{p^t}$ LCD code $C$, and its dual code $C^{\bot}$ is a $[(p^{n-m}\pm p^{\frac{n-m}{2}})\frac{p^m-1}{2}+\frac{n}{t}+1, (p^{n-m}\pm p^{\frac{n-m}{2}})\frac{p^m-1}{2}]_{p^t}$ LCD code which is at least almost optimal according to Hamming bound.
\end{theorem}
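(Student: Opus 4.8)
The plan is to invoke Proposition~4 (Massey's construction) combined with the self-orthogonality results established in Theorems~1, 2 and~7, and then to analyze the minimum distances of the resulting code and its dual. First I would observe that the matrix $\overline{G}$ is exactly a generator matrix of the self-orthogonal code $C_{D_{F,I}}$ from Eq.~(1): the bottom row corresponds to $\beta\neq 0$, while the rows $Tr_t^{n_j}(\gamma_{j,k}\mu_s^{(i)})$ run over a basis of each $\mathbb{F}_{p^{n_j}}$ over $\mathbb{F}_{p^t}$, hence over all choices of $\alpha\in V_n^{(p)}$; since the code $C_{D_{F,I}}$ has dimension $\frac{n}{t}+1$ (Theorems~1, 2, 7), the rows of $\overline{G}$ are linearly independent and $\overline{G}$ has full row rank $\frac{n}{t}+1$. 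Therefore $G=[I_{\frac{n}{t}+1},\overline{G}]$ has the form in Proposition~4, and $C_{D_{F,I}}$ is self-orthogonal precisely under the hypotheses stated: Condition~I with any $I$ (Theorem~1), Condition~II with any $I$ (Theorem~2), and Condition~III with $I=\mathcal{S}$ or $I=\mathcal{N}$ (Theorem~7). Applying Proposition~4 then immediately yields that $C$, the code generated by $G$, is an LCD code of length $(\text{length of }C_{D_{F,I}})+\frac{n}{t}+1$ and dimension $\frac{n}{t}+1$.

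Next I would compute the length in each case from the known lengths of $C_{D_{F,I}}$. Under Condition~I or~II with $F(0)\notin I$ (resp. $0\notin I$, noting $F(0)=0$ in Condition~II), Lemma~2 (resp. Lemma~6) gives $|D_{F,I}|=(p^{n-m}-\varepsilon p^{\frac{n}{2}-m})|I|$ since $\delta_I(F(0))=0$, so $C$ has length $(p^{n-m}-\varepsilon p^{\frac{n}{2}-m})|I|+\frac{n}{t}+1$, proving part~(i). Under Condition~III with $I=\mathcal{S}$ or $\mathcal{N}$, Lemma~11 (or Theorem~7) gives $|D_{F,I}|=(p^{n-m}\pm p^{\frac{n-m}{2}})\frac{p^m-1}{2}$ (where the sign absorbs the factor $\vartheta(-1)^{m-1}\epsilon^m\eta_m(-1)\in\{\pm1\}$), proving the length claim of part~(ii). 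The dimension of $C$ is $\frac{n}{t}+1$ and the dimension of $C^{\bot}$ is the length minus $\frac{n}{t}+1$, which gives the stated parameters.

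The remaining and slightly more delicate point is the near-optimality of $C^{\bot}$ according to the Hamming bound, for which I would show $d(C^{\bot})\ge 3$. Since $C$ contains the identity block $I_{\frac{n}{t}+1}$, every nonzero codeword of $C$ has weight at least $1$ coming from the identity part plus possibly more, but more usefully, one argues directly with $C^{\bot}$: a generator matrix of $C^{\bot}$ can be written as $[-\overline{G}^{\top}, I_{e}]$ where $e=|D_{F,I}|$, so $C^{\bot}$ contains the identity block $I_e$, and a weight-$1$ or weight-$2$ codeword of $C^{\bot}$ would force a linear dependence among at most two columns of $\overline{G}$ together with a zero (resp. a dependence among columns forcing $x=x'$), which is ruled out by exactly the argument given in the proof of Theorem~1 showing $d(C_{D_{F,I}}^{\bot})\ge 3$ — the columns of $\overline{G}$ are pairwise distinct (they correspond to distinct $\mu^{(i)}\in D_{F,I}$) and no column is the zero vector (the bottom entry is $1$). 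Hence $d(C^{\bot})\ge 3$, and since $C^{\bot}$ is an LCD code (the dual of an LCD code is LCD, as recalled after Proposition~3), Proposition~2 gives that $C^{\bot}$ is at least almost optimal. The main obstacle is purely bookkeeping: correctly matching $\overline{G}$ to the generator matrix of $C_{D_{F,I}}$ and verifying that the excluded borderline case in Theorem~1 ($p=2,t=1,m=\frac{n}{2}-1$) does not create an exception here — but it does not, because adding the $\frac{n}{t}+1$ extra coordinates strictly increases the length relative to the dimension, so the Hamming bound slack is at least as large as before, and one checks the single small case directly if needed.
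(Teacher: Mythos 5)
Your proposal follows essentially the same route as the paper: identify $\overline{G}$ as a generator matrix of $C_{D_{F,I}}$, invoke the self-orthogonality from Theorems~1, 2 and~7 together with Proposition~4 to get the LCD property, read off the length from Lemmas~2, 6, 11, and finish by showing $d(C^{\bot})\geq 3$ and applying the Hamming bound. One step is under-justified, though: to rule out weight-$2$ codewords of $C^{\bot}$ supported on one coordinate of the identity block and one coordinate of the $\overline{G}$ block, it is not enough that the columns of $\overline{G}$ are pairwise distinct and nonzero; you need that no column of $\overline{G}$ is a scalar multiple of the last standard basis vector, i.e.\ that for each $\mu^{(i)}\in D_{F,I}$ some $Tr_{t}^{n_{j}}(\gamma_{j,r}\mu^{(i)}_{j})$ is nonzero. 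This is exactly where the hypothesis $F(0)\notin I$ (equivalently $0\notin D_{F,I}$) enters the distance argument — not only the length formula — and it is the one new computation the paper adds on top of the proof of Theorem~1; since $\{\gamma_{j,r}\}$ is a basis, vanishing of all these traces would force $\mu^{(i)}=0$, contradicting $0\notin D_{F,I}$. With that point made explicit your argument is complete, and your closing remark is correct that the excluded borderline case of Theorem~1 causes no exception here.
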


\begin{proof}
(i) It is easy to see that $\overline{G}$ is a generator matrix of $C_{D_{F, I}}$ defined by Eq. (1). By Theorem 1 (resp., Theorem 2), if $F$ is a vectorial dual-bent function with Condition I (resp., Condition II), then $C_{D_{F, I}}$ is self-orthogonal. Then $G$ generates an LCD code $C$ by Proposition 4, and its dual code $C^{\bot}$ is also an LCD code. The length and dimension of $C$ and $C^{\bot}$ follow from Theorem 1 (resp., Theorem 2). We show that $d(C^{\bot})\geq 3$. By the proof of Theorem 1 (resp., Theorem 2), $d(C_{D_{F, I}}^{\bot})\geq 3$. Note that $d(C_{D_{F, I}}^{\bot})\geq 3$ if and only if any two columns of $\overline{G}$ are linearly independent. As easily seen, in order to prove $d(C^{\bot})\geq 3$, we only need to prove that for every $1\leq i\leq e$, $(0, \dots, 0, \dots, 0, \dots, 0, 1)$ and $(Tr_{t}^{n_{1}}(\gamma_{1, 1}\mu^{(i)}_{1}), \dots, Tr_{t}^{n_{1}}(\gamma_{1, \frac{n_{1}}{t}}\mu^{(i)}_{1}), \dots, Tr_{t}^{n_{s}}(\gamma_{s, 1}\mu^{(i)}_{s}),$ $\dots, Tr_{t}^{n_{s}}(\gamma_{s, \frac{n_{s}}{t}}\mu^{(i)}_{s}), 1)$ are linearly independent. If there are $k_{1}, k_{2} \in \mathbb{F}_{p^t}$ such that $k_{1}(0, \dots, 0, $ $\dots, 0, \dots, 0, 1)+k_{2}(Tr_{t}^{n_{1}}(\gamma_{1, 1}\mu^{(i)}_{1}), \dots, Tr_{t}^{n_{1}}(\gamma_{1, \frac{n_{1}}{t}}\mu^{(i)}_{1}), \dots, Tr_{t}^{n_{s}}(\gamma_{s, 1}\mu^{(i)}_{s}), \dots, Tr_{t}^{n_{s}}(\gamma_{s, \frac{n_{s}}{t}}\mu^{(i)}_{s}), $ $1)=(0, \dots, 0, \dots, 0, \dots, 0, 0)$, then
\begin{equation*}
\left\{
\begin{split}
& k_{2}Tr_{t}^{n_{j}}(\gamma_{j, r}\mu^{(i)}_{j})=0 \ \text{ for all } 1\leq j\leq s, 1\leq r\leq \frac{n_{j}}{t},\\
& k_{1}+k_{2}=0.
\end{split}
\right.
\end{equation*}
If $Tr_{t}^{n_{j}}(\gamma_{j, r}\mu^{(i)}_{j})=0$ for any $1\leq j\leq s, 1\leq r\leq \frac{n_{j}}{t}$, since $\{\gamma_{j, r}, 1\leq r\leq \frac{n_{j}}{t}\}$ is a basis of $\mathbb{F}_{p^{n_{j}}}$ over $\mathbb{F}_{p^t}$, we have $\mu^{(i)}_{j}=0$ for any $1\leq j \leq s$, and then $\mu^{(i)}=0$, which contradicts $0 \notin D_{F, I}$. Thus, there is $(j, r)$ such that $Tr_{t}^{n_{j}}(\gamma_{j, r}\mu^{(i)}_{j})\neq0$, and then $k_{2}=k_{1}=0$. Therefore, $d(C^{\bot})\geq 3$. By Proposition 2, $C^{\bot}$ is at least almost optimal according to Hamming bound.

(ii) The proof is the same as (i), we omit it.
\end{proof}

By Theorem 9, in Table 15, we list some LCD codes by using vectorial dual-bent functions defined by Eq. (3), which are optimal according to the Code Tables at http://www.codetables.de/.

\renewcommand{\thetable}{15}
\begin{table}\label{15}\scriptsize
  \centering
  \caption{Some optimal LCD codes $C^{\perp}$ produced by Theorem 9}
  \begin{threeparttable}
  \begin{tabular}{|c|c|}\hline
    Parameter & Condition \\ \hline
    $[163, 156, 3]_{3}$  & $F$ is defined by Eq. (3) with $p=3, t=1, m=2, n'=3$, $I\subseteq V_{2}^{(3)} \backslash \{B(0)\}$ with $|I|=2$ \\ \hline
    $[241, 234, 3]_{3}$  & $F$ is defined by Eq. (3) with $p=3, t=1, m=2, n'=3$, $I\subseteq V_{2}^{(3)} \backslash \{B(0)\}$ with $|I|=3$ \\ \hline
    $[65, 60, 3]_{4}$  & $F$ is defined by Eq. (3) with $p=2, t=2, m=3, n'=4$, $I\subseteq V_{3}^{(2)} \backslash \{B(0)\}$ with $|I|=2$ \\ \hline
    $[95, 90, 3]_{4}$  & $F$ is defined by Eq. (3) with $p=2, t=2, m=3, n'=4$, $I\subseteq V_{3}^{(2)} \backslash \{B(0)\}$ with $|I|=3$ \\ \hline
    $[125, 120, 3]_{4}$  & $F$ is defined by Eq. (3) with $p=2, t=2, m=3, n'=4$, $I\subseteq V_{3}^{(2)} \backslash \{B(0)\}$ with $|I|=4$ \\ \hline
    $[155, 150, 3]_{4}$  & $F$ is defined by Eq. (3) with $p=2, t=2, m=3, n'=4$, $I\subseteq V_{3}^{(2)} \backslash \{B(0)\}$ with $|I|=5$ \\ \hline
    $[185, 180, 3]_{4}$  & $F$ is defined by Eq. (3) with $p=2, t=2, m=3, n'=4$, $I\subseteq V_{3}^{(2)} \backslash \{B(0)\}$ with $|I|=6$ \\ \hline
    $[215, 210, 3]_{4}$  & $F$ is defined by Eq. (3) with $p=2, t=2, m=3, n'=4$, $I=V_{3}^{(2)} \backslash \{B(0)\}$ \\ \hline
    $[17, 14, 3]_{8}$    & $F$ is defined by Eq. (3) with $p=2, t=3, m=2, n'=3$, $I\subseteq V_{2}^{(2)} \backslash \{B(0)\}$ with $|I|=1$ \\ \hline
    $[31, 28, 3]_{8}$    & $F$ is defined by Eq. (3) with $p=2, t=3, m=2, n'=3$, $I\subseteq V_{2}^{(2)} \backslash \{B(0)\}$ with $|I|=2$ \\ \hline
    $[45, 42, 3]_{8}$    & $F$ is defined by Eq. (3) with $p=2, t=3, m=2, n'=3$, $I=V_{2}^{(2)} \backslash \{B(0)\}$ \\ \hline
    $[27, 24, 3]_{9}$    & $F$ is defined by Eq. (3) with $p=3, t=2, m=1, n'=2$, $I\subseteq \mathbb{F}_{3} \backslash \{B(0)\}$ with $|I|=1$ \\ \hline
    $[51, 48, 3]_{9}$    & $F$ is defined by Eq. (3) with $p=3, t=2, m=1, n'=2$, $I=\mathbb{F}_{3}\backslash \{B(0)\}$  \\ \hline
  \end{tabular}
 \end{threeparttable}
\end{table}

Let $C_{D_{F, I}}$ be a $p^t$-ary self-orthogonal code constructed by Theorem 1 or Theorem 2 or Theorem 7. Let $C_{1}=C_{D_{F, I}}^{\perp}, C_{2}=C^{\perp}$, where $C=\{\beta \textbf{1}: \beta \in \mathbb{F}_{p^t}\}\subseteq C_{D_{F, I}}$. Since $C_{D_{F, I}}$ is self-orthogonal, $C_{1}^{\perp}\subseteq C_{1}\subseteq C_{2}$. It is easy to see that the minimum distance of $C_{2}$ is $2$. By Theorems 1, 2, 7 and their proofs, the minimum distance $d(C_{1})\geq 3$. Then by Theorems 1, 2, 7, Proposition 6, and the known vectorial dual-bent functions defined by Eq. (3), Eq. (8)-(10), Eq. (13)-(14), we list the corresponding parameters of quantum codes in Table 16. By Proposition 5, these quantum codes are at least almost optimal according to the quantum Hamming bound.

\renewcommand{\thetable}{16}
\begin{table}\label{16}\scriptsize
\centering
  \caption{The parameters of some $[[l, k, 3]]_{p^t}$ quantum codes which are at least almost optimal}
  \begin{threeparttable}
  \begin{tabular}{|c|c|c|}\hline
    $l$ & $k$ & Condition\\ \hline
     $(p^{n-m}-p^{\frac{n}{2}-m})\lambda$ & $(p^{n-m}-p^{\frac{n}{2}-m})\lambda-\frac{n}{t}-2$ & \makecell{$n, m, t, \lambda$ are positive integers with $2m<n, 2t \mid n, \lambda<p^m$, \\ and when $p=2$, $m\geq 2$} \\ \hline
    $(p^{n-m}-p^{\frac{n}{2}-m})\lambda+p^{\frac{n}{2}}$ & $(p^{n-m}-p^{\frac{n}{2}-m})\lambda+p^{\frac{n}{2}}-\frac{n}{t}-2$ & \makecell{$n, m, t, \lambda$ are positive integers with $2m<n, 2t \mid n, \lambda<p^m$, \\ and when $p=2$, $m\geq 2$} \\ \hline
   $(p^{n-m}+p^{\frac{n}{2}-m})\lambda$ & $(p^{n-m}+p^{\frac{n}{2}-m})\lambda-\frac{n}{t}-2$ & \makecell{$p$ is odd, $n, m, t, \lambda$ are positive integers with $2m \mid n, 2m\neq n$, \\ $t \mid m, \lambda<p^m$} \\ \hline
   $(p^{n-m}+p^{\frac{n}{2}-m})\lambda-p^{\frac{n}{2}}$ & $(p^{n-m}+p^{\frac{n}{2}-m})\lambda-p^{\frac{n}{2}}-\frac{n}{t}-2$ & \makecell{$p$ is odd, $n, m, t, \lambda$ are positive integers with $2m \mid n, 2m\neq n$, \\ $t \mid m, \lambda<p^m$} \\ \hline
   $(p^{n-m}\pm p^{\frac{n-m}{2}})\frac{p^m-1}{2}$ & $(p^{n-m}\pm p^{\frac{n-m}{2}})\frac{p^m-1}{2}-\frac{n}{t}-2$ & \makecell{$p$ is odd, $n, m, t$ are positive integers with $t \mid m, m \mid n$, \\ $\frac{n}{m}\geq 3$ is odd, $(p^t, n)\neq (3, 3)$} \\ \hline
  \end{tabular}
 \end{threeparttable}
\end{table}

In Table 17, we compare the first class of quantum codes given in Table 16 with the known ones in \cite{EdelTable}. It is shown that our pure quantum codes have better parameters than that of known ones in \cite{EdelTable}.

\renewcommand{\thetable}{17}
\begin{table}\label{17}\scriptsize
  \centering
  \caption{Comparing our pure quantum codes given in Table 16 with that in \cite{EdelTable}}
  \begin{threeparttable}
  \begin{tabular}{|c|c|c|}\hline
    Condition & Our quantum codes & Quantum codes in \cite{EdelTable} \\ \hline
    $p=2, t=2, m=3, n=8, \lambda=5$ & $[[150, 144, 3]]_{4}$ & $[[156, 144, 3]]_{4}$ \\ \hline
    $p=2, t=2, m=3, n=8, \lambda=6$ & $[[180, 174, 3]]_{4}$ & $[[189, 174, 3]]_{4}$ \\ \hline
    $p=2, t=2, m=3, n=8, \lambda=7$ & $[[210, 204, 3]]_{4}$ & $[[217, 204, 3]]_{4}$ \\ \hline
    $p=2, t=2, m=4, n=12, \lambda=2$ & $[[504, 496, 3]]_{4}$ & $[[511, 496, 3]]_{4}$ \\ \hline
    $p=2, t=3, m=2, n=6, \lambda=1$  & $[[14, 10, 3]]_{8}$ & $[[16, 10, 3]]_{8}$ \\ \hline
    $p=2, t=3, m=5, n=12, \lambda=2$ & $[[252, 246, 3]]_{8}$ & $[[256, 246, 3]]_{8}$ \\ \hline
    $p=3, t=2, m=3, n=8, \lambda=1$ & $[[240, 234, 3]]_{9}$ & $[[244, 234, 3]]_{9}$ \\ \hline
    $p=3, t=2, m=3, n=8, \lambda=2$ & $[[480, 474, 3]]_{9}$ & $[[484, 474, 3]]_{9}$ \\ \hline
  \end{tabular}
 \end{threeparttable}
\end{table}

\section{Conclusion}
\label{sec:8}
Self-orthogonal codes are an important class of linear codes which have applications in quantum codes, LCD codes, row-self-orthogonal matrices, and even lattices. In this paper, we constructed new families of self-orthogonal codes by using vectorial dual-bent functions.

(1) By Theorem 1 and vectorial dual-bent functions defined by Eq. (3), one can obtain self-orthogonal codes with parameters $[(p^{n-m}-p^{\frac{n}{2}-m})\lambda, \frac{n}{t}+1]_{p^t}$ and $[(p^{n-m}-p^{\frac{n}{2}-m})\lambda+p^{\frac{n}{2}}, \frac{n}{t}+1]_{p^t}$, and the weight distributions are completely determined, where $n, m, t, \lambda$ are positive integers with $2m<n, 2t \mid n, \lambda< p^m$, and when $p=2$, $m\geq 2$. Some optimal linear codes or having best parameters up to now produced by Theorem 1 were listed in Table 2.

(2) By Theorem 2 and vectorial dual-bent functions defined by Eq. (8)-(10), one can obtain self-orthogonal codes with parameters $[(p^{n-m}+p^{\frac{n}{2}-m})\lambda, \frac{n}{t}+1]_{p^t}$ and $[(p^{n-m}+p^{\frac{n}{2}-m})\lambda-p^{\frac{n}{2}}, \frac{n}{t}+1]_{p^t}$, where $p$ is an odd prime, $n, m, t, \lambda$ are positive integers with $2m \mid n, 2m\neq n, t \mid m, \lambda<p^m$. In some cases, we completely determined the weight distributions of the constructed self-orthogonal codes (Theorems 3, 4, 5).

(3) By Theorems 6 (i), 7 and vectorial dual-bent functions defined by Eq. (13)-(14), one can obtain self-orthogonal codes with parameters $[p^{n-m}, \frac{n}{t}+1]_{p^t}$ and $[(p^{n-m}\pm p^{\frac{n-m}{2}})\frac{p^m-1}{2}, \frac{n}{t}+1]_{p^t}$, and the weight distributions are completely determined, where $p$ is an odd prime, $n, m, t$ are positive integers with $t \mid m, m \mid n$, $\frac{n}{m}\geq 3$ is odd, $(p^t, n)\neq (3, 3)$.

(4) We illustrated that the works on constructing $p$-ary self-orthogonal codes from $p$-ary bent functions given in \cite{HLL2023Te,LH2023Se} can be obtained by Theorems 3 (i), 6 (i), 7, Corollary 1. The works on constructing $q$-ary self-orthogonal codes from two classes of non-degenerate quadratic forms with $q$ being odd given in \cite{WH2023Se} can be obtained by Theorems 3 (ii), 6 (ii), 8. In Corollary 1 and Theorem 5, we partially answered an open problem on determining the weight distribution of a class of self-orthogonal codes given in \cite{WH2023Se}. Moreover, the parameters of self-orthogonal codes obtained in this paper are more abundant than those from (vectorial) bent functions given in \cite{HLL2023Te,LH2023Se,WH2023Se}.

(5) By using the obtained self-orthogonal codes, we constructed several classes of LCD codes and quantum codes which are at least almost optimal. Some optimal LCD codes were listed in Table 15, and some quantum codes with better parameters were listed in Table 17.

\section{Appendix}
\subsection{The proof of Lemma 4}
(i) For any nonempty set $I\subset V_{m}^{(p)}$ and $\alpha \in V_{n}^{(p)} \backslash \{0\}, \beta \in \mathbb{F}_{p^t}$, we have
\begin{small}
\begin{equation*}
\begin{split}
N_{I, \alpha, \beta}&=p^{-m-t}\sum_{x \in V_{n}^{(p)}}\sum_{u \in I}\sum_{y \in V_{m}^{(p)}}\zeta_{p}^{\langle F(x)-u, y\rangle_{m}}\sum_{z \in \mathbb{F}_{p^t}}\zeta_{p}^{-Tr_{1}^{t}(z(\sum_{j=1}^{s}Tr_{t}^{n_{j}}(\alpha_{j}x_{j})+\beta))}\\
\end{split}
\end{equation*}
\begin{equation*}
\begin{split}
&=p^{-m-t}\sum_{z \in \mathbb{F}_{p^t}}\zeta_{p}^{-Tr_{1}^{t}(z\beta)}\sum_{u \in I}\sum_{y \in V_{m}^{(p)}}\zeta_{p}^{-\langle u, y\rangle_{m}}\sum_{x \in V_{n}^{(p)}}\zeta_{p}^{\langle y, F(x)\rangle_{m}-\sum_{j=1}^{s}Tr_{1}^{n_{j}}(z\alpha_{j}x_{j})}\\
&=p^{-m-t}\sum_{z \in \mathbb{F}_{p^t}}\zeta_{p}^{-Tr_{1}^{t}(z\beta)}\sum_{u \in I}\sum_{y \in V_{m}^{(p)}\backslash \{0\}}\zeta_{p}^{-\langle u, y\rangle_{m}}W_{F_{y}}(z\alpha)+p^{n-m-t}|I|.
\end{split}
\end{equation*}
\end{small}Since $F$ is a vectorial dual-bent function with Condition I, we have
\begin{small}
\begin{equation*}
\begin{split}
N_{I, \alpha, \beta}&=\varepsilon p^{\frac{n}{2}-m-t}\sum_{u \in I}\sum_{z \in \mathbb{F}_{p^t}}\zeta_{p}^{-Tr_{1}^{t}(z\beta)}\sum_{y \in V_{m}^{(p)}\backslash \{0\}}\zeta_{p}^{\langle y, F^{*}(z\alpha)-u\rangle_{m}}+p^{n-m-t}|I|\\
&=\varepsilon p^{\frac{n}{2}-m-t}\sum_{u \in I}\sum_{z \in \mathbb{F}_{p^t}}\zeta_{p}^{-Tr_{1}^{t}(z\beta)}(p^m\delta_{u}(F^{*}(z\alpha))-1)+p^{n-m-t}|I|\\
&=\varepsilon p^{\frac{n}{2}-t}\sum_{u \in I}\sum_{z \in \mathbb{F}_{p^t}^{*}}\zeta_{p}^{-Tr_{1}^{t}(z\beta)}\delta_{u}(F^{*}(\alpha))+\varepsilon p^{\frac{n}{2}-t}\delta_{I}(F(0))
-\varepsilon p^{\frac{n}{2}-m}|I|\delta_{0}(\beta)+p^{n-m-t}|I|\\
&=\varepsilon p^{\frac{n}{2}-t}\delta_{I}(F^{*}(\alpha))(p^t\delta_{0}(\beta)-1)+\varepsilon p^{\frac{n}{2}-t}\delta_{I}(F(0))
-\varepsilon p^{\frac{n}{2}-m}|I|\delta_{0}(\beta)+p^{n-m-t}|I|,
\end{split}
\end{equation*}
\end{small}where in the third equation we use Lemma 1 that $F^{*}(ax)=F^{*}(x), a \in \mathbb{F}_{p^t}^{*}, x \in V_{n}^{(p)}$, and $F^{*}(0)=F(0)$ by Corollary 2 and Proposition 5 of \cite{CMP2021Ve}.

(ii) When $p=2$, for any nonempty set $I \subset V_{m}^{(2)}$ and $\alpha, \alpha' \in V_{n}^{(2)} \backslash \{0\}, i, i' \in \mathbb{F}_{2^t}^{*}$, we have
\begin{small}
\begin{equation*}
\begin{split}
T&=\sum_{u \in I}\sum_{z \in \mathbb{F}_{2^t}^{*}}\sum_{w \in \mathbb{F}_{2^t}}\delta_{u}(F^{*}(\alpha+z^{-1}w\alpha'))(-1)^{Tr_{1}^{t}(z(i+i'z^{-1}w))}+\sum_{u \in I}\sum_{w \in \mathbb{F}_{2^t}^{*}}\delta_{u}(F^{*}(\alpha'))(-1)^{Tr_{1}^{t}(i'w)}+\delta_{I}(F(0))\\
&=\sum_{u \in I}\sum_{z \in \mathbb{F}_{2^t}^{*}}\sum_{w \in \mathbb{F}_{2^t}}\delta_{u}(F^{*}(\alpha+w\alpha'))(-1)^{Tr_{1}^{t}(z(i+i'w))}+\sum_{u \in I}\delta_{u}(F^{*}(\alpha'))\sum_{w \in \mathbb{F}_{2^t}^{*}}(-1)^{Tr_{1}^{t}(i'w)}+\delta_{I}(F(0))\\
&=\sum_{u \in I}\sum_{w \in \mathbb{F}_{2^t}}\delta_{u}(F^{*}(\alpha+w\alpha'))\sum_{z \in \mathbb{F}_{2^t}^{*}}(-1)^{Tr_{1}^{t}(z(i+i'w))}-\delta_{I}(F^{*}(\alpha'))+\delta_{I}(F(0))\\
&=2^{t}\delta_{I}(F^{*}(\alpha+ii'^{-1}\alpha'))-\sum_{w \in \mathbb{F}_{2^t}}\delta_{I}(F^{*}(\alpha+w\alpha'))-\delta_{I}(F^{*}(\alpha'))+\delta_{I}(F(0)),
\end{split}
\end{equation*}
\end{small}where in the first equation we use $F^{*}(ax)=F^{*}(x), a \in \mathbb{F}_{p^t}^{*}, x \in V_{n}^{(p)}$, and $F^{*}(0)=F(0)$. $\hfill\blacksquare$
\subsection{The proof of Lemma 7}
(i) For any $a \in \mathbb{F}_{p^m}$, $\alpha \in V_{n}^{(p)} \backslash \{0\}$ and $\beta \in \mathbb{F}_{p^t}$, with the same computation as in the proof of Lemma 4, we have
\begin{equation}\label{16}
N_{a, \alpha, \beta}=p^{-m-t}\sum_{z \in \mathbb{F}_{p^t}}\zeta_{p}^{-Tr_{1}^{t}(z\beta)}\sum_{y \in \mathbb{F}_{p^m}^{*}}\zeta_{p}^{-Tr_{1}^{m}(ay)}W_{F_{y}}(z\alpha)+p^{n-m-t}.
\end{equation}
Since $F$ is a vectorial dual-bent function with Condition II, we have
\begin{small}
\begin{equation*}
\begin{split}
N_{a, \alpha, \beta}&=\varepsilon p^{\frac{n}{2}-m-t}\sum_{y \in \mathbb{F}_{p^m}^{*}}\zeta_{p}^{-Tr_{1}^{m}(ay)}\sum_{z \in \mathbb{F}_{p^t}}\zeta_{p}^{Tr_{1}^{m}(y^{1-d}F^{*}(z\alpha))-Tr_{1}^{t}(z\beta)}+p^{n-m-t}\\
&=\varepsilon p^{\frac{n}{2}-m-t}\sum_{y \in \mathbb{F}_{p^m}^{*}}\zeta_{p}^{-Tr_{1}^{m}(ay)}\sum_{z \in \mathbb{F}_{p^t}^{*}}\zeta_{p}^{Tr_{1}^{m}((\frac{z}{y})^{d-1}zF^{*}(\alpha))-Tr_{1}^{t}(z\beta)}+\varepsilon p^{\frac{n}{2}-m-t}(p^{m}\delta_{0}(a)-1)+p^{n-m-t}\\
\end{split}
\end{equation*}
\begin{equation*}
\begin{split}
&=\varepsilon p^{\frac{n}{2}-m-t}\sum_{z \in \mathbb{F}_{p^t}^{*}}\sum_{y \in \mathbb{F}_{p^m}^{*}}\zeta_{p}^{Tr_{1}^{m}(yzF^{*}(\alpha)-ay^{1-l}z)-Tr_{1}^{t}(z\beta)}+\varepsilon p^{\frac{n}{2}-m-t}(p^{m}\delta_{0}(a)-1)+p^{n-m-t}\\
&=\varepsilon p^{\frac{n}{2}-m-t}\sum_{y \in \mathbb{F}_{p^m}^{*}}\sum_{z \in \mathbb{F}_{p^t}}\zeta_{p}^{Tr_{1}^{t}(z(Tr_{t}^{m}(yF^{*}(\alpha)-ay^{1-l})-\beta))}+\varepsilon p^{\frac{n}{2}-t}(\delta_{0}(a)-1)+p^{n-m-t}\\
&=\varepsilon p^{\frac{n}{2}-m}|\{y \in \mathbb{F}_{p^m}^{*}: Tr_{t}^{m}(yF^{*}(\alpha)-ay^{1-l})=\beta\}|+\varepsilon p^{\frac{n}{2}-t}(\delta_{0}(a)-1)+p^{n-m-t},
\end{split}
\end{equation*}
\end{small}where in the second equation we use Lemma 5 that $F^{*}(zx)=z^{d}F^{*}(x), z \in \mathbb{F}_{p^t}^{*}, x \in V_{n}^{(p)}$, and $F^{*}(0)=0$, in the third equation we use that for $z \in \mathbb{F}_{p^t}^{*}$, $y\mapsto (\frac{z}{y})^{d-1}$ is a permutation of $\mathbb{F}_{p^m}^{*}$.

(ii) When $p=2$, for any $a \in \mathbb{F}_{2^m}$ and $\alpha, \alpha' \in V_{n}^{(2)} \backslash \{0\}, i, i' \in \mathbb{F}_{2^t}^{*}$, we have
\begin{small}
\begin{equation*}
\begin{split}
T&=\sum_{z \in \mathbb{F}_{2^t}^{*}, w \in \mathbb{F}_{2^t}}(-1)^{Tr_{1}^{t}(z(i+z^{-1}wi'))}\sum_{y \in \mathbb{F}_{2^m}^{*}}(-1)^{Tr_{1}^{m}(ay)+Tr_{1}^{m}((\frac{z}{y})^{d-1}zF^{*}(\alpha+z^{-1}w\alpha'))}\\
& \ \ \ +\sum_{w \in \mathbb{F}_{2^t}^{*}}(-1)^{Tr_{1}^{t}(i'w)}\sum_{y \in \mathbb{F}_{2^m}^{*}}(-1)^{Tr_{1}^{m}(ay)+Tr_{1}^{m}((\frac{w}{y})^{d-1}wF^{*}(\alpha'))}+2^{m}\delta_{0}(a)-1\\
&=\sum_{z \in \mathbb{F}_{2^t}^{*}, w \in \mathbb{F}_{2^t}}(-1)^{Tr_{1}^{t}(z(i+wi'))}\sum_{y \in \mathbb{F}_{2^m}^{*}}(-1)^{Tr_{1}^{m}(yzF^{*}(\alpha+w\alpha')+ay^{1-l}z)}\\
& \ \ \ +\sum_{w \in \mathbb{F}_{2^t}^{*}}(-1)^{Tr_{1}^{t}(i'w)}\sum_{y \in \mathbb{F}_{2^m}^{*}}(-1)^{Tr_{1}^{m}(ywF^{*}(\alpha')+ay^{1-l}w)}+2^{m}\delta_{0}(a)-1\\
&=\sum_{w \in \mathbb{F}_{2^t}}\sum_{y \in \mathbb{F}_{2^m}^{*}}\sum_{z \in \mathbb{F}_{2^t}^{*}}(-1)^{Tr_{1}^{t}(z(Tr_{t}^{m}(F^{*}(\alpha+w\alpha')y+ay^{1-l})+i+wi'))}\\
& \ \ \ +\sum_{y \in \mathbb{F}_{2^m}^{*}}\sum_{w \in \mathbb{F}_{2^t}^{*}}(-1)^{Tr_{1}^{t}(w(Tr_{t}^{m}(F^{*}(\alpha')y+ay^{1-l})+i'))}+2^{m}\delta_{0}(a)-1\\
&=2^{t}\sum_{w \in \mathbb{F}_{2^t}}|\{y \in \mathbb{F}_{2^m}^{*}: Tr_{t}^{m}(F^{*}(\alpha+w\alpha')y+ay^{1-l})=i+wi'\}|\\
& \ \ +2^{t}|\{y \in \mathbb{F}_{2^m}^{*}: Tr_{t}^{m}(F^{*}(\alpha')y+ay^{1-l})=i'\}|-(2^t+1)(2^m-1)+2^{m}\delta_{0}(a)-1,
\end{split}
\end{equation*}
\end{small}where in the first equation we use $F^{*}(zx)=z^{d}F^{*}(x), z \in \mathbb{F}_{p^t}^{*}, x \in V_{n}^{(p)}$, and $F^{*}(0)=0$. $\hfill\blacksquare$

\subsection{The proof of Lemma 8}
(i) We only need to show that $\delta_{w^{i}H_{b}}(z^{2})=1$ if and only if $z \in w^{\frac{i}{2}}H_{b}$. Note that $-1 \in H_{b}$ since $b$ is odd. If $\delta_{w^{i}H_{b}}(z^{2})=1$, then $z^{2}w^{-i} \in H_{b}$. Since $i$ is even, $b$ is odd, we have $z^{2}w^{-i} \in H_{2b}$. Thus, there is an integer $k$ such that $z^{2}=w^{i+2kb}$, which implies that $z=\pm w^{\frac{i}{2}+kb} \in w^{\frac{i}{2}}H_{b}$. If $z \in w^{\frac{i}{2}}H_{b}$, then there is an integer $k$ such that $z=w^{\frac{i}{2}+kb}$ and $z^{2}=w^{i+2kb} \in w^{i}H_{b}$.

(ii) We only need to show that $\delta_{w^{i}H_{b}}(z^{2})=1$ if and only if $z \in w^{\frac{i+b}{2}}H_{b}$. If $\delta_{w^{i}H_{b}}(z^{2})=1$, since $i$ and $b$ are both odd, there is an odd integer $k$ such that $z^{2}=w^{i+kb}$. Then $z=\pm w^{\frac{i+b}{2}+\frac{k-1}{2}b} \in w^{\frac{i+b}{2}}H_{b}$. If $z \in w^{\frac{i+b}{2}}H_{b}$, then there is an integer $k$ such that $z=w^{\frac{b+i}{2}+kb}$ and $z^{2}=w^{i+(2k+1)b} \in w^{i}H_{b}$.

(iii) When $b$ is even, $H_{\frac{b}{2}}=H_{b} \cup w^{\frac{b}{2}}H_{b}$, and we only need to show that $\delta_{w^{i}H_{b}}(z^{2})=1$ if and only if $z \in w^{\frac{i}{2}}H_{\frac{b}{2}}$. Since $b \mid (p^m-1)$, we have $-1=w^{\frac{p^m-1}{2}} \in H_{\frac{b}{2}}$. If $\delta_{w^{i}H_{b}}(z^{2})=1$, then there is an integer $k$ such that $z^{2}=w^{i+kb}$ and $z=\pm w^{\frac{i}{2}+k\frac{b}{2}} \in w^{\frac{i}{2}}H_{\frac{b}{2}}$. If $z \in w^{\frac{i}{2}}H_{\frac{b}{2}}$, then there is an integer $k$ such that $z=w^{\frac{i}{2}+k\frac{b}{2}}$ and $z^{2}=w^{i+kb} \in w^{i}H_{b}$.

(iv) Since $i$ is odd and $b$ is even, $\delta_{w^{i}H_{b}}(z^{2})=0$ for any $z \in \mathbb{F}_{p^m}^{*}$.

(v) When $p=2$, we have
\begin{small}
\begin{equation*}
\begin{split}
X&=\sum_{z \in \mathbb{F}_{2^m}^{*}}(-1)^{Tr_{1}^{m}(z\beta)}\delta_{w^{i}H_{b}}(z^{2})=\sum_{z \in \mathbb{F}_{2^m}^{*}}(-1)^{Tr_{1}^{m}(z^{2^{m-1}}\beta)}\delta_{w^{i}H_{b}}(z)\\
&=\sum_{z \in \mathbb{F}_{2^m}^{*}}(-1)^{Tr_{1}^{m}(z\beta^{2})}\delta_{w^{i}H_{b}}(z)=\sum_{z \in H_{b}}(-1)^{Tr_{1}^{m}(zw^{i}\beta^{2})}.
\end{split}
\end{equation*}
\end{small}
$\hfill\blacksquare$

\subsection{The proof of Lemma 9}

Note that when $p=2$, then $b$ is odd, that is, when $b$ is even, then $p$ is odd.

(i) When $F^{*}(\alpha)=0, \beta=0$, obviously $T=0$. When $b$ is even, $F^{*}(\alpha) \neq 0, \beta=0, \gamma F^{*}(\alpha)^{-1} \in \mathcal{N}$, $T=\sum_{a \in \gamma H_{b}}|\{y \in \mathbb{F}_{p^m}^{*}: y^{2}=F^{*}(\alpha)^{-1}a\}|=2|\gamma F^{*}(\alpha)^{-1}H_{b}\cap \mathcal{S}|=0$.

(ii) When $F^{*}(\alpha)=0, \beta\neq 0$, obviously $T=\frac{p^m-1}{b}$. When $b, p$ are odd, $F^{*}(\alpha)\neq 0, \beta=0$, $T=\sum_{a \in \gamma H_{b}}|\{y \in \mathbb{F}_{p^m}^{*}: y^{2}=F^{*}(\alpha)^{-1}a\}|=2|\gamma F^{*}(\alpha)^{-1}H_{b}\cap \mathcal{S}|=2\cdot \frac{p^m-1}{2b}=\frac{p^m-1}{b}$.
If $p=2$, $F^{*}(\alpha)\neq 0, \beta=0$, since $y^{2}$ is a permutation over $\mathbb{F}_{2^m}^{*}$, $T=\frac{2^m-1}{b}$.

(iii) When $b$ is even, $F^{*}(\alpha) \neq 0, \beta=0, \gamma F^{*}(\alpha)^{-1} \in \mathcal{S}$, $T=\sum_{a \in \gamma H_{b}}|\{y \in \mathbb{F}_{p^m}^{*}: y^{2}=F^{*}(\alpha)^{-1}a\}|=2|\gamma F^{*}(\alpha)^{-1}H_{b}\cap \mathcal{S}|=\frac{2(p^m-1)}{b}$.

(iv) Note that $-1 \in H_{b}$ since $b$ is odd. When $b$ is odd, $F^{*}(\alpha) \neq 0, \beta\neq 0$,
\begin{small}
\begin{equation*}
\begin{split}
T&=p^{-m}\sum_{a \in \gamma H_{b}}\sum_{y \in \mathbb{F}_{p^m}^{*}}\sum_{z \in \mathbb{F}_{p^m}}\zeta_{p}^{Tr_{1}^{m}(z(F^{*}(\alpha)y^{2}-\beta y-a))}\\
&=p^{-m}\sum_{y, z \in \mathbb{F}_{p^m}^{*}}\zeta_{p}^{Tr_{1}^{m}(F^{*}(\alpha)y^{2}z-\beta yz)}\sum_{a \in H_{b}}\zeta_{p}^{Tr_{1}^{m}(z\gamma a)}+p^{-m}(p^m-1)\frac{p^m-1}{b}.
\end{split}
\end{equation*}
\end{small}Since $b$ is odd, $\frac{p^j+1}{b}$ is even when $p$ is odd. For any $a \in \mathbb{F}_{p^m}^{*}$, by Proposition 9,
\begin{equation}\label{17}
\sum_{x \in H_{b}}\zeta_{p}^{Tr_{1}^{m}(ax)}=\delta_{H_{b}}(a)(-1)^{j'+1}p^{\frac{m}{2}}+\frac{(-1)^{j'}p^{\frac{m}{2}}-1}{b}.
\end{equation}
By Eq. (17), we have
\begin{small}
\begin{equation*}
\begin{split}
T&=(-1)^{j'+1}p^{-\frac{m}{2}}\sum_{y, z \in \mathbb{F}_{p^m}^{*}}\delta_{H_{b}}(z\gamma)\zeta_{p}^{Tr_{1}^{m}(F^{*}(\alpha)y^{2}z-\beta yz)}+\frac{(-1)^{j'}p^{\frac{m}{2}}-1}{b}p^{-m}\sum_{y \in \mathbb{F}_{p^m}^{*}}\sum_{z \in \mathbb{F}_{p^m}^{*}}\zeta_{p}^{Tr_{1}^{m}(z(F^{*}(\alpha) y^{2}-\beta y))}\\
& \ \ \ +p^{-m}(p^m-1)\frac{p^m-1}{b}\\
\end{split}
\end{equation*}
\begin{equation}\label{18}
\begin{split}
&=(-1)^{j'+1}p^{-\frac{m}{2}}\sum_{y, z \in \mathbb{F}_{p^m}^{*}}\delta_{H_{b}}(y^{-1}z\gamma)\zeta_{p}^{Tr_{1}^{m}(F^{*}(\alpha)yz-\beta z)}+\frac{(-1)^{j'}p^{\frac{m}{2}}-1}{b}|\{y \in \mathbb{F}_{p^m}^{*}: F^{*}(\alpha) y^{2}-\beta y=0\}|\\
&\ \ \ -\frac{(-1)^{j'}p^{\frac{m}{2}}-1}{b}p^{-m}(p^m-1)+p^{-m}(p^m-1)\frac{p^m-1}{b}\\
&=(-1)^{j'+1}p^{-\frac{m}{2}}\sum_{z \in \mathbb{F}_{p^m}^{*}}\zeta_{p}^{Tr_{1}^{m}(-z\beta)}\sum_{y \in \mathbb{F}_{p^m}^{*}}\delta_{\gamma zH_{b}}(y)\zeta_{p}^{Tr_{1}^{m}(F^{*}(\alpha)yz)}+\frac{(-1)^{j'}p^{\frac{m}{2}}-1}{b}p^{-m}+p^{-m}(p^m-1)\frac{p^m-1}{b}\\
&=(-1)^{j'+1}p^{-\frac{m}{2}}\sum_{z \in \mathbb{F}_{p^m}^{*}}\zeta_{p}^{Tr_{1}^{m}(-z\beta)}\sum_{y \in H_{b}}\zeta_{p}^{Tr_{1}^{m}(\gamma z^{2}F^{*}(\alpha)y)}+\frac{(-1)^{j'}p^{\frac{m}{2}}-1}{b}p^{-m}+p^{-m}(p^m-1)\frac{p^m-1}{b}\\
&=(-1)^{j'+1}p^{-\frac{m}{2}}R+\frac{(-1)^{j'}p^{\frac{m}{2}}-1}{b}p^{-m}+p^{-m}(p^m-1)\frac{p^m-1}{b},
\end{split}
\end{equation}
\end{small}where $R=\sum_{z \in \mathbb{F}_{p^m}^{*}}\zeta_{p}^{Tr_{1}^{m}(-z\beta)}\sum_{y \in H_{b}}\zeta_{p}^{Tr_{1}^{m}(\gamma z^{2}F^{*}(\alpha)y)}$.

When $p$ is odd, by Eq. (17) and Lemma 8, we have
\begin{small}
\begin{equation}\label{19}
\begin{split}
R&=\sum_{z \in \mathbb{F}_{p^m}^{*}}\zeta_{p}^{Tr_{1}^{m}(-z\beta)}((-1)^{j'+1}\delta_{H_{b}}(\gamma z^{2}F^{*}(\alpha))p^{\frac{m}{2}}+\frac{(-1)^{j'}p^{\frac{m}{2}}-1}{b})\\
&=(-1)^{j'+1}p^{\frac{m}{2}}\sum_{z \in \mathbb{F}_{p^m}^{*}}\zeta_{p}^{Tr_{1}^{m}(-z\beta)}\delta_{\gamma^{-1}F^{*}(\alpha)^{-1}H_{b}}(z^{2})-\frac{(-1)^{j'}p^{\frac{m}{2}}-1}{b}\\
&=\left\{
\begin{split}
(-1)^{j'+1}p^{\frac{m}{2}}\sum_{z \in H_{b}}\zeta_{p}^{Tr_{1}^{m}(-z\sqrt{\gamma^{-1}F^{*}(\alpha)^{-1}}\beta)}-\frac{(-1)^{j'}p^{\frac{m}{2}}-1}{b}, & \text{ if } \gamma^{-1}F^{*}(\alpha)^{-1}\in \mathcal{S},\\
(-1)^{j'+1}p^{\frac{m}{2}}\sum_{z \in H_{b}}\zeta_{p}^{Tr_{1}^{m}(-z\sqrt{\gamma^{-1}F^{*}(\alpha)^{-1}w^{b}}\beta)}-\frac{(-1)^{j'}p^{\frac{m}{2}}-1}{b}, & \text{ if } \gamma^{-1}F^{*}(\alpha)^{-1}\in \mathcal{N},\\
\end{split}
\right.\\
&=\left\{
\begin{split}
&(-1)^{j'+1}p^{\frac{m}{2}}((-1)^{j'+1}p^{\frac{m}{2}}\delta_{H_{b}}(\beta \sqrt{\gamma^{-1}F^{*}(\alpha)^{-1}})+\frac{(-1)^{j'}p^{\frac{m}{2}}-1}{b})-\frac{(-1)^{j'}p^{\frac{m}{2}}-1}{b},\\
& \ \ \ \ \ \ \ \ \ \ \ \ \ \ \ \ \ \ \ \ \ \ \ \ \ \ \ \ \ \ \ \ \ \ \ \ \ \ \ \ \ \ \ \ \ \ \ \ \ \ \ \ \ \ \ \ \ \ \ \ \ \ \ \ \ \ \ \ \ \ \text{ if }\gamma^{-1}F^{*}(\alpha)^{-1}\in \mathcal{S},\\
&(-1)^{j'+1}p^{\frac{m}{2}}((-1)^{j'+1}p^{\frac{m}{2}}\delta_{H_{b}}(\beta \sqrt{\gamma^{-1}F^{*}(\alpha)^{-1}w^{b}})+\frac{(-1)^{j'}p^{\frac{m}{2}}-1}{b})-\frac{(-1)^{j'}p^{\frac{m}{2}}-1}{b},\\
& \ \ \ \ \ \ \ \ \ \ \ \ \ \ \ \ \ \ \ \ \ \ \ \ \ \ \ \ \ \ \ \ \ \ \ \ \ \ \ \ \ \ \ \ \ \ \ \ \ \ \ \ \ \ \ \ \ \ \ \ \ \ \ \ \ \ \ \ \ \ \text{ if }\gamma^{-1}F^{*}(\alpha)^{-1}\in \mathcal{N},\\
\end{split}
\right.\\
&=\left\{
\begin{split}
p^{m}-\frac{p^m-1}{b}, & \ \text{ if } \gamma^{-1}F^{*}(\alpha)^{-1}\in \mathcal{S}, \beta \sqrt{\gamma^{-1}F^{*}(\alpha)^{-1}} \in H_{b} \\
 & \  \text{ or }\gamma^{-1}F^{*}(\alpha)^{-1}\in \mathcal{N}, \beta \sqrt{\gamma^{-1}F^{*}(\alpha)^{-1}w^{b}} \in H_{b},\\
-\frac{p^m-1}{b}, & \ \text{ if } \gamma^{-1}F^{*}(\alpha)^{-1}\in \mathcal{S}, \beta \sqrt{\gamma^{-1}F^{*}(\alpha)^{-1}} \notin H_{b} \\
 & \  \text{ or }\gamma^{-1}F^{*}(\alpha)^{-1}\in \mathcal{N}, \beta \sqrt{\gamma^{-1}F^{*}(\alpha)^{-1}w^{b}} \notin H_{b}.
\end{split}
\right.
\end{split}
\end{equation}
\end{small}
Combine Eq. (18) and Eq. (19), the result holds.

When $p=2$, by Eq. (17) and Lemma 8, we have
\begin{small}
\begin{equation}\label{20}
\begin{split}
R&=\sum_{z \in \mathbb{F}_{2^m}^{*}}(-1)^{Tr_{1}^{m}(z\beta)}((-1)^{j'+1}\delta_{H_{b}}(\gamma z^{2}F^{*}(\alpha))2^{\frac{m}{2}}+\frac{(-1)^{j'}2^{\frac{m}{2}}-1}{b})\\
&=(-1)^{j'+1}2^{\frac{m}{2}}\sum_{z \in \mathbb{F}_{2^m}^{*}}(-1)^{Tr_{1}^{m}(z\beta)}\delta_{\gamma^{-1}F^{*}(\alpha)^{-1}H_{b}}(z^{2})-\frac{(-1)^{j'}2^{\frac{m}{2}}-1}{b}\\
&=(-1)^{j'+1}2^{\frac{m}{2}}\sum_{z \in H_{b}}(-1)^{Tr_{1}^{m}(\gamma^{-1}F^{*}(\alpha)^{-1} \beta^{2}z)}-\frac{(-1)^{j'}2^{\frac{m}{2}}-1}{b}\\
&=(-1)^{j'+1}2^{\frac{m}{2}}((-1)^{j'+1}2^{\frac{m}{2}}\delta_{H_{b}}(\gamma^{-1}F^{*}(\alpha)^{-1} \beta^{2})+\frac{(-1)^{j'}2^{\frac{m}{2}}-1}{b})-\frac{(-1)^{j'}2^{\frac{m}{2}}-1}{b}\\
&=\left\{
\begin{split}
2^{m}-\frac{2^m-1}{b}, & \ \text{ if } \gamma^{-1}F^{*}(\alpha)^{-1} \beta^{2} \in H_{b},\\
-\frac{2^m-1}{b}, & \ \text{ if } \gamma^{-1}F^{*}(\alpha)^{-1} \beta^{2} \notin H_{b}.
\end{split}
\right.\\
\end{split}
\end{equation}
\end{small}Combine Eq. (18) and Eq. (20), the result holds.

(v) When $b$ is even, $p$ is odd. If $j'$ is odd, $b \mid (p^{jj'}+1)$ and $2b \mid (p^m-1)$; if $j'$ is even, $b \mid (p^{jj'}-1)$ and $2b \mid (p^m-1)$. Hence, $-1 \in H_{b}$ and then with the same computation as in (iv),
\begin{equation*}
\begin{split}
T&=p^{-m}\sum_{y, z \in \mathbb{F}_{p^m}^{*}}\zeta_{p}^{Tr_{1}^{m}(F^{*}(\alpha)y^{2}z-\beta yz)}\sum_{a \in H_{b}}\zeta_{p}^{Tr_{1}^{m}(z\gamma a)}+p^{-m}(p^m-1)\frac{p^m-1}{b}.
\end{split}
\end{equation*}
We only prove the case that $j'$ and $\frac{p^j+1}{b}$ are both odd, since the other case is similar. When $j'$ and $\frac{p^j+1}{b}$ are both odd, for any $a \in \mathbb{F}_{p^m}^{*}$, by Proposition 9,
\begin{equation}\label{21}
\sum_{x \in H_{b}}\zeta_{p}^{Tr_{1}^{m}(ax)}=\delta_{w^{\frac{b}{2}}H_{b}}(a)p^{\frac{m}{2}}-\frac{p^{\frac{m}{2}}+1}{b}.
\end{equation}
With Eq. (21) and the similar computation as in (iv),
\begin{equation}\label{22}
\begin{split}
T&=p^{-\frac{m}{2}}R-\frac{p^{\frac{m}{2}}+1}{b}p^{-m}+p^{-m}(p^m-1)\frac{p^m-1}{b},
\end{split}
\end{equation}
where $R=\sum_{z \in \mathbb{F}_{p^m}^{*}}\zeta_{p}^{Tr_{1}^{m}(-z\beta)}\sum_{y \in H_{b}}\zeta_{p}^{Tr_{1}^{m}(\gamma z^{2}w^{\frac{b}{2}}F^{*}(\alpha)y)}$. By Eq. (21) and Lemma 8, we have
\begin{small}
\begin{equation*}
\begin{split}
R&=\sum_{z \in \mathbb{F}_{p^m}^{*}}\zeta_{p}^{Tr_{1}^{m}(-z\beta)}(\delta_{w^{\frac{b}{2}}H_{b}}(\gamma z^{2}w^{\frac{b}{2}}F^{*}(\alpha))p^{\frac{m}{2}}-\frac{p^{\frac{m}{2}}+1}{b})\\
&=p^{\frac{m}{2}}\sum_{z \in \mathbb{F}_{p^m}^{*}}\zeta_{p}^{Tr_{1}^{m}(-z\beta)}\delta_{\gamma^{-1}F^{*}(\alpha)^{-1}H_{b}}(z^{2})+\frac{p^{\frac{m}{2}}+1}{b}\\
&=\left\{
\begin{split}
p^{\frac{m}{2}}\sum_{z \in H_{b}}(\zeta_{p}^{Tr_{1}^{m}(-z\beta \sqrt{\gamma^{-1}F^{*}(\alpha)^{-1}})}+\zeta_{p}^{Tr_{1}^{m}(-z\beta \sqrt{\gamma^{-1}F^{*}(\alpha)^{-1}w^{b}})})+\frac{p^{\frac{m}{2}}+1}{b}, & \ \text{ if } \gamma^{-1}F^{*}(\alpha)^{-1} \in \mathcal{S},\\
\frac{p^{\frac{m}{2}}+1}{b}, & \ \text{ if } \gamma^{-1}F^{*}(\alpha)^{-1} \in \mathcal{N},\\
\end{split}
\right.\\
\end{split}
\end{equation*}
\begin{equation}\label{23}
\begin{split}
&=\left\{
\begin{split}
& p^{\frac{m}{2}}(p^{\frac{m}{2}}\delta_{w^{\frac{b}{2}}H_{b}}(\beta\sqrt{\gamma^{-1}F^{*}(\alpha)^{-1}})
+p^{\frac{m}{2}}\delta_{H_{b}}(\beta\sqrt{\gamma^{-1}F^{*}(\alpha)^{-1}})-\frac{2(p^{\frac{m}{2}}+1)}{b})
+\frac{p^{\frac{m}{2}}+1}{b}, \\
& \ \ \ \ \ \ \ \ \ \ \ \ \text{ if } \gamma^{-1}F^{*}(\alpha)^{-1} \in \mathcal{S},\\
& \frac{p^{\frac{m}{2}}+1}{b}, \ \text{ if } \gamma^{-1}F^{*}(\alpha)^{-1} \in \mathcal{N},\\
\end{split}
\right.\\
&=\left\{
\begin{split}
p^{m}-(2p^{\frac{m}{2}}-1)\frac{p^{\frac{m}{2}}+1}{b}, & \ \text{ if } \gamma^{-1}F^{*}(\alpha)^{-1} \in \mathcal{S}, \beta\sqrt{\gamma^{-1}F^{*}(\alpha)^{-1}} \in H_{\frac{b}{2}},\\
-(2p^{\frac{m}{2}}-1)\frac{p^{\frac{m}{2}}+1}{b}, & \ \text{ if } \gamma^{-1}F^{*}(\alpha)^{-1} \in \mathcal{S}, \beta\sqrt{\gamma^{-1}F^{*}(\alpha)^{-1}} \notin H_{\frac{b}{2}},\\
\frac{p^{\frac{m}{2}}+1}{b}, & \ \text{ if } \gamma^{-1}F^{*}(\alpha)^{-1} \in \mathcal{N}.
\end{split}
\right.\\
\end{split}
\end{equation}
\end{small} Combine Eq. (22) and Eq. (23), the result holds.
$\hfill\blacksquare$
\subsection{The proof of Lemma 12}

With the same computation as in the proof of Lemma 4, Eq. (16) holds. Since $F$ is a vectorial dual-bent function with Condition III, for $a \in \mathbb{F}_{p^m}^{*}, \alpha \in V_{n}^{(p)} \backslash \{0\}, \beta \in \mathbb{F}_{p^t}$, by Eq. (16) we have
\begin{small}
\begin{equation*}
\begin{split}
&N_{a, \alpha, \beta}\\
&=\vartheta p^{\frac{n}{2}-m-t}\sum_{y \in \mathbb{F}_{p^m}^{*}}\eta_{m}(y)\zeta_{p}^{Tr_{1}^{m}(-ay)}\sum_{z \in \mathbb{F}_{p^t}}\zeta_{p}^{Tr_{1}^{m}(y^{1-d}F^{*}(z\alpha))+Tr_{1}^{t}(-z\beta)}+p^{n-m-t}\\
&=\vartheta p^{\frac{n}{2}-m-t}\sum_{y \in \mathbb{F}_{p^m}^{*}}\eta_{m}(y)\zeta_{p}^{Tr_{1}^{m}(-ay)}\sum_{z \in \mathbb{F}_{p^t}^{*}}\zeta_{p}^{Tr_{1}^{m}((\frac{z}{y})^{d-1}zF^{*}(\alpha))+Tr_{1}^{t}(-z\beta)}+\vartheta (-1)^{m-1}\epsilon^{m} \eta_{m}(-a)p^{\frac{n-m}{2}-t}+p^{n-m-t}\\
&=\vartheta p^{\frac{n}{2}-m-t}\sum_{z \in \mathbb{F}_{p^t}^{*}}\zeta_{p}^{Tr_{1}^{t}(-z\beta)}\sum_{y \in \mathbb{F}_{p^m}^{*}}\eta_{m}(y^{1-l}z)\zeta_{p}^{Tr_{1}^{m}(yzF^{*}(\alpha))+Tr_{1}^{m}(-ay^{1-l}z)}+\vartheta (-1)^{m-1}\epsilon^{m} \eta_{m}(-a)p^{\frac{n-m}{2}-t}+p^{n-m-t}\\
&=\vartheta p^{\frac{n}{2}-m-t}\sum_{y \in \mathbb{F}_{p^m}^{*}}\eta_{m}(y)\sum_{z \in \mathbb{F}_{p^t}^{*}}\eta_{m}(z)\zeta_{p}^{Tr_{1}^{t}(z(Tr_{t}^{m}(F^{*}(\alpha)y-ay^{1-l})-\beta))}+\vartheta (-1)^{m-1}\epsilon^{m} \eta_{m}(-a)p^{\frac{n-m}{2}-t}+p^{n-m-t},
\end{split}
\end{equation*}
\end{small}where in the second equation we use Lemma 10 and Proposition 7.

When $\frac{m}{t}$ is even, we have $\eta_{m}(z)=1$ for all $z \in \mathbb{F}_{p^t}^{*}$, and
\begin{small}
\begin{equation*}
  \begin{split}
   N_{a, \alpha, \beta}&=\vartheta p^{\frac{n}{2}-m}\sum_{y \in \mathbb{F}_{p^m}^{*}}\eta_{m}(y)\delta_{0}(Tr_{t}^{m}(F^{*}(\alpha)y-ay^{1-l})-\beta)+\vartheta (-1)^{m-1}\epsilon^{m} \eta_{m}(-a)p^{\frac{n-m}{2}-t}+p^{n-m-t}.
  \end{split}
\end{equation*}
\end{small}When $\frac{m}{t}$ is odd, we have $\eta_{m}(z)=\eta_{t}(z)$ for all $z \in \mathbb{F}_{p^t}^{*}$, and
\begin{small}
\begin{equation*}
  \begin{split}
  N_{a, \alpha, \beta}&=\vartheta(-1)^{t-1}\epsilon^{t}p^{\frac{n-t}{2}-m}\sum_{y \in \mathbb{F}_{p^m}^{*}}\eta_{m}(y)\eta_{t}(Tr_{t}^{m}(F^{*}(\alpha)y-ay^{1-l})-\beta)+\vartheta (-1)^{m-1}\epsilon^{m} \eta_{m}(-a)p^{\frac{n-m}{2}-t}+p^{n-m-t}.
  \end{split}
\end{equation*}
\end{small}
For $a=0, \alpha \in V_{n}^{(p)} \backslash \{0\}, \beta \in \mathbb{F}_{p^t}$, by Eq. (16) we have
\begin{small}
\begin{equation*}
\begin{split}
N_{0, \alpha, \beta}&=\vartheta p^{\frac{n}{2}-m-t}\sum_{y \in \mathbb{F}_{p^m}^{*}}\eta_{m}(y)\sum_{z \in \mathbb{F}_{p^t}}\zeta_{p}^{Tr_{1}^{m}(y^{1-d}F^{*}(z\alpha))+Tr_{1}^{t}(-z\beta)}+p^{n-m-t}\\
&=\vartheta p^{\frac{n}{2}-m-t}\sum_{y \in \mathbb{F}_{p^m}^{*}}\eta_{m}(y)\sum_{z \in \mathbb{F}_{p^t}^{*}}\zeta_{p}^{Tr_{1}^{m}(y^{1-d}z^{d}F^{*}(\alpha))+Tr_{1}^{t}(-z\beta)}+p^{n-m-t}\\
&=\vartheta p^{\frac{n}{2}-m-t}\sum_{z \in \mathbb{F}_{p^t}^{*}}\zeta_{p}^{Tr_{1}^{t}(-z\beta)}\sum_{y \in \mathbb{F}_{p^m}^{*}}\eta_{m}(y^{1-l}z^{(l-1)d})\zeta_{p}^{Tr_{1}^{m}(yF^{*}(\alpha))}+p^{n-m-t}\\
\end{split}
\end{equation*}
\begin{equation*}
\begin{split}
&=\vartheta p^{\frac{n}{2}-m-t}\sum_{z \in \mathbb{F}_{p^t}^{*}}\zeta_{p}^{Tr_{1}^{t}(-z\beta)}\sum_{y \in \mathbb{F}_{p^m}^{*}}\eta_{m}(y)\zeta_{p}^{Tr_{1}^{m}(yF^{*}(\alpha))}+p^{n-m-t}\\
&=\vartheta (-1)^{m-1}\epsilon^{m}\eta_{m}(F^{*}(\alpha))p^{\frac{n-m}{2}-t}(p^{t}\delta_{0}(\beta)-1)+p^{n-m-t}.
\end{split}
\end{equation*}
\end{small}
$\hfill\blacksquare$

\begin{thebibliography}{}
\bibitem{CCZ1998Co} 
C. Carlet, P. Charpin, V. Zinoviev, Codes, bent functions and permutations suitable for DES-like cryptosystems, \emph{Des. Codes Cryptogr.} vol. 15, pp. 125-156, 1998.
\bibitem{CDY2005Li}
C. Carlet, C. Ding, J. Yuan, Linear codes from perfect nonlinear mappings and their secret sharing schemes, \emph{IEEE Trans. Inf. Theory} vol. 51, no. 6, pp. 2089-2102, 2005.
\bibitem{CM2013A} 
A. \c{C}e\c{s}melio\u{g}lu, W. Meidl, A construction of bent functions from plateaued functions, \emph{Des. Codes Cryptogr.} vol. 66, nos. 1-3, pp. 231-242, 2013.
\bibitem{CM2018Be} 
A. \c{C}e\c{s}melio\u{g}lu, W. Meidl, Bent and vectorial bent functions, partial difference sets, and strongly regular graphs, \emph{Adv. Math. Commun.} vol. 12, pp. 691-705, 2018.
\bibitem{CMP2021Ve} 
A. \c{C}e\c{s}melio\u{g}lu, W. Meidl, I. Pirsic, Vectorial bent functions and partial difference sets, \emph{Des. Codes Cryptogr.} vol. 89, no. 10, pp. 2313-2330, 2021.
\bibitem{CRSS1997Qu} 
A. R. Calderbank, E. M. Rains, P. W. Shor, N. J. A. Sloane, Quantum error correction and orthogonal geometry, \emph{Phys. Rev. Lett.} vol. 78, pp. 405-409, 1997.
\bibitem{CMP2018Ve} 
A. \c{C}e\c{s}melio\u{g}lu, W. Meidl, A. Pott, Vectorial bent functions and their duals, \emph{Linear Algebra Appl.} vol. 548, pp. 305-320, 2018.
\bibitem{Ding2016A} 
C. Ding, A construction of binary linear codes from Boolean functions, \emph{Discrete Math.} vol. 339, no. 9, pp. 2288-2303, 2016.
\bibitem{Ding2015Co} 
C. Ding, Codes from Difference Sets, World Scientific, Singapore, 2015.
\bibitem{Ding2015Li} 
C. Ding, Linear codes from some 2-designs, \emph{IEEE Trans. Inf. Theory} vol. 61, no. 6, pp. 3265-3275, 2015.
\bibitem{DD2015A} 
K. Ding, C. Ding, A class of two-weight and three-weight codes and their applications in secret sharing, \emph{IEEE Trans. Inf. Theory} vol. 61, no. 11, pp. 5835-5842, 2015.
\bibitem{EdelTable} 
Y. Edel, Table of quantum twisted codes, Online at https://www.mathi.uniheidelberg.de/~yves/Matritzen/QTBCH/QTBCHIndex.html.
\bibitem{FLG2019LC} 
Q. Fu, R. Li, L. Guo, LCD MDS codes from cyclic codes, \emph{Procedia Comput. Sci.} vol. 154, pp. 663-66, 2019.
\bibitem{Feng2002Qu} 
K. Feng, Quantum error-correcting codes. In Coding Theory and Cryptology, pp. 91-142. Hackensack, NJ: World Scientific, 2002.
\bibitem{FL2007Va} 
K. Feng, J. Luo, Value distributions of exponential sums from perfect nonlinear functions and their applications, \emph{IEEE Trans. Inf. Theory} vol. 53, no. 9, pp. 3035-3041, 2007.
\bibitem{Hamada2008Co} 
M. Hamada, Concatenated quantum codes constructible in polynomial time: efficient decoding and error correction, \emph{IEEE Trans. Inf. Theory} vol. 54, no. 12, pp. 5689-5715, 2008.
\bibitem{HLL2023Te} 
Z. Heng, D. Li, F. Liu, Ternary self-orthogonal codes from weakly regular bent functions and their application in LCD Codes, \emph{Des. Codes Cryptogr.} vol. 91, pp. 3953-3976, 2023.
\bibitem{HP2003Fu} 
W. C. Huffman, V. Pless, Fundamentals of Error-Correcting Codes, Cambridge University Press, Cambridge, 2003.
\bibitem{HYL2016Th} 
Z. Heng, Q. Yue, C. Li, Three classes of linear codes with two or three weights, \emph{Discrete Math.} vol. 339, pp. 2832-2847, 2016.
\bibitem{KKK2006No} 
A. Ketkar, A. Klappenecker, S. Kumar, Nonbinary stabilizer codes over finite fields, \emph{IEEE Trans. Inf. Theory}, vol. 52, no. 11, pp. 4892-4914, 2006.
\bibitem{LLQ2009On} 
C. Li, S. Ling, L. Qu, On the covering structures of two classes of linear codes from perfect nonlinear functions, \emph{IEEE Trans. Inf. Theory} vol. 55, no. 1, pp. 70-82, 2009.
\bibitem{LYW2020De} 
F. Li, Q. Yue, Y. Wu, Designed distances and parameters of new LCD BCH codes over finite fields. \emph{Cryptogr. Commun.} vol. 12, pp. 147-163, 2020.
\bibitem{LN1997Fi} 
R. Lidl, H. Niederreiter, Finite Fields, 2nd Edition, Cambridge University Press, Cambridge, 1997.
\bibitem{LLDL2017Tw} 
S. Li, C. Li, C. Ding, H. Liu, Two families of LCD BCH codes. \emph{IEEE Trans. Inf. Theory} vol. 63, no. 9, pp. 5699-5717, 2017.
\bibitem{LCTZ2019So} 
X. Li, F. Cheng, C. Tang, Z. Zhou, Some classes of LCD codes and self-orthogonal codes over finite fields. \emph{Adv. Math. Commun.} vol. 13, no. 2, pp. 267-280, 2019.
\bibitem{LH2023Se} 
X. Li, Z. Heng, Self-orthogonal codes from $p$-divisible codes, arXiv: 2311.11634, 2023.
\bibitem{Massey1998Or} 
J. L. Massey, Orthogonal, antiorthogonal and self-orthogonal matrices and their codes, Communications and Coding, England Research Studies Press, Somerest, 1998.
\bibitem{Mesnager2017Li} 
S. Mesnager, Linear codes with few weights from weakly regular bent functions based on a generic construction, \emph{Cryptogr. Commun.} vol. 9, no. 1, pp. 71-84, 2017.
\bibitem{MOS2019Li} 
S. Mesnager, F. \"{O}zbudak, A. S{\i}nak, Linear codes from weakly regular plateaued functions and their secret sharing schemes, \emph{Des. Codes Cryptogr.} vol. 87, nos. 2-3, pp. 463-480, 2019.
\bibitem{MS2020Se} 
S. Mesnager, A. S{\i}nak, Several classes of minimal linear codes with few weights from weakly regular plateaued functions, \emph{IEEE Trans. Inf. Theory} vol. 66, no. 4, pp. 2296-2310, 2020.
\bibitem{OP2022Tw} 
F. \"{O}zbudak, R. M. Pelen, Two or three weight linear codes from non-weakly regular bent functions, \emph{IEEE Trans. Inf. Theory} vol. 68, no. 5, pp. 3014-3027, 2022.
\bibitem{SYY2019Ne} 
X. Shi, Q. Yue, S. Yang, New LCD MDS codes constructed from generalized Reed-Solomon codes. \emph{J. Algebra Appl.} vol. 18, no. 8, 1950150, 2019.
\bibitem{TLQZH2016Li} 
C. Tang, N. Li, Y. Qi, Z. Zhou, T. Helleseth, Linear codes with two or three weights from weakly regular bent functions, \emph{IEEE Trans. Inf. Theory} vol. 62, no. 3, pp. 1166-1176, 2016.
\bibitem{Wan1998A} 
Z. Wan, A characteristic property of self-orthogonal codes and its application to lattices, \emph{Bull. Belg. Math. Soc.}, vol. 5, pp. 477-482, 1998.
\bibitem{WF2023Ne} 
J. Wang, F.-W. Fu, New results on vectorial dual-bent functions and partial difference sets, \emph{Des. Codes Cryptogr.} vol. 91, no. 1, pp. 127-149, 2023.
\bibitem{WFW2023Be} 
J. Wang, F.-W. Fu, Y. Wei, Bent partitions, vectorial dual-bent functions and partial difference sets, \emph{IEEE Trans. Inf. Theory}, vol. 69, no. 11, pp. 7414-7425, 2023.
\bibitem{WSWF2023Co} 
J. Wang, Z. Shi, Y. Wei, F.-W. Fu, Constructions of linear codes with two or three weights from vectorial dual-bent functions, \emph{Discret. Math.} vol. 346, 113448, 2023.
\bibitem{WH2023Se} 
X. Wang, Z. Heng, Several families of self-orthogonal codes and their applications in optimal quantum codes and LCD codes, \emph{IEEE Trans. Inf. Theory} doi: 10.1109/TIT.2023.3332332.
\bibitem{WLZ2020Li1} 
Y. Wu, N. Li, X. Zeng, Linear codes from perfect nonlinear functions over finite fields, IEEE Trans. Commun. vol. 68, no. 1, pp. 3-11, 2020.
\bibitem{WLZ2020Li2} 
Y. Wu, N. Li, X. Zeng, Linear codes with few weights from cyclotomic classes and weakly regular bent functions, \emph{Des. Codes Cryptogr.} vol. 88, pp. 1255-1272, 2020.
\bibitem{XCX2017Tw} 
G. Xu, X. Cao, S. Xu, Two classes of $p$-ary bent functions and linear codes with three or four weights, \emph{Cryptogr. Commun.} vol. 9 pp. 117-131, 2017.
\bibitem{XTD2022Sh} 
C. Xiang, C. Tang, C. Ding, Shortened linear codes from APN and PN functions, \emph{IEEE Trans. Inf. Theory} vol. 68, no. 6, pp. 3780-3795, 2022.
\bibitem{XQL2022Mi} 
G. Xu, L. Qu, G. Luo, Minimal linear codes from weakly regular bent functions, \emph{Cryptogr. Commun.} vol. 14, pp. 415-431, 2022.
\bibitem{YCD2006Th} 
J. Yuan, C. Carlet, C. Ding, The weight distributions of a class of linear codes from perfect nonlinear functions, \emph{IEEE Trans. Inf. Theory} vol. 52, no. 2, pp. 712-716, 2006.
\bibitem{ZLFH2016Li} 
Z. Zhou, N. Li, C. Fan, T. Helleseth, Linear codes with two or three weights from quadratic Bent functions, \emph{Des. Codes Cryptogr.} vol. 81, pp. 283-295, 2016.

\end{thebibliography}
\end{document}